  \providecommand\BibTeX{{%
    \normalfont B\kern-0.5em{\scshape i\kern-0.25em b}\kern-0.8em\TeX}}}
\def\blfootnote{\gdef\@thefnmark{}\@footnotetext}
\newcommand{\vargamma}{\bar\gamma}
\newcommand{\varf}{\bar f}
\newcommand{\vara}{\bar a}
\newcommand{\varlb}{\bar \lambda}
\newcommand{\varxi}{\bar \xi}
\newcommand{\polDelta}{\Delta}
\newcommand{\polb}{w}
\newcommand{\polc}{z}
\newcommand{\polbU}{\Gamma}
\newcommand{\polp}{g} 
\newcommand{\polr}{r} 
\newcommand{\polP}{G}
\newcommand{\mult}{\ell} 
\newcommand{\bigO}[1]{\mathchoice{O\left(#1\right)}{O(#1)}{O(#1)}{O(#1)}} 
\newcommand{\softO}[1]{\mathchoice{\tilde{O}\left(#1\right)}{\tilde{O}(#1)}{\tilde{O}(#1)}{\tilde{O}(#1)}} 
\newcommand{\bicost}[1]{c(n,m,#1)}
\newcommand{\bicostrecall}{with~\(c(\cdot)\) from \cref{eq:bicost}}
\newcommand{\ZZ}{\mathbb{Z}} 
\newcommand{\NN}{\mathbb{N}} 
\newcommand{\NNp}{\mathbb{N}_{> 0}} 
\newcommand{\card}[1]{\mathrm{card}(#1)}
\newcommand{\genBy}[1]{\langle #1 \rangle} 
\newcommand{\Span}{\operatorname{Span}} 
\newcommand{\field}{\mathbb{K}} 
\newcommand{\Kbar}{\overline\field}
\newcommand{\ring}{\mathbb{A}}
\newcommand{\bigfield}{\mathbb{L}} 
\newcommand{\quotientL}{\mathbb{L}} 
\newcommand{\matRing}[2]{\field^{#1 \times #2}} 
\newcommand{\vecRing}[1]{\field^{#1}}
\newcommand{\bmatRing}[2]{\Kbar^{#1 \times #2}} 
\newcommand{\bvecRing}[1]{\Kbar^{#1}}
\newcommand{\xRing}{\field[x]} 
\newcommand{\yRing}{\field[y]} 
\newcommand{\ySeries}{\field[\mkern-0.9mu[ y ]\mkern-0.7mu]} 
\newcommand{\xyRing}{\field[x,y]} 
\newcommand{\xmatRing}[2]{\xRing^{#1 \times #2}} 
\newcommand{\xvecRing}[1]{\xRing^{#1}}
\newcommand{\ymatRing}[2]{\yRing^{#1 \times #2}} 
\newcommand{\yvecRing}[1]{\yRing^{#1}}
\newcommand*{\rem}{%
  \nonscript\mskip-\medmuskip\mkern5mu%
  \mathbin{\operator@font rem}\penalty900\mkern5mu%
  \nonscript\mskip-\medmuskip
}
\newcommand{\rev}{\operatorname{rev}} 
\newcommand{\val}{\operatorname{val}} 
\newcommand{\trsp}[1]{#1^\mathsf{T}} 
\newcommand{\diag}[1]{\mathrm{diag}(#1)}  
\newcommand{\idMat}[1]{\mathrm{I}_{#1}} 
\newcommand{\rank}[1]{\operatorname{rank}(#1)}  
\newcommand{\denom}[1]{\mathcal{D}(#1)} 
\newcommand{\denombig}[1]{\mathcal{D}\!\left(#1\right)} 
\newcommand{\degdet}[1]{\operatorname{deg\,det}(#1)} 
\newcommand{\rmod}[1][m]{\mathcal{M}_{#1}} 
\newcommand{\relmod}[3]{\rmod[#1]^{(#2,#3)}} 
\newcommand{\rmodfa}[1][m]{\relmod{#1}{a}{f}} 
\newcommand{\rmodfg}[1][m]{\relmod{#1}{\gamma}{f}} 
\newcommand{\rmodma}[1][m]{\relmod{#1}{\alpha}{\mu_\gamma}} 
\newcommand{\rmat}[1][m]{R_{#1}} 
\newcommand{\relmat}[3]{\rmat[#1]^{(#2,#3)}} 
\newcommand{\rmatfa}[1][m]{\relmat{#1}{a}{f}} 
\newcommand{\rmatfg}[1][m]{\relmat{#1}{\gamma}{f}} 
\newcommand{\rmatma}[1][m]{\relmat{#1}{\alpha}{\mu_\gamma}}
\newcommand{\dd}[1][m]{\nu_{#1}} 
\newcommand{\detdeg}[3]{\dd[#1]^{(#2,#3)}} 
\newcommand{\ddfa}[1][m]{\detdeg{#1}{a}{f}} 
\newcommand{\ddfg}[1][m]{\detdeg{#1}{\gamma}{f}} 
\newcommand{\ddma}[1][m]{\detdeg{#1}{\alpha}{\mu_\gamma}}
\newcommand{\hk}[1][m,d]{\operatorname{Hk}_{#1}} 
\newcommand{\hankel}[3]{\hk[#1]^{(#2,#3)}} 
\newcommand{\hkfa}[1][m,d]{\hankel{#1}{a}{f}} 
\newcommand{\hankelsimp}[1]{\hk[#1]} 
\newcommand{\hkfasimp}[1][m,d]{\hankelsimp{#1}} 
\newcommand{\hkca}[1][m,d]{\hankel{#1}{\alpha}{\chi_\gamma}} 
\newcommand{\hkma}[1][m,d]{\hankel{#1}{\alpha}{\mu_\gamma}}
\newcommand{\rmodmm}{\rmod[m,m]} 
\newcommand{\rmodlm}{\rmod[\ell,m]} 
\newcommand{\rmodmmfa}{\rmod[m,m]^{(a,f)}} 
\newcommand{\rmodmmfg}{\rmod[m,m]^{(\gamma,f)}} 
\newcommand{\ddmm}{\dd[m,m]} 
\newcommand{\ddlm}{\dd[\ell,m]} 
\newcommand{\ddmmfg}{\dd[m,m]^{(\gamma,f)}} 
\newcommand{\ideal}{\mathcal{I}} 
\newcommand{\quotient}{\mathbb{A}} 
\newcommand{\quotientB}{\mathbb{B}} 
\newcommand{\idealGens}{\genBy{y-a(x),f(x)}} 
\newcommand{\quoF}{\xRing/\genBy{f(x)}} 
\newcommand{\mulmat}[1][a]{M_{#1}} 
\newcommand{\charmat}[1][a]{y\idMat{n} - \mulmat[#1]} 
\newcommand{\charpoly}[1][a]{\chi_{#1}} 
\newcommand{\minpoly}[1][a]{\mu_{#1}} 
\newcommand{\lag}{\mathcal{L}} 
\newcommand{\probainsep}{2n^4}
\newcommand{\flag}{\text{flag}}
\DeclareMathAccent{\myo}{\mathalpha}{operators}{23}
\newif\iftight
\newcommand{\Ramm}{K_{d,2m-1}{\mathstrut}^{\hspace{-12pt}(a,f)}} 
\newcommand{\La}[1][d]{L_{#1,m}{\mathstrut}^{\hspace{-12pt}(a,f)}} 
\newcommand{\Ra}[1][d]{K_{#1,m}{\mathstrut}^{\hspace{-12pt}(a,f)}} 
\newcommand{\Rama}[1][d]{K_{#1,m}{\mathstrut}^{\hspace{-12pt}(\alpha,\mu_{\gamma})}} 
\newcommand{\Lasimp}[1][d]{L_{#1,m}} 
\newcommand{\Rasimp}[1][d]{K_{#1,m}} 
\newcommand{\Laone}[1][n]{L_{1,#1}{\mathstrut}^{\hspace{-9pt}(a,f)}} 
\newcommand{\Raone}[1][n]{K_{1,#1}{\mathstrut}^{\hspace{-9pt}(a,f)}} 
\newcommand{\rhoa}[1][m,d]{\kappa_{#1}{\mathstrut}^{\hspace{-10pt}(a,f)}}
\newcommand{\lba}[1][m,d]{\lambda_{#1}{\mathstrut}^{\hspace{-10pt}(a,f)}}
\newcommand{\rhoasimp}[1][m,d]{\kappa_{#1}}
\newcommand{\lbasimp}[1][m,d]{\lambda_{#1}}
\newcommand{\Ramm}{K_{d,2m-1}^{(a,f)}} 
\newcommand{\Lasimp}[1][d]{L_{{m,#1}}} 
\newcommand{\Rasimp}[1][d]{K_{m,#1}} 
\newcommand{\La}[1][d]{L_{{m,#1}}^{{(a,f)}}} 
\newcommand{\Ra}[1][d]{K_{m,#1}^{(a,f)}} 
\newcommand{\Rama}[1][d]{K_{m,#1}^{(\alpha,\mu_{\gamma})}} 
\newcommand{\Laone}[1][n]{L_{1,#1}^{(a,f)}} 
\newcommand{\Raone}[1][n]{K_{1,#1}^{(a,f)}} 
\newcommand{\rhoa}[1][m,d]{\kappa_{#1}^{(a,f)}} 
\newcommand{\lba}[1][m,d]{\lambda_{#1}^{(a,f)}} 
\newcommand{\rhoasimp}[1][m,d]{\kappa_{#1}} 
\newcommand{\lbasimp}[1][m,d]{\lambda_{#1}} 
\algnewcommand{\algorithmicand}{\textbf{and}}
\algnewcommand{\algorithmicor}{\textbf{or}}
\algnewcommand{\FOR}{\algorithmicfor}
\algnewcommand{\OR}{\algorithmicor}
\algnewcommand{\AND}{\algorithmicand}
\algnewcommand{\IF}{\algorithmicif}
\algnewcommand{\THEN}{\algorithmicthen}
\algnewcommand{\ELSE}{\algorithmicelse}
\algnewcommand{\Fail}{\textsc{Fail}}
\algnewcommand{\Cert}{\textsc{Cert}}
\algnewcommand{\NoCert}{\textsc{NoCert}}
\algnewcommand{\Flag}{\textsc{Flag}}
\algnewcommand{\CommentLine}[1]{\(\triangleright\) \emph{\small #1}}
\algnewcommand{\InlineFor}[2]{\algorithmicfor\ #1\ \algorithmicdo\ #2} 
\algnewcommand{\InlineIf}[2]{
  \algorithmicif\ #1\ \algorithmicthen\ #2}
\algnewcommand{\InlineIfElse}[3]{
  \algorithmicif\ #1\ \algorithmicthen\ #2\ \algorithmicelse\ #3}
\newcounter{algorithmicH}
\let\oldalgorithmic\algorithmic
\renewcommand{\algorithmic}{%
  \stepcounter{algorithmicH}
  \oldalgorithmic}
\renewcommand{\theHALG@line}{ALG@line.\thealgorithmicH.\arabic{ALG@line}}
\algrenewcommand\Call[2]{\nameref{#1}\ifthenelse{\equal{#2}{}}{}{\ensuremath{(#2)}}}%
\newcommand{\algoName}[1]{Algorithm \nameref{#1}}
\newcommand{\algoCaptionLabel}[2]{%
     \caption[\textproc{#1}]{\textproc{#1}\ifthenelse{\equal{#2}{}}{}{$(#2)$}}%
     \label{algo:#1}%
     }%
\crefname{problem}{Problem}{Problems}
\Crefname{problem}{Problem}{Problems}
\crefname{line}{Step}{Steps}
\Crefname{line}{Step}{Steps}
\newtheorem*{theorem*}{Theorem}
\newtheorem{theorem}{Theorem}[section] 
\newtheorem{remark}[theorem]{Remark}
\begin{document}

\title{Faster Modular Composition}

\author{Vincent Neiger}
\affiliation{%
   \institution{Sorbonne Universit\' e}
   \department{Laboratoire LIP6 UMR 7606 CNRS, Sorbonne Universit\' e}
   \city{Paris}
   \country{France}
   }
   \email{vincent.neiger@lip6.fr}

\author{Bruno Salvy}
\affiliation{%
\institution{Inria}
\department{Laboratoire LIP UMR 5668  Univ. Lyon, CNRS, ENS de Lyon, Inria, UCBL}
\city{Lyon}
\country{France}
}
\email{bruno.salvy@inria.fr}

\author{\'Eric Schost}
 \affiliation{%
   \institution{University of Waterloo}
   \department{Cheriton School of Computer Science}
   \city{Waterloo, ON}
   \postcode{N2L 3G1}
   \country{Canada}
 }
   \email{eric.schost@uwaterloo.ca}

\author{Gilles Villard}
 \affiliation{%
   \institution{CNRS}
   \department{Laboratoire LIP UMR 5668  Univ. Lyon, CNRS, ENS de Lyon, Inria, UCBL}
   \city{Lyon}
   \country{France}
   }
   \email{gilles.villard@cnrs.fr}


\begin{abstract}
  A new Las Vegas algorithm is presented for the composition of two
  polynomials modulo a third one, over an arbitrary field. When the
  degrees of these polynomials are bounded by~$n$, the algorithm
  uses~$\bigO{n^{1.43}}$ field operations, breaking through the $3/2$
  barrier in the exponent for the first time. The previous fastest
  algebraic algorithms, due to Brent and Kung in 1978,
  require~$\bigO{n^{1.63}}$ field operations in general,
  and~${n^{3/2+o(1)}}$ field operations in the special case of
  power series over a field of large enough characteristic. If 
  cubic-time matrix multiplication is used, the new algorithm runs
  in~${n^{5/3+o(1)}}$ operations, while previous ones run
  in~$\bigO{n^2}$ operations.

  Our approach relies on the computation of a matrix of algebraic
  relations that is typically of small size. Randomization is
  used to reduce arbitrary input to this favorable situation.  
\end{abstract}

\begin{CCSXML}
<ccs2012>
<concept>
<concept_id>10010147.10010148.10010149.10010150</concept_id>
<concept_desc>Computing methodologies~Algebraic algorithms</concept_desc>
<concept_significance>500</concept_significance>
</concept>
<concept>
<concept_id>10003752.10003777.10003783</concept_id>
<concept_desc>Theory of computation~Algebraic complexity theory</concept_desc>
<concept_significance>500</concept_significance>
</concept>
</ccs2012>
\end{CCSXML}

\ccsdesc[500]{Computing methodologies~Algebraic algorithms}
\ccsdesc[500]{Theory of computation~Algebraic complexity theory}


\maketitle


\section{Introduction}
\label{sec:intro}

\blfootnote{Implementations of
  \Crefrange{algo:ModularComposition-BrentKung}{algo:ModularCompositionBaseCase}
  are available at
  \url{https://github.com/vneiger/faster_modular_composition_SageMath},
  based on the SageMath software (version $\ge$ 9.4 is required and is freely available
  at \url{https://www.sagemath.org/}).
}

\subsection{Problem and Result}
Many fundamental operations over univariate polynomials of degree at
most~$n$ with coefficients in a commutative ring~$\ring$ can be
computed in a number of arithmetic operations in~$\ring$ that is 
quasi-linear in $n$~\cite{GaGe99}. It is the case for multiplication, division with
remainder by a monic polynomial, multipoint evaluation, interpolation
at points whose differences are units in~$\ring$, and greatest common
divisors when~$\ring$ is a field.

In contrast with these operations, improving the cost bound for
\emph{modular composition} is a longstanding open question.  Given
three polynomials $a,f \in \ring[x]$ and $\polp\in \ring[y]$, with
$\deg(a) < n$ and $\deg(\polp) < n$ where \(n=\deg(f)\), and with $f$
monic, this problem is to compute $\polp(a) \rem f$, where the ``$\rem$''
operation takes the remainder of the Euclidean division.

\paragraph{Motivation}
This operation arises in a variety of contexts. For instance, with
$f=x^n$, it amounts to power series composition. For many
applications of power series, composition is the bottleneck. This
is the case for power series
reversion, that can then be reduced to composition with a small
overhead~\cite{BK78}. This is also the case of further
operations such as solving
families of functional equations~\cite{Hoeven02}. 

The application of certain algebra morphisms also translates to
modular composition. Over a field $\field$, for $f$ and $a$ in
$\field[x]$, we denote by $a \bmod f \in \field[x]/\genBy{f}$ the
class of $a$ modulo $f$. Then, for $e$ and $f$ in respectively
$\field[y]$ and $\field[x]$, and for a $\field$-algebra morphism
$\phi:\field[y]/\genBy{e} \to \field[x]/\genBy{f}$, if $\phi(y \bmod
e)=a \bmod f$ then for $g$ in $\field[y]$, the image $\phi(g \bmod e)$
is equal to $g(a) \bmod f$.

Over finite fields, with $e$ and $f$ the same polynomial and $\phi$ the Frobenius endomorphism,
this results in modular composition playing an important role in
algorithms for polynomial
factorization~\cite{GathenShoup1992,KaSh98,KaSh97}. Dedicated
algorithms exist for modular composition over finite fields, with
quasi-linear complexity (they are discussed later). 
Still, there
remains a
variety of questions that can be considered over arbitrary fields, and
which are impacted by modular composition (or closely related
operations such as power projection, discussed later as well):
computing the minimal polynomial of an algebraic
number~\cite{Shoup94,Shoup95,Shoup99}, normal bases
computations~\cite{KaSh98,GiJaSc21}, arithmetic operations with two
algebraic numbers~\cite{BFSS06}, computing with towers of algebraic
extensions~\cite{PS13,PoSc13b,HoeLec19}, Riemann-Roch space
computations~\cite{AbCoLe20,AbCoLe21}, etc.

\paragraph{Previous algorithms}
The most famous algorithm in this area is that of Kedlaya and
Umans, which achieves complexity $n^{1+\epsilon}\log^
{1+o(1)}(q)$ \emph{bit operations} for any given \(\epsilon>0\)~\cite[Cor.~7.2]{KU11} when the
field~$\mathbb K$ is the finite field~$\mathbb F_q$. In contrast, we
deal with an arbitrary field \(\field\) and count \emph{arithmetic operations} in
\(\field\). In this context, the known algorithms have much
higher complexity estimates.

Modular composition can be performed using Horner's algorithm with
modular reduction at each stage, which leads to a complexity
in~$\softO{n^2}$ operations if fast polynomial multiplication is
used. The notation
$c'=\softO{c}$ means that $c'=\bigO{c\log^k(c)}$ for some $k>0$; in
other words, logarithmic factors are dropped. 

In~1978, Brent and
  Kung gave two algorithms that perform
composition modulo $x^n$ (the case of power
series)~\cite{BrentKung1977,BK78}. One relies strongly on Taylor
expansion and runs in~$\softO{n^{3/2}}$ operations; the other one,
using a baby steps/giant steps approach, uses
$\bigO{n^{(\omega+1)/2}}+\softO{n^{3/2}}$ operations, where $\omega\le
3$ is a feasible matrix multiplication exponent (two $n\times n$
matrices can be multiplied in~$\bigO{n^{\omega}}$ operations; the
best known bound is 
$2.371552$~\cite{AlWi21,DWZ22,VWXXZ23}).   This
latter algorithm works verbatim and in the same complexity for
composition modulo an arbitrary polynomial~$f$ of degree~$n$ not
restricted to be~$x^n$~\cite{GathenShoup1992}.
Both these algorithms have remained
essentially the best ones since then. Huang and Pan used fast
rectangular matrix multiplication in the central step of the baby
steps/giant steps algorithm to reduce its complexity
to~$\bigO{n^{\omega_2/2}}+\softO{n^{3/2}}$ \cite{HuPa98}, where $\omega_2\le\omega+1$
is a feasible exponent such that a $n\times n^2$ matrix can be
multiplied by a $n^2\times n$ matrix in $\bigO{n^{\omega_2}}$
operations. The currently best known value gives $\omega_2 \approx
3.250385$~\cite{LGU18,LG23,VWXXZ23}, which makes the previous algebraic complexity
bound~$\bigO{n^{1.626}}$ for modular composition for an
arbitrary~$f$. Even assuming an optimal matrix multiplication, which
means~$\omega=2$, these algorithms do not break the exponent
barrier~$3/2$.

\paragraph{Our result}
The open problem~2.4 in the book of B\"urgisser, Clausen and
Shokrollahi~\cite{BCS97} asks whether Brent and Kung's algorithm can
be improved substantially. The research problem~12.19 in von zur Gathen and
Gerhard's book \cite{GaGe99} asks for a complexity
in~$\softO{n^{1.5}}$ or better. Our main result answers both questions
positively when $\ring$ is a field, with few extra hypotheses.

\begin{theorem}\label{thm:intro}
  Given $a,f \in \xRing$ and $\polp\in \yRing$ with coefficients in
  a field
  $\field$, with \(\deg(f) = n\),
  $\deg(a)$ and $\deg(\polp)$ smaller than $n$, and a tuple $r$ of
  $\bigO{n^{1+1/3}}$ field elements,
  \algoName{algo:ModularComposition}{} returns either $\polp(a)\rem f$
  or {\sc Fail} after~$\softO{n^\kappa}$ arithmetic operations
  in~$\field$, with
  \begin{equation}\label{eq:def-gamma}
    \kappa=1+\frac{1}{\frac{1}{\omega-1}+\frac2{\omega_2-2}}<
    1.43.
  \end{equation}
It returns {\sc Fail} with
  probability at most $(2n^4+18n^2)/\card{S}$
   when the entries of $r$ are chosen uniformly and independently from
   a
  finite subset $S \subseteq \field$.
\end{theorem}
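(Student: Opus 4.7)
The plan is to prove the theorem by analyzing \algoName{algo:ModularComposition} along three axes: correctness under a favorable structural assumption on the randomized input, control of the probability that this assumption holds, and a complexity balance that yields the exponent~$\kappa$.

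First, I would describe the outer baby-steps/giant-steps skeleton inherited from Brent--Kung. Choosing a block size~$m$, write $\polp(y) = \sum_{i<\lceil n/m\rceil}\polp_i(y)y^{im}$ with $\deg\polp_i<m$, so that $\polp(a)\rem f = \sum_i \polp_i(a)\cdot(a^{im}\rem f)$. The baby steps compute the powers $a^j\rem f$ for $j<m$ in $\softO{nm}$ operations, and the giant steps use the powers $a^{im}\rem f$ for $i<n/m$. The key departure from previous work is that the central linear-algebra step is carried out not on a dense $m\times(n/m)$ matrix over~$\field$ but through the module of algebraic relations $\rmodfa$, concretely via its reduced basis matrix $\rmatfa$ of dimension approximately~$m$ and degree bounded by a parameter~$d$; it is this reduction to structured polynomial matrix arithmetic that beats the Huang--Pan bound. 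Correctness under the assumption that the randomization succeeds, meaning that the transformed instance encoded by~$r$ satisfies $\ddfa\le d$ and that $\rmatfa$ can be computed and used within the claimed cost, then follows from the composition identity above combined with the evaluation-vs.-relation dictionary set up in the earlier sections of the paper.

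Second, I would bound the failure probability. Failure can occur either because the randomized reduction does not preserve the answer (the easier event, bounded by the vanishing of a Vandermonde-like determinant of degree $\bigO{n^2}$, giving the $18n^2$ contribution) or because the degree profile $\ddfa$ exceeds the target~$d$. The latter is the main obstacle: one must show that the set of bad tuples~$r$ lies inside the zero set of a nonzero polynomial in the $\softO{n^{4/3}}$ coordinates of~$r$ of total degree at most~$2n^4$. I would derive this by combining the genericity analysis of relation modules from earlier in the paper, using that a generic perturbation puts the pair $(a,f)$ in a position where $\rmodfa$ has the expected small degree, with a Schwartz--Zippel estimate, yielding the overall bound $(2n^4+18n^2)/\card{S}$.

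Third, I would optimize the parameters~$m$ and~$d$. The cost of the algorithm, up to logarithmic factors, consists of the baby-step cost $\softO{nm}$, polynomial matrix multiplications of size~$m$ and degree~$d$ contributing an $\omega$-term, and a rectangular step whose cost is captured by~$\omega_2$. Setting $m=n^\mu$ and $d=n^\delta$ and balancing the three exponents against each other produces the system whose solution is
\[
\kappa \;=\; 1+\frac{1}{\frac{1}{\omega-1}+\frac{2}{\omega_2-2}},
\]
and substituting the best known values $\omega<2.372$ and $\omega_2<3.251$ confirms numerically that $\kappa<1.43$. The hardest step in this program is the degree bound on the exceptional locus for~$r$; once this is in hand, correctness and the cost estimate reduce to the constituent lemmas already available from the paper.
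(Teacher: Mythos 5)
Your proposal captures some of the high-level philosophy of the paper---matrices of relations, a block size~$m$ balanced via $\omega$ and $\omega_2$, and a Schwartz--Zippel bound on the bad locus---but the concrete route you describe diverges from the paper's proof in ways that leave genuine gaps.

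\textbf{Outer structure.} You frame the algorithm as a Brent--Kung baby-steps/giant-steps Horner evaluation $\polp(a)\rem f=\sum_i\polp_i(a)(a^{im}\rem f)$ with the linear-algebra step replaced by relation matrices. That is not how \algoName{algo:ModularComposition} works. The composition itself is done by \emph{dividing $\polp$ by a matrix of relations} to obtain a bivariate polynomial $\tilde\polp\in\xyRing_{<(m,d)}$ with $\tilde\polp(x,a)\equiv\polp(a)\bmod f$, and then invoking the N\"usken--Ziegler bivariate composition algorithm; the baby/giant-steps pattern appears only inside subsidiary routines such as \algoName{algo:TruncatedPowers} and \algoName{algo:BivariateModularComposition}. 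This matters because the cost balance is $\softO{m^\omega d + \bicost{d}}$ with $d=\lceil n/m\rceil$ (so $d$ is \emph{not} a free second parameter as in your $m=n^\mu$, $d=n^\delta$ setup), and the exponent $\kappa$ comes out of that constrained optimization.

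\textbf{The randomization is a change of basis, not a perturbation.} You write that a ``generic perturbation puts the pair $(a,f)$ in a position where $\rmodfa$ has the expected small degree.'' The paper never perturbs $(a,f)$ directly. It picks a random $\gamma$, uses the $\field$-algebra isomorphism $\phi_\gamma:\xRing/\genBy{f}\to\yRing/\genBy{\mu_\gamma}$, and transports the problem to computing $\polp(\alpha)\rem\mu_\gamma$ where $\alpha=\phi_\gamma(a\bmod f)$. The analysis is delicate precisely because $\alpha$ and $\mu_\gamma$ are \emph{nonlinear} functions of the coordinates of $\gamma$, so the genericity polynomials (\cref{prop:generic_a}, \cref{prop:separable}) have to be pushed through that nonlinear map. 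Your proposal treats this as a routine Vandermonde-degree estimate, but the actual degree-$2n^4$ bound arises from an explicit rational parametrization of $\alpha$ in terms of the $\gamma_i$'s in the purely inseparable case (\cref{sec:composition_randomized:proof_inseparable}).

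\textbf{The general case is not handled by genericity alone.} The most serious gap: you claim to prove the theorem directly for arbitrary $f$, but the paper explicitly states it has \emph{no} proof that a random $\gamma$ works for an arbitrary $f$. The favorable behavior is established only for $f$ separable (\cref{sec:composition_randomized:proof_separable}) and $f$ purely inseparable (\cref{sec:composition_randomized:proof_inseparable,sec:composition_randomized:proof_inseparable_small}). The general case requires the full machinery of \cref{sec:generalalgo}: separable decomposition of $f$, Chinese remaindering over the coprime factors $f_i=h_i(x^{p^{e_i}})^{\mult_i}$, the untangling/tangling isomorphism $\Psi_{h,\ell}$ to move to a purely inseparable modulus over a product of fields $\field[\theta]/\genBy{h}$, and the directed-evaluation paradigm of van der Hoeven and Lecerf to make the algorithms work over that product of fields. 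None of this appears in your proposal, and without it the proof does not go through for $f$ having both separable and inseparable parts. The failure probability $(2n^4+18n^2)/\card{S}$ is obtained by tallying failures across exactly these subroutine calls (\cref{prop:ModularCompositionModuloPower} contributing $2n^4+12n^2$ per factor, and the annihilating-polynomial calls contributing $6n^2$), not from a single Vandermonde-like determinant plus a single $\ddfa$-degree condition.

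\textbf{Conditions for correctness.} Finally, the structural condition you state ($\ddfa\le d$) is not the one used. The correct sufficient condition is the rank equality $\rank\hkfa=\ddfa$ for the block Hankel matrix, which entails both $\dd[m,m]^{(a,f)}=\ddfa$ and the describability of $\trsp{X}(\charmat)^{-1}X$ in degree $d$ (\cref{prop:hankel_rank_nu}); this is what guarantees that the approximant-basis reconstruction in \algoName{algo:CandidateBasis} actually produces a basis of $\rmodfa$.
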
 

Here we use an algebraic model of computation: roughly, basic arithmetic
operations \(\{+,-,\times,\div\}\) and zero-tests in the base field~$\field$
are counted at unit cost; for more details, see \cref{sec:preliminaries}.  
As usual with probabilistic algorithms of Las Vegas type, the
algorithm can be repeated until it succeeds, so that only its running
time becomes a random variable.

We assume that the characteristic $p$ of $\field$ is known to the
algorithms. For $\field$ finite and of small cardinality $q$ (namely,
$q
\le 2n^4+18n^2$), the probability statement becomes vacuous. However,
in such cases, one can work in a sufficiently large field, by
constructing an extension of~$\field$ of degree~$\bigO{\log(n)}$
efficiently (see \cite[Sec.\,14.9]{GaGe99} and references therein).
In this extension, each arithmetic operation can be performed in
$\softO{\log(n)}$ arithmetic operations in~$\field$, so that the
asymptotic complexity estimate is unaffected.

We also give a probabilistic algorithm of the Las Vegas type with the
same complexity bound for computing an annihilating polynomial for \(a
\bmod f\), that is, a nonzero polynomial $g \in \field[y]$ such that
$g(a) \rem f=0$.

The improvements brought by fast matrix multiplication on one
hand and by fast rectangular matrix multiplication on the other hand are made
clearer by noting
that the exponent $\kappa$ of composition satisfies 
\[
  \frac43
  \le \kappa = \underbrace{1 + \frac{1}{ \frac{1}{\omega-1} + \frac{2}{\omega_2-2}} }_{< 1.42945}
  \le\underbrace{\frac{\omega+2}{3}}_{< 1.4572}
  \le\underbrace{\frac53}_{<1.666667},
\]
where the first approximation is obtained with the bounds on $\omega$
and $\omega_2$ given above; the second one is
obtained when no fast
rectangular matrix multiplication is used, so that $\omega_2$ simply
becomes $\omega+1$; the last one is obtained when no fast matrix
multiplication is used, thus taking $\omega=3$. In the latter case,
our algorithm is the first subquadratic one for modular
composition. In the other direction, considering the lower bounds
$\omega \ge 2$ and $\omega_2 \ge 3$ shows that $\kappa\ge4/3$, giving
a lower bound on the complexity estimate that can be achieved by
the algorithm designed in this work.

\paragraph{Main steps}
To compute $g(a) \rem f$, our approach relies on first computing a
polynomial $\tilde g$ of ``small degree'' such that $\tilde g(a)\rem
f=g(a)\rem f$. If $a\bmod f$ has a minimal polynomial of small
degree~$\mu(a)$, then one can take $\tilde{g}=g\bmod\mu$. In general,
such a $\mu$ may not exist, and a small degree \emph{univariate} \(\tilde g\)
may not exist either.
However, generically, one can compute a set of
\emph{bivariate}
polynomials~$\mu_i(x,y)$ of ``small degree'' such that $\mu_i
(x,a)\rem f=0$. These are called \emph{relations}.
From these, a small degree bivariate \(\tilde g(x,y)\)
is found via some type of reduction of \(g\) by all \(\mu_i\)'s
simultaneously, ensuring $\tilde g(x,a)\rem f=g(a)\rem f$.
Relations form one of the
main ingredients of the new algorithm, and most of the new technical
results are about them. On the algorithmic side, the
coefficients of these relations are gathered into matrices called
\emph{matrices of relations} and we make heavy use of fast algorithms
on polynomial matrices.

Here and throughout the article,
genericity is understood in the
Zariski sense: a property is generic if it holds outside of a
hypersurface of the corresponding parameter space.
A {\em randomized change of basis} brings $f$ and~$a$ to a
situation where ``small'' matrices of relations
exist. 

This probabilistic
algorithm is proved to
be correct for $f$ separable (i.e., with no repeated roots
in an algebraic closure $\Kbar$ of $\field$) and for $f$
purely inseparable (i.e., with only one root in $\Kbar$).
Modular composition modulo an arbitrary $f$ is reduced to these two
extreme cases by {\em separable decomposition} of $f$~\cite{Lec2008}, Chinese remainder theorem, and
    a slight generalization of a technique called {\em untangling}~\cite{HoeLec17}. The latter allows to transport the composition problem 
    over~$\field$ modulo a factor of the separable decomposition, to a composition problem over a
    quotient algebra with purely inseparable modulus.

\paragraph{Complexity aspects}
Under genericity conditions, $m\times m$ matrices of relations of
``small degree''
$d\leq \lceil n/m \rceil$ are shown to exist (the choice of~$m$
 is optimized below). Their computation
starts from the first~$m$ coefficients of the
  $2md$ polynomials~$x^ia^k\rem f$, for $0\le i<m$ and $0\le k<2d$.
  With
    \begin{equation}
      \label{eq:bicost}
      \bicost{d}={(m+n/d)d^{\omega_2/2}},
    \end{equation}
    these coefficients can be computed in 
  $\softO{m^2d+\bicost{d}}$ operations in \(\field\) (\cref{sec:truncated_pow}).

From these coefficients, a matrix of relations is obtained
by {\em approximant bases}~\cite{GJV03} in~$\softO{m^\omega d}$
operations (\cref{sec:relmat:certify}). Given such a matrix and in the same complexity, linear
system solving over $\xRing$
  \cite{ZLS12} allows us to reduce the univariate $\polp\in\yRing$ to
  a bivariate $\tilde{\polp}\in\xyRing$, of degrees smaller than \(m\)
  and \(d\) in $x$ and $y$, such that
  $\polp(a)\equiv\tilde{\polp}(x,a)\bmod f$ (\cref{sec:bivcomposition}). 

  A generalization due to N\"usken and Ziegler~\cite{NusZie04}
    of Brent and Kung's algorithm to the case of a bivariate
    polynomial~$\tilde \polp(x,y) \in \field[x,y]$
    finally computes the ``uni-bivariate'' composition~$\tilde{\polp}
    (x,a)\rem f$ using
    $\softO{\bicost{d}}$ operations in~$\field$ (\cref{sec:sequence:biv_comp}).
  
Altogether, the costs of the various parts of the algorithm add up to
$\softO{m^\omega d+\bicost{d}}$ operations in $\field$.  Then choosing $m$ and $d= \lceil n/m \rceil$ so as to minimize
$m^\omega d+\bicost{d}$ leads us to $m\sim n^\eta$, where
\begin{equation}\label{eq:def-beta}
  \eta=\frac1{1+\frac{\omega-1}{(\omega_2-2)/2}} \,,
  \end{equation}
which is approximately $0.3131$ with the bounds on~$\omega$
and~$\omega_2$ given above and leads
to to the complexity estimate \(\softO{n^\kappa}\)
of \cref{thm:intro}.  

\subsection{Previous algorithms in special cases}

To compute $\polp(a) \rem f$, previous known improvements upon Brent and Kung's
approach all have requirements on the input, either on some of the polynomials
$f$, $\polp$, and $a$, or on the ring or field of coefficients~---~possibly with
nonalgebraic algorithms. 

\subsubsection{Special modulus}
\label{subsubsec:spemod}

\subsubsection*{Power series}

For the special case $f= x^n$ of power series, Brent and Kung's second
algorithm relying on Taylor expansion performs composition in
only~$\softO{n^{3/2}}$ operations, provided $a'(0)$ and $(\lceil
\sqrt{n\log(n)} \rceil)!$ are invertible in $\ring$; the assumption on
$a'(0)$ can be weakened \cite[Sec.\,3.4.3]{Hoeven02}. In more
variables, even in the specific case $\polp(x, a) \rem f$ handled by the N\"usken-Ziegler
algorithm, we do not know of any algorithm computing composition
faster for power series than modulo arbitrary polynomials.

Faster composition in only~$\softO{n}$ operations for $\polp(a)\rem
x^n$ is possible for many special cases of~$\polp$: when $\polp$ is a
polynomial of degree~$O(1)$, but also when it is a power series
solution of a polynomial equation of degree~$O(1)$ via Newton's
iteration, or when it is a solution of a differential equation (e.g.,
$\exp$), by first forming a differential equation for $\polp(a)$ and
then solving it by Newton's iteration or other divide-and-conquer
algorithms, generally in characteristic~0 or large enough
\citetext{\citealp{Lipson1976,BK78,Hoeven02,BostanChyzakOllivierSalvySchostSedoglavic2007};
  \citealp[\S13.4]{BostanChyzakGiustiLebretonLecerfSalvySchost2017}}.

Similarly, still in the case when~$f=x^n$, if furthermore $a$ has specific
properties, then composition of power series can be performed in~$\softO{n}$
operations. This is the case when $a$ is a polynomial of moderate
degree~\cite{BK78} (it is a
part
of Brent and Kung's fast composition algorithm), an algebraic power
series~\cite{Hoeven02}, but also for a class of truncated power series that can
be obtained via shifts, reversals, scalings, multiplications by polynomials,
exponentials and logarithms~\cite{BostanSalvySchost2008}.

\subsubsection*{Separable polynomials}

Ritzmann observed that for a separable modulus $f(x)=
(x-\epsilon_1)\dotsm(x-\epsilon_n)$ with distinct
$\epsilon_1,\dots,\epsilon_n$ that are known, modular composition
boils down to multipoint evaluation and interpolation
\cite{Ritzmann1986}, which can be computed in~$\softO{n}$ arithmetic
operations. When furthermore the ring of coefficients is~$\ZZ$, he
uses well-chosen $\epsilon_i$'s to give an efficient algorithm for
composing power series, in a nonalgebraic model of computation: if
$\polp$ and $a$ over $\ZZ$ have coefficients bounded in absolute value
by $K$, then $\polp(a) \rem x^n$ can be computed using $\softO{n^2
  \log(K)}$ \emph{bit} operations, which is quasi-optimal since the
output has bit size~$\Omega(n^2\log(K))$ in general.

\subsubsection*{Chinese remainder theorem}

In our work, the cases of power series and of separable polynomials
play an important role as well. We use the observation that if a
factorization $f=f_1\dotsm f_s$ is known with the \(f_i\)'s relatively
prime, then composing modulo $f$ reduces to composing modulo each
$f_i$ and reconstructing the result via the Chinese remainder theorem.
Several consequences of this observation have been discussed by van der Hoeven
and
Lecerf~\cite{HoeLec18}.

\subsubsection{Special rings or fields}\label{ssec:specialrings}

For power series over a ring \(\ring\) of positive characteristic, Bernstein
proposed an algebraic algorithm whose complexity is quasi-linear in~$n$,
with a constant factor that depends on the characteristic of the
ring~\cite{Ber98}. In particular, this algorithm is very efficient over rings
whose characteristic is a product of small primes; if $\ring$ is a ring of
prime characteristic~$p$ then the algorithm  uses $\softO{np}$ operations in
$\ring$.  

A further step forward was achieved by Umans in 2008~\cite{Uma08},
with a new algorithm for modular composition modulo an arbitrary $f$,
over finite fields of small characteristic: if $p$ is $n^{o(1)}$, his
algorithm uses $n^{1+o(1)}$ base field operations.  Later, Kedlaya and
Umans introduced new techniques for composition over finite rings of
the form $(\ZZ/r\ZZ)[z]/\langle h(z)\rangle$, for an integer $r$ and
$h$ monic. For a finite field $\field = \mathbb{F}_q$, their algorithm
runs in $n^{1+\epsilon}\log^{1+o(1)}(q)$ \emph{bit
  operations}~\cite[Cor.\,7.2]{KU11}.

As in Ritzmann's work, a key idea in~\cite{Uma08,KU11} is to exploit
fast multipoint evaluation, but this time in a multivariate
setting. The composition $\polp(a)\rem f$ is reduced to the evaluation
at suitable points of a multivariate polynomial constructed
from~$\polp$ by an inverse Kronecker substitution, decreasing degrees
at the expense of increasing the number of variables. Umans' algorithm
performs the evaluation using the properties of the Frobenius
endomorphism \cite[Thm\,6]{Uma08}, while Kedlaya and Umans' proceeds by lifting to
characteristic zero (which requires working in a bit complexity
model)~\cite[Thm\,6.3]{KU11}. 
These multipoint evaluation algorithms have been extended to arbitrary number of variables and arbitrary finite fields~\cite{BGKM22,BGGKU22}.
For general fields, efficient analogues of these
multivariate multipoint evaluation algorithms are currently unknown.


\subsection{Related questions}

\subsubsection{(Multivariate) multipoint evaluation} \label{subsec:rel_evaluation}

For simplicity, we limit the discussion to the case of a field; most
of it extends to rings, with minor restrictions. The evaluation of a
univariate $\polp\in\xRing_{<n}$ at $n$ points in the field~$\field$,
and conversely the interpolation of a polynomial of degree~$<n$ from
$n$ values, are computable in quasi-linear
complexity~\cite[Chap.\,10]{GaGe99}. For polynomials in at least two
variables, however, the situation becomes tightly related to modular
composition.

The motivation of N\"usken and Ziegler~\cite{NusZie04} was the
evaluation of a polynomial~$\polp\in\xyRing_{<(m,d)}$ at $n$
points~$(x_k,y_k)_{1\le k\le n}$ in general position, with
$md=\bigO{n}$. Their algorithm first computes a univariate
interpolation polynomial such that~$a(x_k)=y_k$ for all~$k$; then the
composition~$b=\polp(x,a(x))\rem f$, where $f=\prod_k{(x-x_k)}$; and
concludes by a univariate multipoint evaluation of~$b$ at
$x_1,\ldots,x_n$. Since the univariate evaluation and interpolation
are performed in essentially linear time, the complexity is dominated
by the ``uni-bivariate'' modular composition~$\polp(x,a)\rem f$.

The case when several points have the same $x$-value can be handled by
an affine change of coordinates~\cite{NusZie04}; another approach,
taken by Kedlaya and Umans, is to pick~$n$ suitable points
$t_1,\ldots,t_n$ in~$\field$, to compute two interpolation
polynomials~$a_x$ and~$a_y$ in $\field[t]$, and thus reduce the
evaluation to the fully bivariate modular
composition~$\polp(a_x,a_y)\rem f$, where
now~$f=\prod_k{(t-t_k)}$. This extends to an arbitrary number of
variables and shows that multipoint evaluation in~$s$ variables
reduces to multivariate modular composition in the same number of
variables~\cite[Thm.\,3.3]{KU11}.

As mentioned in \cref{ssec:specialrings}, Kedlaya and Umans actually
make a heavy use of a converse reduction~\cite[Thm.\,3.1]{KU11}. If
$\polp$ is a polynomial in~$\field[x_1,\dots,x_s]$, the composition
$\polp(a_1 (x),\dots,a_s(x))\rem f$ reduces to a multipoint evaluation
of a polynomial of smaller degree in each of its variables, whose
number is increased.  For the univariate case of composition $(s=1)$
studied here, the smallest possible number of variables for evaluation
would be~2, leading to a bivariate evaluation of a polynomial of
degree~$\sqrt{n}$ at~$\Theta(n^{3/2})$ points, which is too large for
our target complexity. The next possible choice would be a
polynomial of 3~variables in degree~$n^{1/3}$ at~$\Theta(n^{4/3})$
points. Unfortunately, we are not aware of a sufficiently efficient
multipoint evaluation algorithm in 3~or more variables to make this
approach succeed in the algebraic model.


\subsubsection{Bivariate ideals} \label{subsubsec:bivideals}

Viewing the problem of computing $\polp(a)$ modulo $f$ as a problem of
reduction of $\polp$ modulo the ideal~$\mathcal I=\idealGens$, we introduce
bivariate polynomials in a different way from the inverse Kronecker substitution
mentioned above. Gr\"obner bases are commonly used for reductions modulo multivariate
ideals. A division with remainder similar to that
in \cref{eq:div-by-R} below would be
achieved via reduction by an appropriate Gr\"obner basis of $\mathcal I$,
provided we could compute this basis and perform the reduction in good
complexity. However, already the size of the Gr\"obner basis itself may be
$\Theta(n^{3/2})$ (see the example below), hence exceed our target complexity.

For an ideal given by two generic bivariate polynomials of degree~$n$ (hence
the ideal is of degree~$n^2$) and the graded lexicographic order, van der
Hoeven and Larrieu avoid the use of an explicit Gr\"obner basis. They show that
a {\em concise representation} of the basis of size only~$\softO{n^2}$ is
sufficient for reducing a polynomial modulo the ideal in time~$\softO{n^2}$
\cite{HoevenLarrieu2019}; the concise representation consists in particular of
truncations of well chosen polynomials in the ideal.
It is unclear to us whether a similar truncation strategy could be applied
specifically to~$\mathcal I$, whose degree is only~$n$.  Instead, the matrices
of relations we compute give a set of small degree polynomials in~$\mathcal I$
that  may not generate the whole ideal (see \cref{sec:relmat:structure}), but
provide a process of complexity~$\softO{n^\kappa}$ for the reduction
modulo~$\mathcal I$ of~\cref{eq:div-by-R}. 
These polynomials generate the same ideal as the first polynomials in
the Gr\"obner basis of~$\mathcal
I$ for the lexicographic order (see \cref{cor:structure_module}).

The concise representation of Gr\"obner bases has also been exploited by van
der Hoeven and Lecerf for computing the minimal polynomial of the
multiplication by $y$ modulo~$\mathcal J$, when $\mathcal J = \genBy{f_1,f_2}$
is generated by two generic polynomials  $f_1, f_2 \in \field[x,y]$ and
$\field$ is a finite field~\cite[Sec.\,4]{HoeLec21a}.  They apply the
transposition principle to a bivariate modular composition map modulo $\mathcal
J$, then compute the minimal polynomial from the resulting bivariate power
projections~\cite[Sec.\,6]{Kal00-2}. The evaluation of the composition map
modulo $\mathcal J$ is again in~$\softO{n^2}$, thanks to the concise
representation~\cite{HoeLec21a}.  In our case of $\mathcal I =
\idealGens$ and for a generic $a$, matrices of relations allow us to compute
the minimal polynomial of the multiplication by $y$ modulo $\mathcal I$ in
complexity~$\softO{n^\kappa}$ (see \cref{sec:minpoly}); matrices of relations
are obtained via a bivariate power projection process that can be regarded, in
part, as dual to N\"usken and Ziegler's bivariate modular composition
algorithm (\cref{subsubsec:transpNZ}). 

\subsubsection*{Note.} For a sufficiently large field \(\field\), take
$f=(x-1)\dotsm(x-n)$, where $n=k(k+1)/2$, and~$a \in \field[x]$ the polynomial
of degree smaller than~$n$ such that $a(i)=[\sqrt{2i}]$ for $1 \le
i \le n$. Then the
reduced Gr\"obner bases for the graded lexicographic order and for the
lexicographic order, both with $y\prec x$, coincide. They contain one
polynomial with leading term~$x^iy^{k-i}$ for each~$i\in\{0,\dots,k\}$.
Counting the number of monomials of these polynomials shows that this basis has
$k (k+1)(k+2)/3+(k+1)$ monomials; this is of the order of~$n^{3/2}$.  

\subsubsection{Modular composition and multipoint evaluation with precomputation.}

Quasi-linear modular composition \(\polp(a) \rem f\) is feasible after
precomputations on \((f,a)\) only, for $a$ generic and $f$
square free~\cite{NRS20}.

Likewise, after precomputations on the evaluation points and under genericity
assumptions on them, quasi-linear multivariate multipoint evaluation is
feasible \cite{HoeLec21b}, as well as quasi-linear bivariate interpolation
\cite{NRS20}. Furthermore, for bivariate evaluation, genericity can be replaced
by randomization \cite{HoeLec21c}. 

In these works, the precomputation stages are at least as expensive as the
fastest known  corresponding modular composition or multipoint evaluation
algorithms. They have a feature in common with our composition
algorithm: from
\(f,a\) (or from the evaluation points), they compute a set of polynomials
that belong to \(\idealGens\) (or vanish at the points), and allow for
efficient degree reduction of the polynomial to compose with (or to evaluate).
This set is either  akin to several matrices of relations of \(\rmodfa\) for a
small number of values of \(m\) ranging from \(1\) to \(n\) \cite{NRS20}, or is
a collection of well-chosen polynomials in several Gr\"obner bases for subsets
of the points so as to build a multivariate divide and conquer evaluation tree
\cite{HoeLec21b,HoeLec21c}. 

\subsection{Algorithmic Tools}
\label{sec:algorithmic_tools}
 
Our work builds upon a sequence of earlier algorithmic progress that
we now recall. We denote by $\xRing_{<n}$ the set of univariate
polynomials in $x$ with coefficients in~$\field$ and degree less
than~$n$; by $\xyRing_{<(r,s)}$ the bivariate polynomials in \(x,y\)
of bidegree in~$(x,y)$ less than~$(r,s)$.

\subsubsection{Baby steps/giant steps}
\label{subsubsec:bbs}

One of the bottlenecks in algebraic approaches for evaluating $\polp$ at $a$
modulo~$f$ is the computation of successive powers~$1,a,a^2,\dots$ modulo~$f$,
which leads to the question of minimizing the number of powers that are used.
The solution used by Brent and Kung relies on a baby steps/giant steps
scheme~\cite{PS73,BK78}, where only
\[
  1,a,\dots,a^{\lceil\sqrt{n}\rceil}\rem f
  \quad\text{and}\quad
  a^{2\lceil\sqrt{n}\rceil},a^{3\lceil\sqrt{n}\rceil},\ldots \rem f
\]
are computed. The former group forms the baby steps; the latter forms the giant
steps. The problem is then reduced to about~$\sqrt{n}$ modular compositions
``\(\polp_i(a) \rem f\)'' for \(\polp_i\) of degree about $\sqrt{n}$. These
compositions are all obtained simultaneously through the multiplication of two
matrices of sizes roughly~$\sqrt{n}\times \sqrt{n}$ and \(\sqrt{n}\times n\).
This is followed by a less expensive Horner evaluation step using the powers
of~$a^{\lceil \sqrt{n}\rceil}$. See \cref{ssec:BK} for a complete description.

\subsubsection{Projection-Reconstruction}
\label{subsubsec:pr}

Wiedemann's algorithm~\cite{Wie86} finds the minimal polynomial of a
matrix~$A \in \matRing{n}{n}$ by considering the sequence
$(\trsp{v}A^kw)_{k\ge0}$, for two vectors~$v$ and~$w$. This sequence
is linearly recurrent and its generating function $h(y)=\sum_{k \geq
  0} (\trsp{v}A^kw)/y^{k+1}$ is rational; for generic~$v$ and~$w$, the
denominator of $h(y)$ is the minimal polynomial $\mu_A$ of the
matrix~$A$. Writing $d\le n$ for the degree of \(\mu_A\), this
polynomial can be reconstructed efficiently from the first~$2d$ terms
of the sequence by the Berlekamp-Massey algorithm or, equivalently, by
the computation of a Pad\'e approximant. Given the expansion in
\(y^{-1}\) of a rational power series~$h(y)=q(y)/\mu_A (y)$ with
polynomials $q$ and~$\mu_A$ of degree at most~$d-1$ and $d$, this
reconstructs the fraction $(q,\mu_A)$ as a solution of
\begin{equation}\label{eq:pade}
  (h(y)+O(y^{-2d-1}))\mu_A(y)-q(y)=O(y^{-d-1}).
\end{equation}
If the degree of \(\minpoly[A]\) is unknown, one can use this approach with the
upper bound \(d=n\) instead.

Wiedemann's algorithm can be combined with the baby steps/giant steps
paradigm~\citetext{\citealp[Sec.~3]{Kal92}; \citealp{Shoup94};
  \citealp[Algorithm~AP]{KaSh98}}. In particular, when~$A$ is the
matrix~$M_a$ of multiplication by~$a\bmod f$ in the basis~$\mathcal
B=(1,x,\dots,x^{n-1})$ of $\xRing/\genBy{f}$, this was used by Shoup
to compute the {\em minimal polynomial} of the polynomial~$a$
modulo~$f$ \cite{Shoup94,Shoup95,Shoup99}. For irreducible $f$, Shoup
used the vectors $v=w=\mathbf{1}$ (where~$\mathbf{1}$ is the first column of
the
identity matrix), in which case the sequence~$(\trsp{v}A^kw)_{k\ge 0}$
becomes the sequence of \emph{power
  projections}~$(\ell(1),\ell(a),\ell(a^2),\dots)$, where $\ell$ is
the linear form that takes the coefficient of~1 of an element
of~$\xRing/\genBy{f}$ written on the basis~$\mathcal B$. For an
arbitrary $f$, Shoup used a random linear form $\ell$, corresponding
to a random choice of the vector $v$ and $w=\mathbf{1}$.

In either case, the required $2d$ elements of the sequence can be
obtained by left multiplication by $\trsp{v}$ of a matrix whose
columns are the coefficient vectors of~$1$, $a$, $a^2$, \ldots{}
modulo~$f$.  Now, the \emph{right} multiplication of the exact same
matrix by a vector of coefficients corresponds to modular
composition. Using the \emph{transposition principle}, Shoup described
a baby steps/giant steps algorithm that computes the power
projections for an arbitrary linear form~$\ell: \xRing/\genBy{f} \to
\field$ in the same complexity as that of Brent and Kung's
algorithm~\cite{Shoup94,Shoup95,Shoup99}. (See
\cref{subsec:powerproj}.) This principle states that the existence of
an algebraic algorithm for the multiplication of a matrix by a vector
induces the existence of an algorithm for the product of the transpose
of that matrix by a vector, both having essentially the same
complexity \citetext{\citealp[Thm.\,13.20]{BCS97}; \citealp{BLS03}}.

The same idea is used by Shoup for another operation that we also need.
Given~$a,b,f$, the \emph{inverse modular composition} asks for a
polynomial~$\polp$ of least degree such that~$\polp(a)\equiv b\bmod f$ or for a
proof
that no such~$\polp$ exists. This problem reduces to the computation of the
power projections 
\[
  (\ell(1),\ell(a),\dots,\ell(a^{2n-1}))
  \quad\text{and}\quad
  (\ell(b),\ell(ab),\dots,\ell(a^{n-1}b)),
\]
again in the same complexity as that of modular composition, followed by the
resolution of a linear system of Hankel type~\cite[Thm.\,3.5]{Shoup94}. The
latter is known to be equivalent to Pad\'e approximation~\cite{BGY80}, where
\cref{eq:pade} generalizes to
\[
  \left(\sum_{k\ge0}\frac{\ell(a^k)}{y^{k+1}}+O(y^{-2n-1})\right)
  \polp(y)
  -
  q(y)
  =
  \sum_{k\ge0}{\frac{\ell(a^kb)}{y^{k+1}}}+O(y^{-n-1}),
\]
with unknowns a numerator $q(y) \in \yRing_{<n}$ and
the inverse composition
$\polp(y) \in \yRing_{\le n}$.

\subsubsection{Blocks for speed and structure}
\label{subsubsec:blocks}

Coppersmith introduced a block version of Wiedemann's
algorithm~\cite{Coppersmith94}. There, the scalar sequence
$(\trsp{v}A^kw)_{k\ge0}$ is replaced by the matrix sequence
$(\trsp{V}A^kW)_{k\ge0}$ for two
\emph{matrices}~$V\in\matRing{n}{\ell}$ and~$W\in \matRing{n}{m}$: the
generating function $H(y)=\sum_{k \geq 0} (\trsp{V}A^kW)/y^{k+1}$
is a rational~$\ell\times m$ matrix. 

Such a matrix admits an irreducible \emph{matrix fraction
  description}~$N(y)D(y)^{-1}$ with~$N\in\ymatRing{\ell}{m}$
and~$D\in\ymatRing{m}{m}$ two polynomial matrices (see
\cref{sssec:fractions}), and the columns of the denominator matrix~$D$
form a basis of the $\yRing$-module of polynomial vectors
$u\in\yvecRing{m}$ such that $\sum_{i\le \deg(u)}
\trsp{V}A^{k+i}Wu_i=0$ for all \(k\ge 0\), where $u_i$ denotes the
coefficient of~$y^i$ in~$u$~\cite[Lem.\,2.8]{KaVi05}. For $m=1$, this
module is the ideal generated by the minimal polynomial of the
sequence in Wiedemann's algorithm. 

For $1 \leq m \leq \ell \leq n$,
the matrix~$D$ contains more information: for example, for generic~$V$
and~$W$, its invariant factors are the $m$ invariant factors of
largest degree of the characteristic matrix
$y\idMat{n}-A$~\cite[Thm.\,2.12]{KaVi05}, the highest degree one being
the minimal polynomial of \(A\).  Consequently, the determinant of
\(D\) has degree the sum $\dd$ of the degrees of these $m$~invariant
factors, which implies that $\dd \le n$.

The computation of $D$ can be achieved in two steps, which are matrix versions
of the methods used for \(m=1\) in \cref{subsubsec:pr}.  Writing
\(d\) for the degree of \(D\), it is sufficient to compute the
first~$2d$ matrices of the sequence $(\trsp{V}A^kW)_{k\ge0}$,
which can be done by a baby steps/giant steps approach~\cite{KaVi05}. Next,~$D$
is obtained by matrix fraction reconstruction, solving
\begin{equation*}
  \left(\trsp{V}(y\idMat{n}-A)^{-1}W +O(y^{-2d-1})\right)D(y)-N(y)
  =
  O(y^{-d-1})
\end{equation*}
for the unknown $N\in\ymatRing{\ell}{m}$ and $D\in\ymatRing{m}{m}$ of degrees at
most \(d-1\) and \(d\); this can be done efficiently by a generalization of
Pad\'e approximation called minimal approximant bases, whose properties are
recalled in \cref{sec:relmat:approx_basis}. (See \cite{KaVi05,KaYu13} for
bibliographic pointers to algorithms that compute minimal linear generators of
matrix sequences.) The parameter \(d\) plays a major role in the efficiency of
both steps: it is usually unknown \emph{a priori}, and might be as
large as
\(\Theta(n)\). Yet, the interest of this block approach lies in the fact that,
for generic~$V$ and $W$ and \(\ell\ge m\), the matrix \(D\) has degree \(d =
\lceil \dd /m \rceil \le \lceil n/m \rceil\) \cite[Cor.\,6.4]{Vil97:TR}.

\subsubsection{Efficient projections and small bivariate polynomials}
\label{subsubsec:effprojandsmallbivpols}

Special choices of the matrices~$V$ and~$W$ above, with identity blocks,
lead to efficient projections and have been shown to be effective in
the context of black-box matrix inversion~\cite{EGGSV07}. Even simpler
matrices, $X=\trsp{(\idMat{m}\;\;0)}$ and $Y=\trsp{(0\;\;\idMat {m})}$
in $\field^{n\times m}$ with $m\in \{1, \ldots ,n\}$, have been used
by Villard in his fast algorithm for the bivariate resultant of two
bivariate polynomials \(f\) and \(g\) in \(\xyRing\)~\cite{Vil18}.  In
this context, for generic $f$ and $g$, this choice of \(X\) and \(Y\)
is sufficient to ensure that the denominator matrix \(D\) contains
\(m\) ``small'' polynomials in the ideal of~$\xyRing$ generated by~$f$
and \(g\).

\subsection{Overview of the core algorithm}
\label{subsec:overview}

When~$a\bmod f$ has a minimal polynomial~$\minpoly$ of small degree,
$\minpoly$ can be computed efficiently using power projections
$(\ell(1),\ell(a),\ell(a^2),\dots)$ by Shoup's algorithm, since few
terms in the sequence are needed (see
\cref{subsubsec:pr,subsubsec:minpoly}). Then, for composition, one
uses the identity $\polp(a)\equiv \hat{\polp}(a)\bmod f$, where
$\hat{\polp}=\polp\rem\minpoly$. Since \(\hat{\polp}\) has small
degree, this reduces the number of powers of~$a\bmod f$ that need be
considered.

Our
algorithm can be viewed as a block or bivariate version of this approach, \emph{replacing
the univariate polynomial~$\minpoly$ by a collection of \(m\) small bivariate
polynomials in the ideal generated by $y-a(x)$ and~$f(x)$}, for a fixed
parameter \(m\). In a generic situation, while
\(\minpoly\) has degree \(n\), there exists such a collection with degrees
 \(m-1\) and \(\lceil n/m \rceil\) in \(x\) and \(y\). This collection is represented as
a matrix in~$\ymatRing{m}{m}$ and is found efficiently by exploiting the
structure of the matrix of multiplication by~$a\bmod f$. 

\subsubsection*{Matrices of relations}
Let $M_a \in \field^{n\times n}$ be the matrix of multiplication
by~$a\bmod f$ in the basis~$(1,x,\dots,x^{n-1})$.  Following
\cref{subsubsec:blocks}, in the special case where $A=M_a$, if $V$ is
a generic matrix in $\field^{n \times \ell}$, and $W$ is the matrix
$X= \trsp{(\idMat{m}\;\; 0)}$ with $m\le \ell$ and $m\le n$, the block
Wiedemann approach yields a denominator matrix $D \in\ymatRing{m}{m}$
whose columns represent a basis of the $\yRing$-module
\begin{equation}\label{eq:rmodfa}
  \rmodfa = \left\{ \polr(x,y)=\polr_0(y)+\cdots + \polr_{m-1}(y) x^{m-1}
  \mid \polr(x,a(x))\equiv 0\bmod f (x)\right\};
\end{equation}
this follows for instance from~\cite[Lem.~4.2]{Vil97:TR}.  The
elements of this module are algebraic relations of degree less
than~$m$ in~$x$ satisfied by~$a \bmod f$
(\cref{sec:relmat:relations_denominators}). We call \emph{matrix of
  relations} any nonsingular matrix \(\rmatfa\in\ymatRing{m}{m}\) whose
columns are the coefficients of polynomials in~$\rmodfa$
(\cref{sec:relmat:def_and_degdet}), that is, any nonsingular right
multiple of \(D\). 

Given a matrix of relations~$\rmatfa$, the
composition~$\polp(a)\rem f$ is obtained in two steps.
\begin{itemize}
\item First, by \emph{polynomial matrix division} \cite[Thm.\,6.3-15, p.\,389]{Kailath80},
  there exist vectors~$v,w\in\yvecRing{m}$ such that
  \begin{equation}\label{eq:div-by-R}
    \trsp{(\polp(y) \;\; 0 \;\; \cdots \;\; 0)} = \rmatfa w+v,
  \end{equation}
  where $\deg(v) < d$ and $d$ is an upper bound on $\deg(\rmatfa)$;
   finding such vectors takes
  $\softO{m^{\omega}(d+n/m)}$ operations
  (\cref{sec:bivcomposition})~\cite{ZLS12}. Then, by
  design, the bivariate
  polynomial
  \[
  \tilde\polp(x,y)=v_1(y)+\dots+v_{m}(y)x^{m-1}
  \]
  has degree less than~$m$ and~$d$ in~$x$ and~$y$, and is such that
  \(\polp(a)\equiv\tilde \polp(x,a)\bmod f\). 
 
\item The polynomial \(\tilde\polp\) can then be evaluated at~$y=a \bmod f$ by the
  N\"usken-Ziegler algorithm in~$\softO{\bicost{d}}$ operations, \bicostrecall{}
  (\cref{prop:NuskenZiegler}).
\end{itemize}

\subsubsection*{Truncated sequence of projections}
In the block Wiedemann approach, using $X$ as our right projection
matrix, we need the first \(2d\) elements of the matrix sequence
$(VM_a^kX)_{k\ge 0}$, which amounts to a type of bivariate power
projections (see \cref{subsubsec:bivideals}). Unfortunately, we do not know how
to obtain them
efficiently enough for an arbitrary $V$.  Choosing $V=\trsp{X}$, we
design a baby steps/giant steps algorithm in \cref{sec:truncated_pow}
that runs in $\softO{\bicost{d} +m^2d}$ operations.  With this choice,
by fraction reconstruction the sequence~$(\trsp{X}M_a^kX)_{k\ge0}$
yields a denominator $D$ that is a basis of the $\yRing$-module
\[
  \rmodmmfa = \left\{ \polr(x,y) \in \xyRing_{<(m,\cdot)}
  \mid \big[a(x)^k\polr(x,a(x))\rem f(x)\big]_0^{m-1}=0 \text{ for all }
  k\ge 0 \right\},
\]
where $[\,\cdot\,]_0^{m-1}$ is the projection on
$\Span(1,x,\dots,x^{m-1})$. The inclusion $\rmodfa\subseteq\rmodmmfa$
holds but may be strict, leading to a denominator $D$ that is not a
matrix of relations.

\subsubsection*{Matrices of relations of small degree}\label{ssec:matrel_small_degree}

For an arbitrary~$f$ with $f(0)\neq0$ (this is not really a
restriction, see \cref{rmk:fat0}) and a generic~$a$, two important
properties hold (see \cref{subsubsec:genina}): the above inclusion of
modules is an equality~---~making the algorithm correct~---~and a
basis $\rmatfa$ of degree $d=\lceil n/m\rceil$ of~$\rmodfa$ can be
reconstructed from the first $2d$ elements of the
sequence~$(\trsp{X}M_a^kX)_{k \ge 0}$~---~making the algorithm fast.

The reconstruction is done via minimal approximant bases in
\cref{sec:relmat:approx_basis,subsec:candidate}. Directly extending
\cref{subsubsec:blocks}, we would solve the equation at infinity
\begin{equation}
  \label{eq:blockPade}
  \left(\trsp{X}(\charmat)^{-1}X + O(y^{-2d-1})\right) \rmatfa(y) - N(y) = O(y^{-d-1}),
\end{equation}
for unknown matrices $N$ and $\rmatfa$ of degree at most \(d-1\) and
\(d\). For technical reasons coming from the reconstruction algorithm,
we actually use an expansion at $y=0$ rather than at infinity, so that
the sequence we use involves powers of $M_a^{-1}$ instead of $M_a$ (see \cref{rmk:shift}).

Beyond generic cases, a relevant quantity is
\begin{equation} \label{eq:defnu}
  \ddfa = \deg(\sigma_1) + \cdots + \deg(\sigma_{m}),
\end{equation}
where $\sigma_1, \ldots, \sigma_n \in \yRing$ are the invariant factors of
$\charmat$, ordered by decreasing degree. This quantity is at most \(n\), and
it is the degree of the determinant of any basis of $\rmodfa$
(\cref{prop:invariant-factors}). In favorable situations, working with \(d =
\lceil \ddfa / m \rceil\), and \emph{a fortiori} with $\lceil n/m\rceil$, is
sufficient to obtain such a basis~$\rmodfa$.


\subsection{Probabilistic algorithm for \texorpdfstring{$f$}{f} separable or purely inseparable}
\label{subsec:proba_algo}

Our probabilistic algorithm aims at bringing arbitrary inputs to the
favorable situation mentioned above, by means of a random change of
basis.  For a polynomial $\gamma\in\xRing$ such that the minimal
polynomial~$\mu_\gamma$ of $\gamma\bmod f$ has degree~$n$, the powers
\((1,\gamma,\ldots,\gamma^{n-1})\bmod f\) form a basis
of~$\quotient=\xRing/\genBy{f}$. This induces a $\field$-algebra
isomorphism:
$$
  \phi_\gamma:\quotient\to\yRing/\genBy{\mu_{\gamma}}
$$ that maps $\gamma$ to $y$, and more generally $u \in \quotient$ to
  $v$ such that $v(\gamma)\equiv u\bmod f$.

Using $\phi_\gamma$ allows us to transport our problem of modular
composition to the right-hand side. For $a$ in $\xRing_{<n}$ and
$\polp$ in $\yRing$, to find \(\polp(a)\rem f\), this boils down to
the following (see \cref{algo:ModularCompositionBaseCase}):
\begin{itemize}
  \item[---] a forward change of basis: through inverse modular
    composition, compute $\alpha \in \yRing_{<n}$ such that
    $a=\alpha(\gamma) \rem f$; this step also determines the minimal
    polynomial \(\minpoly[\gamma]\);
  \item[---] a modular composition in the new basis: compute \(\beta =\polp
  (\alpha) \rem \minpoly[\gamma]\);
  \item[---] a backward change of basis: the modular composition
    \(\beta(\gamma) \rem f\), which equals $g(a) \rem f$.
\end{itemize}

\subsubsection*{Computational aspects}
The second and third steps are modular compositions. They can
performed efficiently by the approach of \cref{subsec:overview}, by
finding and using matrices of relations \(\rmatfg\) and \(\rmatma\),
as long as certain genericity assumptions hold; this
aspect is discussed below.

The first step, for the forward change of basis, is an instance of
inverse modular composition and the calculation of a minimal
polynomial.  As mentioned in \cref{subsubsec:pr}, Shoup's solutions
recover both $\alpha$ and $\mu_{\gamma}$ from the power projections
$(\ell(1),\ell(\gamma),\dots,\ell(\gamma^{2n-1}))$ and
$(\ell(a),\ell(\gamma a),\dots,\ell(\gamma^{n-1}a))$, in the
complexity of Brent and Kung's modular composition algorithm. Using
matrices of relations we achieve a lower complexity, for a
generic~\(\gamma\), as follows.

\begin{enumerate}
\item \emph{Matrix of relations and minimal polynomial.}  Generalizing
  the power projections of $\gamma$, the algorithm of
  \cref{subsec:overview} computes the first~\(2d\) terms of
  $(\trsp{X}M_\gamma^kX)_{k\ge 0}$, where $d=\lceil n/m\rceil$, and then
  reconstructs a basis $\rmatfg$ of $\rmodfg$ by solving
  \cref{eq:blockPade} (with \(\gamma\) instead of \(a\)). This basis
  gives in particular the minimal polynomial $\minpoly[\gamma]$, which
  appears as an entry of the Hermite normal form of this basis
  (\cref{prop:invariant-factors}).

\item \emph{Bivariate inverse composition.} The
  use of projections $(\ell(a),\ell(\gamma
  a),\dots,\ell(\gamma^{n-1}a))$ is directly generalized by computing the first~$2d$
  terms of $(\trsp{X}M_\gamma^kM_a\mathbf{1})_{k\ge 0}$, where~$ \mathbf{1}$
  is the first column of~$X$, and solving
  \begin{equation}\label{eq:blockShoup}
    \left(\trsp{X}(y\idMat{n}-M_\gamma)^{-1}X+O(y^{-2d-1})\right)v_{\tilde{\alpha}}(y) 
    -v_N(y) =\trsp{X}(y\idMat{n}-M_\gamma)^{-1}M_a\mathbf{1}+O(y^{-d-1})
    \end{equation}
  for polynomial vectors $v_N$ and $v_{\tilde{\alpha}}$
  in~$\yvecRing{m}$ of degree less than \(d\); the entries of the
  vector $v_{\tilde{\alpha}}$ are the coefficients of a bivariate
  polynomial $\tilde \alpha (x,y)$ of small degree such that $\tilde
  \alpha (x,\gamma)\equiv a\bmod f$.

  As for~\cref{eq:blockShoup}, we actually work with an expansion
  at $y=0$ rather than infinity. 

\item \emph{Bivariate \(\tilde\alpha\) to univariate \(\alpha\).}  The
  situation is now symmetric to that of the composition algorithm of
  \cref{subsec:overview}: we consider again \cref{eq:div-by-R}, where
  now $\polp$ is unknown (it is $\alpha$), $v$ is known (it is $\tilde
  \alpha(x,y)$) and both $\rmatfa$ and $v$ have degree at most~$d$,
  so that the polynomial matrix problem can be solved in
  $\softO{m^\omega(d+n/m)}$ operations.
\end{enumerate}
This approach is detailed in \algoName{algo:ChangeOfBasis}{}, with the
steps reordered and combined so as to retrieve both $\rmatfg$ and
$v_{\tilde{\alpha}}$ from a single fraction reconstruction.

\subsubsection*{Probabilistic aspects}
For a generic \(\gamma\), one has \(\deg(\minpoly[\gamma])=n\), so the
isomorphism $\phi_\gamma$ is well defined. Using the Schwartz-Zippel
lemma, it is straightforward to control the probability of having
$\deg(\minpoly[\gamma])<n$. 

For generic $\gamma$, we can then follow the approach described in
\cref{subsec:overview} to perform the last step, modular composition 
by $\gamma$, with the desired complexity. The quantitative aspects can
be worked out as well, and similar considerations  hold for the
first step, inverse modular composition by~$\gamma$.

However, the composition in the second step, $g(\alpha) \rem
\minpoly[\gamma]$, is more delicate to analyze.  We need the equality
of modules 
$\rmodma = \rmodmm^{(\alpha,\minpoly[\gamma])}$, and that a matrix of
relations in this module can be reconstructed from the first $2\lceil
\ddma/m\rceil \le 2 \lceil n/m\rceil$ elements of the corresponding
matrix sequence; the analysis is made difficult by the fact that both
$\alpha$ and $\minpoly[\gamma]$ are nonlinear functions of the random
element $\gamma$.

We prove that this happens for a generic~$\gamma$ in two cases: when
$f$ is {\em separable} in
\cref{sec:composition_randomized:proof_separable}, and when $f$ is
\emph{purely inseparable}, with extra conditions, in
\cref{sec:composition_randomized:proof_inseparable}; the latter case
covers power series composition with~\(f=x^n\). In both situations,
there is a nonzero polynomial~$\Delta$ in~$n$ variables such that the
constraints above hold if $\Delta$ does not vanish at the coefficients
of~$\gamma$. We  choose a random $\gamma$, and the probability of
failure is again bounded by the Schwartz-Zippel lemma.

We do not have a proof that a generic~$\gamma$  satisfies our
requirements for an arbitrary~$f$. Our algorithm for the general case 
proceeds by reduction to the two extreme cases above, separable 
and purely inseparable polynomials.

\subsubsection*{From Monte Carlo to Las Vegas} At this stage, we have a
probabilistic algorithm of Monte Carlo type, that runs in the
announced complexity and returns the correct result with a controlled
probability of error. The next question is to modify the algorithm so
that it detects and reports the unlucky choices of~$\gamma$ for which
its result would be incorrect.

In order to certify the result obtained for a random choice
of~$\gamma\in\quotient$, it would be sufficient to check the following
properties:
\begin{enumerate}
  \item the computed matrix~$\rmatfg$ is a basis of relations of~$\rmodfg$;
  \item the minimal polynomial of \(\gamma\) modulo \(f\) has degree \(n\);
  \item the computed matrix~$\rmatma$ is a basis of relations of $\rmodma$.
\end{enumerate}
However, we do not know how to check that all the columns of a matrix belong to
the ideal $\genBy{f(x),y-\gamma(x)}$ or~$\genBy{\mu_\gamma(x),y-\alpha}$ in
sufficiently low complexity and in a deterministic way. The matrix~$\rmatfg$ is
easier to deal with: as it is expected to behave like in the generic case, its
expected degree structure is known and the matrix can be certified by degree
considerations~(\cref{lem:fraction-reconstruction:item:certif} of
\cref{lem:fraction-reconstruction}, and \cref{prop:algo:ChangeOfBasis}). From
there, the minimal polynomial of \(\gamma\) can be computed efficiently via the
Hermite normal form of \(\rmodfg\), and it remains to check that it has degree
\(n\).

The other matrix, $\rmatma$, carries more information about~$a$ and
cannot be expected to behave as predictably as~$\rmatfg$. Our approach
is to extract from its columns two small degree polynomials~$r$
and~$s$ in~$\xyRing$. Since only two such polynomials are considered,
they can be checked to vanish at~$\alpha \bmod \mu_\gamma$ by the
N\"usken-Ziegler algorithm without affecting the asymptotic
cost. Then, these two polynomials are used to construct a Sylvester
matrix that can be used for composition instead of~$\rmatma$, without
increasing the overall complexity (\cref{algo:MatrixOfRelations}).

\subsubsection*{Note}

Equivalently, the randomization of our probabilistic algorithm can be seen as a
change of projection. Indeed, let $P \in \matRing{n}{n}$ have its $j$th
column formed by the coefficients of $\gamma^{j-1} \rem f$. If $\gamma \bmod f$
generates~$\xRing/\genBy{f}$ and $M_{\alpha}$ is the matrix of multiplication by $\alpha \bmod {\mu_{\gamma}}$ with basis $(1,y,\ldots,y^{n-1})$, then the multiplications by~$\alpha$ and
by~$a$ are related by
\begin{equation}\label{eq:defMalpha}
  M_{\alpha}=P^{-1}M_aP.
\end{equation} 
Hence
\[
  \trsp{X} M_{\alpha}^k X =  (\trsp{X} P^{-1}) M_{a}^k (P X), 
\]
which, for instance on the right side, leads to considering the first $m$
columns of $P$ instead of $X$ for projecting. This amounts to kinds of
structured projections $(\trsp{V}M_a^kW)_{k\ge0}$, i.e. with matrices $V$ and~$W$ 
in a special proper subset of~$\field^{n\times m}$.


\subsection{Algorithm for the general case}

The algorithm of \cref{subsec:proba_algo} is proved to work when~$f$
is either separable, or purely inseparable (for the latter, with extra
conditions that are dealt with in
\cref{sec:composition_randomized:proof_inseparable}).  In
\cref{sec:generalalgo}, we address the general case, by first computing
a separable decomposition of~$f$~\cite{Lec2008}, yielding a
factorization into a product into pairwise coprime terms of the
form~$h_i(x^{p^{e_i}})^{\mult_i}$, with $h_i$ separable and
$e_i,\mult_i$ integers (here, $p$ is the characteristic of $\field$).

Working modulo each factor separately, we are thus left with
the question of composition modulo a polynomial of the form
$h(x^{p^e})^\mult$, with $h$ separable (all
such results are eventually recombined via the Chinese remainder theorem).

For a modulus of the form~$h(x)^\mult$, van~der~Hoeven and Lecerf
showed how composition can be reduced to $\mult$ compositions modulo
$h$, the computation of an annihilating polynomial modulo~$h$, and a
power series composition at precision~$\mult$ with coefficients in
$\bigfield=\xRing/\genBy{h(x)}$ \cite{HoeLec17}.  We extend this
result to the case of moduli of the form~$h(x^{p^e})^\mult$ in
\cref{ssec:modulo_powers}, involving essentially the same steps.  The
first two operations (compositions and annihilating polynomial modulo
$h$) are directly handled by our results so far, but this is not quite
the case for the latter, power series composition with coefficients in
$\bigfield$.

Our algorithms are written assuming they work over a field, as they
perform zero-tests and inversions (compare this with Brent and Kung's
algorithms, for instance, which apply over a ring).  If~$h$ is
irreducible, $\bigfield$ is a field, but if $h$ is only assumed to be 
separable, then $\bigfield$ is only a product of fields. The \emph{dynamic
  evaluation} paradigm~\cite{D5} explains how an algorithm written for
inputs lying in a field can carry over to inputs in a product of
fields, but the original approach induces cost overheads that go
beyond our cost target. Using van der Hoeven and Lecerf's efficient
dynamic evaluation strategy~\cite{HoeLec20}, we show how our algorithm
for power series adapts to this situation~(\cref{sec:overseparable})
without affecting the asymptotic runtime.

\subsection{Outline}

\cref{sec:preliminaries} introduces some notation and our computational model.
\cref{sec:sequence} details baby steps/giant steps techniques used in our
composition algorithm: known ones such as in Brent and Kung's composition, and
new ones such as for computing truncated powers which give access to
$(\trsp{X}M_a^kX)_{k\ge 0}$. \cref{sec:relmat_intro} studies matrices of relations and
how they are used in our composition algorithm, whereas \cref{sec:relmat_comp}
shows how to compute them efficiently by matrix fraction reconstruction under
some assumptions on \((f,a,m)\).  \cref{sec:changeofbasis} presents an
algorithm for the change of basis of \cref{subsec:proba_algo}: it finds the
minimal polynomial \(\minpoly[\gamma]\) and an inverse composition \(\alpha\)
such that \(\alpha(\gamma) \equiv a \bmod f\), under assumptions on
\((f,\gamma,m)\).  \cref{sec:genericity} studies these assumptions, and in
particular gives precise generic situations where they hold.
\cref{sec:composition_randomized} describes our main randomized composition
algorithm and proves its correctness for a separable $f$ and for a purely
inseparable $f$ (generalizing \(f=x^n\)); then \cref{sec:generalalgo} handles
the general case of composition modulo any \(f\). Finally, in
\cref{sec:applications}, we state resulting complexity improvements for several
variants of modular composition and other related problems.


\section{Preliminaries}
\label{sec:preliminaries}


\subsubsection*{Notation}

In this article, $\field$ is an arbitrary field.  For bivariate
polynomials in variables $x$ and $y$, $\deg_x$ and $\deg_y$ give the
degree in \(x\) and in \(y\). For any polynomial or power series
$p=\sum_i{p_ix^i}$, we use the following notation for a ``slice'' of
it: $[p]_j^k=p_j+p_{j+1}x+\dots+p_{j+k}x^k$. The ideal generated by
polynomials $f_1,\dots,f_k$ in an ambient ring (which will be clear
from the context) is denoted by~$\genBy{f_1,\dots,f_k}$.

Vectors, such as elements of \(\vecRing{m}\) or \(\yvecRing{m}\), are seen as
column vectors by default; when row vectors are considered this is explicit in
our notation, e.g.~\(\matRing{1}{m}\) or \(\ymatRing{1}{m}\). We often
identify a polynomial $\polp_0(y)+ \cdots + \polp_{m-1}(y) x^{m-1}$ in
$\xyRing_{<(m,\cdot)}$ with the column vector \(\trsp{(\polp_0 \;\cdots\;
\polp_{m-1})}\) in \(\yvecRing{m}\) of its coefficients on the basis
\((1,x,\ldots,x^{m-1})\) of the \(\yRing\)-module $\xyRing_{<(m,\cdot)}$.

For $a$ and $f$ in $\field[x]$, \(\mulmat\) denotes the
matrix of the linear map of multiplication by~\(a\) in~$\quoF$ 
with basis \((1,x,\ldots,x^{n-1})\), and $\minpoly$,
resp.~\(\charpoly\), denotes the minimal, resp.~characteristic
polynomial of $a$ in $\quoF$ (that is, the minimal and characteristic
polynomials of $\mulmat$). 

Whenever the context is sufficiently clear, particularly in
\cref{sec:relmat_intro,sec:relmat_comp,sec:genericity}, notation such
as\(\rmodfa\), \(\ddfa\) defined in the introduction is shortened
into \(\rmod\), \(\dd\). We  keep the superscripts in important
statements.

\subsubsection*{Computational model}

Our algorithms are written in pseudocode, using standard syntax elements
(for loops, if statements, \ldots). Informally, we count all
arithmetic
operations $\{+,-,\times,\div\}$ and zero-tests in $\field$ at unit cost. The
underlying complexity model is the {\em computation tree}~\cite[Sec.\,4.4]{BCS97}.

A computation tree over $\field$ is a binary tree whose nodes are partitioned
into {\em input nodes} that form an initial segment of the tree
starting at the root, {\em computation nodes} with outdegree
1, {\em
branching nodes} with outdegree 2 and {\em output nodes} at the
leaves. To each node is associated a label.
Computation nodes are labelled
 by constants in $\field$ or
operations in $\{+,-,\times,\div\}$, in which case they also carry references
to two previous input or computation nodes; branching nodes are labelled by
zero-tests, referring to some previously computed quantity. Each leaf $v$ is
labelled 
with a sequence of references $(u_1,\dots,u_{\ell(v)})$ to previous input
or computation nodes. The {\em cost} of a computation tree is  its height
$\tau$, that is, the maximum length of a path from the root to a leaf.

It then makes sense to {\em evaluate} a computation tree at an element of
$\vecRing{s}$~---~called {\em input to the tree}, where $s$ is the number of input nodes, following a path from
the root to a leaf. After the input nodes, the path is constructed as
follows. Each computation node is assigned a value
derived from the label it carries, when it is defined. Otherwise,
e.g., in case of a division by~0, the path stops. At a branching
node
the path branches left or right depending on 
whether the value it refers to vanishes or not. At a leaf~$v$ with
label $(u_1,\dots,u_{\ell(v)})$, the output of the 
computation is the tuple of the values computed at nodes
$u_1,\dots,u_{\ell(v)}$. In that case, the computation tree is called 
\emph{evaluable} at the input.
Overall, the computation requires at most $\tau$ arithmetic operations
in
$\field$.
An algorithm is called \emph{quasi-linear} when the height
of its computation tree is linear (up to logarithmic factors) in the
number of inputs. It is called \emph{quasi-optimal} when this height
is linear (up to logarithmic factors) in the
number of inputs \emph{plus} the maximum number of values returned by
the output nodes. 

A computation tree takes inputs of fixed length. In order to solve a problem
for inputs of arbitrary size and characteristic, we need a {\em family} of trees, parametrized by
the input size and the characteristic. Every algorithm we describe using pseudocode in this article,
and all algorithms that we rely on from the literature, can be described by a
family of computation trees.

The translation from pseudocode to computation tree is usually rather
direct, and as is customary in the
literature, we  do not do it explicitly. In a nutshell, for loops and
recursive calls are ``unrolled''; if statements that test whether a computed
quantity vanishes yield branching nodes, etc. Some operations in our
pseudocode may not be directly available in our model (as we only allow
arithmetic operations in $\field$ and zero-test), but they can be rewritten in
a way that complies with our requirements. This is for instance the case when
we compute the degree of a polynomial (as in Euclid's GCD algorithm): this can
be achieved by scanning its coefficients, in order of decreasing degree, until
a nonzero one is found. 
{We also invoke a result by van der
Hoeven and Lecerf~\cite{HoeLec20} on the transformation of computation trees
for {\em directed evaluation} in \cref{sec:overseparable}; the translation from pseudocode to tree also applies to 
their algorithm.}

The families of trees that we build for modular composition
with arbitrary degree and characteristic are \emph{uniform}, in the sense that an appropriate tree description is generated from the pseudocode and any given degree $n$ and characteristic $p$.

We allow our algorithms to return flags (such as \Fail, or \Cert{}/\NoCert{}).
This can be done in this model, by returning constants in the vector of outputs,
such as $1$ for \Fail{} and $0$ otherwise.

Finally, several of our algorithms rely on randomization; however, we do not
want to introduce another arithmetic operation for the selection of random
field elements. One reason for this is that {the result by van der
Hoeven and Lecerf~\cite{HoeLec20} mentioned above} is
explicitly written in a deterministic model. Instead, ``random'' field elements
are given to our procedures as extra input parameters.


\section{Simultaneous modular operations by matrix multiplication}
\label{sec:sequence}

A key ingredient in fast modular composition algorithms is to turn the
problem into the simultaneous evaluation of polynomials of smaller
degree, and exploit the structure brought by this simultaneity using
matrix multiplication.  In this section, after reviewing Brent and
Kung's original algorithm and giving a direct extension of it, we use
this idea in two further contexts: N\"usken and Ziegler's bivariate
modular composition algorithm, and the computation of truncations of
powers of the form $a^k\rem f$. Both arise in our algorithms, and are
bottlenecks in their complexity.


\subsection{Brent and Kung's algorithm}
\label{ssec:BK}

\subsubsection{Modular composition}

We start with a review of Brent and Kung's algorithm to compute
$\polp(a)\rem f$, pointing out the impact of rectangular matrix
multiplication~\cite{HuPa98} and how the runtime depends on the
degrees of both $f$ and $\polp$~\cite[Fact~3.1]{Shoup94}. This can
be seen as an introduction to the N\"usken-Ziegler algorithm, which
generalizes this approach to a bivariate~$\polp$.

\begin{proposition}
  \label{lemma:BK}
  Given polynomials $f\in\field[x]$ of degree~$n$, $a$ in $\field[x]_{<n}$ and $g$
  in $\field[y]_{<d}$, \algoName{algo:ModularComposition-BrentKung}{}
  computes $\polp(a)\rem f$ using $\softO{(1 + n/d)d^{\omega_2/2}}$
  operations in \(\field\).
\end{proposition}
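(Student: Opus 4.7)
The plan is to use the classical Brent--Kung baby-steps/giant-steps scheme, but with rectangular matrix multiplication as in Huang--Pan, and with all quantities tracked as functions of both \(n\) and \(d\) rather than just of \(n\). The high-level idea is to reduce the evaluation of \(g(a) \bmod f\) to a single ``fat'' matrix product, whose shape is then handled via the exponent \(\omega_2\).

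Set \(s = \lceil \sqrt{d}\,\rceil\) and \(t = \lceil d/s \rceil \le s\), and write \(g(y) = \sum_{i=0}^{t-1} g_i(y)\, y^{is}\) with each \(g_i \in \field[y]_{<s}\). First, compute the baby steps \(a^0, a^1, \ldots, a^{s-1} \bmod f\) by \(s-1\) successive modular multiplications at cost \(\softO{sn}\), and store their coefficients on \((1,x,\ldots,x^{n-1})\) as the rows of a matrix \(B \in \matRing{s}{n}\). Stack the coefficients of \(g_0, \ldots, g_{t-1}\) into a matrix \(G \in \matRing{t}{s}\); then the rows of \(GB \in \matRing{t}{n}\) are, by linearity, the coefficients of \(g_0(a), \ldots, g_{t-1}(a) \bmod f\). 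Finally, obtain \(a^s \bmod f\) by one more modular multiplication and combine by Horner's rule
\[
g(a) \equiv g_0(a) + a^s\bigl(g_1(a) + a^s(g_2(a) + \cdots)\bigr) \bmod f,
\]
using \(t\) modular multiplications of polynomials of degree \(<n\), at cost \(\softO{tn}\).

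The main work is the cost analysis of the product \(GB\), which has shape \(\sqrt{d} \times \sqrt{d}\) by \(\sqrt{d} \times n\). I would split \(B\) column-wise into \(\lceil n/d \rceil\) blocks of size \(\sqrt{d} \times d\), padding the last block if necessary (this is trivial when \(n \le d\), giving a single block). Each block multiplication then has shape \(k \times k\) by \(k \times k^2\) with \(k = \sqrt{d}\); by transposition this is the shape defining \(\omega_2\), so each block costs \(\bigO{d^{\omega_2/2}}\), and summing over blocks gives \(\bigO{(1 + n/d)\, d^{\omega_2/2}}\) for \(GB\). This is the main technical point: without the blocking step one only gets the original Brent--Kung bound in \(d^{(\omega+1)/2}\).

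It remains to check that the remaining costs are absorbed. Both the baby- and giant-steps contribute \(\softO{n\sqrt{d}}\). Using \(\omega_2 \ge 3\), one has \(n\sqrt{d} \le d^{\omega_2/2}\) when \(n \le d\), and \(n\sqrt{d} \le (n/d)\, d^{\omega_2/2}\) when \(n > d\), so in both regimes \(\softO{n\sqrt{d}} \subseteq \softO{(1+n/d)\, d^{\omega_2/2}}\). Putting the three contributions together gives the claimed bound. Correctness follows from the identity \(g(a) = \sum_{i=0}^{t-1} g_i(a)\, (a^s)^i\) in \(\quoF\), which is built into the Horner recombination.
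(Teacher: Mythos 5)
Your proof is correct and follows essentially the same route as the paper's: the identical baby-steps/giant-steps decomposition with $\lceil\sqrt{d}\rceil$ baby powers, the same column-blocking of the central $\sqrt{d}\times\sqrt{d}$ by $\sqrt{d}\times n$ product into $\lceil n/d\rceil$ pieces of shape $\sqrt{d}\times\sqrt{d}$ by $\sqrt{d}\times d$ (each costing $\bigO{d^{\omega_2/2}}$), and the same Horner recombination. The only difference is that you spell out the absorption of the $\softO{n\sqrt{d}}$ terms via $\omega_2\ge 3$, which the paper leaves implicit.
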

\begin{proof}
  Correctness follows from noticing that at \cref{algoBK:bi}, $b_i
  \equiv \polp_{ir} + \polp_{ir+1} a + \cdots + \polp_{ir + r-1}
  a^{r-1} \bmod f$ holds for all $i$, where $\polp_j$ is the
  coefficient of degree $j$ in $\polp$ for all $j$. The cost of the
  algorithm comes from $\Theta(d^{1/2})$ multiplications modulo $f$,
  which use $\softO {n d^{1/2}}$ operations in $\field$, and a matrix
  product in sizes $s \times r$ and $r \times n$, with both $s$ and
  $r$ in $\Theta(d^{1/2})$.  This product can be done through $\lceil
  n/d \rceil \le n/d+1$ matrix products in sizes $s \times r$ and $r
  \times d$, each of which takes $\bigO {d^{\omega_2/2}}$ operations
  in $\field$.
\end{proof}

\begin{algorithm}
  \algoCaptionLabel{ModularComposition-BrentKung}{f,a,\polp}
  \begin{algorithmic}[1]
  \Require $f$ of degree $n$ in $\xRing$, $a$ in $\xRing_{<n}$, $\polp$ in $\yRing_{<d}$
  \Ensure $\polp(a)\rem f$
  \State $r \gets \lceil d^{1/2}\rceil$, $s \gets \lceil d / r\rceil$ 
  \State $\hat a_0 \gets 1$
  \State\InlineFor{$i = 1, \ldots, r$}{$\hat{a}_i \gets a \cdot \hat{a}_{i-1} \rem f$}
      \Comment{\(\hat{a}_i = a^i \rem f\)}  
  \State $A \gets$ matrix $({\rm coeff}(\hat{a}_i, j))_{\substack{0\le
  i<r\\0\le j< n}}$ in $\matRing{r}{n}$
      \Comment{coefficient of degree \(j\) of \(\hat{a}_i\)}  
  \State $\polP \gets$ matrix $({\rm coeff}(\polp,{ir+j}))_
  {\substack{0\le i< s\\0\le j<r}}$ in $\matRing{s}{r}$
  \State $B=(b_{i,j})_{\substack{0\le i < s\\ 0 \le j < n}} \gets \polP
  A$ in $\matRing{s}{n}$
  \State\InlineFor{$i = 0, \ldots, s-1$}{$b_i \gets b_{i,0} + \cdots + b_{i,n-1} x^{n-1}$}\label{algoBK:bi}
  \State \Return \(b_0 + b_1 \hat{a}_r + \dots + b_{s-1} \hat{a}_r^{s-1} \rem
  f\) \Comment{Horner evaluation}
  \end{algorithmic}
\end{algorithm}

\subsubsection*{Note}

In the analysis, dividing the matrix product into blocks, as we did,
is suboptimal. Using rectangular matrix multiplication directly, the
runtime can be described by the finer estimate
$\softO{d^{\omega_{2\log(n)/\log(d)}/2}}$.  Here, the notation
$\omega_\theta$ is a feasible exponent for rectangular matrix
multiplication for any real number \(\theta\): there is an algorithm
that multiplies an $n\times\lceil n^\theta\rceil$ matrix by an
$\lceil n^\theta\rceil\times n$ matrix using
$\bigO{n^{\omega_\theta}}$ operations~\cite{LGU18}.  However, this
refinement complicates notation, and would not be of use for our main
results. The same remark holds for several other runtime estimates in
this section, such as \cref{lemma:NZ,prop:sim_trunc_mod_mul}.

\subsubsection{Power projection}
\label{subsec:powerproj}

The transposition principle implies the
existence of an algorithm \Call{algo:PowerProjection}{} with the same
asymptotic runtime as 
\algoName{algo:ModularComposition-BrentKung}{} and
with the following signature~\cite{Shoup94}.

\begin{algorithm}[ht]
  \algoCaptionLabel{PowerProjection}{f,a,d,(r_i)_{0\le i < n}}
  \begin{algorithmic}[1]
  \Require $f$ of degree $n$ in $\xRing$, $a$ in $\xRing_{<n}$, $d$ in $\NN$, $(r_i)_{0\le i < n}$ in $\vecRing{n}$
  \Ensure $(\ell(1),\ell(a),\dots,\ell(a^{d-1} \bmod f))$, with $\ell(b_0 +\cdots + b_{n-1}x^{n-1}) 
    = r_0 b_0 + \cdots + r_{n-1} b_{n-1}$
  \end{algorithmic}
\end{algorithm}

Whereas seeing the details of
\algoName{algo:ModularComposition-BrentKung}{} is useful as a preamble
to the N\"usken-Ziegler algorithm, \algoName{algo:PowerProjection} {}
 only plays the role of a subroutine in one other algorithm given
just below. Moreover, giving its pseudocode would require us to
introduce concepts such as transposed product, that would not used any
further in this text. We refer the reader to~\cite{Shoup99}, which
gives all details but uses classical matrix arithmetic (with
$\omega_2=4$), so the runtime of that version is $\softO{d^2+nd}$ instead
of $\softO{(1 + n/d)d^{\omega_2/2}}$.

\subsubsection{Small minimal polynomial}
\label{subsubsec:minpoly}

Modular composition can be sped up when the minimal polynomial $\mu_a$
of $a$ modulo $f$ has degree at most $d$, for some (small) integer $d \le
n$. To compute $\polp(a)\rem f$, the idea is that once $\mu_a$
is known, $\tilde{\polp} = \polp \rem \mu_a$ can be computed, 
and then
$\polp(a)  \equiv \tilde{\polp}(a) \bmod f$ (see e.g.\
\cite[Sec.\,4.1]{HoeLec18}). The computation of the latter by \cref{lemma:BK}
benefits from $\tilde \polp$ having degree less than $d$.

It remains to discuss how to compute $\mu_a$. Here, we follow an
algorithm of Shoup (the deterministic version, for $f$ irreducible,
is in~\cite[Thm.\,3.4]{Shoup94}; the randomized one is
in~\cite[Sec.\,4]{Shoup95}). We take a random linear form
$\ell:\xRing/\genBy{f} \to \field$ and compute the
sequence $(\ell(a^k \bmod f))_{0 \le k < 2d}$. With high probability, its
minimal polynomial is $\mu_a$; the algorithm verifies whether it is
the case, and returns either a correct result or \Fail.
In \algoName{algo:ModularComposition-SmallMinimalPolynomial}{},
$\mu _a$
is computed using
an Extended Euclidean scheme called
\textproc{MinimalPolynomialForSequence}~\cite[Algo.\,12.9]{GaGe99}.

The following lemma analyses the runtime of this procedure, and the
probability of success. As per our convention at the end of
\cref{sec:preliminaries}, the ``random'' linear form $\ell$ is
actually given as an argument, through the vector
$(r_0,\dots,r_{n-1}) \in \vecRing{n}$ of its coefficients.

\begin{algorithm}
  \algoCaptionLabel{ModularComposition-SmallMinimalPolynomial}{f,a,\polp,d,(r_i)_{0\le i<n}}
  \begin{algorithmic}[1]
  \Require $f$ of degree $n$ in $\xRing$, $a$ in $\xRing_{< n}$,
  $\polp$ in $\yRing_{<n}$, $d$ in $\{1,\dots,n\}$, $(r_i)_{0\le i<n}$ in
  $\vecRing{n}$
  \Ensure $\polp(a) \rem f$ or \Fail
  \State $(v_0,\dots,v_{2d-1}) \gets \textproc{PowerProjection}(f,a,2d,(r_i)_{0\le i<n})$
  \State $\mu \gets \textproc{MinimalPolynomialForSequence}(v_0,\dots,v_
  {2d-1})$\label{step:minpolyseq} 
  \State $t \gets \Call{algo:ModularComposition-BrentKung}{f,a,\mu}$ \label{step:compute_t}
  \If{$t \ne 0$} \Return \Fail
  \Else{} \Return \Call{algo:ModularComposition-BrentKung}{f,a,\polp \rem \mu}
  \EndIf
  \end{algorithmic}
\end{algorithm}

\begin{lemma}\label{lemma:smallminpoly}
  Given $f\in\field[x]$ of degree~$n$, $a$ in~$\field[x]_{<n}$, $g$ in
  $\field[y]_{<n}$, $d$ in $\{1,\dots,n\}$ and $(r_i)_{0\le i<n}$ in
  $\field^n$,
  \algoName{algo:ModularComposition-SmallMinimalPolynomial}{} uses
  $\softO{nd^{(\omega_2/2)-1}}$ operations in $\field$ and returns
  either $\polp(a) \rem f$ or \Fail. If $\mu_a$ has degree at most
  $d$, and the entries of~$(r_i)_{0\le i<n}$ are chosen uniformly and
  independently from a finite subset $S$ of $\field$, then with
  probability at least $1-n/\card{S}$ the algorithm returns $\polp(a)
  \rem f$ and computes $\mu_a$ as a by-product. If $\mu_a$ has degree
  more than $d$, the algorithm returns \Fail.
\end{lemma}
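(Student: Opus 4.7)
The plan is to break the lemma into three independent pieces: the complexity estimate, the structural output guarantee (either $\polp(a)\rem f$ or \Fail, always \Fail when $\deg\mu_a>d$), and the probability bound when $\deg\mu_a\le d$.

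For complexity, I would sum the cost of each step. The call to \textproc{PowerProjection} on $2d$ terms costs $\softO{(1+n/d)d^{\omega_2/2}}$ by the transposition principle applied to \cref{lemma:BK}. The Extended Euclidean call on a sequence of length $2d$ costs $\softO{d}$. Each of the two calls to \textproc{ModularComposition-BrentKung} is on an input polynomial of degree at most $d$ (namely $\mu$ and $\polp\rem\mu$), hence also $\softO{(1+n/d)d^{\omega_2/2}}$. Using $d\le n$ and $\omega_2/2>1$ collapses the total to $\softO{nd^{\omega_2/2-1}}$.

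For the structural correctness, I would introduce the linear form $\ell:\xRing/\genBy{f}\to\field$ whose matrix in the basis $(1,x,\ldots,x^{n-1})$ is $(r_0,\ldots,r_{n-1})$, and the sequence $u_k=\ell(a^k\bmod f)$. Since $\mu_a(a)\equiv 0\bmod f$, the polynomial $\mu_a$ annihilates $(u_k)_k$, so its minimal polynomial $\mu_\ell$ divides $\mu_a$. The check ``$t=0$'' verifies that the recovered $\mu$ annihilates $a$ modulo $f$: if so, then $\mu_a\mid\mu$, hence $(g\rem\mu)(a)\equiv g(a)\bmod f$ and the final Brent--Kung call returns $g(a)\rem f$; otherwise \Fail is returned. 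In particular, when $\deg\mu_a>d$, the computed $\mu$ has degree at most $d$ and cannot be a nonzero multiple of $\mu_a$, forcing $t\ne 0$ and hence \Fail.

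For the probability bound in the case $\deg\mu_a\le d$, I observe that $\deg\mu_\ell\le d$, so $2d$ terms are sufficient for \textproc{MinimalPolynomialForSequence} to return $\mu=\mu_\ell$ exactly; it then suffices to prove $\mu_\ell=\mu_a$ with probability at least $1-n/\card S$. My plan is to use the classical equivalence, with $d'=\deg\mu_a$, that $\mu_\ell=\mu_a$ iff the $d'\times d'$ Hankel matrix $H=(u_{i+j})_{0\le i,j<d'}$ is nonsingular; since its entries depend $\field$-linearly on $(r_0,\ldots,r_{n-1})$, $\det H$ is a polynomial of total degree at most $d'\le n$ in these parameters, and a Schwartz--Zippel argument will then conclude. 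The main obstacle is to exhibit a specific $\ell^\star$ making $\det H\ne 0$: I would take the form that extracts the coordinate along $a^{d'-1}$ in a basis of $\xRing/\genBy{f}$ extending $(1,a,\ldots,a^{d'-1})$, so that $u_{i+j}=\delta_{i+j,\,d'-1}$ whenever $i+j<d'$; this makes $H$ unit anti-triangular, with determinant $\pm 1$, closing the argument and also showing that $\mu_a$ is recovered as a by-product of the successful run.
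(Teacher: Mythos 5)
Your decomposition, the complexity analysis, and the treatment of the structural/\Fail{} cases all follow the paper's proof essentially verbatim. The genuine difference is in the probability argument. Where the paper simply invokes the standard probabilistic analysis of Wiedemann's algorithm (citing Kaltofen--Pan and Kaltofen--Saunders for the $1-n/\card{S}$ bound on the event that the minimal polynomial of $(\ell(a^k))_k$ coincides with that of $(a^k\rem f)_k$), you reprove that bound from scratch: you reduce $\mu_\ell=\mu_a$ to the nonsingularity of the $d'\times d'$ Hankel matrix $H=(u_{i+j})$, observe that $\det H$ is a degree-$d'$ polynomial in $(r_0,\dots,r_{n-1})$, exhibit a witness linear form $\ell^\star$ (the dual coordinate along $a^{d'-1}$ in a basis extending $(1,a,\dots,a^{d'-1})$, which exists precisely because $\deg\mu_a=d'$) making $H$ unit anti-triangular, and close with Schwartz--Zippel. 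This is correct and self-contained; it even yields the slightly sharper failure probability $d'/\card{S}\le n/\card{S}$, and as a side effect confirms directly that $\mu=\mu_a$ is recovered on success. The paper's route is shorter but opaque (it outsources the key estimate to the Wiedemann literature); yours is longer but elementary and makes explicit what the randomization is actually doing. One small point you gloss over but which is standard: with $2d\ge 2\deg\mu_\ell$ terms, \textproc{MinimalPolynomialForSequence} returns $\mu_\ell$ exactly, which is what lets you identify the algorithm's $\mu$ with $\mu_\ell$ in the analysis.
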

\begin{proof}
  For any given $a$ in $\xRing_{<n}$ and $(r_0,\dots,r_{n-1})$,
  the algorithm computes a polynomial $\mu$ and tests whether $\mu(a)
  \equiv0 \bmod f$; if it is the case, it reduces $\polp$ modulo $\mu$
  before
  doing
  a modular composition. Hence, the output may only be $\polp(a) \rem
  f$ or \Fail, as claimed; it is \Fail~ if and only if the value $t$
  at \cref{step:compute_t} does not vanish.

  By the discussion in \cref{subsec:powerproj} and \cref{lemma:BK},
  the call to $\textproc{PowerProjection}$ takes $\softO{(1+n/d)d^
    {\omega_2/2}}$ operations in~$\field$; because we take $d \le n$,
  this is $\softO{nd^{(\omega_2/2)-1}}$.  \cref{step:minpolyseq} then
  computes a nonzero annihilating polynomial of degree at most~$d$ in
  $\softO{d}$ operations in $\field$~\cite[Algo.\,12.9]{GaGe99}.
  The remaining lines call 
  \algoName{algo:ModularComposition-BrentKung}{} with a last argument
  of
  degree at most $d$, so the cost is $\softO{nd^{(\omega_2/2)-1}}$
  again; this establishes the claim on the runtime.
  
  Suppose first that $\mu_a$ has degree greater than $d$. Then since
  $\deg(\mu) \le d$, $\mu(a) \rem f$ cannot be zero, so the output is
  \Fail, as claimed.

  Finally, suppose that the minimal polynomial $\mu_a$ has degree at
  most $d$ and that the entries of $(r_i)_{0 \le i<n}$ are chosen
  uniformly at random and independently from a set $S$ in
  $\field$. With $M_a$ the multiplication matrix by $a \bmod f$ and
  $\mathbf{1}$~the vector $(1,0,\ldots,0)$, the sequence
  $(M_a^k \mathbf{1})_{k \ge 0}$ is $(a^k \rem f)_{k \ge 0}$ and the sequence
  $(\trsp{(r_i)} M_a^k \mathbf{1})_{k \ge 0}$ is $(\ell(a^k \bmod
  f))_{k \ge 0}$. Following the probabilistic analysis of
  Wiedemann's algorithm~\citetext{\citealp[Lem.~2]{KaPa91};
    \citealp[Lem.~1] {KaSa91}}, the probability that their minimal
  polynomials coincide is at least $1-n/\card{S}$. When this occurs,
  \cref{step:minpolyseq} computes $\mu_a$; the value $t$ at
  \cref{step:compute_t} is then zero, and the output is $\polp(a)\rem
  f$.
\end{proof}

The main idea in this algorithm~---~computing an annihilating
polynomial for $a$ and using it to reduce $\polp$~---~is actually at
the core of our main algorithm as well. Key differences are that we
compute several annihilating polynomials (which we call
relations), and use them to reduce $\polp$ into a bivariate
polynomial. We then apply N\"usken and Ziegler's extension of
Brent and Kung's algorithm, which we present now.


\subsection{Bivariate composition}
\label{sec:sequence:biv_comp}

Here we describe the N\"usken-Ziegler algorithm for
modular composition \cite{NusZie04}, which computes $\polp(x,a) \rem f$ for
a bivariate $\polp$ in $\xyRing$.

First, however, we address the following question: given $f$ of degree
$n$ in $\xRing$, $a$ in $\xRing_{<n}$ and an $s$-tuple
$(\polp_0,\dots,\polp_{s-1})$ in $\xyRing_{<(m,r)}^s$,
compute all compositions
\[
  (\polp_0(x,a) \rem f,\dots,\polp_{s-1}(x,a) \rem f)\in\xvecRing{s}.
\]
The solution designed by N\"usken and Ziegler~\cite{NusZie04} boils
down to a multiplication of polynomial matrices. Writing the
polynomials~$\polp_i$ as the rows of their coefficients in~$y$ gives an
$s\times r$ matrix~$\polP$ whose entries are polynomials in
\(\xRing_{<m}\). Writing the powers of~$1,a,\dots,a^{r-1}\rem f$ in a
column vector~$A$ reduces the simultaneous composition to a
matrix-vector product~$\polP A$. This is turned into a matrix-matrix
product by spreading the coefficients of~$A$ as follows. If
\[
  \polp_i(x,y) = \sum_{0\le j<r} \polp_{i,j}(x) y^j,
\]
then computing the product
\[
  B=
  \begin{pmatrix}
    \polp_{0,0}(x)&\dotsb&\polp_{0,r-1}(x)\\
    \vdots&&\vdots\\
    \polp_{s-1,0}(x)&\dotsb&\polp_{s-1,r-1}(x)
  \end{pmatrix}
  \begin{pmatrix}
    [a^0\rem f]_0^{m-1}&\dotsb&[a^0\rem f]_{(\lceil n/m\rceil-1)m}^{m-1}\\
    \vdots&&\vdots\\
    [a^{r-1}\rem f]_0^{m-1}&\dotsb&[a^{r-1}\rem f]_{(\lceil n/m\rceil-1)m}^{m-1}
  \end{pmatrix}
\]
yields a matrix whose entry~$B_{i,\ell}$ is
\[B_{i,\ell}=\sum_{0\le j<r}{\polp_{i,j}(x)[a^j\rem f]_{\ell m}^{m-1}}.\]
Summing the~$B_{i,\ell}x^{\ell m}$ modulo $f$, for
$\ell=0,\dots,\lceil n/m\rceil-1$, then provides $\polp_i(x,a)\rem f$
for $i=0,\dots,s-1$ at low cost. This is detailed in
\cref{algo:SimultaneousBivariateModularComposition,lemma:NZ}.

\begin{algorithm}
  \algoCaptionLabel{SimultaneousBivariateModularComposition}{f,a,\polp_0,\dots,\polp_{s-1},m,r}
  \begin{algorithmic}[1]
  \Require $f$ of degree $n$ in $\xRing$, $a$ in $\xRing_{<n}$,
  $(\polp_0,\dots,\polp_{s-1})$ in $\xyRing_{<(m,r)}^s$
  \Ensure $(\polp_0(x,a)\rem f,\dots,\polp_{s-1}(x,a)\rem f)$
    \State \(\hat{a}_0 \gets 1\)\label{step:sim_biv_mod_eval:hata}
    \State\InlineFor{\(i = 1, \ldots, r-1\)}{$\hat{a}_i \gets a \cdot \hat{a}_{i-1} \rem f$}
          \Comment{\(\hat{a}_i = a^i \rem f\)}    \label{step:sim_biv_mod_eval:powers}
    \State $A \gets$ matrix $([\hat{a}_i]_{jm}^{m-1})_{
    \begin{subarray}{l}0\le i<r\\ 0\le j< \lceil n/m\rceil\end{subarray}}$
    in \(\xmatRing{r}{\lceil n/m \rceil}_{<m}\) \label{step:sim_biv_mod_eval:A}
    \State $\polP \gets$ matrix $(\polp_{i,j}(x))_{\substack{1\le i\le
    s\\0\le j<r}}$ in \(\xmatRing{s}{r}_{<m}\), where \(\polp_i(x,y) =
    \sum_j \polp_{i,j}(x) y^j\)
 \label{step:sim_biv_mod_eval:P}
    \State $B = (B_{i,j})_{\begin{subarray}{l}{0 \le i < r}\\ 0 \le
    j<
    \lceil
    n/m\rceil\end{subarray}} \gets \polP A$ 
    \label{step:sim_biv_mod_eval:pmm}
    \State\InlineFor{$i=0,\dots,s-1$}{$b_i \gets (\sum_{0\le j<\lceil n/m\rceil}{B_{i,j}x^{jm}}) \rem f$} \label{step:sim_biv_mod_eval:bi}
    \State \Return \((b_0,\ldots,b_{s-1})\)
  \end{algorithmic}
\end{algorithm}

\begin{lemma}[{\cite[Lem.\,10({\normalfont iii})]{NusZie04}}]
  \label{lemma:NZ}
  \algoName{algo:SimultaneousBivariateModularComposition}{}
  computes $(\polp_0(x,a)\rem f,\dots,\polp_{s-1}(x,a)\rem f)$. 
  Assuming
  $s\in \softO{r}$, it uses
  $\softO{\bicost{r^2}}=\softO{(m+n/r^2)r^{\omega_2}}$ operations in
  $\field$, \bicostrecall.
\end{lemma}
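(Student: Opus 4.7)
The plan is to establish correctness first by unfolding the matrix product, then to bound each of the three nontrivial cost contributions.

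For correctness, I would expand the entry $B_{i,\ell}$ as the $(i,\ell)$-entry of $\polP A$, namely
\[
  B_{i,\ell} = \sum_{0 \le j < r} \polp_{i,j}(x)\, [\hat a_j]_{\ell m}^{m-1}.
\]
Then, since $\deg(\hat a_j) < n \le \lceil n/m\rceil m$, the slices $[\hat a_j]_{\ell m}^{m-1}\, x^{\ell m}$ for $\ell = 0, \ldots, \lceil n/m\rceil - 1$ partition the coefficients of $\hat a_j$, so
\[
  \sum_{0 \le \ell < \lceil n/m\rceil} [\hat a_j]_{\ell m}^{m-1} \, x^{\ell m} = \hat a_j = a^j \rem f.
\]
Swapping the order of summation at \cref{step:sim_biv_mod_eval:bi} gives
\[
  \sum_{\ell} B_{i,\ell}\, x^{\ell m} = \sum_{0\le j<r} \polp_{i,j}(x)\, (a^j \rem f),
\]
which reduces modulo $f$ to $\sum_j \polp_{i,j}(x)\, a^j \rem f = \polp_i(x,a)\rem f$, as desired.

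For the cost, I would bound each step. Computing the $r$ powers $\hat a_i$ at \cref{step:sim_biv_mod_eval:powers} uses $r-1$ modular multiplications, each in $\softO{n}$ operations, for a total of $\softO{rn}$; the final assembly at \cref{step:sim_biv_mod_eval:bi} costs $\softO{sn}$ since each $b_i$ is a superposition of $\lceil n/m\rceil$ polynomials of degree less than $2m$, followed by reduction modulo~$f$. Both terms are absorbed by $\softO{(m + n/r^2)r^{\omega_2}}$, since $\omega_2 \ge 3$ implies $rn \le n r^{\omega_2-2}$ and $s \in \softO{r}$.

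The dominant step is the polynomial matrix product at \cref{step:sim_biv_mod_eval:pmm}, where $\polP \in \xmatRing{s}{r}_{<m}$ and $A \in \xmatRing{r}{\lceil n/m\rceil}_{<m}$. I would use evaluation-interpolation at $2m$ points: evaluating all entries costs $\softO{m(sr) + m (rn/m)} = \softO{msr + rn}$, and interpolating the $s \lceil n/m\rceil$ product entries (each of degree less than $2m$) costs $\softO{sn}$. For each of the $2m$ scalar matrix products, of shape $s \times r$ by $r \times \lceil n/m\rceil$ with $s \le r$ (up to polylog factors absorbed by $\softO{\cdot}$), I would use rectangular matrix multiplication: when $n/m \ge r^2$, split the right factor into $\lceil n/(mr^2)\rceil$ blocks of shape $r \times r^2$, each multiplied in $\bigO{r^{\omega_2}}$ operations; otherwise pad to $r \times r^2$ for a single $\bigO{r^{\omega_2}}$ cost. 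Thus each scalar product uses $\bigO{(1 + n/(mr^2))\, r^{\omega_2}}$ operations, and summing over the $2m$ points gives $\softO{(m + n/r^2)\, r^{\omega_2}} = \softO{\bicost{r^2}}$. The main obstacle is setting up this two-regime block-multiplication argument so that both the ``small $n/m$'' and ``large $n/m$'' cases combine into the claimed $(m + n/r^2)\, r^{\omega_2}$ bound.
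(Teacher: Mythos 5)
Your proof is correct and takes essentially the same approach as the paper's. The paper's proof omits the correctness argument (deferring implicitly to N\"usken--Ziegler) and invokes the standard $\softO{m\,r^{\omega_2}}$ cost for an $s\times r$ by $r\times r^2$ polynomial-matrix product directly, whereas you make the underlying evaluation-interpolation at $2m$ points explicit and then block-decompose each scalar product; both organizations perform the same $\lceil n/(mr^2)\rceil \cdot 2m$ rectangular multiplications and reach the same $\softO{(m+n/r^2)\,r^{\omega_2}}$ bound.
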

\begin{proof}
  \cref{step:sim_biv_mod_eval:hata,step:sim_biv_mod_eval:powers} use $\softO{rn}$
  operations. Similarly, for each~$i=0,\dots,s-1$,
  \cref{step:sim_biv_mod_eval:bi} uses $\lceil n/m\rceil$ additions
  in~$\bigO{m}$
  operations each and one reduction in~$\softO{m+n}$ operations. The
  total cost of \cref{step:sim_biv_mod_eval:bi} is
  thus~$\softO{s(n+m)}$.

  \cref{step:sim_biv_mod_eval:A,step:sim_biv_mod_eval:P} do not use any arithmetic
  operation. The most expensive step is \cref{step:sim_biv_mod_eval:pmm}, the product of an $s\times r$
  matrix by an $r\times \lceil n/m\rceil$ matrix, both with entries in
  $\xRing_{<m}$. Using the same kind of block decomposition as
  in~\cref{lemma:BK}, this is done using $\lceil \lceil n/m \rceil/r^2 \rceil \in
  n/(mr^2) +\bigO 1 $ products in sizes $s \times r$ and $r \times r^2$.
  With the assumption~$s\in \softO{r}$, each of them uses $\softO{m
    r^{\omega_2}}$ operations in $\field$, for a total of
  $\softO{\bicost{r^2}}=\softO{(m+n/r^2)r^{\omega_2}}$ operations
  in $\field$.

  The other steps, in~$\softO{(r+s)(n+m)}=\softO{rm+rn}$, are at
  most of the same
  order, since $\omega_2 \ge 3$.
\end{proof}

\algoName{algo:SimultaneousBivariateModularComposition}{} is the
central step in
bivariate composition as showed in 
\algoName{algo:BivariateModularComposition}{},
leading to the complexity stated in \cref{prop:NuskenZiegler}.

\begin{algorithm}
  \caption[\textproc{BivariateModularComposition}]{\textproc{BivariateModularComposition}{$(f,a,\polp)$}
 \hfill (N\"usken-Ziegler algorithm \cite{NusZie04})}
  \label{algo:BivariateModularComposition}
  
  \begin{algorithmic}[1]
  \Require $f$  of degree $n$ in  $\xRing$, $a$ in $\xRing_{<n}$, $\polp$ in $\xyRing_{<(m,d)}$
  \Ensure $\polp(x,a)\rem f$
    \State $r \gets \lceil d^{1/2}\rceil$, \(s \gets \lceil d / r\rceil\)\label{bivmodcomp:r}
    \State Write $\polp(x,y)=\polp_0 (x,y) + \polp_1(x,y)y^r + \dots + \polp_{s-1}(x,y)y^{r(s-1)}$ with $\deg_y(\polp_i)<r$ for $0 \le i< s$\label{bivmodcomp:p}
    \State $(b_0,\ldots,b_{s-1}) \gets
    \Call{algo:SimultaneousBivariateModularComposition}{f,a,\polp_0,\dots,\polp_{s-1},m,r}$\newline
    \Comment{$b_i = \polp_i(x,a)\rem f$} \label{bivmodcomp:bi} 
    \State $\hat{a} \gets a^r\rem f$ \Comment{is computed in the\label{bivmodcomp:a}
    previous step}
    \State \Return \(b_0+ b_1 \hat{a} + \dots + b_{s-1} \hat{a}^{s-1}\) \Comment{Horner evaluation}\label{bivmodcomp:ret}
  \end{algorithmic}
\end{algorithm}

\begin{proposition}[{\cite[Thm.\,9]{NusZie04}}]
  \label{prop:NuskenZiegler}
  Given $f\in\field[x]$ of degree~$n$, $a\in\field[x]_{<n}$,
  $\polp$ in $\xyRing_{<(m,d)}$,
\algoName{algo:BivariateModularComposition}{} computes $\polp
(x,a) \rem f$
  using $\softO{\bicost{d}}= \softO{(m+n/d)d^{\omega_2/2}}$
  operations in $\field$, \bicostrecall.
\end{proposition}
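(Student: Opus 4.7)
The plan is to follow the structure of \algoName{algo:BivariateModularComposition}{} and verify both correctness and the claimed complexity, leveraging \cref{lemma:NZ} as the main workhorse.

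For correctness, I would first observe that the decomposition at \cref{bivmodcomp:p} writes
\[
  \polp(x,y) = \sum_{0 \le i < s} \polp_i(x,y)\, y^{ri}, \qquad \deg_y(\polp_i) < r,
\]
so substituting \(y = a(x)\) and reducing modulo \(f\) gives
\[
  \polp(x,a) \equiv \sum_{0 \le i < s} \polp_i(x,a)\, a^{ri} \equiv \sum_{0 \le i < s} b_i\, \hat{a}^i \pmod{f},
\]
with \(b_i = \polp_i(x,a) \rem f\) (as returned at \cref{bivmodcomp:bi}) and \(\hat{a} = a^r \rem f\). This is exactly the value returned at \cref{bivmodcomp:ret}.

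For the complexity, the three nontrivial contributions are: (i) the simultaneous composition at \cref{bivmodcomp:bi}, (ii) the value \(\hat{a} = a^r \rem f\) at \cref{bivmodcomp:a}, and (iii) the Horner evaluation at \cref{bivmodcomp:ret}. With \(r = \lceil d^{1/2} \rceil\) and \(s = \lceil d/r \rceil\), both are in \(\Theta(d^{1/2})\), hence \(s \in \softO{r}\), and the polynomials \(\polp_0, \dots, \polp_{s-1}\) lie in \(\xyRing_{<(m,r)}\). Applying \cref{lemma:NZ} with \(r\) in the role of \(r\) there, step (i) costs \(\softO{\bicost{r^2}} = \softO{(m+n/r^2)r^{\omega_2}}\) operations; since \(r^2 = \Theta(d)\), this is \(\softO{\bicost{d}}\). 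As noted in \cref{bivmodcomp:a}, \(\hat{a} = a^r \rem f\) is already produced as a by-product of \cref{step:sim_biv_mod_eval:powers} of \algoName{algo:SimultaneousBivariateModularComposition}{}, so (ii) comes at no additional cost. Finally, the Horner evaluation in (iii) consists of \(s-1\) multiplications modulo \(f\), each costing \(\softO{n}\) by fast polynomial arithmetic, for a total of \(\softO{s n} = \softO{d^{1/2} n}\).

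It remains to check that this Horner term is absorbed by \(\softO{\bicost{d}}\). Since \(\omega_2 \ge 3\), we have \(d^{(\omega_2/2)-1} \ge d^{1/2}\), so
\[
  d^{1/2} n \le n d^{(\omega_2/2)-1} = (n/d)\, d^{\omega_2/2} \le \bicost{d},
\]
and similarly the other negligible terms in~(i) and~(ii) are dominated by \(\bicost{d}\). Putting everything together yields the announced bound \(\softO{\bicost{d}} = \softO{(m+n/d)d^{\omega_2/2}}\). The only subtlety worth double-checking is the compatibility condition \(s \in \softO{r}\) required by \cref{lemma:NZ}, which here is immediate from \(r, s = \Theta(d^{1/2})\); no further surprises are expected.
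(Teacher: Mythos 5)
Your proof is correct and follows the same route as the paper's: correctness from the identity $\polp(x,a)\equiv\sum_i b_i\hat a^i\bmod f$, complexity from \cref{lemma:NZ} applied with $r,s\sim d^{1/2}$ (so $\bicost{r^2}=\softO{\bicost{d}}$), and absorption of the $\softO{rn}$ Horner cost using $\omega_2\ge3$. Your write-up is just slightly more explicit than the paper's terse one-paragraph argument.
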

\begin{proof}
The correctness of the algorithm is straightforward. For the
complexity analysis, we first note that $s \sim r \sim d^{1/2}$.
\Cref{lemma:NZ} then shows that the complexity of
\cref{bivmodcomp:bi} is $\softO{\bicost{r^2}}=
\softO{\bicost{d}}$. The other task involving arithmetic operations
is the final Horner evaluation which costs \(\softO{rn}\).
As in the proof of \cref{lemma:NZ}, this is smaller than
the other part, since $\omega_2 \ge 3$.
\end{proof}

\subsection{Sequence of truncated modular powers} \label{sec:truncated_pow}

Another key ingredient in our composition algorithm also relies on polynomial
matrix multiplication. To our knowledge this is a new algorithm, whose
properties are summarized in the next lemma.
\begin{lemma}
  \label{prop:sim_trunc_mod_mul}
  Given $f$ of degree $n$ in $\xRing$, $(p_0,\dots,p_{r-1})$ in
  $\xRing_{<n}^r$, $ (q_0,\dots,q_{s-1})$ in $\xRing_{<n}^s$ and
  $m\in\mathbb{N}_{>0}$,
\algoName{algo:SimultaneousTruncatedModularMultiplication}{}
computes the
  simultaneous truncated modular multiplications
\[\{[p_iq_j\rem f]_0^{m-1}\mid 0\le i < r,0\le j < s\}.\]
  If \(s \in \softO{r}\), it uses $\softO{\bicost{r^2}}=\softO{
  (m+n/r^2)r^{\omega_2}}$
  operations in~$\field$, \bicostrecall.
\end{lemma}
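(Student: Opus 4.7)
The plan is to adapt the block-decomposition idea behind \cref{algo:SimultaneousBivariateModularComposition} to this bilinear setting, replacing the sequence of powers \(a^k \rem f\) by a precomputed sequence of low and high coefficient slices of \(x^{cm} \rem f\).

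I would first split each input into blocks of size \(m\): setting \(n' = \lceil n/m \rceil\), write \(p_i(x) = \sum_a p_i^{(a)}(x) x^{am}\) and \(q_j(x) = \sum_b q_j^{(b)}(x) x^{bm}\) with each \(p_i^{(a)}, q_j^{(b)} \in \xRing_{<m}\), and form matrices \(A = (p_i^{(a)})_{i,a}\) and \(B = (q_j^{(b)})_{j,b}\) of sizes \(r \times n'\) and \(s \times n'\) over \(\xRing_{<m}\). Then \(p_i q_j = \sum_c R_{ij}^{(c)}(x) x^{cm}\) with \(R_{ij}^{(c)} = \sum_{a+b=c} p_i^{(a)} q_j^{(b)} \in \xRing_{<2m}\), so \(V_{ij} := \sum_c R_{ij}^{(c)}(x) U_c(x)\) with \(U_c = x^{cm} \rem f\) satisfies \(V_{ij} \equiv p_i q_j \pmod f\) and \(\deg V_{ij} < n + 2m\). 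There is thus a unique \(k_{ij} \in \xRing_{<2m}\) with \(V_{ij} = (p_i q_j \rem f) + k_{ij} f\), and the required output is recovered as \([p_i q_j \rem f]_0^{m-1} = [V_{ij}]_0^{m-1} - [k_{ij} f]_0^{m-1}\) as soon as the low-\(m\) and top-\(2m\) coefficient slices of \(V_{ij}\) are available.

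To compute those slices for all pairs \((i,j)\), I would use the Hankel identity \([V_{ij}]_0^{m-1} = [A W B^T]_{ij} \bmod x^m\) where \(W\) is the \(n' \times n'\) Hankel matrix with entries \(W[a,b] = [U_{a+b}]_0^{m-1}\) in \(\xRing/(x^m)\); an analogous identity using \(W^{\mathrm{high}}[a,b] = [U_{a+b}]_{n-2m+1}^{2m-1}\) gives the top slice. The bottleneck is this triple product, which I would compute as \(A \cdot (W B^T)\): the factor \(W B^T\) is obtained via \(s\) Hankel-times-vector products in \(\xRing/(x^m)\), each reducing to a polynomial multiplication in an auxiliary variable with coefficients in \(\xRing/(x^m)\), for a total of \(\softO{sn}\); and the remaining product of an \(r \times n'\) by an \(n' \times s\) matrix over \(\xRing/(x^m)\) is treated exactly as in the proof of \cref{lemma:NZ} by splitting the inner dimension \(n'\) into \(\lceil n'/r^2 \rceil\) chunks of width \(r^2\) and applying \(\omega_2\)-exponent rectangular matrix multiplication, for a total of \(\softO{(1 + n/(mr^2)) \cdot m r^{\omega_2}} = \softO{(m + n/r^2) r^{\omega_2}}\) operations.

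The main obstacle is the precomputation of the slices \([U_c]_0^{m-1}\) and \([U_c]_{n-2m+1}^{2m-1}\) for all \(c = 0, \ldots, 2n'-1\), since the naive iteration \(U_{c+1} = x^m U_c \rem f\) would cost \(\softO{n^2/m}\) and exceed the target. I would bypass this by computing once the power series reciprocal \(\phi = \rev(f)^{-1} \bmod x^n\) in \(\softO{n}\) via Newton iteration (using monicity of \(f\) to ensure \(\rev(f)\) is invertible in \(\xRing[[x]]\)), and then using \(U_c = x^{cm} - q_c f\) where \(q_c = \lfloor x^{cm}/f \rfloor\) is a reversed window of \(\phi\). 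The low-\(m\) slice of \(U_c\) then reduces to a short product \([q_c f]_0^{m-1}\) of cost \(\softO{m}\) per \(c\), and the \(O(m)\) consecutive entries of \(q_c f\) needed for the top slice are all read off from a single convolution of \(\phi\) with \(\rev(f)\); the total precomputation is thus \(\softO{n}\). The same reciprocal \(\phi\) recovers each \(k_{ij}\) from the top slice of \(V_{ij}\) in \(\softO{m}\), and \([k_{ij} f]_0^{m-1}\) is a further short product of cost \(\softO{m}\). Summing the contributions \(\softO{n}\) (precomputation), \(\softO{(r+s)n}\) (Hankel products), \(\softO{rsm}\) (short products), and \(\softO{(m + n/r^2) r^{\omega_2}}\) (matrix product), and using \(s \in \softO{r}\) and \(\omega_2 \ge 3\), everything collapses to \(\softO{\bicost{r^2}}\).
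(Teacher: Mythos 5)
Your approach is genuinely different from the paper's. The paper reduces the problem to computing an \(m\)-slice of the product \(\bar p_i \bar q_j \rem x^{n-1}\), where \(\bar q_j = \rev(q_j,n-1)/\rev(f,n)\rem x^{n-1}\) already absorbs the reciprocal of \(f\), and extracts that slice via the three-term decomposition \cref{eq:prodFG} so that the two dominant terms are matrix products over \(\field[x]_{<m}\). You instead build a congruent polynomial \(V_{ij}=\sum_c R_{ij}^{(c)}U_c\) from precomputed powers \(U_c=x^{cm}\rem f\), compute its low and high slices via Hankel-structured triple products, and recover the remainder by a per-pair correction \(k_{ij}f\). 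Both routes reach the same rectangular matrix multiplications and the same bound \(\softO{(m+n/r^2)r^{\omega_2}}\). One caveat: your ``analogous identity'' for the top slice does \emph{not} take place modulo \(x^m\) --- the entries of \(W^{\mathrm{high}}\) have degree \(<2m\), so the triple product \(A\,W^{\mathrm{high}}\trsp{B}\) lives in \(\field[x]_{<4m}\) and is sliced afterwards. This only changes constants, but the statement as written is misleading.

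There is, however, a genuine gap in the precomputation of the top slices of the \(U_c\). You say the relevant \(O(m)\) coefficients of \(q_cf\) ``are all read off from a single convolution of \(\phi\) with \(\rev(f)\).'' Since \(\phi=\rev(f)^{-1}\bmod x^N\), the product \(\phi\cdot\rev(f)\) is \(1+O(x^N)\): every coefficient you would read off in the range of interest is \(0\). The point you are missing is that \(\rev(q_c,cm-n)\) equals the \(c\)-dependent truncation \(\phi\bmod x^{cm-n+1}\), not \(\phi\) itself, and it is exactly this truncation that creates nonzero coefficients of \(\rev(q_cf)\) at degrees \(>cm-n\) --- which is the range you need for the top slice. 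The repair is: from \(\phi\rev(f)\equiv 1\ (\mathrm{mod}\ x^N)\) one gets \([\rev(q_cf)]_d=-\sum_{i>cm-n}\phi_i[\rev(f)]_{d-i}\), so for each \(c\) the desired coefficients are (negatives of) the low-degree coefficients of the short product \([\phi]_{cm-n+1}^{2m-2}\cdot[\rev(f)]_0^{2m-2}\). This is \(\softO{m}\) per \(c\) and there are \(O(n/m)\) values of \(c\), so the total is indeed \(\softO{n}\), but it is \(O(n/m)\) separate short products with a sliding window of \(\phi\), not a single convolution. (Also, \(\phi\) must be computed to precision roughly \(n+2m\), not \(n\); and the naive alternative of iterating \(U_{c+1}=x^mU_c\rem f\) on a fixed-width top slice fails, since the width required grows by \(m\) at each step.) Once this step is corrected, the rest of your plan is sound and yields the stated bound.
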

The basic approach to this problem is to first compute all the
products $p_iq_j$ modulo $f$ and then truncate the computed
polynomials. However, this produces an intermediate result of size
$nrs$, which is $\Theta(nr^2)$ when $s$ is in $\Theta(r)$, and is
larger than our target complexity.

Hereafter, we use the reversal of a polynomial $p\in\xRing$ with
respect to \(m\in\NN\) defined by $\rev(p,m) = x^m p(1/x)$; when
$m=\deg(p)$ this is the classical reciprocal of the polynomial \(p\).

\begin{algorithm}
  \algoCaptionLabel{SimultaneousTruncatedModularMultiplication}{f,(p_i)_{i<r},(q_j)_{j<s},m}
  
  \begin{algorithmic}[1]
      \Require
      $f$ of degree $n$ in $\xRing$, $(p_0,\dots,p_{r-1})$ in $\xRing_{<n}^r$,
      $(q_0,\dots,q_{s-1})$ in $\field [x]_{<n}^s$, $m\in\mathbb{N}_{>0}$

      \Ensure
      $([p_iq_j\rem f]_0^{m-1})_{\substack{0\le i < r\\ 0\le j<s}}$

    \State $(\ell,t) \gets$ (quotient,remainder) in the Euclidean
    division $n-m-1=\ell m+t$ with $\ell=0$ if $m\ge n$
           \label{step:sim_trunc_mod_mul:init}
    \State \InlineFor{$i=0,\dots,r-1$}{$\bar{p}_i \gets \rev(p_i,n-1)$}
    \State \InlineFor{$j=0,\dots,s-1$}{$\bar{q}_j \gets$ power series expansion $\rev(q_j,n-1) /\rev(f,n) \rem x^{n-1}$}    \label{step:sim_trunc_mod_mul:revq_invrev}
    \State Form the matrices
      \[P_1 \gets ([\bar{p}_i]_{jm+t}^{m-1})_{\substack{0\le i< r\\
      0\le j\le\ell}} \in \xmatRing{r}{(\ell+1)}_{<m} \quad
      P_2 \gets ([\bar{p}_i]_{jm+t}^{m-1})_{\substack{0\le i< r\\ 0
      \le j < \ell}} \in \xmatRing{r}{\ell}_{<m}\]
    \Statex {(note that \(P_2\) is the \(r \times \ell\) left submatrix of \(P_1\)),} and
      \[Q_1 \gets ([\bar{q}_j]_{(\ell-i)m}^{m-1})_{\substack{0\le
      i\le\ell\\ 0\le j< s}} \in \xmatRing{(\ell+1)}{s}_{<m} \quad
      Q_2 \gets ([\bar{q}_j]_{(\ell-1-i)m}^{m-1})_{\substack{0\le i <
      \ell\\ 0\le j< s}} \in \xmatRing{\ell}{s}_{<m}\]
      \State $H \gets [P_1 Q_1]_0^{m-1} + [P_2 Q_2]_m^{m-1} + \left(
      \left[[\bar{p}_i]_0^{t-1}\,[\bar{q}_j]_{\ell m+1}^
      {m-1+t-1}\right]_{t-1}^{m-1}\right)_
      {\substack{0\le i< r\\ 0\le j< s}}$
          \Comment{$H = (\bar h_{i,j})_{i,j}$ is in \(\xmatRing{r}{s}_{<m}\)}
         \label{step:sim_trunc_mod_mul:truncated_quotient}
       \State \InlineFor{$i=0,\dots,r-1$ and $j=0,\dots,s-1$}{$r_{i,j} \gets (p_iq_j-\rev(\bar h_{i,j},m-1)f) \rem x^m$}
      \label{step:sim_trunc_mod_mul:truncated_remainder}
    \State \Return $(r_{i,j})_{\substack{0\le i < r\\ 0\le j <
    s}}$
  \end{algorithmic}
\end{algorithm}

\begin{proof}
  For all $i < r$ and $j < s$, let $p_iq_j=h_{i,j}f+r_{i,j}$, with
  $\deg(r_{i,j})<n$, be the Euclidean division of the product~$p_iq_j$
  by the polynomial~$f$. The main task of the algorithm is to compute
  the truncated quotients~$[h_{i,j}]_0^{m-1}$
  (\crefrange{step:sim_trunc_mod_mul:init}{step:sim_trunc_mod_mul:truncated_quotient});
  from there, the truncated remainders~$[r_{i,j}]_0^{m-1}$ are easily
  obtained (\cref{step:sim_trunc_mod_mul:truncated_remainder}) at a
  total cost of $\softO{mrs}$ operations.

  For the efficient computation of the quotients $h_{i,j}$, we rely on
  the classical approach via reciprocals and power series operations.
  More specifically we use the identity
  \[
    \rev(h_{i,j},n-2)=
    \frac{\rev(p_i,n-1)\rev(q_j,n-1)}{\rev(f,n)}
    \rem x^{n-1} = \bar p_i \bar q_j \rem x^{n-1},
  \]
  obtained by evaluating $p_iq_j=h_{i,j}f+r_{i,j}$ at $1/x$ and
  multiplying by~$x^{n-2}/f(1/x)=x^{2n-2}/\rev(f,n)$;
  here we have $\bar{p}_i=\rev(p_i,n-1)$ and $\bar{q}_j=
  \rev(q_j,n-1)/\rev(f,n) \rem x^{n-1}$, as in the
  pseudocode.  The idea of our algorithm is to compute only the last
  $m$ coefficients of this expansion by means of two polynomial matrix
  multiplications.

  For any
  $t\in\{0,\dots,m-1\}$, for any polynomials $a,b$ written as
  \[
    a = [a]_0^{t-1} + x^t \sum_{i\ge 0} a_i x^{im} \text{ with}\, \deg(a_i) < m,
    \quad
    b = \sum_{j\ge 0} b_j x^{jm} \text{ with}\, \deg(b_j) < m,
  \]
  and for any positive integer $\ell$, one has
  \begin{equation}\label{eq:prodFG}
    [ab]_{\ell m+t}^{m-1} = \left[\sum_{i+j=\ell}a_ib_j\right]_0^{m-1}
    + \left[\sum_{i+j=\ell-1}a_ib_j\right]_{m}^{m-1}
    + \left[[a]_0^{t-1}[b]_{\ell m+1}^{m-1+t-1}\right]_0^{m-1}.
  \end{equation}
  (The last summand is a product of small degree polynomials that is 0
  when $t=0$.) We use this formula with $\ell$ and $t$ as defined
  in \cref{step:sim_trunc_mod_mul:init}, so that the left-hand side is
  $[ab]_{n-m-1}^{m-1}$; applying this to $a = \bar p_i$ and $b= \bar
  q_j$ gives $\bar h_{i,j}=[\rev(h_{i,j},n-2)]_{n-m-1}^{m-1}$, and thus 
  $[h_{i,j}]_0^{m-1}$  by reversal.

  Since $\ell \sim n/m$, using this formula for a single pair $i,j$
  requires $\softO{n}$ operations in \(\field\) and thus is as costly
  as computing $\bar p_i \bar q_j \rem x^n$.  In our algorithm the gain comes from
  using this formula simultaneously for several products, in which
  case matrix multiplication helps.

  The first multiplication in 
  \cref{step:sim_trunc_mod_mul:truncated_quotient} is the matrix
  product
  \[\begin{pmatrix}
  [\bar{p}_0]_t^{m-1}&\dotsb&[\bar{p}_0]_{n-m-1}^{m-1}\\
  \vdots&&\vdots\\
  [\bar p_{r-1}]_t^{m-1}&\dotsb&[\bar p_{r-1}]_{n-m-1}^{m-1}
  \end{pmatrix}
  \begin{pmatrix}
[\bar q_0]_{\ell m}^{m-1}&\dotsb&[\bar q_{s-1}]_{\ell m}^{m-1}\\
\vdots&&\vdots\\
[\bar q_0]_{0}^{m-1}&\dotsb&[\bar q_{s-1}]_{0}^{m-1}\\
  \end{pmatrix}
  .\]
 Its entries
  are the first summand in~\cref{eq:prodFG} for $a=\bar p_i$ and
  $b=\bar q_j$, for $0\le i < r$ and $0\le j < s$. Similarly, the second
  summand in~\cref{eq:prodFG} is obtained from the matrix product
  \[\begin{pmatrix}
  [\bar{p}_0]_t^{m-1}&\dotsb&[\bar{p}_0]_{n-2(m-1)}^{m-1}\\
  \vdots&&\vdots\\
  [\bar p_{r-1}]_t^{m-1}&\dotsb&[\bar p_{r-1}]_{n-2(m-1)}^{m-1}
  \end{pmatrix}
  \begin{pmatrix}
[\bar q_0]_{(\ell-1) m}^{m-1}&\dotsb&[\bar q_{s-1}]_{(\ell-1) m}^{m-1}\\
\vdots&&\vdots\\
[\bar q_0]_{0}^{m-1}&\dotsb&[\bar q_{s-1}]_{0}^{m-1}\\
  \end{pmatrix}.\]
 In terms of complexity, the multiplication \(P_1 Q_1\) involves \(r
 \times (\ell+1)\) and \((\ell+1) \times s\) matrices, while \(P_2
 Q_2\) involves \(r \times \ell\) and \(\ell \times s\) matrices; all
 four operands have degree less than \(m\). 

 Since $\ell =\lfloor (n-1)/m \rfloor-1$, we have $\ell+1\le n/m$, so
 each matrix product can be done using at most $\lceil n/(mr^2) \rceil
 \le n/(mr^2) + 1$ products in sizes $r \times r^2$ and $r^2 \times
 s$. Since \(s\in\softO{r}\), each of these take 
 $\softO{m r^{\omega_2}}$, for a total cost 
 of $\bicost{r^2} = \softO{(m+n/r^2)r^{\omega_2}}$.

   The other operations performed by the algorithm are $\bigO{r}$
   power series expansions at precision $n-1$ in $\softO{n}$
   operations each (precisely, one inverse and \(t\) multiplications,
   see \cref{step:sim_trunc_mod_mul:revq_invrev}), and $\bigO{r^2}$
   power series expansions at precision~$m$ in~$\softO{m}$ operations
   each (precisely, at most \(3rs\) multiplications and \(rs\)
   subtractions, see
   \cref{step:sim_trunc_mod_mul:truncated_quotient,step:sim_trunc_mod_mul:truncated_remainder}).
   This amounts to a total of \(\softO{n r + m r^2}\) operations, and
   can thus be neglected, since $\omega_2 \ge 3$.
\end{proof}
Using simultaneous truncated modular multiplication combined with a
baby steps/giant steps strategy leads to
\algoName{algo:TruncatedPowers}, with the following properties.

\begin{proposition}
  \label{prop:TruncatedPowers}
  Given $f$ in $\xRing$ of degree $n$, $a$ and $b$ in $\xRing_{<n}$,
  $m$ and $d$ in \(\NN_{>0}\), \algoName{algo:TruncatedPowers}
  computes the truncations
  \[
    [a^kb\rem f]_0^{m-1},\qquad 0\le k<d 
  \]
  using~$\softO{\bicost{d}}= \softO{(m+n/d)d^{\omega_2/2}}$ operations in~$\field$,
  \bicostrecall.
\end{proposition}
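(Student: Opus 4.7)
The plan is a standard baby steps/giant steps decomposition, combined with the simultaneous truncated modular multiplication of \cref{prop:sim_trunc_mod_mul} as the workhorse for the final simultaneous modular products.

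First, I would choose $r = \lceil d^{1/2} \rceil$ and $s = \lceil d/r \rceil$, so that $r, s \sim d^{1/2}$ and every integer $k$ with $0 \le k < d$ can be written uniquely as $k = ir + j$ with $0 \le i < s$ and $0 \le j < r$. I would then compute the baby steps $c_j = a^j b \rem f$ for $0 \le j < r$ by $r$ successive multiplications modulo $f$ starting from $c_0 = b$, and the giant steps $A_i = a^{ir} \rem f$ for $0 \le i < s$, using $A_0 = 1$ and $A_i = A_{i-1} \cdot (a^r \rem f)$ after one preliminary computation of $a^r \rem f$. This preliminary stage uses $\softO{(r+s)n}$ operations in $\field$.

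Next, I would observe that $a^k b \equiv A_i c_j \bmod f$, so that the required truncations are exactly $[A_i c_j \rem f]_0^{m-1}$ for $0 \le i < s$ and $0 \le j < r$. I would feed $(A_0, \dots, A_{s-1})$ and $(c_0, \dots, c_{r-1})$ into \algoName{algo:SimultaneousTruncatedModularMultiplication} of \cref{prop:sim_trunc_mod_mul}, which is applicable since both families lie in $\xRing_{<n}$ and we have $s \in \softO{r}$. By \cref{prop:sim_trunc_mod_mul}, this step returns all truncated products in $\softO{\bicost{r^2}} = \softO{(m + n/r^2) r^{\omega_2}}$ operations in $\field$.

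Substituting $r \sim d^{1/2}$ gives $\softO{(m + n/d)\,d^{\omega_2/2}} = \softO{\bicost{d}}$, which dominates the $\softO{(r+s)n} = \softO{d^{1/2} n}$ cost of the baby and giant steps because $\omega_2 \ge 3$ forces $d^{\omega_2/2} \ge d^{3/2} \ge d^{1/2} \cdot d$ and the hypothesis $d \ge 1$ combined with $n/d \le m + n/d$ absorbs the linear contribution in $n$. Correctness and the announced bound then follow immediately, and there is no real obstacle beyond recognizing that the bivariate cost function $\bicost{\cdot}$ scales exactly as needed once the baby/giant split is set to $r \sim d^{1/2}$; the only minor care is the boundary case $d = 1$, where a single truncated multiplication suffices and the general bound remains valid.
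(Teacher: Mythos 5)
Your proof is correct and follows the same baby steps/giant steps strategy the paper uses, with a single call to \algoName{algo:SimultaneousTruncatedModularMultiplication}{} doing the heavy lifting. The only difference from the paper's \algoName{algo:TruncatedPowers}{} is cosmetic: you attach $b$ to the baby-step family $c_j = a^j b \rem f$ and leave the giant steps as pure powers $A_i = a^{ir} \rem f$, whereas the algorithm in the paper keeps the baby steps as pure powers $\hat a_i = a^i \rem f$ and absorbs $b$ into the giant-step family $\bar a_j = b a^{jr} \rem f$; either split is valid, $s \sim r \sim d^{1/2}$ in both directions, and the complexity is identical.
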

\begin{proof}
  The algorithm computes $1,a,\ldots,a^{r-1} \rem f$ and
  $b,ba^{r},\ldots,ba^{(s-1)r} \rem f$, which costs $\softO{nr}$ operations
  in~$\field$ since $r\sim s$. From these two sets of polynomials,
  \algoName{algo:SimultaneousTruncatedModularMultiplication}
  is then used to
  compute $[ba^k\rem f]_0^{m-1}$ for $0 \le k \le rs-1$ using
  $\softO{(m+n/r^2)r^{\omega_2}}$ operations, by \cref{prop:sim_trunc_mod_mul};
  since $\omega_2\ge3$, this is larger than $\softO{nr}$. The choice of $s$
  makes $(s-1)r<d\le rs$, so the output consists of the terms \(k=i+rj\) for
  \(j<s-1\) and \(i<r\), and for \(j=s-1\) and
  $i< d-(s-1)r \in\{1,\dots,r\}$.
\end{proof}

\begin{algorithm}
  \algoCaptionLabel{TruncatedPowers}{f,a,b,m,d}
  \begin{algorithmic}[1]
    \Require $f$ of degree $n$ in $\xRing$, $a$ and $b$ in $\xRing_{<n}$, $m$ and $d$ in \(\NN_{>0}\)
    \Ensure the truncated powers $[ba^k\rem f]_0^{m-1}$ for $0\le k<d$
    \State $r \gets \lceil d^{1/2}\rceil$; $s \gets \lceil d/r \rceil$ 
    \State \(\hat{a}_0 \gets 1\);
    \InlineFor{\(i=1,\ldots,r\)}{$\hat{a}_i \gets a \cdot \hat{a}_{i-1} \rem f$}
    \Comment{\(\hat{a}_i = a^i \rem f\)}
    \label{step:trunc_pow:baby_powers}
    \State \(\bar{a}_0 \gets b\);
    \InlineFor{\(j=1,\ldots,s-1\)}{$\bar{a}_j \gets \hat{a}_r \cdot \bar{a}_{j-1} \rem f$}
    \Comment{\(\bar{a}_j = b a^{jr} \rem f\)}
    \State $(c_{i,j})_{\substack{0\le i < r\\ 0\le j < s}}
    \gets$
    \Statex $\quad\Call{algo:SimultaneousTruncatedModularMultiplication}{f,\hat{a}_0,\ldots,\hat{a}_{r-1},\bar{a}_0,\ldots,\bar{a}_{s-1},m}$

\State \InlineFor{\(i=0,\ldots,r-1\) and \(j=0,\ldots,s-2\)}{$r_{i+rj} \gets c_{i,j}$}
\Statex \InlineFor{\(i=0,\ldots,d-1-(s-1)r\)}{$r_{i+r(s-1)} \gets c_{i,s-1}$}
\State \Return $(r_{k})_{0\le k < d}$    
  \end{algorithmic}
\end{algorithm}

Finally, \algoName{algo:BlockTruncatedPowers} computes truncations of
products of the form $x^i a^k \rem f$, which are needed in our
composition algorithm; here, we assume that $f(0)$ is nonzero (see
\cref{rmk:fat0}).

\begin{algorithm}
  \algoCaptionLabel{BlockTruncatedPowers}{f,a,m,d}
  \begin{algorithmic}[1]
    \Require $f$ of degree $n$ in $\xRing$, with $f_0=f(0)\neq0$, $a$ in
    $\xRing_{<n}$, $m$ and $d$ in \(\NN_{>0}\)
    \Ensure the truncated powers $[x^ia^k\rem f]_0^{m-1}$, for $0\le i<m$ and $0\le k<d$
    \State $(r_{k})_{0\le k < d}\gets\Call{algo:TruncatedPowers}
    {f,a,x^{m-1}\rem f,2m-1,d}$
    \label{nonblockforblock}
    \Comment{$r_k=[x^{m-1}a^k]_0^{2m-2}$}
    \State{$f_n\gets\operatorname{coeff}(f,n)$}\Comment{leading
    coefficient}
    \For{$k=0,\dots,d-1$} \label{step:trunc_pow:outer_for}
    \State \(a_{m-1,k} \gets r_{k}\)
    \For{$i=m-1,\dots,2,1$} \label{loopm2}
    \State $c \gets -a_{i,k}(0)/f_0$
    \State $a_{i-1,k} \gets (a_{i,k} + c [f]_0^{m+i-1}) / x$
    \Comment{$a_{i-1,k} = [x^{i-1}a^k\rem f]_0^{m+i-2}$}
    \EndFor
    \EndFor
    \State \Return $([a_{i,k}]_0^{m-1})_{\substack{0 \le i < m\\ 0 \le
    k < d}}$ \end{algorithmic}
\end{algorithm}

\begin{proposition}
  \label{prop:block_truncated_powers}
  Given $f$ in $\field[x]$ of degree~$n$ with $f(0)\neq0$, $a$ in
  $\field[x]_{<n}$, $m$ and $d$ in $\mathbb{N}_{>0}$,
  \algoName{algo:BlockTruncatedPowers} computes
  \[
    [x^i a^k \rem f]_0^{m-1}, \quad  0\le i<m,\ 0\le k<d
  \]
  using $\softO{\bicost{d}} + \bigO{m^2d} = \softO{(m+n/d)d^{\omega_2/2}} +
  \bigO{m^2d}$ operations in~$\field$, \bicostrecall.
\end{proposition}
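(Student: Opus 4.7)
The plan is to establish correctness by showing that the inner loop maintains the invariant $a_{i,k} = [x^i a^k \rem f]_0^{m+i-1}$, and then to bound the complexity by summing the cost of the call to \algoName{algo:TruncatedPowers} at \cref{nonblockforblock} with the cost of the two nested loops.

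For correctness, the invariant holds at $i = m-1$: by \cref{prop:TruncatedPowers} applied with input $b = x^{m-1} \rem f$ and truncation length $2m-1$, the returned value $r_k = [x^{m-1} a^k \rem f]_0^{2m-2}$ is assigned to $a_{m-1,k}$. For the inductive step, write $b_i = x^i a^k \rem f$, a polynomial of degree less than $n$. The congruence $x b_{i-1} \equiv b_i \pmod f$ together with a degree comparison (both sides have degree at most $n$ and $f$ has degree exactly $n$) gives $x b_{i-1} = b_i + c' f$ for some scalar $c'$; evaluating at $0$ and using $f(0) \neq 0$ yields $c' = -b_i(0)/f_0$, which is the value $c$ computed by the algorithm. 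Reading off coefficients, $(b_{i-1})_{j-1} = (b_i)_j + c f_j$ for $j \ge 1$, so $[b_{i-1}]_0^{m+i-2}$ is obtained as the nonconstant part of $a_{i,k} + c[f]_0^{m+i-1}$ (whose constant term cancels by construction), shifted by $x^{-1}$. This is precisely what the algorithm computes, preserving the invariant and yielding $[b_i]_0^{m-1}$ at the end via truncation.

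For complexity, \cref{prop:TruncatedPowers} gives $\softO{(2m-1 + n/d)d^{\omega_2/2}} = \softO{(m+n/d)d^{\omega_2/2}}$ operations for \cref{nonblockforblock}. In the outer loop, for each of the $d$ values of $k$, there are $m-1$ inner iterations, each consisting of a constant number of operations on polynomials of degree at most $2m-2$ (evaluation at $0$, scalar-polynomial multiplication, addition, and shift by $x^{-1}$), hence $\bigO{m}$ operations per inner iteration and $\bigO{m^2 d}$ overall. Summing the two contributions gives the claimed bound. I do not anticipate any major obstacle: the correctness argument is an elementary downward shift-and-correct recursion, and the only care required is in the bookkeeping of truncation lengths to verify that $[f]_0^{m+i-1}$ contains exactly the coefficients of $f$ needed to perform the shift at step $i$.
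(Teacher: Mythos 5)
Your proof is correct and follows essentially the same approach as the paper: establish the invariant $a_{i,k}=[x^i a^k\rem f]_0^{m+i-1}$ via the shift-and-correct identity $xb_{i-1}=b_i+c'f$ (the paper states the equivalent identity $[xp\rem f]_0^j=x[p\rem f]_0^{j-1}-(p_{n-1}/f_n)[f]_0^j$ and inverts it), then charge $\softO{\bicost{d}}$ to the call to \algoName{algo:TruncatedPowers} and $\bigO{m^2d}$ to the nested loops. The only presentational difference is that you derive the recovery of $[b_{i-1}]$ directly from the exact equation $xb_{i-1}=b_i+c'f$ rather than quoting the truncation identity, which is arguably a slightly cleaner route to the same bookkeeping.
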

\begin{proof}
  \cref{prop:TruncatedPowers} shows that the first step
  computes the sequence $[x^{m-1}a^k]_0^{2m-2}$ for $k=0,\dots,d-1$ in
  the announced complexity.  The remaining truncations are obtained
  from the identity
  \[
    [xp\rem f]_0^j= x[p\rem f]_0^{j-1}-\frac{p_{n-1}}{f_n}[f]_0^j,
  \]
  for any integer \(j\) and polynomial \(p\),
  where \(p_{n-1}\) is the coefficient of degree \(n-1\) of \(p \rem
  f\) and $f_n$ is the coefficient of degree~$n$ in~$f$. If
  we know $[xp\rem f]_0^j$, we get \(-p_{n-1}f_0/f_n\) as its constant
  coefficient, whence $p_{n-1}$ since $f_0\neq0$ and from there $
  [p\rem
    f]_0^{j-1}$ is easily obtained. At iteration $k$ of the loop at
  \cref{step:trunc_pow:outer_for}, the truncation $[x^{m-1} a^k \rem
    f]_0^{2m-2}$ computed previously is used to deduce all $
       [x^{m-1-i} a^k \rem f]_0^{2m-2-i}$ for $1 \le i < m$ in
       $\bigO{m^2}$ operations. Thus this loop has a total cost of
       $\bigO{m^2d}$ operations.
\end{proof}

\begin{remark}
  \label{rmk:fat0}
  The assumption $f(0)\neq 0$ is harmless in the context of
  modular composition: in the computation of $\polp(a)\rem f$, one can
  rather evaluate $\polp(y)$ at $a(x+c)$ modulo $f(x+c)$ for a
  randomly chosen~$c \in \field$, and unshift the result.  See
  \cref{step:shifta-mainalgo,step:return-mainalgo} in
  \algoName{algo:ModularCompositionBaseCase}.
\end{remark}


\subsection{Notes}
\subsubsection{Linear algebra interpretation}
\label{subsec:linalginterpret}

Representing polynomials by their vector of coefficients leads to
viewing the operations performed by Algorithms
\nameref{algo:BivariateModularComposition} and
\nameref{algo:TruncatedPowers} as computing the product of special
matrices by column vectors. Recall the notation~$M_a$ for the $n\times
n$ matrix of multiplication by $a\bmod f$ in the
basis~$(1,x,\dots,x^{n-1})$, and $X$ for the
matrix~$\trsp{(\idMat{m}\;\;0)} \in \field^{n \times m}$ with $m \in
\{1, \ldots, n\}$. Then Algorithms
\nameref{algo:BivariateModularComposition} and
\nameref{algo:TruncatedPowers} correspond respectively to
multiplication by

\begin{equation}\label{eq:factor_Hk}
  \Ra=\begin{pmatrix}X&\dotsb&M_a^{d-1}X\end{pmatrix}
  \in\matRing{n}{(md)}
  \quad\text{and}\quad
  \La=\begin{pmatrix} \trsp{X}\\ \vdots\\ \trsp{X}M_a^{d-1}\\ \end{pmatrix}
  \in\matRing{(md)}{n}.
\end{equation}

Indeed, $\Ra$ is the matrix of the mapping $\rhoa$ of bivariate modular
composition with bounded degrees, as computed by
\algoName{algo:BivariateModularComposition}:
\begin{align*}
\rhoa:\xyRing_{<(m,d)}&\rightarrow \xRing_{< n}\\
\polp(x,y)&\mapsto \polp(x,a)\rem f.
\end{align*}
On the other hand, $\La$ represents the mapping $\lba$ that extracts the
low-degree part of multiplications by powers of~$a$, as computed by
\algoName{algo:TruncatedPowers}:
\begin{align*}
  \lba:\xRing_{<n}&\rightarrow\xRing_{<m}^d\\
  b&\mapsto([b\rem f]_0^{m-1},\dots,[ba^{d-1}\rem f]_0^{m-1}).
\end{align*}
These maps and matrices play an important role in the study of the
generic behavior of our algorithm starting from 
\cref{sec:genericity}.

\subsubsection{Complexity equivalence}

\Cref{prop:NuskenZiegler} (\algoName{algo:BivariateModularComposition}) and
\Cref{prop:TruncatedPowers} (\algoName{algo:TruncatedPowers}) give
similar complexity bounds for the evaluation of $\rhoa$ and $\lba$,
but the computational equivalence of these problems, possibly up to
some conditions, is is still unclear to us in general.

However, for $m=1$, when $f(0)\neq0$, these two problems are indeed
equivalent.  This is a consequence of the transposition principle in
an indirect way, starting from the equality \[\Laone v_b =
\trsp{(M_b\Raone)}\mathbf{1},\] where $v_b$ is the vector associated
to $b$, $M_b$ is the matrix of multiplication by $b\bmod f$,
and~$\mathbf{1}$ is the first canonical vector.  First, this equality
gives a way to evaluate $\lba[1,n]$ for the cost of one multiplication
by~$\trsp{M_b}$ (i.e., $\softO{n}$ by the transposition principle),
plus one multiplication by the transpose of $\Raone$, which has the
same asymptotic cost as that of $\Raone$ itself, by the same
principle. Conversely, if $v=\trsp{M_b}\mathbf{1}$, the equality reads
$\trsp{(\Raone)}v= \Laone v_b$, so that, again by the transposition
principle, the evaluation of $\rhoa[1,n]$ reduces to that of
$\lba[1,n]$ provided~$v_b$ can be computed from~$v$ in low
complexity. When $f(0)\neq 0$, this can be done in $\softO{n}$ by
solving a linear system of Hankel type~\cite[Sec.\,3]{Shoup99}.

If $f(0)=0$, it is unclear whether such a reduction holds: in the
special case $f=x^n$, the map $\lba[1,n]$ becomes much simpler, as it
simply computes the sequence $a_0^i b_0$ for $0\le i < d$, where $a_0$
and $b_0$ are the constant coefficients of $a$ and $b$. This only
requires a linear number~$\bigO{d}$ of operations. On the other hand,
$\kappa_{1,d}$ is the composition of a univariate polynomial
\(\polp(y)\) of degree less than~$d$ with the power series~$a(x)$ and
no quasi-linear complexity result is known for this operation.


\subsubsection{Transposition of the N\"usken-Ziegler algorithm} \label{subsubsec:transpNZ}
Finally, we discuss a different approach to
\algoName{algo:BlockTruncatedPowers}, that actually bypasses
\algoName{algo:TruncatedPowers} altogether, and uses the transpose of 
\algoName{algo:BivariateModularComposition} instead.

\algoName{algo:BlockTruncatedPowers} computes the \(m\times m\)
projections $H_k=\trsp{X}M_{a^k}X$, for $k=0,\dots,d-1$, using the
fact that for $k < d$, $H_k$ can be deduced in $\bigO{m^2}$ operations
(for loop at \cref{loopm2}) from the column vector $\trsp{\bar
  X}M_{a^k}u$ of size $2m-1$, where $\bar X
=\trsp{(\idMat{2m-1}\;\;0)} \in \field^{(2m-1)\times n}$ and $u$ is
the $m$th column of \(X\), i.e.~the \(m\)th canonical vector
(\cref{nonblockforblock}). (Here we have taken $m\leq (n+1)/2$.)
\algoName{algo:TruncatedPowers} computes the vectors $\trsp{\bar
  X}M_a^ku$ for $0\leq k < d$ using $\softO{\bicost{d}}$ operations.

Alternatively, we can consider a recursion similar to the one in the
proof of \cref{prop:block_truncated_powers}, but now for learning a
new coefficient of a polynomial rather than a coefficient of a new
polynomial. Assuming $f(0)\neq 0$, for a polynomial $p$ one has
\[
  [xp\rem f]_{i+1}^0= [p\rem f]_i^0 + (c/f_0) [f]_{i+1}^0,
\]
where $c$ is the coefficient of degree $0$ of \(xp \rem f\): we see
that from the row vector $\trsp{\mathbf{1}}M_{x^{-m+1}a^k}\bar X$, one
can also deduce $H_k=\trsp{X}M_{a^k}X$ using $\bigO{m^2}$ operations.

Now, if we set $v=\trsp{M_{x^{-m+1}}}\mathbf{1}$, computing
$\trsp{v}\Ramm$ precisely gives all vectors
$\trsp{\mathbf{1}}M_{x^{-m+1}a^k}\bar X$, for $0 \le k < d$. Since $v$ can be computed in quasi-linear time, the
application of the transposition principle to
\algoName{algo:BivariateModularComposition} shows that these vectors
can be computed using $\softO{\bicost{d}}$ operations. Altogether,
this gives an alternative to \algoName{algo:BlockTruncatedPowers} with
the same asymptotic complexity.


\section{Matrices of relations for composition}
\label{sec:relmat_intro}

The heart of our algorithm for finding $\polp(a) \rem f$ is the
computation of a \emph{matrix of relations}, which gives a collection
of polynomials of small degree in the ideal~$\mathcal I$ generated by
$y-a$ and $f$ in~$\field[x,y]$. For a given positive integer~$m$,
these polynomials are in the $\yRing$-module $\rmodfa$ obtained by
degree restriction as $\mathcal{I}\cap\xyRing_{<(m,\cdot)}$. 

In \cref{sec:relmat:structure} we show that the invariant factors
of~$\rmodfa$ are the $m$ invariant factors of highest degree of the
characteristic matrix $y\idMat {n}-M_a$, where $M_a$ is the matrix of
multiplication by $a\bmod f$.  Once a matrix of relations has been
obtained, it can be used to perform composition by reducing univariate
composition to a small bivariate composition problem; this is
described in \cref{sec:bivcomposition}. Finally, in
\cref{sec:relmat:polynomials}, the results of this section are applied
to the efficient computation of annihilating polynomials for $a$
modulo $f$.

In all of \cref{sec:relmat_intro}, notation such as \(\rmodfa\) and
\(\ddfa\) is shortened into \(\rmod\) and \(\dd\), except for the main
definitions and statements, as there is no ambiguity as to the
dependency on $a$ or $f$.

\subsection{Structure of the module of relations}
\label{sec:relmat:structure}
This section introduces the module of relations $\rmodfa$ and relates it to the
characteristic matrix.

\subsubsection{Definitions}
\label{sec:relmat:def_and_degdet}

\paragraph{Relations}

We call \emph{relations} the polynomials of the ideal \(\ideal =
\idealGens\) of $\field[x,y]$; these are the bivariate polynomials
$\polr(x,y)$ such that $\polr(x,a)\equiv 0\bmod f$, i.e., they are
algebraic
relations satisfied by $a\bmod f$. We are interested in those
relations whose \(x\)-degree is bounded from above by a given positive
integer \(m\). They form the $\yRing$-module
\begin{equation*}
  \rmodfa = \left\{ \polr(x,y) \in \xyRing_{<(m,\cdot)}
                      \mid \polr(x,a(x))\equiv 0 \bmod f \right\}
          =  \ideal \cap \xyRing_{<(m,\cdot)},
\end{equation*}
which is denoted~$\rmod$ when $a$ and $f$ are clear from the context.

This is a $\yRing$-submodule of $\xyRing_{<(m,\cdot)}$, itself a free
\(\yRing\)-module with basis \((1,x,\ldots,x^{m-1})\); {we refer to \cite[Part
III]{DumFoo04} for basic notions of module theory and modules over
principal
ideal domains}. As stated in
\cref{sec:preliminaries}, we often identify a polynomial
$\polr_0(y)+ \cdots + \polr_{m-1}(y) x^{m-1}$ in
$\xyRing_{<(m,\cdot)}$ with the column vector \(\trsp{(\polr_0
  \;\cdots\; \polr_{m-1})}\) in \(\yvecRing{m}\) of its coefficients
on that basis. Since~\(\yRing\) is a principal ideal domain, \(\rmod\)
is free as well, and it has rank \(m\) since it contains
$\minpoly\yvecRing{m}$, where $\minpoly$ is the minimal polynomial of
$a \bmod f$.

In terms of ideals, there is a chain of inclusions \(\{0\} = \genBy{\rmod[0]}
\subseteq \cdots \subseteq \genBy{\rmod[n+1]} = \ideal\); the latter identity
follows from the fact that \(y-a\) and \(f\) have \(x\)-degree less than
\(n+1\). Furthermore \(\rmod[1] \neq \{0\}\) since \(\minpoly\) belongs to
\(\ideal\cap\yRing\). For small \(m\), the module \(\rmod\) may not contain all
the information in~\(\ideal\): the inclusion \(\genBy{\rmod} \subset \ideal\)
can be strict.

\paragraph{Matrix and basis of relations, determinantal degree} 

A \emph{matrix of relations of \(\rmod\)} is any nonsingular matrix
in~$\ymatRing{m}{m}$ whose columns are elements of the module~$\rmod$
(represented as column vectors). Such a matrix is further called a \emph{basis}
of relations if its columns generate \(\rmod\); all bases of relations of
\(\rmod\) can be obtained from any single one of them via right multiplication
by a unimodular matrix in \(\ymatRing{m}{m}\), i.e., a matrix whose determinant
is in \(\field\setminus\{0\}\). It follows that any matrix of relations of
\(\rmod\) is a square, nonsingular right multiple of any basis of relations of
\(\rmod\), and therefore bases of relations are exactly the matrices of
relations whose determinant has minimal degree. This degree is called the
\emph{determinantal degree} of the module $\rmod$. 

\subsubsection{Relation to invariant factors}

As a finitely generated module over a principal ideal domain,
\(\rmod\) has an invariant factor decomposition. The next result shows
that these invariant factors can be found in any triangular basis of
\(\rmod\), and that the largest of these factors is precisely
\(\minpoly\), the minimal polynomial of \(a\) modulo \(f\). It also
relates the degrees of these factors to the quantity $\ddfa$ (written
more simply as \(\dd\) when context is clear), already highlighted in
\cref{eq:defnu}, and which plays an important role in the analysis of
our approach. 

\begin{proposition}  \label{prop:invariant-factors}
  Let $B$ be an upper triangular basis of $\rmodfa$ for some $m\ge
  1$. Then its diagonal entries are the invariant factors of
  \(\rmodfa\), up to multiplication by nonzero elements of
  \(\field\). A number \(k\le\min(m,n)\) of these invariant factors
  are
  nontrivial, and these nontrivial ones are the~$k$ invariant
  factors of highest degree of the characteristic matrix~$\charmat$,
  which is a basis of relations of~$\rmodfa[n]$. The determinantal
  degree of $\rmodfa$ is the sum~$\ddfa$ of the degrees of these
  invariant factors, hence it satisfies $\min(m,n)\le \ddfa \le n$.
\end{proposition}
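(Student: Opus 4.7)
The plan is to analyze $\rmodfa$ through the filtration $V_0 \subseteq V_1 \subseteq \cdots \subseteq V_n = \xRing/\genBy{f}$, where $V_k = \yRing\langle 1, x, \ldots, x^{k-1}\rangle$ is the $\yRing$-submodule generated by the first $k$ powers of $x$ under the action $y \cdot u = a(x)\, u \bmod f$. An upper triangular basis of $\rmodfa$ exists because $\yRing$ is a PID and $\rmodfa$ is a submodule of the free module $\yRing^m$ (Hermite normal form).

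I would first verify that $\charmat$ is a basis of relations of $\rmodfa[n]$. Each column $y\, e_i - \mulmat\, e_i$ represents $y\, x^{i-1} - a(x)\, x^{i-1} \in \ideal$, so the columns lie in $\rmodfa[n]$. The $\yRing$-module homomorphism $\yRing^n \to \xRing/\genBy{f}$ defined by $e_i \mapsto x^{i-1} \bmod f$ is surjective with kernel $\rmodfa[n]$; the standard characteristic-matrix identification also gives $\yRing^n/\charmat\yRing^n \cong \xRing/\genBy{f}$, so $\charmat\yRing^n \subseteq \rmodfa[n]$ with the same quotient forces $\rmodfa[n] = \charmat\yRing^n$. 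The Smith invariants of $\charmat$ then yield $\sigma_1, \ldots, \sigma_n$ with $\sigma_1 = \minpoly$.

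The structural core of the argument is the identification: for any upper triangular basis $B = (b_{ij})$ of $\rmodfa$, the diagonal entry $b_{kk}$ generates (up to a nonzero scalar in $\field$) the annihilator of $V_k/V_{k-1}$ in $\xRing/\genBy{f}$. Indeed, the $k$-th column of $B$ encodes $b_{kk}(a)\, x^{k-1} \equiv -\sum_{i<k} b_{ik}(a)\, x^{i-1} \pmod{f}$, so $b_{kk}$ annihilates $V_k/V_{k-1}$. Conversely, any $p \in \yRing$ with $p(a)\, x^{k-1} \in V_{k-1}$ produces a relation in $\rmodfa$ whose entries below row $k$ vanish, which by upper triangularity must lie in the $\yRing$-span of the first $k$ columns of $B$ (a downward induction on the index forces the coefficients of columns $k+1, \ldots, m$ to be zero), and reading the $k$-th coordinate yields $b_{kk} \mid p$. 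These annihilators form a divisibility chain: if $p(a)\, x^{k-1} \in V_{k-1}$, then by commutativity of $\xRing/\genBy{f}$, $p(a)\, x^k = x \cdot p(a)\, x^{k-1} \in x\, V_{k-1} \subseteq V_k$, so $b_{k+1,k+1} \mid b_{kk}$. Together with $\det(B) = b_{11} \cdots b_{mm}$ and uniqueness of the invariant factor decomposition of $\yRing^m/\rmodfa$, this forces $(b_{11}, \ldots, b_{mm})$ to be the invariant factors of $\rmodfa$ in the paper's decreasing-degree order.

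Specializing to $m = n$ identifies $b_{kk} = \sigma_k$ via the Smith form of $\charmat$. Since the annihilator of $V_k/V_{k-1}$ depends on the filtration alone and not on $m$, one obtains $b_{kk}^{(m)} = \sigma_k$ for all $k \le \min(m, n)$, and $b_{kk} = 1$ when $k > n$ because $V_k = V_n$ stabilizes. Hence the nontrivial diagonal entries are the top $k \le \min(m, n)$ nontrivial invariants of $\charmat$. For the determinantal degree, $\ddfa = \sum_{i=1}^{\min(m,n)} \deg \sigma_i \le \sum_{i=1}^n \deg \sigma_i = n$; the lower bound $\min(m, n) \le \ddfa$ follows from a short case split on whether $m$ exceeds the number $j_0 \le n$ of nontrivial $\sigma_i$'s, using $\sum_{i=1}^{j_0} \deg \sigma_i = n$ when $m > j_0$ and $\deg \sigma_i \ge 1$ for each of the $\min(m, n) = m$ indices when $m \le j_0$. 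The main obstacle is the divisibility-chain step; once it is in place, the rest is an application of uniqueness of invariant factors and degree bookkeeping.
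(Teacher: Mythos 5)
Your filtration approach is genuinely different from the paper's, which instead invokes Lazard's structure theorem for lexicographic Gr\"obner bases of zero-dimensional bivariate ideals (\cref{cor:structure_module}). Several components of your argument are sound: the identification of the diagonal entries $b_{kk}$ with the annihilators of the filtration quotients $V_k/V_{k-1}$ is correct and shows that these entries are intrinsic to $\rmodfa$, not to the particular triangular basis; the divisibility chain $b_{k+1,k+1}\mid b_{kk}$ via multiplication by $x$ is a nice observation; and the identification of $\charmat$ as a basis of $\rmodfa[n]$ via the presentation of $\xRing/\genBy{f}$ is fine.

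The gap is in the sentence \emph{``Together with $\det(B)=b_{11}\cdots b_{mm}$ and uniqueness of the invariant factor decomposition of $\yRing^m/\rmodfa$, this forces $(b_{11},\ldots,b_{mm})$ to be the invariant factors.''} Having a filtration of a torsion $\yRing$-module by cyclic quotients $\yRing/(c_k)$ whose annihilators form a divisibility chain and whose product equals the determinant does not identify the $c_k$ as the invariant factors. For instance, $M=\yRing/(y^3)$ admits the filtration $0\subset yM\subset M$ with $yM\cong\yRing/(y^2)$ and $M/yM\cong\yRing/(y)$, hence $(c_1,c_2)=(y^2,y)$ is a divisibility chain with product $y^3=\det$, yet the invariant factor of $M$ is $y^3$, not $(y^2,y)$. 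At the matrix level, the upper triangular matrix
$\bigl(\begin{smallmatrix}y^2&1\\0&y\end{smallmatrix}\bigr)$
has diagonal $(y^2,y)$, a divisibility chain, yet its Smith normal form is $\operatorname{diag}(y^3,1)$. Your derivation establishes that $\yRing^m/\rmodfa$ \emph{has a filtration} with cyclic quotients $\yRing/(b_{kk})$; it does not establish that $\yRing^m/\rmodfa\cong\bigoplus_k\yRing/(b_{kk})$, which is the statement needed to invoke uniqueness.

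In the concrete setting of $\rmodfa$ there is extra structure: $V_1=\yRing\cdot 1$ is a cyclic submodule whose annihilator $(\minpoly)$ equals the annihilator of the whole module, so $V_1$ splits off (as an injective $\yRing/(\minpoly)$-module). But the inductive step---that after quotienting by $V_1$ the class of $x$ still has maximal order in $V_m/V_1$, so that $V_2/V_1$ splits off again---is not supplied and is not automatic from the divisibility chain alone. The paper closes exactly this gap differently: the triangular basis $T$ arising from Lazard's theorem has the stronger property that each diagonal entry divides every entry in its own column (because the column encodes $x^j\polr_i h_{k-i}$, each coefficient of which is a multiple of $\polr_i$), and that column-wise divisibility is what gives the Smith normal form by column operations alone. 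Your argument would need an analogue of that column divisibility, or a direct proof that the filtration splits step by step, to be complete.
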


\subsubsection{Proof of \cref{prop:invariant-factors}}

Our proof relies on Lazard's structure theorem~\cite{Lazard85} on lexicographic
Gr\"obner bases in $\xyRing$. Here, the \emph{degree} of a zero-dimensional
ideal $\ideal \subset \xyRing$ is the dimension of the $\field$-vector space
$\xyRing/\ideal$.
\begin{lemma}[Lazard's structure theorem for bivariate ideals]
  Let \(\ideal\) be a zero-dimensional ideal of degree \(n\) in \(\xyRing\).
  Any minimal Gr\"obner basis of~\(\ideal\) for the \((y \prec
  x)\)-lexicographic order has the form
  \(\{\polr_0(y)h_k(x,y),\polr_1(y)h_{k-1}(x,y),\ldots,\polr_k(y)h_0(x,y)\}\) for some \(k
  \ge 1\), where
  \[
    \left\{
      \begin{array}{l}
        \polr_k=h_k=1\\
        n \ge \deg(\polr_0) > \cdots > \deg(\polr_k) = 0\\
        n \ge \deg_x(h_0) > \cdots > \deg_x(h_k) = 0 \\
        \text{for } 0 \le i < k,\ \polr_i \in \yRing \text{ is divisible by } \polr_{i+1} \\
        \text{for } 0 \le i \le k,\ h_i \in \xyRing \text{ has leading monomial a power of } x. \\
      \end{array}
    \right.
  \]
\end{lemma}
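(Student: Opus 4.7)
The plan is to analyze a minimal Gröbner basis $G$ of $\ideal$ for the lex order $y \prec x$ by combining two observations: this order ranks monomials primarily by $x$-degree, so every leading monomial has the form $x^d y^e$; and leading-in-$x$ coefficients live in the PID $\yRing$, enabling an invariant-factor style analysis.

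First, I would order $G = \{g_0, \ldots, g_k\}$ by strictly increasing $x$-degrees $d_0 < d_1 < \cdots < d_k$ of leading monomials. Strictness follows from minimality: two elements with the same $x$-degree in their leading monomials $x^d y^e$ and $x^d y^{e'}$ would have one leading monomial dividing the other. Zero-dimensionality of $\ideal$ fixes the endpoints: the elimination ideal $\ideal \cap \yRing$ is nonzero, so $d_0 = 0$ and $g_0 \in \yRing$; and $\ideal$ contains a polynomial monic in $x$, so $g_k$ may be chosen with unit leading $x$-coefficient. Thus $g_0 = \polr_0$ (with $h_k = 1$) and $g_k = h_0$ (with $\polr_k = 1$). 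The bounds $n \ge \deg(\polr_0)$ and $n \ge d_k = \deg_x(h_0)$ come from $\dim_{\field}(\xyRing / \ideal) = n$.

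Next, for each $d \ge 0$ I would introduce the ideal
\[
  J_d \;=\; \{c(y) \in \yRing : \exists\, g \in \ideal,\ g = c(y)\, x^d + (\text{terms of lower } x\text{-degree})\}.
\]
Multiplication by $x$ gives $J_d \subseteq J_{d+1}$, so $\{J_d\}$ is an ascending chain in the PID $\yRing$ that stabilizes at $\yRing$. Writing $J_d = \genBy{q_d}$ with $q_{d+1} \mid q_d$, a short Gröbner-reduction argument shows that $q_d$ is constant on each interval $[d_i, d_{i+1})$ and becomes a proper divisor of its predecessor at each $d_{i+1}$. Defining $\polr_i$ to be the monic generator of $J_{d_i}$ then recovers the prescribed divisibility chain $\polr_{i+1} \mid \polr_i$ with strictly decreasing degrees and the correct endpoints.

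The main obstacle is the factorization step: showing that each $g_i$ has the form $\polr_i\, h_{k-i}$ with $h_{k-i} \in \xyRing$ monic in $x$ of degree $d_i$. By the previous steps, $g_i$ has leading $x$-coefficient $\polr_i$, so the task reduces to showing that every coefficient $c_j(y)$ of $x^j$ in $g_i$ (for $j < d_i$) is divisible by $\polr_i$. My plan is to induct on $i$: each $c_j$ lies in $J_j$, whose generator equals $\polr_{i'}$ for the largest $i' < i$ with $d_{i'} \le j$; reducing $g_i$ against the previously constructed $g_{i'} = \polr_{i'} h_{k-i'}$ replaces $c_j$ by a residue modulo $\polr_{i'}$, and since $\polr_i \mid \polr_{i'}$ through the chain, the class of $c_j$ modulo $\polr_i$ is preserved. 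The delicate point is that these reduced residues actually land in $\polr_i \yRing$; my plan to secure this is via the Smith normal form of the finitely generated $\yRing$-submodule of $\xyRing_{<(N,\cdot)}$ consisting of elements of $\ideal$ of $x$-degree less than a sufficiently large $N$: its invariant factors are exactly the $\polr_i$'s (with multiplicities encoded by the jump pattern $d_0, \ldots, d_k$), and a Smith-style basis realizes the required factored form. Once the factorizations are in place for every $i$, verifying that the resulting set is indeed a minimal Gröbner basis and matches the prescribed form follows by direct inspection of leading monomials.
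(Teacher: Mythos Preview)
The paper does not prove this lemma from scratch: it simply cites Lazard's original theorem \cite[Thm.\,1]{Lazard85} for the factored shape of the Gr\"obner basis, and then adds a few lines extracting the zero-dimensional consequences (the normalizations $h_k=\polr_k=1$ and the degree bounds via a count of standard monomials). Your proposal takes a genuinely different route, attempting a self-contained proof via the ascending chain of leading-coefficient ideals $J_d\subseteq\yRing$. The first three paragraphs of your plan are correct and recover exactly the parts the paper adds to Lazard's statement: the strict ordering of $x$-degrees from minimality, the endpoints $d_0=0$ and $\polr_k=1$ from zero-dimensionality, the degree bounds from $\dim_\field(\xyRing/\ideal)=n$, and the divisibility chain $\polr_{i+1}\mid\polr_i$ from $J_{d_i}\subseteq J_{d_{i+1}}$.

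The gap is in your factorization step, which is precisely the content of Lazard's theorem that the paper outsources. Your Smith normal form plan does not prove what is stated. A Smith-style basis of the $\yRing$-module $\ideal\cap\xyRing_{<(N,\cdot)}$ would exhibit \emph{some} generating set with a factored structure, but the lemma asserts that \emph{every} minimal Gr\"obner basis already factors as $g_i=\polr_i\,h_{k-i}$; minimal Gr\"obner bases are not unique, and the Smith basis need not coincide with the given one. Your earlier sketch (reduce $g_i$ against $g_{i'}$ for $i'<i$ and use $\polr_i\mid\polr_{i'}$) is closer to the right idea, but it only shows that $g_i$ and its tail-reduction differ by something divisible by $\polr_i$; you still need that the fully tail-reduced element is itself divisible by $\polr_i$, which is the reduced-basis case of Lazard's theorem and is exactly the step you have not supplied. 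The actual argument in \cite{Lazard85} proceeds by an induction exploiting the Gr\"obner property (S-polynomial reductions) rather than module-theoretic normal forms; if you want a self-contained proof, that is the missing ingredient.
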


\begin{proof}
  The form of a minimal
  Gr\"obner basis of \(\ideal\) is given by Lazard's 
  result~\cite[Thm.\,1]{Lazard85}. The
  additional assumption that
  \(\ideal\) is zero-dimensional ensures that this Gr\"obner basis
  contains a polynomial whose leading term is a power of \(y\), hence
  \(h_k=1\), and one whose leading term is a power of \(x\), hence
  \(\polr_k=1\). Since \(\ideal\) has degree \(n\), there are precisely
  \(n\) monomials that are not multiples of the leading monomials of
  \(\{\polr_i h_{k-i} \mid 0 \le i \le k\}\). These leading monomials are
  \(\{x^{\deg_x(h_{k-i})} y^{\deg(\polr_i)} \mid 0 \le i \le k\}\), whence
  the bounds \(\deg(\polr_0) \le n\) and \(\deg_x(h_0) \le n\).
\end{proof}

\begin{corollary}  \label{cor:structure_module}
With the same notation, when~$\mathcal{I}=\langle f, y-a\rangle$ and
$m\ge 1$, a basis of
$\rmodfa=\mathcal{I}\cap\xyRing_{<(m,\cdot)}$ is  given by the
first $m$ polynomials in the sequence
  \begin{equation}\label{eq:sequence_lazard}
    (x^j\polr_0h_k)_{0 \le j < \delta_k},
    \dots,
    (x^j\polr_{k-1}h_1)_{0 \le j < \delta_1},
    (x^jh_0)_{j \ge 0},
  \end{equation}
  where \(\delta_i =
   \deg_x(h_{i-1})-\deg_x(h_i)\). If
   $s=\deg_x(h_0)=\delta_1+\dots+\delta_k$, the nontrivial invariant
   factors of \(\rmodfa\) are the first
  \(\min(m,s)\)
  polynomials in
  \begin{equation}\label{eq:inv_fact_lazard}
\bigl (  \underbrace{\polr_0, \ldots, \polr_0}_{\delta_k}, \;\; \ldots, \;\; 
\underbrace{\polr_{k-1}, \ldots, \polr_{k-1}}_{\delta_1} \bigr ).
  \end{equation}
\end{corollary}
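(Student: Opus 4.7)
The strategy is to denote by $q_0,q_1,\ldots$ the polynomials of \eqref{eq:sequence_lazard} in order, and to show that the first $m$ of them, viewed as columns in the $\yRing$-basis $(1,x,\ldots,x^{m-1})$ of $\xyRing_{<(m,\cdot)}$, form an upper triangular matrix $B\in\ymatRing{m}{m}$ whose diagonal entries are the first $m$ terms of \eqref{eq:inv_fact_lazard}, padded with $\polr_k=1$ if $m>s$. Since each $h_i$ has leading monomial a power of $x$, we may normalize $h_i$ to be monic in $x$ of degree $\deg_x(h_i)=\delta_{i+1}+\cdots+\delta_k$. Then $x^j\polr_i h_{k-i}$ has $x$-degree $j+\deg_x(h_{k-i})$ with leading $x$-coefficient $\polr_i$, and as $(i,j)$ range over the bounds in \eqref{eq:sequence_lazard} these $x$-degrees enumerate $\{0,1,2,\ldots\}$ bijectively: $q_\ell$ has $x$-degree exactly $\ell$, which yields the upper triangular form with the stated diagonal. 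Every column of $B$ lies in $\mathcal{I}\cap\xyRing_{<(m,\cdot)}=\rmodfa$, so $B$ is already a matrix of relations.

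The main step, and the one where the Gröbner basis property is used nontrivially, is to show that $q_0,\ldots,q_{m-1}$ \emph{generate} $\rmodfa$ over $\yRing$. I would proceed by induction on $d=\deg_x(\polp)<m$ for nonzero $\polp\in\rmodfa$. Writing $c_d\in\yRing\setminus\{0\}$ for the leading $x$-coefficient of $\polp$, the leading monomial of $\polp$ in the $(y\prec x)$-lexicographic order is $x^d y^{\deg(c_d)}$. Since $\polp\in\mathcal{I}$, this monomial lies in the leading-term ideal; by the Gröbner basis property, some Lazard generator $\polr_{i_1}h_{k-i_1}$ has its leading monomial $x^{\deg_x(h_{k-i_1})}y^{\deg(\polr_{i_1})}$ dividing $x^d y^{\deg(c_d)}$, whence $\deg_x(h_{k-i_1})\le d$. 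Letting $i(d)$ be the maximal index with $\deg_x(h_{k-i(d)})\le d$, so that $i_1\le i(d)$, the divisibility chain $\polr_k\mid\polr_{k-1}\mid\cdots\mid\polr_0$ yields $\polr_{i(d)}\mid\polr_{i_1}\mid c_d$. Because the diagonal entry of $B$ in column $d$ is exactly $\polr_{i(d)}$, subtracting the appropriate $\yRing$-multiple of $q_d$ from $\polp$ produces an element of $\rmodfa$ of strictly smaller $x$-degree, closing the induction.

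Once $B$ is established as an upper triangular basis of $\rmodfa$, I would invoke \cref{prop:invariant-factors} to identify its diagonal entries, up to scalars, with the invariant factors of $\rmodfa$; the nontrivial ones are then exactly the first $\min(m,s)$ terms of \eqref{eq:inv_fact_lazard}, since the diagonal is constant equal to $\polr_k=1$ beyond position $s$, while $\polr_0,\ldots,\polr_{k-1}$ all have positive degree. The only delicate point is the generation step, which requires combining Lazard's leading-term structure with the divisibility chain among the $\polr_i$'s; the remaining assertions reduce to bookkeeping.
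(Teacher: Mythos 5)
Your proof of the ``basis'' half uses the same key structural facts as the paper --- $x$-degrees enumerate $\{0,1,2,\ldots\}$, triangularity, Lazard's Gr\"obner basis --- but replaces the paper's short ``divide by the (infinite) Gr\"obner basis with remainder $0$'' argument by an explicit induction on $\deg_x(\polp)$. There are however two genuine problems.

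First, the generation step has a gap. You write that the Gr\"obner basis property gives some $i_1$ with $\mathrm{LM}(\polr_{i_1} h_{k-i_1})$ dividing $x^d y^{\deg(c_d)}$, and then conclude ``$\polr_{i(d)} \mid \polr_{i_1} \mid c_d$.'' The first divisibility is fine, but the second, $\polr_{i_1}\mid c_d$, does not follow: divisibility of leading \emph{monomials} only yields the degree inequality $\deg(\polr_{i_1})\le\deg(c_d)$, not divisibility of $c_d$ by the polynomial $\polr_{i_1}$. What you actually need is $\polr_{i(d)}\mid c_d$, and this is true, but it requires iterating the reduction: cancel the top $y$-term of $c_d$ using $y^{\deg(c_d)-\deg(\polr_{i(d)})}q_d$, observe that the new $x^d$-coefficient still lies in $\genBy{\polr_{i(d)}}$ because every $q_\ell$ with $\ell\le d$ has diagonal $y$-degree $\ge\deg(\polr_{i(d)})$ and so cannot rescue a leading monomial $x^d y^c$ with $c<\deg(\polr_{i(d)})$, and conclude by induction on $\deg(c_d)$. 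Alternatively, one can simply quote Lazard's Proposition 1 as the paper does: the whole sequence \eqref{eq:sequence_lazard} is a Gr\"obner basis of $\ideal$, so $\polp$ reduces to $0$ by it, and since each $q_\ell$ has distinct $x$-degree $\ell$ the division uses $\yRing$-multiples of the first $\deg_x(\polp)+1\le m$ of them only.

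Second, your identification of the invariant factors is circular: \cref{prop:invariant-factors} is proved in the paper \emph{using} this corollary (the paper builds the matrix $T$ from \cref{eq:sequence_lazard} and compares any other upper triangular basis to $T$). You therefore cannot invoke it here. The direct argument, which the paper gives, is that in the matrix $T$ whose columns are $q_0,\ldots,q_{m-1}$, the diagonal entry of column $j$ is $\polr_{i(j)}$ and it divides every entry in that column (because $q_j = x^{j-s_{k-i(j)}}\polr_{i(j)}h_{k-i(j)}$); hence $T = U\,\diag{d_0,\ldots,d_{m-1}}$ for a unimodular upper triangular $U$, so $T$ is equivalent to the diagonal matrix and the Smith form is read off the diagonal. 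Note that the claim ``upper triangular basis $\Rightarrow$ diagonal $=$ invariant factors up to scalars'' is false for arbitrary modules (e.g.~$\left(\begin{smallmatrix}y & 1\\ 0 & y\end{smallmatrix}\right)$ is in Hermite form with diagonal $(y,y)$ but Smith form $\diag{1,y^2}$), so it really does need the divisibility structure coming from \cref{eq:sequence_lazard}.
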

\begin{proof}
  The polynomials in the sequence in~\cref{eq:sequence_lazard} form a
  (nonfinite) Gr\"obner basis of \(\ideal\), made of polynomials of
  \(x\)-degree \(0,1,2,\ldots\) respectively~\cite[Prop.\,1]{Lazard85}. By
  design, the first $m$ elements in this sequence belong to \(\rmod\), and
  considering their $x$-degrees shows that they are $\yRing$-linearly
  independent.  

  Any polynomial \(p(x,y) \in \rmod\) is a \(\yRing\)-linear combination of the
  first \(m\) of these polynomials. Indeed, it can be divided by the Gr\"obner
  basis with a remainder equal to~\(0\); in view of its degree in~$x$, only
  these~$m$ polynomials are involved in the division.
  This proves the claim on the basis of~\(\rmod[m]\) described in
  \cref{eq:sequence_lazard}. 

  The matrix \(T(y) \in \ymatRing{m}{m}\) representing this basis (with basis
  elements written in columns) is upper triangular, with its first $\min(m,s)$
  diagonal entries being the first $\min(m,s)$ polynomials
  in~\cref{eq:inv_fact_lazard} in this order, and with its remaining diagonal
  entries being nonzero elements of \(\field\). Furthermore each of these
  diagonal entries divides all other entries in the same column, hence the
  Smith normal form of \(T(y)\) has the same diagonal entries as \(T(y)\),
  which proves the claim on the invariant factors of \(\rmod\).
\end{proof}
\begin{proof}[Proof of \cref{prop:invariant-factors}]
  \Cref{cor:structure_module} implies that the determinantal degree~$\nu_m$ of
  $\rmod$ is the sum of the degrees of the elements of the first $\min(m,s)$
  elements of \cref{eq:inv_fact_lazard}. It follows that
  \[
    \dd\le\delta_k \deg(\polr_0) + \cdots + \delta_1 \deg(\polr_{k-1})=n,
  \]
  where the last identity comes from considering the $\field$-vector space
  dimension of $\xyRing/\ideal$. If $s\le m$, all the nontrivial invariant
  factors appear and the bound is reached, while otherwise \(m < s\) and
  \(\degdet{B}\), being the sum of the degrees of \(m\) nonconstant
  polynomials, is at least~$m$.

  If $B$ is a basis of $\rmod$, then there exists a unimodular
  matrix~$U\in\ymatRing{m}{m}$ such that~$UB=T$ with~$T$ as in the previous
  proof. If moreover~$B$ is upper triangular, then so is~$U$ and since
  \(\det(U) \in \field\setminus\{0\}\), the diagonal entries of \(U\) belong to
  \(\field\setminus\{0\}\). It follows that \(B\) has the same diagonal entries
  as \(T\) up to multiplication by nonzero elements of \(\field\).

  The columns of the characteristic matrix $y\idMat{n}-M_a$ represent the
  polynomials $x^k(y-a(x)) \rem f$ for $0 \le k < n$, making this matrix a
  matrix of relations of \(\rmod[n]\). It has determinantal degree
  \(\deg(\charpoly) = n\), which coincides with the determinantal degree of
  $\rmod[n]$, by the previous inequalities. Thus $y\idMat{n}-M_a$ is actually a
  basis of $\rmod[n]$ and its invariant factors are given by the previous
  paragraph.
\end{proof}

\subsubsection{Note}

For $m\in \{1,\ldots, n\}$, the module of relations $\rmod$ is
isomorphic to the module of vector generators for the matrix
sequence~$\{M_a^kX\}_{k\geq 0}$, where $X=\trsp{(\idMat{m}\;\;0)} \in
\matRing{n}{m}$ as above (this elementary fact is established
within the proof of \cref{lemma:denominators}, for instance); the
bases of relations are the {\em minimal generating polynomials} for
that sequence~\cite{Vil97:TR,KaVi05}. 

The relation between Coppersmith's block Wiedemann algorithm and
invariant factors of a characteristic matrix was described by Kaltofen
and Villard: they show that for
generic
projections~$V$ and~$W$ in $\field^{n \times \ell}$ and $\field^{n
  \times m}$, with $\ell \ge m$, the invariant factors of minimal
generating polynomial of the sequence $(\trsp{V}A^kW)_{k\ge0}$ are the
$m$ invariant factors of largest degree of the characteristic matrix
$y\idMat{n}-A$~\cite[Thm.\,2.12]{KaVi05}. In our more specific setting,
\cref{prop:invariant-factors} shows that this relation holds when the
right projection is the structured matrix~$X$ (see also
\cref{notes:reconstructing}).

\subsection{Composition using matrices of relations} 
\label{sec:bivcomposition}

Matrices of relations are used to reduce the univariate problem $\polp(a) \rem
f$ with $g\in \yRing$, to a bivariate one with better degree properties, thanks
to a matrix division.

\subsubsection{Division for polynomial matrices}
\label{sec:bivcomposition:division}

If $R$ is a nonsingular matrix in~$\ymatRing{m}{m}$ and $v_{\polp}$ is a vector
in $\yvecRing{m}$, then there exist quotient and remainder vectors~$w$
and~$v_{\tilde \polp}$ such that
\begin{equation} \label{eq:defvecdiv}
v_{\polp}=Rw + v_{\tilde \polp},
\end{equation}
and each entry of $v_{\tilde \polp}$ has degree less than that of the
corresponding row of $R$ \cite[Thm.\,6.3-15, p.\,389]{Kailath80}. The latter reference actually 
states a stronger condition on $v_{\tilde \polp}$, namely that the
matrix fraction $R^{-1} v_{\tilde \polp}$ is \emph{strictly proper} (see
\cref{sssec:fractions}); this implies the above degree condition
\cite[Lem.\,6.3-10, p.\,383]{Kailath80}, which is sufficient for our
needs.

For computing this division, it is customary to use \(\yRing\)-linear system
solving. For this, we rely on a kernel basis algorithm \cite{ZLS12}: this
returns $v$ in $\yvecRing{m}$ and $r$ in $\yRing$ such that $R^{-1}v_{\polp} =
v/r$, with~$r$ of minimal degree. From this the remainder is obtained as
$v_{\tilde \polp} =R\, (v \rem r)/r$, and here we do not need the quotient
vector \(w\).


\subsubsection{Composition Algorithm}
\label{sec:bivcomposition:algo}

In the case where $v_{\polp}=\trsp{(\polp\; 0\cdots 0)}$ and $R$ is a matrix of
relations of $\rmod$ of degree at most \(d\), the remainder in the above
division is a vector \(v_{\tilde \polp}\) of degree less than \(d\) whose
entries yield ${\tilde \polp}\in\xyRing_{<(m,d)}$ such that $\polp - \tilde
\polp \in \rmod$. Thus, analogously to a reduction modulo a Gr\"obner basis of
the ideal $\ideal = \genBy{y-a,f}$, this provides a bivariate polynomial
$\tilde\polp$ with smaller degree in $y$ and controlled degree in \(x\), and
such that $\tilde \polp - \polp \in \ideal$, that is, \(\tilde\polp(x,a) \equiv
\polp(a) \bmod f\).

\algoName{algo:BivariateModularCompositionWithRelationMatrix}{} is
given a matrix of relations \(R\) of $\rmod$ as a parameter and
performs this division; then it completes the composition by
evaluating $\tilde \polp(x,a) \rem f$ using
\algoName{algo:BivariateModularComposition}.
\algoName{algo:BivariateModularCompositionWithRelationMatrix} actually
accepts a slightly more general input: $\polp$ can be a bivariate
polynomial with $x$-degree less than $m$ (however, the rest of the
article focuses on the case of $\polp$ in $\yRing$ highlighted above).
The algorithm accepts $g$ of arbitrary degree in $y$, but the cost 
analysis is done under the assumption $\deg_y(g) \in O(n)$.

\begin{algorithm} 
  \algoCaptionLabel{BivariateModularCompositionWithRelationMatrix}{f,a,\polp,R}
  \begin{algorithmic}[1] 
    \Require
    \parbox[t]{0.8\textwidth}{
      $f$ of degree $n$ in $\xRing$, $a$ in $\xRing_{<n}$, $\polp$ in $\xyRing_{<(m,.)}$,

      $R\in\ymatRing{m}{m}_{\le d}$ a matrix of relations of~$\rmodfa$
    }
    \Ensure  $\polp(x,a)\rem f$
    \State Write $\polp(x,y)=\polp_0(y)+\polp_1(y)x+\dots+\polp_{m-1}(y)x^{m-1}$ and
    set $v_\polp \gets \trsp{(\polp_0 \cdots \polp_{m-1})} \in \yvecRing{m}$
    \State \CommentLine{Compute \(v \in \yvecRing{m}\) and \(r \in \yRing\) using \cite[Algo.\,1]{ZLS12}}
    \Statex \(\begin{pmatrix}  v \\ r \end{pmatrix} \in \yvecRing{m+1}
        \gets \hyperlink{cite.ZLS12}{\textproc{MinimalNullspaceBasis}}((R \,\;\; -v_\polp),(d,\ldots,d,\deg_y(\polp)))\)
        \label{algo:BivariateModularCompositionWithRelationMatrix:systemsolve}
    \State \(v_{\tilde \polp} \gets R\, (v \rem r)/r  \;\in \yvecRing{m}_{<d}\) \Comment{$v \rem r$ is the vector of entry-wise remainders}
      \label{algo:BivariateModularCompositionWithRelationMatrix:vecrem}
    \State $\tilde \polp(x,y) \gets$ the polynomial in $\xyRing_{<(m,d)}$ corresponding to $v_{\tilde \polp}$
      \label{algo:BivariateModularCompositionWithRelationMatrix:polyrem}

    \State \Return \Call{algo:BivariateModularComposition}{f,a,{\tilde \polp}} \Comment{\({\tilde \polp}(x,a) \rem f\)}, \cref{algo:BivariateModularComposition}
      \label{algo:BivariateModularCompositionWithRelationMatrix:modcomp}
  \end{algorithmic}
\end{algorithm}

\begin{proposition}
  \label{prop:comp-from-matrix}
  Given $f$ in $\field[x]$ of degree~$n$, $a$ in $\field[x]_{<n}$,
  $\polp$ in $\xyRing_{<(m,.)}$ with $\deg_y(\polp)=O(n)$ and a matrix
  of relations~$R$ in $\yRing_{\le d}^{m\times m}$ of $\rmodfa$,
  \algoName{algo:BivariateModularCompositionWithRelationMatrix}
  computes $\polp(x,a)\rem f$ using $\softO{m^{\omega} (d+n/m) +
    \bicost{d}}$ operations in~$\field$, \bicostrecall.
\end{proposition}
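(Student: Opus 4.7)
The plan is to verify correctness first, then account for the cost of each step.

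For correctness, I would argue as follows. \textproc{MinimalNullspaceBasis} applied to $(R \;\; -v_\polp)$ returns $\binom{v}{r}$ generating the (rank-one) right kernel, so $Rv = r v_\polp$ with $r$ of minimal degree. Writing the coordinate-wise Euclidean division $v = qr + (v \rem r)$ with $q \in \yvecRing{m}$, one gets $R(v \rem r) = r v_\polp - rRq$, so that the division by $r$ in Step~3 is exact and yields $v_{\tilde\polp} = v_\polp - Rq$. Hence the bivariate polynomials $\polp$ and $\tilde\polp$ corresponding to $v_\polp$ and $v_{\tilde\polp}$ differ by a $\yRing$-linear combination of the columns of $R$, which lies in $\rmodfa$; therefore $\tilde\polp(x,a) \equiv \polp(x,a) \bmod f$, and Step~5 returns the correct value by \cref{prop:NuskenZiegler}. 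The degree bound $\tilde\polp \in \xyRing_{<(m,d)}$ needed to apply \cref{prop:NuskenZiegler} follows from the division property of~\cref{eq:defvecdiv} recalled in \cref{sec:bivcomposition:division} (entries of the remainder have degree less than the corresponding row degree of $R$, hence $<d$).

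For the complexity, I would dispatch the three non-trivial steps. The call to \textproc{MinimalNullspaceBasis} in Step~2 runs on an $m \times (m+1)$ polynomial matrix whose column shifts $(d,\ldots,d,\deg_y(\polp))$ sum to $O(md + n)$ since $\deg_y(\polp) = O(n)$; the algorithm of~\cite{ZLS12} then takes $\softO{m^{\omega-1}(md + n)} = \softO{m^\omega(d + n/m)}$ operations. Step~3 performs $m$ coordinate-wise Euclidean divisions of polynomials of degree $O(d)$ by $r$ of degree $\le d$, an $m \times m$ polynomial matrix-vector product in degree $O(d)$, and $m$ exact divisions by $r$; all these fit in $\softO{m^2 d}$ operations, which is absorbed by the previous bound. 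Finally, Step~5 is an invocation of \algoName{algo:BivariateModularComposition} on $\tilde\polp \in \xyRing_{<(m,d)}$ at cost $\softO{\bicost{d}}$ by \cref{prop:NuskenZiegler}. Summing yields $\softO{m^\omega(d + n/m) + \bicost{d}}$.

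The only mildly delicate point is matching the complexity statement of~\cite{ZLS12} to the present input: here the kernel basis problem is an unbalanced one, with $m$ columns of expected degree $d$ and one column of degree up to $O(n)$, and the relevant cost estimate is the one that is linear in the sum of the column shifts rather than in their maximum. Modulo this invocation, the rest of the argument is a routine combination of the matrix-division interpretation of Step~2 and of \cref{prop:NuskenZiegler}.
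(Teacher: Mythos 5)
Your correctness argument is essentially the same as the paper's: identify $\binom{v}{r}$ as a generator of the rank-one right kernel, observe that $v_{\tilde\polp}=v_\polp-Rq$ with $q\in\yvecRing{m}$, so that $\polp-\tilde\polp\in\rmodfa$, and get the degree bound on $\tilde\polp$ from the fact that $R^{-1}v_{\tilde\polp}=\bar v/r$ is strictly proper. That part is fine, as is your analysis of the \hyperlink{cite.ZLS12}{\textproc{MinimalNullspaceBasis}} call.

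The gap is in your degree bookkeeping for Step~3, and it actually matters. The minimal denominator $r$ is only bounded by $\deg(r)\le\degdet{R}\le md$, not by $d$, since $r$ divides $\det(R)$. Correspondingly $v$ has entries of degree up to $(m-1)d+n$ (they are entries of $\adj(R)\,v_\polp$ up to a common factor), not $O(d)$, and $\bar v=v\rem r$ has entries of degree up to $md-1$, not $O(d)$. With these degrees, the naive $m\times m$ polynomial matrix-vector product $R\bar v$ costs $\softO{m^3 d}$ (each of $m$ output entries is a sum of $m$ products of a degree-$\le d$ polynomial by a degree-$<md$ polynomial), and $m^3 d$ is \emph{not} absorbed by $\softO{m^\omega(d+n/m)}$: in the regime $md\sim n$ one has $m^3 d\sim m^2 n$ versus $m^\omega(d+n/m)\sim m^{\omega-1}n$, which is strictly smaller for $\omega<3$. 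The paper avoids this by rewriting $\bar v$ via its $y^d$-adic expansion as $m$ columns of degree $<d$, turning the product into a single $m\times m$ polynomial matrix product in degree $d$ plus an $\softO{m^2 d}$ recombination, for a total of $\softO{m^\omega d}$. The coordinate-wise Euclidean divisions by $r$ are harmless with the correct degrees — they cost $\softO{m(md+n)}$ and $\softO{m^2 d}$ respectively, both absorbed — but you should state these bounds correctly and include the $y^d$-adic splitting trick (or an equivalent batching argument) to make Step~3 fit within the budget.
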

\begin{proof}
  First, \cref{algo:BivariateModularCompositionWithRelationMatrix:systemsolve}
  computes \(r\in\yRing\) and \(v = rR^{-1}v_\polp \in \yvecRing{m}\) with
  \(r\) of minimal degree. Indeed, since \(R\) is nonsingular, the right kernel
  of \((R \,\;\; -v_\polp) \in \ymatRing{m}{(m+1)}\) has rank \(1\). We use
  \cite[Algo.\,1]{ZLS12} to compute a basis \(\trsp{(\trsp{v} \;\; r)}\) of
  this kernel. Thus by construction $Rv=rv_\polp$ holds, and the fact that
  \(\trsp{(\trsp{v} \;\; r)}\) generates the kernel ensures that the greatest
  common divisor of \(r\) and all the entries of \(v\) is \(1\), hence the
  minimality of \(\deg(v)\) and \(\deg(r)\). 

  At \cref{algo:BivariateModularCompositionWithRelationMatrix:vecrem} one
  considers the vector \(\bar{v} = v \rem r \in \yvecRing{m}\) such that
  \(\deg(\bar v) < \deg(r)\) and \(v = rw + \bar v\) for some \(w \in
  \yvecRing{m}\). It follows that \(v_\polp = R v/r = R w + v_{\tilde \polp}\),
  where \(v_{\tilde \polp} = R \bar{v}/r\) is the vector computed at
  \cref{algo:BivariateModularCompositionWithRelationMatrix:vecrem};
  by construction the \(i\)th entry of \(v_{\tilde{\polp}}\) has degree less
  than that of the \(i\)th row of \(R\). In short,
  \cref{algo:BivariateModularCompositionWithRelationMatrix:systemsolve,algo:BivariateModularCompositionWithRelationMatrix:vecrem}
  compute a vector \(v_{\tilde \polp} \in \yvecRing{m}\) that has
  degree less
  than \(d\) and is a remainder of \(v_\polp\) modulo \(R\). Since \(R\) is a
  matrix of relations, the polynomial \(\tilde \polp(x,y)\) at
  \cref{algo:BivariateModularCompositionWithRelationMatrix:polyrem} is such
  that \(\tilde \polp(x,a) \equiv  \polp(x,a) \bmod f\). The correctness
  follows, since \(\tilde \polp(x,a) \rem f\) is the polynomial returned by
  \Call{algo:BivariateModularComposition}{f,a,\tilde \polp} (see
  \cref{prop:NuskenZiegler}).

  As required by Algorithm~1 of~\cite{ZLS12}, the tuple of
  integers
  \((d,\ldots,d,\deg_y(\polp))\in \ZZ^{m+1}\) bounds the column
  degrees of \((R \,\;\; -v_\polp)\). Then, since the sum of this
  tuple is \(md+\deg_y(\polp)\), with $\deg_y(\polp)=O(n)$,
  \cref{algo:BivariateModularCompositionWithRelationMatrix:systemsolve}
  costs \(\softO{m^{\omega} (d + n/m)}\) operations~\cite[Thm.\,4.1]{ZLS12}. The minimality of \(\deg(r)\) implies
  \(\deg(r) \le \degdet{R} \le md\), and then \(v\) has degree at most
  \(\degdet{R} R^{-1} v_\polp \le (m-1)d + n\) since \(\det(R)R^{-1}\)
  is the transpose of the cofactor matrix of $R$. Thus the computation
  of \(\bar{v} = v \rem r\) in
  \cref{algo:BivariateModularCompositionWithRelationMatrix:vecrem}
  uses \(\softO{m (md+n)}\) operations, which is smaller than the cost
  of
  \cref{algo:BivariateModularCompositionWithRelationMatrix:systemsolve}. Next,
  the matrix-vector product \(R\bar{v}\) can be performed in
  \(\softO{m^{\omega} d}\) operations: write the column \(\bar{v}\) of
  degree \(< md\) as \(m\) columns of degree \(<d\) via \(y^d\)-adic
  expansion; use a matrix-matrix product to left-multiply these
  columns by \(R\); finally recombine the resulting columns into a
  single column that gives \(R\bar{v}\). To obtain \(v_{\tilde
    \polp}\) it remains to divide each entry of \(R\bar{v}\) by~\(r\),
  which costs \(\softO{m^2 d}\) since \(\deg(R\bar{v}) < (m+1)d\).  By
  \cref{prop:NuskenZiegler}, the call at
  \cref{algo:BivariateModularCompositionWithRelationMatrix:modcomp}
  uses $\bicost{d}$ operations. The cost bound in the
  \namecref{prop:comp-from-matrix} follows.
\end{proof}

\subsubsection*{Note}

Comparing \cref{prop:comp-from-matrix} with \cref{prop:NuskenZiegler},
note that when $m\sim n^{\eta}$ and $d \sim n^{1-\eta}$ with~$\eta$
from \cref{eq:def-beta}, then the complexity bound of
\cref{prop:comp-from-matrix} is the same as the one given by the
N\"usken-Ziegler algorithm, however the $y$-degree of~$\polp$ can now
go up to the order of $n$.

\subsection{Annihilating polynomials using matrices of relations}
\label{sec:relmat:polynomials}

Our main algorithm requires an annihilating polynomial for~$a$, that is, a
polynomial $h$ in $\yRing$ such that $h(a) \equiv 0 \bmod f$. It can readily be
obtained from a matrix of relations.

\begin{proposition} \label{cor:annihilating}
  Let $R \in \ymatRing{m}{m}_{\le d}$ be a matrix of relations of $\rmodfa$. Its
  determinant is a nonzero annihilating polynomial for~$a$ modulo~$f$. It has
  degree at most $md$ in~$\yRing$ and can be computed from~$R$ using
  $\softO{m^{\omega}d}$ operations in \(\field\).
\end{proposition}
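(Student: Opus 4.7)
The argument will proceed in four short stages, matching the four claims of the proposition. Nonvanishing of $\det(R)$ is immediate from the definition of a matrix of relations given in \Cref{sec:relmat:def_and_degdet}: since $R$ is required to be nonsingular, $\det(R) \neq 0$. The degree bound $\deg(\det R) \leq md$ follows from the Leibniz expansion of the determinant as a signed sum of $m!$ products of $m$ entries of $R$, each of degree at most $d$ in $y$.

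The key content is the annihilating property. The plan is to exploit the adjugate identity $R \cdot \operatorname{adj}(R) = \det(R) \idMat{m}$ in $\ymatRing{m}{m}$. Under the identification of $\yvecRing{m}$ with $\xyRing_{<(m,\cdot)}$ fixed in \Cref{sec:preliminaries}, the columns of $R$ correspond to polynomials in $\rmodfa$, by the very definition of ``matrix of relations''. Since $\rmodfa$ is a $\yRing$-module and every column of $R \cdot \operatorname{adj}(R)$ is a $\yRing$-linear combination of the columns of $R$, each column of $\det(R) \idMat{m}$ likewise corresponds to a polynomial in $\rmodfa$. In particular, the first column, equal to $\det(R) \cdot e_1$, corresponds to the bivariate polynomial $\det(R)(y)$ viewed as constant in $x$; hence $\det(R)(y) \in \rmodfa$, which by definition of $\rmodfa$ means precisely that $\det(R)(a(x)) \equiv 0 \bmod f$.

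For the complexity, I would invoke a fast algorithm for computing the determinant of an $m \times m$ polynomial matrix with entries of degree at most $d$, which can be carried out in $\softO{m^\omega d}$ operations in $\field$; this bound is standard in the polynomial matrix literature and is sufficient for our needs here. No step of this proof presents a genuine obstacle: the only delicate point is keeping track of the identification between columns of matrices in $\ymatRing{m}{m}$ and bivariate polynomials in $\xyRing_{<(m,\cdot)}$. Once this is fixed, the adjugate identity renders the annihilating property transparent, without requiring any further structural information about $\rmodfa$ beyond the fact that it is a $\yRing$-module.
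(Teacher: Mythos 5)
Your proof is correct, and it takes a genuinely different (though related) route from the paper's. The paper argues via the Hermite normal form~$T$ of~$R$: since $T$ is obtained from $R$ by right multiplication by a unimodular matrix, its columns are $\yRing$-combinations of those of~$R$ and hence relations; because $T$ is upper triangular, $T_{1,1}$ is the only nonzero entry of its first row, so it corresponds to a polynomial in $\yRing\cap\idealGens$; and $\det(R)$ is a multiple of $T_{1,1}$ (being a unit times $\prod_i T_{i,i}$). You instead apply the adjugate identity $R\,\operatorname{adj}(R)=\det(R)\idMat{m}$ and look at the first column, producing $\det(R)$ directly as a $\yRing$-combination of columns of $R$ that has no $x$-component. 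Both arguments hinge on the same point, namely exhibiting $\det(R)$ as a $\yRing$-combination of the columns of $R$ lying in $\yRing$; your version is somewhat more self-contained, using nothing beyond the $\yRing$-module structure and Cramer's formula, whereas the paper's phrasing slots naturally into the HNF machinery already used in \cref{prop:invariant-factors} and records, along the way, that $T_{1,1}$ is a multiple of the minimal polynomial $\minpoly$. On the cost claim you appeal to the standard $\softO{m^\omega d}$ determinant bound without pinning down a reference; the paper cites \cite[Thm.\,1.1]{LVZ17}, which you should include to keep the statement self-contained.
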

\begin{proof}
  As a polynomial combination of relations in $\rmod$, the entry $(1,1)$ of the
  (upper triangular) Hermite normal form of a matrix of relations is a relation
  in \(\genBy{f,y-a} \cap \yRing\), so it is a nonzero multiple of the minimal
  polynomial of $a$. This implies the same property for the determinant, since
  it is a multiple of that entry. The bound on the degree of the determinant is
  straightforward, and the cost bound is from~\cite[Thm.\,1.1]{LVZ17}.
\end{proof} 

\subsubsection*{Note} For the computations of the minimal polynomial and
of the characteristic polynomial of $a$ modulo $f$, see \cref{sec:minpoly}.


\section{Computing matrices of relations}
\label{sec:relmat_comp}

In this section, we give an algorithm computing a matrix of relations.
This study may be viewed as a specialization of the formalism developed by
Kaltofen and Villard for the block Wiedemann approach (see
\cref{subsubsec:blocks,subsubsec:effprojandsmallbivpols}) in terms of
manipulations of bivariate polynomials in the ideal generated by $y-a$ and $f$. 

As already done in \cref{sec:relmat_intro}, notation such as
\(\rmodfa\) and \(\ddfa\) is shortened into~\(\rmod\) and~\(\dd\)
in this section, except in the main statements.

In \cref{sec:relmat:relations_denominators}, we show that for $m \in \{1,
\ldots, n\}$, denominators
of irreducible right matrix fraction descriptions of $(y\idMat
{n}-M_a)^{-1} X$ with
$X=\trsp{(\idMat{m}\;\;0)} \in \field^{n\times m}$ yield bases of
$\rmodfa$. For efficiency reasons, a further truncation is
required: this leads us to introduce modules~$\rmod[\ell,m]^{(a,f)}$
whose bases are the denominators of irreducible right matrix fraction
descriptions of $\trsp{Y}(\charmat)^{-1}X$, where $\trsp{Y} =
(\idMat{\ell}\;\;0) \in \field^{\ell\times n}$, with $\ell \in \{1,
\ldots, n\}$; thus we use structured left and right block projections.
If $\ell=n$, $Y$ is the identity matrix of size $n$, and we recover
$\rmodfa$, but this value is too large for our cost
objectives. Instead, we focus on $\ell=m$, and thus \(Y=X\).

\cref{sec:relmat:approx_basis} describes how a basis of
$\rmod[\ell,m]^{(a,f)}$ can be reconstructed using 
\emph{minimal approximant bases}~\cite{BarBul92,BeLa94}, from sufficiently many terms of the
power series expansion of the matrix $H = \trsp{X}(\charmat)^{-1}X$.

This strategy is turned into an algorithm for computing matrices of
relations in \cref{subsec:candidate}: the expansion of \(H\) is
obtained via \algoName{algo:BlockTruncatedPowers}, while approximant
bases are computed using a matrix Pad\'e version of the
Berlekamp-Massey algorithm~\cite{BeLa94,GJV03}. The correctness and
efficiency of this approach depends on a fundamental condition on
$M_a$, i.e., on \(f\) and \(a\) (\cref{prop:compute_Mmm}, first
item). First, it expresses that the left projection does not prevent
us from getting the right denominators of $(\charmat)^{-1} X$ from
those of $H$. It also ensures the existence of matrices of relations
of ``small'' degree , and in this way appropriately limits the number
of terms of the expansion of $H$ that are required for the
reconstruction. We  prove in \cref{sec:genericity} that these
properties are satisfied for generic inputs; in
\cref{sec:composition_randomized}, we further study cases where
randomization can ensure such a condition.

Verifying the condition on \(\mulmat\), or verifying that a certain
matrix is a matrix of relations, are expensive tasks: except for some
restricted cases, the algorithm of \cref{subsec:candidate} does not
certify that its output is indeed a matrix of relations. As such, this
would lead to a Monte Carlo composition algorithm. To achieve Las
Vegas composition instead, in \cref{sec:relmat:certify} we propose an
algorithm which either detects that the output
mentioned above is not
a matrix of relations, or uses this output to build a certified matrix
of relations of slightly larger dimensions.

\subsection{Matrices of relations as denominators of matrix fractions}
\label{sec:relmat:relations_denominators}

This section relates denominators of some matrix fractions to bases of the
module of relations~$\rmod$ and of a truncated version \(\rmodlm\) of it.

\subsubsection{Definitions}
\label{sssec:fractions}

\paragraph{Matrix Fractions}

We first recall several notions on matrix fractions that can be found in
Kailath's book~\cite[Chap.\,6]{Kailath80}. {Let \(N\) be in
  \(\ymatRing{\ell}{m}\), let \(D \in \ymatRing{m}{m}\) be nonsingular, and
  consider the rational matrix $F=ND^{-1}\in \field(y)^{\ell\times m}$. Then
  \(N D^{-1}\) is called a \emph{right fraction description} of $F$. Similarly,
  if $F=\hat{D}^{-1}\hat{N}$, then \(\hat{D}^{-1}\hat{N}\) is called a
  \emph{left fraction description} of $F$.  The right fraction \(N D^{-1}\) is
  said to be \emph{irreducible} if \(N\) and \(D\) are right coprime, i.e.~any
  right divisor common to \(N\) and \(D\) is unimodular, or equivalently \(UN +
  VD = \idMat{m}\) for some \(U \in \ymatRing{m}{\ell}\) and \(V \in
  \ymatRing{m}{m}\) \cite[Lem.\,6.3.5 p.\,379]{Kailath80}. The fraction
  \(ND^{-1}\) is said to be \emph{strictly proper} if for each nonzero entry of
  the rational matrix $F=ND^{-1}$, the degree of
the numerator is less than the degree of the denominator.} A
matrix~$F\in\field(y)^{\ell\times m}$ is said to be \emph{describable in
degree~$d$} if it admits both a left and a right fraction description with
denominators of degree at most~$d$.

\paragraph{Truncated Module of Relations}

For efficiency reasons, we consider a $\yRing$-module similar to \(\rmod\), but
where only the first $\ell$ coefficients of the polynomials are required to be
$0$, for some positive integer $\ell$ {with \(\ell \le n=\deg(f)\)}.
Explicitly, for $\ell, m \in \mathbb{N}_{>0}$ we define the $\yRing$-modules
\[
  \rmod[\ell,m]^{(a,f)} = \left\{ \polr(x,y) \in \xyRing_{<(m,\cdot)}
  \mid \big[a(x)^k\polr(x,a(x))\rem f\big]_0^{\ell-1}=0 \text{ for all } k\ge 0 \right\},
\]
together with the usual simplified notation $\rmod[\ell,m]$. They
satisfy the inclusions $\rmod[1,m]\supseteq \rmod[2,m]\supseteq
\dots\supseteq \rmod[n,m]=\rmod$. 
{If $r(x,a(x)) \rem f =0$ then $a(x)^kr(x,a(x)) \rem f =0$ for all $k\geq 0$, 
but note that this is no longer true if truncated polynomials are considered. This explains the presence 
of $k$-th powers of~$a$ in the definition of $\rmod[\ell,m]^{(a,f)}$, while they are not necessary in 
the definition of $\rmodfa$ in \cref{sec:relmat:def_and_degdet}.}

 The determinantal degree
of~$\rmodlm$ is denoted $\ddlm$. Of particular interest is the
case when $\rmod[m,m]=\rmod$.

\subsubsection{Relation between bases of relations and
denominators of matrix fractions}

\begin{proposition}
  \label{prop:modprime_is_denom}
  For $\ell, m\in\{1,\dots,n\}$, the columns of a matrix~$D\in\ymatRing{m} {m}$
  form a basis of~$\rmod[\ell,m]^{(a,f)}$ if and only if $D$ is the denominator of an
  irreducible right fraction description~$ND^{-1}$ of
  \[
    (\idMat{\ell} \;\; 0)(\charmat)^{-1} X \;\;\; \in\ymatRing{\ell}{m};
 \]
 the denominator of any right fraction description of this matrix is a right
 multiple of any such basis~$D$.
\end{proposition}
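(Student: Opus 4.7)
The plan is to first identify the set of denominator vectors $\mathcal{D} = \{d \in \yvecRing{m} : H(y)\, d(y) \in \yvecRing{\ell}\}$ with the module $\rmodlm$, where $H(y) = \trsp{Y}(\charmat)^{-1}X$ and $\trsp{Y} = (\idMat{\ell}\;\;0) \in \matRing{\ell}{n}$. From this module-level identification, the passage to the claim about irreducible right fractions is then standard.

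For the identification, I would expand $(\charmat)^{-1} = \sum_{k \geq 0} \mulmat^k \, y^{-k-1}$ as a Laurent series at infinity, write any $d \in \yvecRing{m}$ as $d(y) = \sum_j d_j y^j$ with $d_j \in \vecRing{m}$, and require the coefficient of each $y^{-s}$ with $s \geq 1$ in $H(y)\, d(y)$ to vanish. A direct reindexing shows this is equivalent to the condition $\trsp{Y}\, \mulmat^t v = 0$ for all $t \geq 0$, where $v = \sum_j \mulmat^j X d_j \in \vecRing{n}$. The key observation is that if $p(x,y) = \sum_{i<m} d_i(y) x^i$ is the bivariate polynomial encoded by $d(y)$, then $v$ is exactly the coefficient vector of $p(x, a(x)) \rem f$ in the basis $(1, x, \ldots, x^{n-1})$ of $\xRing_{<n}$, because each term $\mulmat^j X d_j$ represents $a(x)^j$ times the polynomial attached to $X d_j$, reduced modulo $f$. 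Consequently $\trsp{Y}\, \mulmat^t v$ equals $[a(x)^t\, p(x, a(x)) \rem f]_0^{\ell-1}$, and its vanishing for all $t \geq 0$ is precisely the defining condition for $p \in \rmodlm$. Hence $\mathcal{D} = \rmodlm$.

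With $\mathcal{D} = \rmodlm$ in hand, a nonsingular $D \in \ymatRing{m}{m}$ is the denominator of a right fraction description of $H$ if and only if all its columns lie in $\rmodlm$: the forward direction follows column-wise, while the reverse holds by setting $N := HD$, which is then polynomial. Now fix any basis $D_0$ of $\rmodlm$ and let $N_0 = HD_0$. Any denominator $D$ factors uniquely as $D = D_0 U$ with $U \in \ymatRing{m}{m}$, and correspondingly $N = N_0 U$. A short coprimality argument shows that $(N, D)$ is irreducible if and only if $U$ is unimodular: a non-unimodular $U$ is itself a non-trivial common right factor of $(N, D)$; conversely, $(N_0, D_0)$ is itself irreducible because any common right factor $G$ would produce another polynomial fraction $\tilde N \tilde D^{-1}$ of $H$ whose denominator $\tilde D$ must factor through the basis $D_0$, forcing $G$ to be unimodular, and this irreducibility transfers to $(N_0 U, D_0 U)$ whenever $U$ is unimodular. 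Thus the denominators of irreducible right fractions are exactly the bases of $\rmodlm$, and the closing statement of the proposition follows since any right-fraction denominator is of the form $D_0 U$ and hence a right multiple of any basis.

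The main obstacle I anticipate is the bookkeeping in the second paragraph: carefully swapping summations over the $x$-degree index, the $y$-degree index of $d$, and the power-series index of $(\charmat)^{-1}$, and then identifying the vector $v$ with the coefficient vector of $p(x, a(x)) \rem f$. Once the reformulation $\trsp{Y}\, \mulmat^t v = 0$ is in place, the remaining arguments are straightforward consequences of the standard theory of right matrix fractions over the principal ideal domain $\yRing$.
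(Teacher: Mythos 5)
Your proof is correct, and the module identification in your second paragraph is essentially the same computation the paper isolates as \cref{lemma:denominators}: you bundle the coefficient data into a single vector $v = \sum_j M_a^j X d_j$ representing $p(x,a) \rem f$ and observe that $\trsp{Y}M_a^t v$ encodes the truncation $[a^t\, p(x,a) \rem f]_0^{\ell-1}$, whereas the paper writes the same vanishing conditions directly as $H_k v_0 + \cdots + H_{k+d} v_d = 0$ for all $k \ge 0$; the two are reindexings of one another. Where you genuinely diverge is the passage from ``basis of $\rmodlm$'' to ``denominator of an irreducible right fraction'': the paper outsources this entirely to Kailath (Thm.~6.5-4 and Lem.~6.5-5, packaged as \cref{lem:basisdenom}), while you reprove it from scratch. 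Your argument---factoring an arbitrary denominator $D$ through a fixed basis $D_0$ as $D = D_0 U$ and equating irreducibility of $(N_0 U, D_0 U)$ with unimodularity of $U$---is sound, including the key sub-step that $(N_0, D_0)$ is itself irreducible because a nontrivial common right factor $G$ would yield a denominator $\tilde D = D_0 G^{-1}$ whose columns lie in $\denom{H}$ and hence must factor through $D_0$, forcing $G$ unimodular. The self-contained route buys independence from the reference and makes the mechanism transparent; the paper's citation is more economical and states the characterization for arbitrary rational matrices $F$, not just the specific $H$ here. One small point to make explicit: you implicitly rely on $\rmodlm$ being a free $\yRing$-module of rank $m$, so that an $m \times m$ nonsingular basis matrix $D_0$ exists in the first place; the paper records this right after introducing $\denom{F}$, and your write-up should invoke it rather than assume it silently.
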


\subsubsection{Proof of \cref{prop:modprime_is_denom}}

For a matrix of rational functions~$F\in \field(y)^{\ell\times m}$, we
let
\begin{equation}
  \label{eq:denomF}
  \denom{F} = \{v \in \yvecRing{m} \mid F v \in \yvecRing{\ell}\},
\end{equation}
which is a \(\yRing\)-submodule of \(\yvecRing{m}\) of rank \(m\). Then, we can
establish the relation between the module~$\rmodlm$ and the matrix in
\cref{prop:modprime_is_denom}.
\begin{lemma}\label{lemma:denominators}
  For $\ell, m$ in $\{1,\dots,n\}$, one has \(\rmodlm = \denombig{(\idMat{\ell} \;\; 0)(\charmat)^{-1} X}\).
\end{lemma}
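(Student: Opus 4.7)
The plan is to rewrite both sides of the claimed identity in terms of the matrix sequence $\big((\idMat{\ell} \;\; 0) M_a^k X\big)_{k \ge 0}$ and then match them using the Neumann expansion of $(y\idMat{n}-M_a)^{-1}$ at $y = \infty$.

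First, I translate membership in $\rmodlm$ into a linear-algebraic condition. Write $r(x,y) = \sum_{i,j} r_{i,j}\, x^i y^j \in \xyRing_{<(m,\cdot)}$ with corresponding column $v(y) = \sum_j v_j y^j \in \yvecRing{m}$, where $v_j = \trsp{(r_{0,j},\dots,r_{m-1,j})} \in \vecRing{m}$. Denoting by $e_1,\dots,e_n$ the canonical vectors of $\vecRing{n}$, one has $X v_j = \sum_{i=0}^{m-1} r_{i,j}\, e_{i+1}$, which is the coefficient vector of $\sum_i r_{i,j}\, x^i$ (using $m \le n$). Since $M_a^{k+j} e_{i+1}$ is the coefficient vector of $x^i a^{k+j} \rem f$, summing shows that $a^k r(x,a) \rem f$ has coefficient vector $\sum_{j \ge 0} M_a^{k+j} X v_j \in \vecRing{n}$. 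The projection $(\idMat{\ell} \;\; 0)$ extracts its first $\ell$ coefficients, so $v \in \rmodlm$ if and only if $(\idMat{\ell} \;\; 0) \sum_{j \ge 0} M_a^{k+j} X v_j = 0$ for every $k \ge 0$.

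Second, the Neumann expansion $(y\idMat{n}-M_a)^{-1} = \sum_{k \ge 0} M_a^k\, y^{-k-1}$ at infinity yields $F(y) := (\idMat{\ell} \;\; 0)(y\idMat{n}-M_a)^{-1} X = \sum_{k \ge 0} (\idMat{\ell} \;\; 0) M_a^k X \cdot y^{-k-1}$. Multiplying by $v(y)$ and collecting powers shows that the coefficient of $y^{-k-1}$ in the Laurent expansion of $F(y)v(y)$ at infinity is exactly $(\idMat{\ell} \;\; 0) \sum_{j \ge 0} M_a^{k+j} X v_j$. A rational vector is a polynomial vector if and only if its Laurent expansion at infinity has no negative-power terms (equivalently, its strictly proper part is zero), so $Fv \in \yvecRing{\ell}$ if and only if all of the above coefficients vanish. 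Combining the two reformulations gives $\denom{F} = \rmodlm$. The main piece of bookkeeping is the coefficient-vector computation in the first paragraph; the rest is routine given the Neumann expansion and the standard decomposition of a rational function into its polynomial and strictly proper parts.
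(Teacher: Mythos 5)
Your argument is correct and is essentially the paper's own proof: you identify $[a^k r(x,a)\rem f]_0^{\ell-1}$ with the vector $\sum_j (\idMat{\ell}\;\;0)M_a^{k+j}Xv_j$ and then match these against the coefficients of the expansion of $(\idMat{\ell}\;\;0)(\charmat)^{-1}X\cdot v(y)$ at infinity, which is exactly what the paper does via its sequence $H_k=\trsp{Y}M_a^kX$ and the identity $Hv=\sum_{k\ge -d}(H_kv_0+\dots+H_{k+d}v_d)y^{-k-1}$. The only cosmetic difference is that the paper fixes a degree bound $d$ on $r$ and writes the condition as a finite sum, while you keep the sum over all $j\ge0$; this changes nothing.
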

\begin{proof}Taking $\trsp{Y} = (\idMat{\ell}\
0)$, define \(H(y)= \trsp{Y} (\charmat)^{-1} X\) and $H_k = \trsp{Y}
  M_a^k X \in \matRing{\ell}{m}$, so that, by power series
  expansion in~$y^{-1}$,
  \[
    H(y)
    = \sum_{k\ge0} H_k y^{-k-1}
    = \sum_{k\ge0} \trsp{Y} M_a^k X y^{-k-1}.
  \]

  Let \(\polr(x,y) = \sum_{0\le i\le d} \polr_i(x)y^i \in
  \xyRing_{<(m,\cdot)}\) be of \(y\)-degree \(d\), and let $v_i\in\vecRing{m}$
  be the coefficient vector of~$\polr_i$ for $i=0,\dots,d$. Then, for $k\geq
  0$,
  \begin{equation*}
    \left[a^k \polr(x,a) \rem f\right]_{0}^{\ell-1}
    = \left[\sum_{0\le i\le d} a^{k+i} \polr_i \rem f\right]_{0}^{\ell-1}
    = \sum_{0\le i\le d} \left[a^{k+i} \polr_i \rem f\right]_{0}^{\ell-1}
  \end{equation*}
  and \([a^{k+i}\polr_i \rem f]_{0}^{\ell-1}\) has coefficient vector \(\trsp{Y}
  \mulmat^{k+i} X v_i = H_{k+i} v_i \). Hence,  $\left[a^k \polr(x,a) \rem
  f\right]_{0}^{\ell-1}$ has coefficient vector $H_kv_0 + \dots +
  H_{k+d}v_{d}$. Therefore \(\polr(x,y)\) is in \(\rmodlm\) if and
  only if
  \begin{equation}
    \label{eq:Hankel-prod}
    H_kv_0 + \dots + H_{k+d}v_{d} = 0 \quad \text{for all } k\ge 0. 
  \end{equation}
  On the other hand, {defining \(v = \sum_{0 \le i \le d} v_i y^i\) and
  setting \(H_k=0\) for \(k<0\)}, the expansion of \(H v\) at infinity reads
  \begin{equation} \label{eq:smallnumer}
    H v
    = \sum_{k\ge 0} H_k y^{-k-1}  \sum_{0 \le i \le d} v_i y^i 
    = \sum_{k \ge -d} \left(H_kv_0 + \dots + H_{k+d}v_{d}\right) y^{-k-1},
  \end{equation}
  which implies that \cref{eq:Hankel-prod} holds if and only if $Hv$
  has polynomial entries.
\end{proof}

\cref{prop:modprime_is_denom} is then a direct consequence of the following
general result on matrix fractions, which is a reformulation of
\cite[Thm.\,6.5-4 and Lem.\,6.5-5, p.\,441]{Kailath80}.

\begin{lemma} \label{lem:basisdenom}
  Let \(F \in \field(y)^{\ell\times m}\) be a matrix of rational
  fractions.  The columns of \(D \in \ymatRing{m}{m}\)
  form a basis of $\denom{F}$ if and only if $D$ is the denominator of
  an irreducible right fraction description \(N D^{-1}\) of
  \(F\). Besides, the denominator of any right fraction description of
  $F$ is a right multiple of such a $D$.
\end{lemma}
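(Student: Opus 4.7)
The plan is to verify both implications of the equivalence by exploiting the characterization of right coprimeness through a Bezout-style identity: $N$ and $D$ are right coprime if and only if $U N + V D = \idMat{m}$ for some polynomial matrices $U,V$ of appropriate sizes. First I would note the trivial half: for any right fraction description $F = N D^{-1}$ (irreducible or not), the identity $F D = N$ shows every column of $D$ lies in $\denom{F}$. Hence, once the other inclusion is established, the claim that ``any denominator is a right polynomial multiple of $D$'' will be an immediate corollary.

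For the forward direction (irreducibility implies basis), given $F = N D^{-1}$ irreducible, pick $U,V$ with $U N + V D = \idMat{m}$. For any $v \in \denom{F}$, compute
\[
D^{-1} v = (U N + V D) D^{-1} v = U (N D^{-1} v) + V v = U (F v) + V v,
\]
which is a polynomial vector since both $F v$ and $v$ are. Thus $v = D (D^{-1} v)$ is a $\yRing$-combination of the columns of $D$; combined with their linear independence (coming from $\det D \neq 0$), this shows they form a basis of $\denom{F}$.

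For the converse, suppose the columns of $D$ form a basis of $\denom{F}$. Setting $N = F D$ gives a polynomial matrix by definition of $\denom{F}$, so $F = N D^{-1}$ is a right fraction description. To show irreducibility, suppose $N = N_1 E$ and $D = D_1 E$ for some nonsingular $E \in \ymatRing{m}{m}$. Then $F = N_1 D_1^{-1}$, so each column of $D_1$ lies in $\denom{F}$, and therefore $D_1 = D W$ for some polynomial $W$. Plugging in yields $D = D W E$, so $W E = \idMat{m}$, showing $E$ is unimodular and hence $N, D$ are right coprime.

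I do not foresee a serious obstacle: the one place requiring care is making sure the Bezout identity is used in the right direction (multiplying $D^{-1} v$ on the left), and verifying that in the converse direction the common right factor $E$ may be taken square without loss of generality, which follows by extracting a greatest common right divisor of $N$ and $D$. The final statement about arbitrary right fractions being right multiples of $D$ then follows by applying the same argument as in the converse to an arbitrary (not necessarily irreducible) right fraction description.
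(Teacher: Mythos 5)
Your proof is correct. The paper itself does not prove this lemma\,---\,it states it as a reformulation of [Thm.\,6.5-4 and Lem.\,6.5-5, p.\,441] of Kailath's book, so the paper's ``proof'' is a citation. You have supplied a self-contained argument using exactly the Bezout characterization of right coprimeness that the paper quotes from Kailath ([Lem.\,6.3.5, p.\,379]). The forward direction is sound: $UN+VD=\idMat{m}$ gives $D^{-1}v = U(Fv) + Vv$, polynomial for any $v\in\denom{F}$, so the columns of $D$ span, and $\det D\neq0$ gives independence. For the converse, extracting a common right divisor $E$ and producing the left inverse $W$ via $D=DWE$ correctly forces $\det(W)\det(E)=1$, hence $E$ unimodular. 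The point you flag as needing care is in fact automatic: common right divisors of an $\ell\times m$ and an $m\times m$ matrix are taken $m\times m$ by convention, and any such $E$ with $D=D_1E$ must be nonsingular because $D$ is. The final claim about arbitrary right fractions follows, as you note, by reusing the first sentence of your converse argument (every column of the new denominator lies in $\denom{F}$, hence is a $\yRing$-combination of the columns of $D$).
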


\subsubsection{Notes}
\label{notes:reconstructing}

The role of the truncated modules~$\rmodlm$ is to reduce the cost of
computations: we decrease the dimension of the relevant matrices using
a structured left projection. The more usual approach~\cite{KaVi05}
uses generic projections matrices; our choice here is similar to the
one used for the efficient computation of generic
resultants~\cite{Vil18}.

Although not used in this work, genericity on the left is sufficient:
if $V \in \matRing{n}{\ell}$ is generic with~$\ell \in \{m, \ldots,
n\}$, then one has $\rmod = \denom{(\charmat)^{-1} X} =
\denom{\trsp{V} (\charmat)^{-1} X}$. The latter occurs if and only if
$ \rank{\trsp{V} P, \trsp{V} PA, \trsp{V} PA^2, \ldots} = \dd $ for a
well chosen full rank matrix $P \in \matRing{n}{\dd}$, and a
restriction $A \in \matRing{\dd}{\dd}$ of $M_a$ to the invariant
subspace generated by $X$~\cite[Lem.\,4.2]{Vil97:TR}. The rank condition is satisfied for a generic
projection~\cite[Cor.\,6.4 and its proof]{Vil97:TR}.

In terms of generators of matrix sequences, \cref{eq:Hankel-prod}
shows that the denominators of \cref{prop:modprime_is_denom} are bases
of modules of vector generators for the matrix sequence
$\{(\idMat{\ell} \;\; 0) M_a^kX\}_{k\geq 0}$ \cite[Lem.\,2.8]{KaVi05}.

\subsection{Reconstructing denominators of matrix fractions via approximant bases}
\label{sec:relmat:approx_basis}

\algoName{algo:BlockTruncatedPowers} from \cref{sec:truncated_pow}
allows one to compute a truncated power series expansion of \(H(y) =
\trsp{X}(\charmat)^{-1}X\). When the precision of this expansion is
sufficient, a basis of $\rmodmm$ can be reconstructed.

\subsubsection{Definitions} \label{subsubsec:definreconstruct}

\paragraph{Weak Popov matrices}

Let \(P \in \ymatRing{m}{m}\) be a matrix whose column \(j\) has degree \(d_j
\ge 0\). The \emph{(column) leading matrix} of \(P\) is the matrix in
\(\matRing{m}{m}\) whose entry \((i,j)\) is the coefficient of degree \(d_j\)
of the entry \((i,j)\) of \(P\). Then \(P\) is said to be \emph{(column)
reduced} if its leading matrix is invertible. This is the case if and only
if~\cite[Eq.\,(24), p.\,384]{Kailath80}
\begin{equation}
  \label{eq:colreddet}
  \degdet{P} = d_1+\cdots+d_m.
\end{equation}
A (column) reduced matrix is in \emph{(column) weak Popov form} if its leading
matrix is invertible and upper triangular. Any submodule of \(\yvecRing{m}\)
has at least one basis which is in weak Popov form~\cite{Kailath80,BeLaVi99}.

\paragraph{Approximant bases}

Let $F\in\ySeries^{m\times k}$ be a matrix of power series and $\sigma\in\NN$
be a nonnegative integer. A matrix $P\in\ymatRing{k}{k}$  is an
\emph{approximant basis} of $F$ at order $\sigma$ if its columns form a basis
of the \(\yRing\)-module \(\{v \in \yvecRing{k} \mid F v \equiv 0 \bmod
y^\sigma\}\), which is free of rank \(k\). This approximant basis is said to be
\emph{minimal} if it is reduced. Minimal approximant bases are also called
$\sigma$-bases, or order bases \cite{BarBul92,BeLa94}.


\subsubsection{Denominators from approximant bases}

We are going to use approximant bases for solving equations of the
type of~\cref{eq:blockPade}. As pointed out in
\cref{ssec:matrel_small_degree}, we use expansions at $y=0$ rather
than infinity (see \cref{rmk:shift}).

\begin{proposition}
  \label{lem:fraction-reconstruction}
  Let \(H \in \field(y)^{m\times m}\) be strictly proper, and \(\delta\) be 
  the determinantal degree of~\(\denom{H}\) (notation from \cref{eq:denomF}). Suppose that
  \(H\) has a power series expansion \( H = \sum_{k\ge 0} S_k y^k\) at \(y=0\),
  with \(S_k \in \matRing{m}{m}\).  Let 
  \[
    F=\left(\sum_{k=0}^{2d-1} S_ky^k \;\; -\idMat{m}\right)
    \in \ymatRing{m}{(2m)},
  \]
  and let 
  \[
    P = \begin{pmatrix} D & P_{1} \\ N & P_{2} \end{pmatrix}
    \in \ymatRing{(2m)}{(2m)}
  \]
  be an approximant basis at order \(2d\) of $F$ in weak Popov form, with each
  submatrix of size \(m \times m\). Then the following properties hold:
  \begin{enumerate}[label={\it (\roman*)},ref={\roman*}]
    \item\label{lem:fraction-reconstruction:item:general} \(D\) is weak Popov;
      \(\deg(N) < \deg(D)\); the sum of the degrees of the diagonal entries of
      \(D\) is \(\degdet{D}\) and satisfies $\degdet{D} \le \delta$.
    \item\label{lem:fraction-reconstruction:item:certif} If \(\degdet{D} =
      \delta\) and each of the \(m\) rightmost columns of \(P\) has degree at
      least \(\deg(D)\), then~\(ND^{-1}\) is an irreducible description of $H$.
    \item\label{lem:fraction-reconstruction:item:reconstruct} If 
      \(H\) is describable in degree~$d$, then~\(ND^{-1}\) is an
      irreducible description of $H$
      such that \(\deg(D) \le d\) and each
      of the \(m\) rightmost columns of \(P\) has degree at least \(\deg(D)\).
  \end{enumerate}
\end{proposition}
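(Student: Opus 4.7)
The plan is to work with the approximant module \(\mathcal{A} = \{v \in \yvecRing{2m} : Fv \equiv 0 \bmod y^{2d}\}\), a free \(\yRing\)-module of rank \(2m\) of which \(P\) is a weak Popov basis by hypothesis. The key embedding is: for \(v_1 \in \denom{H}\) and \(v_2 = Hv_1\) (polynomial by definition), the pair \(\binom{v_1}{v_2}\) lies in \(\mathcal{A}\), since \(H\) and \(\sum_{k=0}^{2d-1} S_k y^k\) agree modulo \(y^{2d}\); strict properness of \(H\) further gives \(\deg(v_2) < \deg(v_1)\) componentwise. Another useful submodule is \(\mathcal{K} = \{0\} \oplus y^{2d} \yvecRing{m} \subseteq \mathcal{A}\), whose generators \(y^{2d} e_{m+i}\) for \(i = 1, \ldots, m\) all have column degree \(2d\).

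For part (i), the weak Popov hypothesis (leading matrix upper triangular and invertible) places the pivot of column \(j\) of \(P\) on the diagonal, at row \(j\), with entries in the same column below that row strictly smaller in degree; hence the first \(m\) columns of \(P\) have pivots inside \(D\) and the last \(m\) inside \(P_2\). Extracting \(D\) preserves the upper triangularity of the leading matrix, so \(D\) is itself weak Popov and \(\degdet{D} = \sum_j \deg(D_{j,j})\) by \cref{eq:colreddet}; moreover \(\deg(N_{\cdot,j}) < \deg(D_{\cdot,j})\) because \(N\) sits strictly below the pivot row in each of the first \(m\) columns. For \(\degdet{D} \le \delta\), I fix a weak Popov basis \(D^\star\) of \(\denom{H}\) with \(\degdet{D^\star} = \delta\), set \(N^\star = H D^\star\) (polynomial with \(\deg(N^\star_{\cdot,j}) < \deg(D^\star_{\cdot,j})\) componentwise), and embed \(\binom{D^\star}{N^\star} = PU\) for some \(U \in \ymatRing{2m}{m}\). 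The predictable degree property of the reduced matrix \(P\) gives \(\deg(D^\star_{\cdot,j}) = \max_i(\deg(U_{i,j}) + d_i^P)\), where \(d_i^P\) denotes the \(i\)-th column degree of \(P\); using that \(U\) has full column rank over \(\field(y)\) and that pivots of \(P\) sit on the diagonal, one can select for each \(j \in \{1,\ldots,m\}\) a row \(i(j) \le m\) such that the submatrix of \(U\) on rows \(\{i(1),\ldots,i(m)\}\) is nonsingular, and summing \(d_{i(j)}^P + \deg(U_{i(j),j}) \le \deg(D^\star_{\cdot,j})\) over \(j\) produces \(\sum_{j=1}^m d_j^P \le \sum_{j=1}^m \deg(D^\star_{\cdot,j}) = \delta\).

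For parts (ii) and (iii), the main tool is the following degree-separation principle: the condition ``\(\deg(P_{\cdot,j}) \ge \deg(D)\) for \(j > m\)'' combined with the predictable degree property implies that every element of \(\mathcal{A}\) of column degree strictly less than \(\deg(D)\) decomposes in \(P\) using only the first \(m\) columns (any use of a last-\(m\) column would contribute degree at least \(\deg(D)\)). For part (iii), describability of \(H\) in degree \(d\) gives a weak Popov basis \(D^\star\) of \(\denom{H}\) with \(\deg(D^\star) \le d\) and \(\degdet{D^\star} = \delta\); the embedding and predictable degrees from part (i) then yield \(\deg(D) \le d\). The lower bound on the last \(m\) columns is obtained by decomposing the generators \(y^{2d} e_{m+i}\) of \(\mathcal{K}\) in the basis \(P\) and using the weak Popov pivot structure to locate the column of \(P\) carrying the pivot in row \(m+i\); the predictable degree equality forces that column's degree up to at least \(2d \ge d \ge \deg(D)\). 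For part (ii), the equality \(\degdet{D} = \delta\) combined with the minimality of weak Popov column degrees for \(\denom{H}\) ensures that a weak Popov basis \(D^\star\) of \(\denom{H}\) has column degrees all bounded by \(\deg(D)\); the degree-separation principle then places each column of \(\binom{D^\star}{N^\star}\) in the \(\yRing\)-span of the columns of \(\binom{D}{N}\), hence the span of the columns of \(D\) contains \(\denom{H}\), and determinantal degree equality upgrades this to equality of modules. Thus \(D\) is a basis of \(\denom{H}\), the relation \(N = HD\) holds exactly, and \cref{lem:basisdenom} gives that \(ND^{-1}\) is an irreducible description of \(H\).

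The main obstacle lies in the predictable-degree bookkeeping: in part (i), choosing the row \(i(j)\) for each \(j\) via the full rank of \(U\) and the diagonal pivot structure of \(P\); in parts (ii) and (iii), relating the threshold \(\deg(D)\) both to the column degrees of a weak Popov basis of \(\denom{H}\) and to the degree cost of reproducing the generators of \(\mathcal{K}\) within \(P\).
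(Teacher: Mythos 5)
Your overall framing matches the paper's: you embed $\binom{R}{Q}$ (where $QR^{-1}$ is an irreducible, weak Popov description of $H$, so that strict properness forces the pivot of column $j$ to sit in row $j$) into the approximant module and exploit weak Popov degree minimality. However, several of the technical steps do not go through as stated, and the crucial tool for part~\ref{lem:fraction-reconstruction:item:reconstruct} is missing.

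In part~\ref{lem:fraction-reconstruction:item:general}, your derivation of $\degdet{D}\le\delta$ relies on the claim that one can choose rows $i(j)\le m$ so that the corresponding $m\times m$ submatrix of $U$ is nonsingular. This is not justified: full column rank of $U$ guarantees $m$ such rows somewhere in $\{1,\dots,2m\}$, not in $\{1,\dots,m\}$ specifically, and your appeal to ``pivots of $P$ sit on the diagonal'' does not produce them. What actually closes this step is \cref{lem:wpopov_minimality}: the $j$th column of $\binom{D^\star}{N^\star}$ has its bottom-most entry of largest degree in row $j$ (by $D^\star$ weak Popov and componentwise $\deg(N^\star_{\cdot,j})<\deg(D^\star_{\cdot,j})$ from strict properness), hence the $j$th diagonal degree of $P$ is at most $\deg(D^\star_{\cdot,j})$; summing gives the bound directly, no row selection needed.

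In part~\ref{lem:fraction-reconstruction:item:certif}, your degree-separation principle only covers elements of column degree strictly less than $\deg(D)$, whereas the columns of $\binom{D^\star}{N^\star}$ can have degree equal to $\deg(D)$. The paper closes this boundary case by a more delicate computation exploiting that the $j$th column of $N$ has degree strictly below the $j$th diagonal degree of $D$ and that the $i$th column of $Q$ has degree strictly below $d_i\le\deg(D)$; one derives a contradiction from $u_2\neq 0$ via $\deg(Nu_1+P_2u_2)=\deg(D)$. As written, your principle stops just short of the needed case.

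Part~\ref{lem:fraction-reconstruction:item:reconstruct} has the most serious gaps. First, the proposed reduction to part~\ref{lem:fraction-reconstruction:item:certif} is circular: that part requires the hypothesis $\degdet{D}=\delta$, but in part~\ref{lem:fraction-reconstruction:item:reconstruct} you only have $\degdet{D}\le\delta$ a priori; equality is in fact a \emph{consequence} of what you are trying to prove. Second, the $\mathcal{K}$-generator argument for lower-bounding the last $m$ column degrees of $P$ is wrong in both direction and magnitude: \cref{lem:wpopov_minimality} applied to $y^{2d}e_{m+i}=Pu$ gives an \emph{upper} bound ($d_{m+i}^P\le 2d$), not a lower bound, and the claimed bound $\ge 2d$ is false even for $m=1$, $d=1$, $H=b/(y-a)$ with $a\neq 0$, where the rightmost column of a weak Popov approximant basis has degree $1=d$, not $2$. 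The paper's actual argument is via the left fraction: describability in degree $d$ gives $H=\hat R^{-1}\hat Q$ with $\deg(\hat Q)<\deg(\hat R)\le d$; since $F\binom{D}{N}\equiv 0\bmod y^{2d}$, the matrix $\hat Q D-\hat R N$ is both a right multiple of $y^{2d}\hat R$ and of degree $<2d$, hence zero, which yields $H=ND^{-1}$ directly; irreducibility and the column degree bound then follow from the rank of $(\hat Q \;\; \hat R)$. This left-fraction mechanism is the heart of part~\ref{lem:fraction-reconstruction:item:reconstruct} and is entirely absent from your proposal.
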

The first item gives general properties of the approximant basis in
weak Popov form, whereas
\cref{lem:fraction-reconstruction:item:certif,lem:fraction-reconstruction:item:reconstruct}
give sufficient conditions to guarantee it recovers an irreducible
fraction description of $H$.

\subsubsection{Proof of \cref{lem:fraction-reconstruction}}

\begin{lemma}
  \label{lem:wpopov_minimality}
  Let \(P \in \ymatRing{m}{m}\). If \(P\) is reduced and \(B \in
  \ymatRing{m}{m}\) is a right multiple \(B = PU\) with \(U\)
  nonsingular, then
  \(\deg(P) \le \deg(B)\). {If \(P \in \ymatRing{m}{m}\) is weak Popov, with
  diagonal degrees \(d_1,\ldots,d_m \in \NN\), and \(v
  \in \yvecRing{m}\) is a nonzero right multiple \(v = P u\) whose
  bottom-most
  entry of largest degree is in row \(i\) and has degree \(d\), then \(d_i \le
d = \max_{1 \le j \le m} (\deg(u_j)+d_j)\), where \(u = (u_j)_{1\le j\le m}\).}
\end{lemma}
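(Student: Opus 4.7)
The plan is to derive both parts from the predictable degree property for column-reduced matrices: for any column-reduced $P \in \ymatRing{m}{m}$ with column degrees $d_1,\ldots,d_m$ and any $u \in \yvecRing{m}$, one has $\deg(Pu) = \max_j(\deg(u_j)+d_j)$.

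For part (1), I apply this column-by-column to $B = PU$. The $j$th column of $B$ is $P$ times the $j$th column of $U$, hence has degree $\max_k(\deg(U_{k,j})+d_k)$. Since $U$ is nonsingular, none of its rows vanishes, so for every $k$ there exists some $j$ with $U_{k,j}\ne 0$; for such $j$ this column of $B$ has degree at least $d_k$. Taking the maximum over $k$ yields $\deg(B) \ge \max_k d_k = \deg(P)$.

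For part (2), the identity $d = \max_{1\le j\le m}(\deg(u_j)+d_j)$ is again an instance of the predictable degree property, so only the inequality $d_i \le d$ needs justification. Let $J = \{j : \deg(u_j)+d_j = d\}$ and $j^\star = \max J$. Because $P$ is weak Popov, the coefficient of $y^{d_j}$ in $P_{k,j}$ vanishes for $k > j$ (strict inequality below the diagonal of the leading matrix) and is nonzero for $k = j$. I expand $v_k = \sum_\ell P_{k,\ell}u_\ell$ in two cases. For $k > j^\star$: the terms with $\ell < k$ satisfy $\deg(P_{k,\ell}u_\ell) < d_\ell + \deg(u_\ell) \le d$ by the strict inequality below the diagonal, and the terms with $\ell \ge k > j^\star$ lie outside $J$, so $\deg(P_{k,\ell}u_\ell) \le d_\ell + \deg(u_\ell) < d$; hence $\deg(v_k) < d$. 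For $k = j^\star$: the diagonal term $P_{j^\star,j^\star}u_{j^\star}$ has degree exactly $d$ with nonzero leading coefficient, the terms with $\ell < j^\star$ have degree strictly less than $d$ by the weak Popov inequality, and the terms with $\ell > j^\star$ have degree strictly less than $d$ since $\ell \notin J$. Hence $\deg(v_{j^\star}) = d$.

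These two observations pin down the bottom-most row attaining degree $d$: it must be $i = j^\star$. Since $j^\star \in J$, one concludes $d_i = d_{j^\star} = d - \deg(u_{j^\star}) \le d$, as required. The main technical point is the triangular bookkeeping combining the inequality $\deg(P_{k,\ell}) < d_\ell$ for $k > \ell$ with the maximality of $j^\star$ in $J$: it is this pair of ingredients that both prevents contributions from indices above $j^\star$ from raising $\deg(v_k)$ to $d$ for any $k > j^\star$, and prevents cancellation of the diagonal leading term at row $j^\star$.
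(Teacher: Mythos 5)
Your proof is correct. The paper's own proof is essentially citation-based: it derives the predictable degree identity from \cite[Thm.\,6.3-13, p.\,387]{Kailath80} and points to \cite[Lem.\,1.17]{Neiger2016} for the crucial inequality $d_i \le d$, providing no details. Your argument unfolds those references into a self-contained proof, and the mechanism you use — introducing $J$ and $j^\star = \max J$, exploiting the upper-triangular leading matrix to show that every row below $j^\star$ has degree strictly below $d$ while row $j^\star$ hits degree $d$ exactly (so $i = j^\star$ and $d_i = d - \deg(u_{j^\star}) \le d$) — is precisely the reasoning one would find on unpacking the cited lemma.
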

\begin{proof}
  The first claim follows from the predictable degree property
  \cite[Thm.\,6.3-13, p.\,387]{Kailath80}, {and so does the identity
    \(\deg(v) = d = \max_{1 \le j \le m} (\deg(u_j)+d_j)\) since
    \((d_1,\ldots,d_m)\) are also the column degrees of \(P\) by definition of
    a weak Popov form}. The inequality \(d_i \le d\) is from
    \cite[Lem.\,1.17]{Neiger2016}.
\end{proof}

We now prove \cref{lem:fraction-reconstruction}.  Consider an
irreducible fraction description \(QR^{-1} = H\) for some \(Q \in
\ymatRing{m}{m}\) and some weak Popov \(R \in \ymatRing{m}{m}\).
Since \(H\) is strictly proper we have \(\deg(Q)<\deg(R)\) and, more precisely,
the \(i\)th column of \(Q\) has degree less than the \(i\)th column of \(R\). Thus
the \(i\)th column of \((\begin{smallmatrix} R
  \\ Q \end{smallmatrix})\) has its bottom-most entry of largest degree
in row \(i\); let \(d_i\) be this degree.

In \cref{lem:fraction-reconstruction:item:general}, the first two
claims follow from the definition of \(P\) being weak Popov. In
particular $D$ is column reduced, hence \cref{eq:colreddet} shows that
\(\degdet{D}\) is the sum of column degrees of \(D\), which is also
the sum of diagonal degrees of \(D\) since \(D\) is weak Popov. The
identity \((H \; -\idMat{m}) (\begin{smallmatrix} R \\ Q\end{smallmatrix}) = 0\)
  implies \(F (\begin{smallmatrix} R \\ Q\end{smallmatrix}) = 0 \bmod y^{2d}\), and therefore
  \((\begin{smallmatrix} R \\ Q \end{smallmatrix})\) is a
  right multiple of \(P\). Hence, by \cref{lem:wpopov_minimality},
  \(d_i\) is at least the degree of the \(i\)th column of \(P\), which
  is the degree of the \(i\)th column of \(D\); it follows that
  \(\degdet{D} \leq d_1+\cdots+d_m\). On the other hand, since \(R\)
  is reduced 
{and since by \cref{lem:basisdenom} its columns form a basis of 
~\(\denom{H}\)}
  we have \(d_1+\cdots+d_m = \degdet{R}=\delta\), 
  proving the last
  claim of \cref{lem:fraction-reconstruction:item:general}.

  Concerning \cref{lem:fraction-reconstruction:item:certif}, the assumption
  \(\degdet{D} = \delta = \degdet{R}\) implies that the sum of column degrees
  of \(D\) is \(d_1+\cdots+d_m\), while as showed above the \(i\)th column of
  \(D\) has degree at most \(d_i\). Thus \(D\) has the same column degrees
  \((d_1,\ldots,d_m)\) as \(R\). In particular \(\deg(D) = \deg(R) > \deg(Q)\).
  Then, since by assumption the \(m\) rightmost columns of \(P\) have
  bottom-most entries of largest degree in rows at least \(m+1\) and of degree
  at least \(\deg(D)\), one can deduce from \cref{lem:wpopov_minimality} that
  \((\begin{smallmatrix} R \\ Q \end{smallmatrix})\) is a right
  multiple of the
  leftmost \(m\) columns of \(P\). {Indeed, let \(u = (\begin{smallmatrix} u_1
    \\ u_2 \end{smallmatrix}) \in \yvecRing{2m}\), with \(u_1\) and \(u_2\)
    each of dimension \(m\), such that the \(i\)th column of
    \((\begin{smallmatrix} R \\ Q \end{smallmatrix})\) is \(P u =
    (\begin{smallmatrix} D u_1 + P_1 u_2 \\ N u_1 + P_2 u_2
    \end{smallmatrix})\); we want to prove \(u_2=0\). Using the last identity
    in \cref{lem:wpopov_minimality} on each of the weak Popov matrices \(P\),
    \(D\), and \(P_2\), we observe that \(\deg(Pu) = \max(\deg(D u_1), \deg(P_2
    u_2))\); note \(\deg(Pu) = d_i \le \deg(D)\) by construction. On the other
    hand, since all diagonal degrees of \(P_2\) are at least \(\deg(D)\),
    \cref{lem:wpopov_minimality} shows \(\deg(D) \le \deg(P_2 u_2)\), provided
    that \(u_2\neq 0\), which we now assume by contradiction. This implies
    \(\deg(Du_1) \le \deg(Pu) = \deg(P_2u_2) = \deg(D)\), hence \(\max_{1\le j
    \le m} d_j + \deg(u_{1,j}) \le \deg(D)\) using
    \cref{lem:wpopov_minimality}, where \(u_1 = (u_{1,j})_{1\le j\le m}\). Now
    by definition of weak Popov forms, the \(j\)th column of \(N\) has degree
    less than \(d_j\) for \(1\le j\le m\), hence \(\deg(Nu_1) < \max_{1\le j\le
    m} d_j + \deg(u_{1,j})\). This gives \(\deg(Nu_1 + P_2u_2) = \deg(D)\),
    which is a contradiction since \(Nu_1 + P_2u_2\) is the \(i\)th column of
    \(Q\) and has degree strictly less than \(d_i\), itself at most \(\deg(D)\). So,
  \(u_2=0\). Gathering this over all columns \(1 \le i \le m\), this means}
    \((\begin{smallmatrix} R \\ Q \end{smallmatrix}) = (\begin{smallmatrix} D
    \\ N \end{smallmatrix}) U = (\begin{smallmatrix} DU \\ NU
  \end{smallmatrix})\) for some \(U \in\ymatRing{m}{m}\), and \(U\) is
  unimodular since \(R\) and \(D\) are nonsingular with
  \(\degdet{R}=\degdet{D}\). Hence \(H = QR^{-1} = ND^{-1}\) and the fraction
  \(ND^{-1}\) is irreducible.

  The following proof of \cref{lem:fraction-reconstruction:item:reconstruct}
  reflects that of \cite[Lem.\,3.7]{GJV03}. The assumption implies first the
  existence of a left fraction \(H = \hat{R}^{-1} \hat{Q}\) with
  \(\deg(\hat{Q}) < \deg(\hat{R}) \le d\), and second the degree bound
  \(\deg(R)\le d\) thanks to the degree minimality of reduced bases (see
  \cref{lem:wpopov_minimality}). The above paragraph shows in particular
  \(\deg(D) \le \max_i(d_i) = \deg(R) \le d\).
  
  Now, since \(\hat{R} (\sum_{0\le k < 2d} S_ky^k) \equiv \hat{Q} \bmod y^
  {2d}\),
  multiplying on the left by \(\hat{R}\) both sides of \(F (
  \begin{smallmatrix} D \\ N
  \end{smallmatrix}) \equiv 0 \bmod y^{2d}\) shows
  that \(\hat{Q}D - \hat{R} N\) is a right multiple of \(y^{2d} 
  \hat{R}\).  On
  the other hand, \(\hat{Q}D - \hat{R} N\) has degree less than \(2d\).  Hence
  it is zero, and \(H = \hat{R}^{-1} \hat{Q} = N D^{-1}\). To prove that the
  latter fraction is irreducible, assume by contradiction that \(D\) and \(N\)
  have a nonsingular common right divisor \(B \in \ymatRing{m}{m}\), with
  \(\degdet{B} > 0\). Then \(H = (NB^{-1}) (DB^{-1})^{-1}\) yields \(F
  (\begin{smallmatrix} DB^{-1} \\ NB^{-1} \end{smallmatrix}) \equiv 0 \bmod
  y^{2d}\), and \(P\, \diag{B^{-1},\idMat{m}}\) is a right multiple of
  \(P\) (since $P$ is a basis):
  this is impossible since \(\degdet{P\, \diag{B^{-1},\idMat{m}}} <
  \degdet{P}\).

  It remains to prove the last degree assertion. By contradiction, assume that
  $P$ has a column $(\begin{smallmatrix}v_0 \\ v_1\end{smallmatrix})$ of index
  larger than $m$ with $v_0$ and $v_1$ in $\yvecRing{m}$ both of degree less
  than $d$. Then an argument similar to the one above shows that \(\hat{Q}v_0 -
  \hat{R} v_1=0\). Altogether we obtain a matrix \((\begin{smallmatrix} D &
  v_0 \\ N & v_1 \end{smallmatrix})\) of rank \(m+1\) which is in the right
  kernel of $(\hat{Q} \;\; \hat{R}) \in \ymatRing{m}{(2m)}$ whose rank is
  \(m\): this is not possible. 

\subsubsection{Notes} 

The existence of appropriate left and right descriptions of~$H$ was used before
for the reconstruction of matrix fractions within the approximant
framework~\cite[Sec.\,3.2]{GJV03}. Our proof is similar to that of
\cite[Lem.\,3.7]{GJV03}, with the additional use of the weak Popov form.

Reduced forms were introduced \cite{Wolovich74} as a way to get a better
control over the degrees when computing with polynomial matrices and matrix
fractions, see e.g.~\cite[Lem.\,6.3-11, p.\,385]{Kailath80} for proper
fractions, \cite[Thm.\,6.3-13, p.\,387]{Kailath80} for a predictable degree
property, and \cite[Thm.\,6.5-10, p.\,458]{Kailath80} concerning the
\emph{minimality} of the column degrees. Weak Popov forms were introduced later
\cite{BeLaVi99,MulSto03} (under the name quasi-Popov and up to column
permutation) and provide a refined degree control as illustrated by
\cref{lem:wpopov_minimality}.

\subsection{Candidate basis of relations}
\label{subsec:candidate}

\algoName{algo:CandidateBasis} takes as input a polynomial
$f\in\xRing$ of degree $n$ with $f(0)\neq 0$, a polynomial
$a\in\xRing_{<n}$ such that $\gcd(a,f)=1$, and two positive
integers~$m\leq n$ and~$d$. With this input, it computes an $m\times m$
matrix of degree at most $2d$. 

The algorithm starts by computing a truncated expansion at order~$2d$
of \(H = \trsp{X} (\charmat)^{-1}X\) at~$y=0$ using
\algoName{algo:BlockTruncatedPowers}. Then, it computes a~$2m\times
2m$ minimal approximant basis as in \cref{lem:fraction-reconstruction}
using the algorithm \hyperlink{cite.GJV03}{\textproc{PM-Basis}}
of~\cite{GJV03}, and extracts a potential basis of relations.  In some
cases we can certify that it is a indeed basis of $\rmod$, but it is
not always possible to do so; a flag is returned to indicate
this. This certification is actually an optimization, rather than
strictly necessary; \cref{sec:relmat:certify} discusses this question
in more detail.

\begin{algorithm}[t] 
  \algoCaptionLabel{CandidateBasis}{f,a,m,d}
  \begin{algorithmic}[1]
    \Require $f\in \xRing$ of degree \(n\), with $f(0)\neq 0$, $a\in
    \xRing_{<n}$ with \(\gcd(a,f)=1\), $m\leq n$ and $d$ in \(\NN_{>0}\)
    \Ensure a weak Popov matrix \(R \in \ymatRing{m}{m}_{\leq2d}\) and
    a flag in \(\{\Cert,\NoCert\}\); $R$ is a basis of $\rmod$ in
    either of the following cases:
    \begin{itemize}
      \item $\ddmm=\dd$ and \(H = \trsp{X}(\charmat)^{-1}X\) is
      describable in degree $d$, in which case $\deg(R)\leq d$ 
      \item the flag is \(\Cert\), which implies $\ddmm=\dd=n$
    \end{itemize}
    
    \State \CommentLine{Truncated expansion of \(H\): compute \(S_k = -\trsp{X}M_{a}^{-k-1}X\) for {\(0\leq k<2d\)} using \cref{algo:BlockTruncatedPowers}} \label{algo:CandidateBasis:series}
      \Statex $(A^*_{i,k})_{\substack{0 \le i < m\\ {0 \le k < 2d+1}}} \gets \Call{algo:BlockTruncatedPowers}{f,a^{-1} \bmod f,m,{2d+1}}$
      \Statex \(S_{i,k} \in \vecRing{m} \gets\) vector of coefficients of \(-A^*_{i,k+1} \in \xRing_{<m}\), for \(0 \le i < m\) and \(0 \le k < 2d\)
    \State \CommentLine{Fraction reconstruction: compute approximant basis using algorithm from \cite{GJV03,JeannerodNeigerVillard2020}}
    \Statex \(F \in \ymatRing{m}{2m}_{< 2d} \gets (\sum_{0 \le k < 2d} S_k y^k \;\; -\idMat{m})\) where
    \(S_k = (S_{0,k} \;\; \cdots \;\; S_{m-1,k}) \in \matRing{m}{m}\)
    \Statex \(P \in \ymatRing{2m}{2m}_{\leq 2d} \gets \trsp{\hyperlink{cite.GJV03}{\textproc{PM-Basis}}(\trsp{F},2d,0)}\), with $P$ in weak Popov form
      \label{algo:CandidateBasis:approx}
    \State \CommentLine{Return candidate matrix and result of basic certification}
    \label{algo:CandidateBasis:end}
    \Statex $R \gets P_{1..m,1..m}$ 
    \Statex\IF{} the sum of diagonal degrees of \(R\) is equal to $n$
          \Comment{\cref{lem:fraction-reconstruction:item:certif} of \cref{lem:fraction-reconstruction}}
    \Statex\AND{} each of the \(m\) rightmost columns of \(P\) has degree \(\ge \deg(R)\)
    \Statex\THEN{} \Return \((R,\Cert)\) \ELSE{} \Return \((R,\NoCert)\)
  \end{algorithmic}
\end{algorithm}

\begin{proposition}
  \label{prop:compute_Mmm}
  Given $f\in\field[x]$ of degree~$n$
 with $f(0)\neq 0$, $a\in\field[x]_{<n}$ such that \(\gcd(a,f)=1\), 
and  two positive integers $m\leq n$ and  $d$,   
  \algoName{algo:CandidateBasis} uses $\softO{m^{\omega} d +
    \bicost{d}}$ operations in~$\field$, \bicostrecall, and computes a weak
    Popov matrix \(R \in \ymatRing{m}{m}_{\leq2d}\). 
    The matrix $R$ is a basis of \(\rmodfa\) in
    either of the following cases:
  \begin{itemize}
    \item  The determinantal degree $\dd[m,m]^{(a,f)}$ is equal to $\ddfa$ and   the fraction \(H
    (y)= \trsp{X}(\charmat)^{-1}X\) is
    describable in degree~$d$; in that case we further have $\deg(R)
    \leq d$; if in addition \(\ddfa = n\) then the flag is \(\Cert\).
      \item The flag is \(\Cert\), which implies \(\dd[m,m]^{(a,f)} = \ddfa = n\). 
  \end{itemize}
\end{proposition}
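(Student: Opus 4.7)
The plan is to analyze \algoName{algo:CandidateBasis}{} by first identifying the rational matrix it is implicitly reconstructing, then applying \cref{lem:fraction-reconstruction,prop:modprime_is_denom} to transfer properties of minimal approximant bases into properties of $R$ as a candidate basis of relations.

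First I would verify that the computed matrices $S_k$ are the Taylor coefficients at $y=0$ of $H(y) = \trsp{X}(\charmat)^{-1}X$. Since $\gcd(a,f)=1$, $\mulmat$ is invertible, so $(y\idMat{n} - \mulmat)^{-1} = -\sum_{k\ge 0} \mulmat^{-k-1} y^k$ as a power series at $y=0$, and hence $H(y) = \sum_{k\ge 0} S_k y^k$ with $S_k = -\trsp{X}\mulmat^{-k-1}X$. The call \Call{algo:BlockTruncatedPowers}{f,a^{-1}\bmod f,m,2d+1} returns the truncations $[x^i(a^{-1})^{k+1}\rem f]_0^{m-1}$, whose coefficient vectors form exactly the columns of $-S_k$ by the linear algebra interpretation of \cref{subsec:linalginterpret}. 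The matrix $F = (\sum_{0\le k<2d} S_k y^k \;\; -\idMat{m})$ is therefore precisely the input of \cref{lem:fraction-reconstruction} applied to $H$, and \hyperlink{cite.GJV03}{\textproc{PM-Basis}} followed by reduction to weak Popov form produces the matrix $P$ of that lemma. Combined with \cref{prop:modprime_is_denom} (with $\ell=m$), this identifies $\denom{H}$ with $\rmodmm$, whose determinantal degree is $\ddmm$.

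Next I would handle the two cases. In the first case, $\ddmm = \ddfa$ and $H$ is describable in degree $d$; \cref{lem:fraction-reconstruction:item:reconstruct} then says $R = P_{1..m,1..m}$ together with $N = P_{m+1..2m,1..m}$ gives an irreducible description $NR^{-1}$ of $H$ with $\deg(R)\le d$, so by \cref{prop:modprime_is_denom} the columns of $R$ form a basis of $\rmodmm$. Since $\rmodfa \subseteq \rmodmm$ always holds, and equal determinantal degrees imply equal modules (bases of the smaller module are matrices of relations of the larger one, and equality of determinantal degrees forces the transition matrix to be unimodular), $R$ is a basis of $\rmodfa$. If additionally $\ddfa = n$, then the sum of diagonal degrees of the weak Popov matrix $R$ equals $\degdet{R} = \ddmm = n$, and the second condition of the certification is ensured by \cref{lem:fraction-reconstruction:item:reconstruct}, so the flag is \Cert{}. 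In the second case, the flag being \Cert{} means that the sum of diagonal degrees of $R$ is $n$ and the $m$ rightmost columns of $P$ have degree at least $\deg(R)$. By \cref{lem:fraction-reconstruction:item:general}, this sum equals $\degdet{R}\le\ddmm\le\ddfa\le n$, forcing $\ddmm = \ddfa = n$; then \cref{lem:fraction-reconstruction:item:certif} yields that $NR^{-1}$ is irreducible, and \cref{prop:modprime_is_denom} again identifies $R$ as a basis of $\rmodmm = \rmodfa$.

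For the complexity, the inverse $a^{-1}\bmod f$ is computed in $\softO{n}$ operations by extended Euclid; \cref{prop:block_truncated_powers} bounds the cost of the call to \Call{algo:BlockTruncatedPowers}{} by $\softO{\bicost{d}} + \bigO{m^2 d}$, which fits into $\softO{m^\omega d + \bicost{d}}$; the approximant basis on a matrix of dimensions $2m\times m$ at order $2d$ costs $\softO{m^\omega d}$ by \hyperlink{cite.GJV03}{\textproc{PM-Basis}} together with the reduction to weak Popov form; the final degree checks are negligible. The main technical obstacle is essentially bookkeeping: matching the algorithm's conventions (expansion at $y=0$ rather than at infinity, as noted in \cref{rmk:shift}) with the formulation of \cref{lem:fraction-reconstruction}, and carefully propagating the determinantal degree (in)equality $\ddmm\le\ddfa\le n$ so that both certification directions are tight; once this is set up, each case reduces to a direct invocation of the lemmas of \cref{sec:relmat:relations_denominators,sec:relmat:approx_basis}.
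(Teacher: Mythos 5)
Your proposal is correct and follows essentially the same route as the paper's proof: identify $S_k = -\trsp{X}\mulmat^{-k-1}X$ as the Taylor coefficients of $H$ at $y=0$, invoke \cref{lem:fraction-reconstruction} on the approximant basis $P$, transfer via \cref{prop:modprime_is_denom} to $\rmodmm$, and then use the inclusion $\rmodfa\subseteq\rmodmm$ together with the chain $\ddmm\le\ddfa\le n$ to close both certification directions. Your brief parenthetical justifying why equal determinantal degrees force $\rmodfa=\rmodmm$ fills in a step the paper leaves implicit; otherwise the two arguments coincide, including the complexity accounting.
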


\begin{proof}
  \Cref{prop:block_truncated_powers} shows that
  \cref{algo:CandidateBasis:series} uses $\softO{m^2 d + \bicost{d}}$
  operations to compute the vectors \(S_{i,k} \in \vecRing{m}\). These
  vectors are such that the matrices \(S_k\) built in
  \cref{algo:CandidateBasis:approx} are \(S_k =
  -\trsp{X}M_{a}^{-k-1}X\); as a result, the matrix \(S = \sum_{0 \le
    k < 2d} S_k y^k\) considered at \cref{algo:CandidateBasis:approx}
  is the power series expansion of~\(H\) truncated at order
  \(2d\). Then \cref{algo:CandidateBasis:approx} correctly computes a
  weak Popov approximant basis \(P\) for $F = (S \;\; -\idMat{m})$ at
  order \(2d\) with $\deg(P) \leq 2d$ using \(\softO{m^{\omega} d}\)
  operations \cite[Thm.\,2.4]{GJV03}
  \cite[Prop.\,3.2]{JeannerodNeigerVillard2020}. (Note that transposes
  are used at Step 2 because in
  \cite{GJV03,JeannerodNeigerVillard2020} approximant bases are
  considered row-wise, rather than column-wise here.) The claimed cost
  bound for \algoName{algo:CandidateBasis} is proved.

  For the first item, assume that \(H\) is describable in degree $d$.  Then
  \cref{lem:fraction-reconstruction:item:reconstruct} of
  \cref{lem:fraction-reconstruction} ensures that \(R\) is the denominator of
  an irreducible right fraction description of \(H\), that \(\deg(R) \le d\),
  and that each of the \(m\) rightmost columns of \(P\) has degree at least
  \(\deg(R)\). From \cref{prop:modprime_is_denom} we obtain that $R$ is a basis
  of $\rmodmm$, hence a basis of $\rmod$ when $\ddmm=\dd$. This also proves the
  last claim of the item: if $\ddmm=\dd=n$, then \(\degdet{R} = n\) and this is the sum of
  diagonal degrees of \(R\) since this matrix is in weak Popov form; hence the
  flag \Cert{} is returned.

  For the second item, assume that the output flag is \Cert. Then the sum of
  diagonal degrees of \(R\) is~\(n\); according to
  \cref{lem:fraction-reconstruction:item:general} of
  \cref{lem:fraction-reconstruction}, this sum is also \(\degdet{R}\) and is at
  most \(\delta\), the determinantal degree of bases of \(\denom{H}\). On the
  other hand \cref{prop:modprime_is_denom} implies that \(\delta\) is the
  determinantal degree \(\ddmm\) of \(\rmodmm\). Hence \(n = \degdet{R} \le
  \delta = \ddmm\), from which we deduce  $\degdet{R} = \delta = \ddmm = \dd =
  n$, since \(\ddmm\le\dd\le n\) always holds. Since the output flag is
  \Cert{} we know in addition that each of the \(m\) rightmost columns of
  \(P\) has degree at least \(\deg(R)\). Thus
  \cref{lem:fraction-reconstruction:item:certif} of
  \cref{lem:fraction-reconstruction} applies, and \(R\) is the denominator of
  an irreducible right fraction description of \(H\). We conclude as done for
  the first item that~\(R\) is a basis of \(\rmod\).
\end{proof}

\begin{remark}
  \label{rmk:shift}
  The assumption that $f$ and $a$ are coprime is used here to ensure
  that \(M_a\) is invertible, so that the expansion \(H = \sum_{k\ge
    0} S_k y^k = \sum_{k\ge 0} (- \trsp{X} \mulmat^{-k-1} X) y^k\) at
  $y=0$ can be used for fraction reconstruction. This is different
  from what happened in the proof of
  \cref{prop:modprime_is_denom}, where we used the
  expansion at infinity \(H = \sum_{k\ge 0} H_k y^{-k-1}\).
{The latter expansion involves powers of $M_a$ 
and thus our formalism remains close to that of \cite{KaVi05} with Krylov sequences.  
 From an algorithmic point of view, expansions at $y=0$ allow us to use directly   
the existing efficient algorithms for matrix fraction reconstruction~\cite{GJV03,JeannerodNeigerVillard2020}, and exploit their properties.}

  This assumption on $\gcd(f,a)$ is harmless in our context: in the
  computation of $\polp(a)\rem f$, one can instead evaluate
  $\polp(y-c)$ at $y=a+c$ for a randomly chosen~$c \in \field$,
  ensuring \(\gcd(a+c,f) = 1\) with good probability. See
  \cref{step:shiftp-mainalgo} in
  \algoName{algo:ModularCompositionBaseCase}.
\end{remark}

\subsubsection*{Notes}

For some families of approximation instances,
\hyperlink{cite.GJV03}{\textproc{PM-Basis}} has been used to design faster
minimal approximant basis algorithms
\cite{ZhoLab12,JeannerodNeigerVillard2020}. Yet, the instances considered here
are ones where \hyperlink{cite.GJV03}{\textproc{PM-Basis}} is the fastest known
algorithm.

A candidate matrix of relations in $\yRing_{\le 2d}^{m\times m}$ corresponds to
$m$ polynomials in~$\xyRing_{\le(m,2d)}$. Using
\algoName{algo:SimultaneousBivariateModularComposition} to verify that
the
evaluations of these polynomials at~$a\bmod f$ are zero
uses~$\softO{\bicost{d^2}}$ operations in~$\field$, by \cref{lemma:NZ}. For
the values of~$m$ and~$d$ used to obtain the exponent~$\kappa<1.43$ in our main
algorithm, this is~$\bigO{n^{2.55}}$, and thus too costly.

\subsection{Certified matrix of relations}
\label{sec:relmat:certify}

In general, when \algoName{algo:CandidateBasis} does not certify its
result, we do not know methods to verify that the matrix it returns is
a matrix of relations within our complexity bound.

Instead, from a matrix $R$ computed by
\algoName{algo:CandidateBasis}, \algoName{algo:MatrixOfRelations}
either
detects that it is not a matrix of relations of \(\rmod\), or
constructs from \(R\) a matrix of relations of \(\rmod[m']\) of
degree at most \(2d\), for some \(m' < 2m\). This is the key towards
making our modular composition algorithm Las Vegas, rather than Monte
Carlo.

To achieve this, instead of evaluating all columns of \(R\) at \(a
\bmod f\), \algoName{algo:MatrixOfRelations} evaluates only two
polynomials built randomly from these columns (and only one polynomial
in the special case $m=1$), which is within our target complexity
using the N\"usken-Ziegler algorithm. If these evaluations are not
both zero, then \(R\) was not a matrix of relations. Otherwise the
algorithm constructs a Sylvester matrix from these two vectors (see
e.g.,~\cite[Sec.\,6.3]{GaGe99} for the definition and properties of the
Sylvester matrix). When this matrix is nonsingular, it is a matrix of
relations of a  module $\rmod[m']$ for \(m' \leq
\max(1,2(m-1))\); since \(m'\) cannot be much larger than \(m\),
this matrix can be used for efficient composition.

\begin{algorithm} 
  \algoCaptionLabel{MatrixOfRelations}{f,a,m,d,(r_i)_{3 \le i \le m}}

  \begin{algorithmic}[1]
  \Require
  \parbox[t]{0.8\textwidth}{
    $f\in \xRing$ of degree \(n\), with $f(0)\neq 0$, $a\in \xRing_{<n}$ with \(\gcd(a,f)=1\),

    $m\leq n$ and $d$ in \(\NN_{>0}\), $(r_i)_{3 \le i \le m} \in \vecRing{m-2}$
    }
  \Ensure either \Fail{} or a matrix $R' \in\ymatRing{m'}{m'}_{\leq 2d}$ of
  relations of $\rmod[m']$ with $m'\leq\max(1,2(m-1))$ 
  \State \CommentLine{Use \cref{algo:CandidateBasis} to find a candidate basis of relations}
          \label{algo:MatrixOfRelations:candidate}
  \Statex \((R,\Flag) \in \ymatRing{m}{m}_{\leq 2d} \times \{\Cert,\NoCert\} \gets \Call{algo:CandidateBasis}{f,a,m,d}\) 
  \Statex\InlineIf{\(\Flag=\Cert\)} \Return $R$
\State\CommentLine{Case $m=1$, check that $R_{1,1} \in \field[y]_{<2d+1}$} annihilates $a\bmod f$ \label{algo:MatrixOfRelations:BKtest}
 \Statex \IF{} $m=1$ \THEN{}    
\Statex \hspace*{0.4cm}\IF{} $\Call{algo:ModularComposition-BrentKung}{f,a,R_{1,1}} \neq 0$ \THEN{} \Return \Fail
 \Statex \hspace*{0.4cm}\ELSE{} \Return $R$

  \State \CommentLine{Build candidate relations and verify them}
          \label{algo:MatrixOfRelations:NZ}
  \Statex $r(x,y) \gets R_{*,1}$;  $s(x,y) \gets R_{*,2} + r_3 R_{*,3} + \ldots + r_m R_{*,m}$ 
          \Comment{both in \(\xyRing_{<(m,2d+1)}\)}
  \Statex\IF{} $\Call{algo:BivariateModularComposition}{f,a,r} \neq 0$
  \Statex\OR{} $\Call{algo:BivariateModularComposition}{f,a,s} \neq 0$
  \Statex\THEN{} \Return \Fail \Comment{\cref{algo:BivariateModularComposition}}
  \Statex \InlineIf{$m=2$}{\Return \(R\)}

  \State \label{algo:MatrixOfRelations:gcd}
  \CommentLine{Construct and return the Sylvester matrix of $f$ and
  $s$, if it is nonsingular}
  \Statex \InlineIf{$\gcd_x(r,s) \neq 1$}{\Return \Fail} \Comment{\(r\) and \(s\) not coprime as elements of \(\field(y)[x]\)}
  \Statex \Return the Sylvester matrix of \((r,s)\) as in \cite[Sec.\,6.3,
  Eq.\,(5)]{GaGe99}, with rows in reversed order, viewing \(r\) and \(s\) as
  polynomials in \(x\) over \(\yRing\)
  \end{algorithmic}
\end{algorithm}

\begin{proposition}
  \label{prop:certificate}
  Given $f\in\field[x]$ of degree~$n$ with $f(0)\neq 0$,
  $a\in\field[x]_{<n}$ with $\gcd(a,f) = 1$, two positive integers $m(\leq n)$ and $d$, and $(r_i)_{3\le i\le
  m}\in\field^{m-2}$,
  \algoName{algo:MatrixOfRelations} uses
  $\softO{m^{\omega} d + \bicost{d}}$ operations in~$\field$, \bicostrecall, and returns
  either \(\Fail\) or a matrix of relations $R' \in\ymatRing{m'}{m'}_{\leq 2d}$
  of \(\rmod[m']^{(a,f)} \) where $m'\leq\max(1,2(m-1))$.

  If $\dd[m,m]^{(a,f)}=\ddfa$, the fraction \(H
  = \trsp{X}
  (\charmat)^{-1}X\) is describable in degree~$d$, and 
  $(r_3,\ldots,r_m)$ are chosen uniformly and independently at random from a
  finite subset $S$ of $\field$, then failure happens with probability at most
  $(m-1)/\card{S}$ and in case of success, $\deg(R')\leq d$.
\end{proposition}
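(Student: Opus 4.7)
The plan is to handle the three pieces of the statement---complexity, deterministic correctness, and the probabilistic guarantee---by following the case split of the pseudocode.

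For the complexity bound I would combine four contributions: \cref{prop:compute_Mmm} bounds the cost of \algoName{algo:CandidateBasis} at \cref{algo:MatrixOfRelations:candidate}; \cref{lemma:BK} bounds the cost of Brent-Kung at \cref{algo:MatrixOfRelations:BKtest} on a polynomial of degree at most $2d$; \cref{prop:NuskenZiegler} bounds the cost of each of the two bivariate composition tests at \cref{algo:MatrixOfRelations:NZ} on polynomials in $\xyRing_{<(m,2d+1)}$; finally, a fast bivariate gcd algorithm handles $\gcd_x(r,s)$ for polynomials of $x$-degree less than $m$ whose $\yRing$-coefficients have degree at most $2d$. All four fit into $\softO{m^{\omega} d + \bicost{d}}$.

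For deterministic correctness I would split on which branch exits. If \algoName{algo:CandidateBasis} returns \Cert{}, then \cref{prop:compute_Mmm} already certifies that $R$ is a basis of $\rmodfa$. In the $m=1$ non-failure branch, the Brent-Kung test certifies $R_{1,1} \in \mathcal{I} \cap \yRing \subseteq \rmodfa[1]$, with $R_{1,1}\neq 0$ because $R$ is in weak Popov form. In the $m=2$ branch after both compositions succeed, both columns of $R$ lie in $\rmodfa[2]$ and weak Popov again gives $\det R \neq 0$. In the $m \geq 3$ branch, the two composition tests guarantee $r,s\in\mathcal{I}$, so each column $x^i r$ or $x^j s$ of the Sylvester matrix lies in $\mathcal{I} \cap \xyRing_{<(m',\cdot)} = \rmodfa[m']$ with $m' = \deg_x(r) + \deg_x(s) \leq 2(m-1)$; the coprimality check then guarantees a nonzero resultant and hence a nonsingular Sylvester matrix.

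The probability bound is the most delicate point. Under the hypotheses \cref{prop:compute_Mmm} delivers a basis $R$ of $\rmodfa$ with $\deg(R) \leq d$, so both bivariate composition tests succeed deterministically (their inputs lie in $\rmodfa \subseteq \mathcal{I}$), and for $m \in \{1,2\}$ the algorithm always returns $R$ itself. For $m \geq 3$ the only remaining source of failure is the gcd check, and it suffices to show that $\gcd_x(R_{*,1},\,R_{*,2} + \sum_{i\ge 3} r_i R_{*,i}) = 1$ in $\field(y)[x]$ fails with probability at most $(m-1)/|S|$. The key observation would be that the $m$ columns of any basis of $\rmodfa$ have trivial $\gcd_x$ in $\field(y)[x]$: by \cref{cor:structure_module} the Lazard basis $B$ has as its first column the polynomial $\polr_0 \in \yRing$, a unit in $\field(y)[x]$; and since any other basis satisfies $R = BU$ for unimodular $U$, any common $\field(y)[x]$-divisor of the columns of $R$ must also divide every column of $B$, hence divide $\polr_0$, forcing it to be a unit. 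Therefore, for each of the at most $m-1$ distinct monic irreducible factors $P_k \in \field(y)[x]$ of $R_{*,1}$, at least one of $R_{*,2}, \ldots, R_{*,m}$ is not divisible by $P_k$, so the image of $R_{*,2} + \sum_{i\ge 3} r_i R_{*,i}$ in the extension field $E_k = \field(y)[x]/(P_k)$ is a nonzero polynomial in $(r_3,\ldots,r_m)$ of degree at most $1$. Schwartz-Zippel over $E_k$ bounds each vanishing probability by $1/|S|$, and a union bound over the $P_k$'s yields $(m-1)/|S|$. Finally $\deg(R') \leq d$ in every success branch because the output inherits its $y$-degree from $R$, for which $\deg(R)\leq d$ under the stated hypotheses.
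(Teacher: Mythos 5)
Your proof is correct and follows the same three-part structure as the paper's proof (cost, deterministic correctness via case split on the exit branch, probability bound via the gcd test for $m\ge 3$). The only genuine variation is in the key observation that the columns of a basis $R$ of $\rmodfa$ have trivial $\gcd$ in $\field(y)[x]$. You invoke \cref{cor:structure_module} to exhibit the Lazard basis $B$ whose first element is $\polr_0\in\yRing$, a unit in $\field(y)[x]$, and then observe that since $B=RU^{-1}$ for unimodular $U$, any common divisor of the columns of $R$ divides $\polr_0$. The paper instead takes the upper triangular Hermite normal form $T=RU$ of $R$: its $(1,1)$ entry is a nonzero element of $\yRing$ that is, by construction, a $\yRing$-linear combination of the columns of $R$, which forces the trivial gcd. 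The two arguments are essentially equivalent—both produce a nonzero $\yRing$-combination of the columns in $\yRing\setminus\{0\}$—but the paper's route relies only on generic properties of the HNF rather than on the specific Lazard-basis description, and is slightly more self-contained. Your explicit Schwartz--Zippel step (a union bound over the at most $m-1$ irreducible factors of $R_{*,1}$ in $\field(y)[x]$) is exactly the content of the reference \cite[Thm.\,6.46]{GaGe99} that the paper cites for the $(m-1)/\card{S}$ bound. One small slip in phrasing: you wrote ``$R=BU$ ... any common divisor of the columns of $R$ must also divide every column of $B$''---the logic is fine but the direction is cleaner if you write $B=RU^{-1}$ explicitly, since it is $B$'s columns that are being expressed in terms of $R$'s.
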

\begin{proof}
  If \(\Flag=\Cert\) at \cref{algo:MatrixOfRelations:candidate}, then from the
  second item of \cref{prop:compute_Mmm} an appropriate matrix of relations is
  returned. Now assume that \(\Flag=\NoCert\) and
  \algoName{algo:MatrixOfRelations} does not return \Fail.  If
  $m=1$ then the relation has been checked at 
  \cref{algo:MatrixOfRelations:BKtest}, proving the result. Otherwise,
  let \(R'
  \in
  \ymatRing{m'}{m'}\) be the output matrix, which is constructed from the
  polynomials \(r,s\) of \(x\)-degree less than \(m\); in particular, $m'= 
  \deg_x(r) + \deg_x(s) \leq 2(m-1)$ \cite[Sec.\,6.3]{GaGe99}. The fact that the
  test at \cref{algo:MatrixOfRelations:gcd} has not failed ensures that $r$ and
  $s$ are coprime as univariate polynomials in $\field(y)[x]$, and therefore
  \(R'\) is nonsingular \cite[Cor.\,6.15]{GaGe99}. Furthermore, since the tests
  at \cref{algo:MatrixOfRelations:NZ} have not failed, \(r\) and \(s\) are
  relations of \(\rmod\). It follows that the columns of~\(R'\), which are by
  construction multiples of \(r\) and \(s\) in \(\xyRing\) represented as
  vectors in \(\yvecRing{m'}\), are relations of \(\rmod[m']\).  Besides, the
  construction of the Sylvester matrix does not increase the $y$-degree, hence
  $\deg(R') \leq \deg(R) \leq 2d$.  We have proved the fact that if the output
  is not \Fail, then it is a matrix of relations of \(\rmod[m']\). 

  For the complexity bound, the cost for finding $R$ is given in
  \cref{prop:compute_Mmm}, while the ones for checking that $R_{1,1}$, $r$ and $s$ are
  relations are given in \cref{lemma:BK,prop:NuskenZiegler}. As for the gcd test at
  \cref{algo:MatrixOfRelations:gcd}, it can be done via the resultant of $r$
  and $s$ with respect to $x$, computed using $\softO{m^2d}$ operations
  \cite{Rei97}.

  It remains to prove the third assertion and the probability bound. Since
  when \(\Flag=\Cert\) a basis is returned with no randomization, assume
  \(\Flag = \NoCert\). The assumptions here and the first item of
  \cref{prop:compute_Mmm} ensure that $R$ is a basis of~$\rmod$ with $\deg(R)
  \leq d$, hence $\deg(R')\leq d$. In that case failure never occurs at
\cref{algo:MatrixOfRelations:BKtest} for $m=1$. It never occurs either at 
  \cref{algo:MatrixOfRelations:NZ} for $m\geq 2$, and $r$ and $s$ are relations of \(\rmod\).
  The columns of \(R\) represent bivariate polynomials \(b_1,\ldots,b_m \in
  \xyRing_{<(m,d+1)}\) and we claim that \(\gcd_x(b_1,\ldots,b_m) = 1\), meaning
  that there is a \(\yRing\)-linear combination of \(b_1,\ldots,b_m\) 
  in \(\yRing\setminus\{0\}\). Since $R$ is nonsingular, the first column of a
  transformation for the (upper triangular) Hermite normal form of~$R$ provides
  such a combination.  It follows that \Fail{} is returned with probability at
  most $(m-1)/\card{S}$ at
  \cref{algo:MatrixOfRelations:gcd}~\cite[Thm.\,6.46]{GaGe99}. 
\end{proof}

\subsubsection*{Note}

The computation of \Cert{} by \algoName{algo:CandidateBasis} is  only
an optimization. \algoName{algo:MatrixOfRelations} works as it is,
even if \Cert{}
is never returned. When the candidate matrix \(R\) at
\cref{algo:MatrixOfRelations:candidate} is not a matrix of relations, this is
often detected at \cref{algo:MatrixOfRelations:NZ}, but not always. Even if
it is not detected, it suffices to find two coprime polynomials \(r(x,y)\) and
\(s(x,y)\) that are relations to ensure that 
\algoName{algo:MatrixOfRelations}
returns a matrix of relations. For example, it may happen that \(R\) is not a
matrix of relations but some columns of it still give low-degree relations of~$\rmod$.


\section{Change of basis}
\label{sec:changeofbasis}

In this section we present an algorithm for performing a change of
basis in~$\quotient =\xRing/\genBy{f}$. This algorithm is used in
a randomized manner in \cref{sec:composition_randomized}, in order to
handle arbitrary inputs with good complexity bounds. Our
approach is based on an extension of the approximant bases used in
\cref{sec:relmat_comp}; we start with necessary definitions.


\subsection{Definitions}

We  use an extension of the forms of polynomial matrices introduced in
\cref{subsubsec:definreconstruct}, called \emph{shifted} forms
\cite{BarBul92,BeLaVi99}. For a given tuple \(t = (t_1,\ldots,t_m) \in \ZZ^m\)
and a column vector \(v \in \yvecRing{m}\), the \(t\)-shifted degree of \(v\)
is \(\max_{1 \le i \le m} (\deg(v_i) + t_i)\). Then, for a matrix \(P \in
\ymatRing{m}{m}\) whose \(j\)th column has \(t\)-shifted degree \(d_j \in
\ZZ\), the \emph{(column) \(t\)-shifted leading matrix} of \(P\) is the matrix
in \(\matRing{m}{m}\) whose entry \((i,j)\) is the coefficient of degree
\(d_j-t_i\) of the entry \((i,j)\) of \(P\). Then \(P\) is said to be
\emph{\(t\)-shifted weak Popov} if this \(t\)-shifted leading matrix is
invertible and upper triangular.

We  also need the corresponding normal form: \(P\) is said to be
\emph{\(t\)-shifted Popov} if it is \(t\)-shifted weak Popov and its \emph{row}
leading matrix is the identity of \(\matRing{m}{m}\) \cite{Kailath80,BeLaVi99}.
For a given \(t\), any submodule of rank \(m\) of \(\yvecRing{m}\) admits a
unique basis in \(t\)-shifted Popov normal form \cite[Thm.\,3.7]{BeLaVi99}.
By definition, \(t\)-shifted Popov matrices are also (nonshifted) 
\emph{row}
reduced; in particular, Hermite normal forms are \(t\)-shifted Popov for an appropriate choice of $t$, hence 
are row reduced. 

Row reduced matrices allow for a {\em division with remainder} with stronger properties than the one for general
nonsingular matrices presented in \cref{sec:bivcomposition:division}; namely
they ensure uniqueness of the remainder. Precisely, if a matrix \(P \in
\ymatRing{m}{m}\) is row reduced, for any vector \(v \in \yvecRing{m}\) there
exists a unique vector \(\tilde{v} \in \yvecRing{m}\) such that \(v -
\tilde{v}\) is a right multiple of \(P\) and the \(i\)th entry of \(\tilde{v}\)
has degree less than the \(i\)th row of \(P\) \cite[Thm.\,6.3-15,
p.\,389]{Kailath80}.

We  also use the fact that, by definition, for any block
decomposition \(P = (\begin{smallmatrix} P_{11} & P_{12} \\ P_{21} & P_{22}
\end{smallmatrix})\) of a matrix \(P\), if \(P\) is in Hermite
(resp.~\(t\)-shifted Popov) normal form, then: 
\begin{itemize}
  \item \(P_{11}\) and \(P_{22}\) are in Hermite normal form (resp.~in shifted
    Popov normal form with respect to the corresponding subtuple of \(t\)); 
  \item each column of \(P_{12}\) (resp.~\(P_{21}\)) is its own remainder
    in the division by \(P_{11}\) (resp.~\(P_{22}\)).
\end{itemize}

Finally, \(t\)-shifted forms induce the notion  of {\em \(t\)-shifted
approximant bases}~\cite{BeLaVi99,ZhoLab12,JeannerodNeigerVillard2020}, which
are approximant bases (see \cref{subsubsec:definreconstruct}) in \(t\)-shifted
Popov normal form. 


\subsection{Inverse modular composition and change of basis via approximant bases}

Let $f$ be in $\field[x]$ of degree $n$. A core ingredient for the
randomization in our composition algorithm is an instance of
\emph{inverse modular composition}, which is used to change the basis
of~$\quotient =\xRing/\genBy{f}$ from \((1,x,\ldots,x^{n-1})\) to
\((1,\gamma,\ldots,\gamma^{n-1}) \bmod f\), for some $\gamma
\in\xRing$ whose minimal polynomial $\mu_{\gamma}$ modulo \(f\) has
degree $n$. This change of basis induces the $\field$-algebra
isomorphism
\begin{equation} \label{def:changeofbasis}
  \phi_\gamma:\quotient\to\yRing/\genBy{\mu_{\gamma}}, 
\end{equation}
which maps any $u \in \quotient$ to $v$ such that $v(\gamma)\equiv
u\bmod f$.  Given $a$ in $\field[x]_{<n}$, this section explains how
to compute the unique polynomial representative $\alpha \in
\yRing_{<n}$ of $\phi_\gamma(a\bmod f)$, i.e., the unique $\alpha \in
\yRing_{<n}$ such that $\alpha(\gamma) \rem f = a$.

Reversing the path followed in our modular composition approach, we
first find a bivariate $\tilde \alpha \in \xyRing$ such that $\tilde
\alpha - \alpha \in \rmodfg$, hence $\tilde \alpha(x,\gamma) \equiv
\alpha(\gamma) \bmod f$. Then the univariate solution~$\alpha$ is
recovered from~$\tilde \alpha$ and a basis of relations \(R\) of
\(\rmodfg\) by reversing the division from \cref{eq:defvecdiv}; this
corresponds to a division by the Hermite normal form of $R$.

Our algorithm for computing $\tilde{\alpha}$ can be seen as a
generalization to \(m\ge 1\) of Shoup's algorithm for computing
\(\alpha\), mentioned in \cref{subsubsec:pr}. The latter algorithm
deals with the case \(m=1\): from the power projections
$(\ell(1),\ell(\gamma),\dots,\ell(\gamma^{2n-1}))$ and
$(\ell(a),\ell(\gamma a),\dots,\ell(\gamma^{n-1}a))$, it obtains both
$\alpha$ and $\mu_{\gamma}$ by solving two Pad\' e approximation
problems. In the matrix case \(m\ge 1\), \algoName{algo:ChangeOfBasis}
computes solutions to equations similar to
\cref{eq:blockPade,eq:blockShoup} given in the introduction. These are
matrix generalizations of the Pad\'e approximation problems; their
solutions provide respectively a basis of relations $R$ of $\rmodfg$
and $\tilde{\alpha}$.

In more details,
\cref{algo:ChangeOfBasis:expansionU,algo:ChangeOfBasis:expansionS}
first compute the power series expansions involved in
\cref{eq:blockPade,eq:blockShoup}, which amounts to a type of
generalized power projections. Then both approximation problems are
solved at once using shifted approximant bases:
\begin{itemize}
\item The choice of the first \(2m\) columns of $F = (S \; -\idMat{m}
  \; s)$ at \cref{algo:ChangeOfBasis:reconstruct}, which are the same
  as in \cref{algo:CandidateBasis:approx} of
  \algoName{algo:CandidateBasis}, and the use of a corresponding
  ``zero shift'' (first $2m$ entries of the tuple $t$ at
  \cref{algo:ChangeOfBasis:reconstruct}), make this equivalent to the
  computation in \cref{subsec:candidate} (compare
  \cref{algo:ChangeOfBasis:expansionS,algo:ChangeOfBasis:reconstruct,algo:ChangeOfBasis:verif}
  of \algoName{algo:ChangeOfBasis} to
  \cref{algo:CandidateBasis:series,algo:CandidateBasis:approx,algo:CandidateBasis:end}
  of \algoName{algo:CandidateBasis}). This yields a basis $R$ of
  $\rmodfg$.
\item \Cref{eq:blockShoup} is solved thanks to an additional series
  expansion in \(F\) (its last column), and the use of a sufficiently
  large shift (the last entry $2d$ of the tuple $t$). This yields a
  bivariate polynomial $\tilde \alpha$ that is the remainder of
  the requested $\alpha$ in the division by $R$.
\end{itemize}
Finally, this sought $\alpha$ can be obtained by reversing this
division, using a Hermite normal form computation which also provides
the minimal polynomial \(\minpoly[\gamma]\)
(\cref{algo:ChangeOfBasis:hnf,algo:ChangeOfBasis:alpha}).

The assumptions in \algoName{algo:ChangeOfBasis} yield a slightly
stronger statement in \cref{prop:algo:ChangeOfBasis} than in
\cref{prop:compute_Mmm} for \algoName{algo:CandidateBasis}. Indeed, we
 suppose that $\gamma$ is such that $\deg(\mu_{\gamma})=n$,
whereas we make no such assumption in
\algoName{algo:CandidateBasis}. From the module properties in
\cref{prop:invariant-factors}, we deduce that $\deg(\mu_{\gamma})=n$
implies \(\ddfg=n\), which  allows us to certify the basis of
relations $R$ when \Fail{} is not returned.

\algoName{algo:ChangeOfBasis} may still return \Fail; 
\cref{sec:composition_randomized} shows that when it is
called with a random
$\gamma$, then with high probability, it does not fail, at least under
some assumptions on $f$.

\begin{algorithm}
  \algoCaptionLabel{ChangeOfBasis}{f,\gamma,a,m,d}
  \begin{algorithmic}[1]
    \Require $f$ of degree $n$ in $\xRing$, with $f(0)\neq 0$, $\gamma \in \xRing_{<n}$, $a\in \xRing_{<n}$,  \(m \leq n\) and \(d\) in \(\NNp\)
    \Ensure either \Fail{} or $(R,\minpoly[],\alpha)$
    where 
    \(R \in \ymatRing{m}{m}_{\le 2d}\) is the Popov basis of \(\rmodfg\),
    \(\minpoly[]\) is the minimal polynomial of \(\gamma\) in \(\xRing/\genBy{f}\) and has degree \(n\),
    and \(\alpha\in\yRing_{<n}\) with \(\alpha(\gamma) \equiv a \bmod f\)

    \State\InlineIf{\(\gcd(\gamma,f)\neq 1\)}{\Return \Fail}
    \label{algo:ChangeOfBasis:invertible}

    \State\CommentLine{Truncated expansion of \(-\trsp{X}(\charmat[\gamma])^{-1}v_a\)
        using \cref{algo:TruncatedPowers}, $v_a \in \field ^n$ is the coefficient vector of~$a$} 
        \label{algo:ChangeOfBasis:expansionU}
    \Statex \((r_{k})_{0\le k < 2d}\gets \Call{algo:TruncatedPowers}{f,\gamma^{-1}\bmod f, \gamma^{-1} a \bmod f, m, 2d}\) 
    \Statex \(s \in\yvecRing{m} \gets \sum_{0 \le k < 2d} s_{k} y^k\) where
            \(s_{k}\in\vecRing{m}\) is the coefficient vector of \(r_{k}\)

    \State \CommentLine{Truncated expansion of \(\trsp{X}(\charmat[\gamma])^{-1}X\)
        using \cref{algo:BlockTruncatedPowers} (analogous to \cref{algo:CandidateBasis:series} of \cref{algo:CandidateBasis})}
        \label{algo:ChangeOfBasis:expansionS}
    \Statex $(\Gamma_{i,k})_{\substack{0 \le i < m\\0 \le k <2d+2}} \gets \Call{algo:BlockTruncatedPowers}{f,\gamma^{-1} \bmod f,m,2(d+1)}$
    \Statex \(S_{i,k} \in \vecRing{m} \gets\) vector of coefficients of \(-\Gamma_{i,k+1} \in \xRing_{<m}\), for \(0 \le i < m\) and \(0 \le k < 2d\)
    \Statex \(S \in \ymatRing{m}{m}_{<2d} \gets \sum_{0 \le k < 2d} S_k y^k\) where \(S_k = (S_{0,k} \;\; \cdots \;\; S_{m-1,k}) \in \matRing{m}{m}\) 

    \State \CommentLine{Fraction reconstruction using \cite{GJV03,JeannerodNeigerVillard2020} (analogous to \cref{algo:CandidateBasis:approx} of \cref{algo:CandidateBasis})}
        \label{algo:ChangeOfBasis:reconstruct}

    \Statex \(F \in \ymatRing{m}{(2m+1)}_{< 2d} \gets (S(y) \;\;\; -\idMat{m} \;\;\; s(y))\)
    \Statex \(t \in \NN^{2m+1} \gets (0,\ldots,0,2d)\)
    \Statex \(\bar P \in \ymatRing{(2m+1)}{(2m+1)}_{\leq 2d} \gets \trsp{\hyperlink{cite.JeannerodNeigerVillard2020}{\textproc{Popov-PM-Basis}}(\trsp{F},2d,t)}\), with $\bar P$ in \(t\)-shifted Popov normal form
    \Statex $P \gets \bar P_{1..2m,1..2m}$
    \Statex $v_{\bar \alpha} \in \yvecRing{m}_{<\deg(R)} \gets \bar P_{1..m,2m+1}$ 
    \Comment{represents \(\bar \alpha(x,y)\), expected to satisfy \(\bar \alpha(x,\gamma) \equiv a \bmod f\)}

    \State \label{algo:ChangeOfBasis:verif}
          \CommentLine{Ensure \(R\) is a basis of \(\rmodfg\), from \cref{lem:fraction-reconstruction:item:certif} of \cref{lem:fraction-reconstruction} (analogous to \cref{algo:CandidateBasis:end} of \cref{algo:CandidateBasis})}
          \Statex  $R \gets P_{1..m,1..m}$ 
    \Statex\IF{} the sum of diagonal degrees of \(R\) is less than \(n\) {}
    \Statex\OR{} among the \(m\) rightmost columns of \({P}\), one has degree \(< \deg(R)\)
    \THEN{} \Return \Fail
    \State \CommentLine{Compute \(\minpoly[\gamma]\), and ensure it has degree \(n\)}
        \label{algo:ChangeOfBasis:hnf}
    \Statex \(T \in \ymatRing{m}{m} \gets\) Hermite normal form of \(R\)
        \Comment{using \cite[Algo.\,1 and 3]{LVZ17}}
    \Statex \(\minpoly[] \in \yRing \gets T_{1,1}\);
    \InlineIf{\(\deg(\minpoly[])<n\)}{\Return \Fail}

    \State \CommentLine{Deduce \(\alpha\) and return}
        \label{algo:ChangeOfBasis:alpha}
    \Statex \(\alpha \in \yRing_{<n} \gets \bar\alpha_1 - (T_{1,2}\bar\alpha_2 + \cdots + T_{1,m}\bar\alpha_{m}) \rem \minpoly[]\),
    where \(v_{\bar \alpha} = (\bar\alpha_1 \;\; \cdots \;\; \bar\alpha_{m})\)
    \Statex \Return \((R,\minpoly[],\alpha)\)
  \end{algorithmic}
\end{algorithm}

\begin{proposition}
  \label{prop:algo:ChangeOfBasis}
  Given $f\in\field[x]$ of degree~$n$ with $f(0)\neq 0$,
  $\gamma$ and $a$ in $\field[x]_{<n}$, $m \leq n$ and $d$ in~$\mathbb{N}_
  {>0}$, 
  \algoName
  {algo:ChangeOfBasis} uses
  $\softO{m^{\omega} d + \bicost{d}}$ operations in~$\field$, \bicostrecall, to return
  either \Fail{} or $(R,\minpoly[],\alpha)$ where \(R \in\ymatRing{m}{m}_{\le
  2d}\)~is the Popov basis of \(\rmodfg\), $\minpoly[]$ is the minimal
  polynomial \(\minpoly[\gamma]\) of $\gamma \bmod f$ and has degree~$n$, and
  \(\alpha\) is the unique polynomial in \(\yRing_{<n}\) such that
  \(\alpha(\gamma) \equiv a \bmod f\). 

  If \(\gcd(\gamma,f)=1\), \(\ddmmfg =
  \ddfg\), $\deg(\mu_{\gamma})=n$ and the fraction \(H = \trsp{X}(\charmat[\gamma])^{-1}X\)
  is describable in degree~$d$, then the output is not \Fail{}; in that case we
  further have $\deg(R) \leq d$.
\end{proposition}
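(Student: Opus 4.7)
My plan for the complexity bound is to sum the costs of the individual steps. Step~1 is a univariate gcd in $\softO{n}$. Steps~2 and~3 invoke \algoName{algo:TruncatedPowers} and \algoName{algo:BlockTruncatedPowers}, whose complexities are $\softO{\bicost{d}}$ and $\softO{\bicost{d} + m^2 d}$ by \cref{prop:TruncatedPowers,prop:block_truncated_powers}. Step~4 runs \hyperlink{cite.JeannerodNeigerVillard2020}{\textproc{Popov-PM-Basis}} on a matrix of dimension $(2m+1)\times m$ at order $2d$, using $\softO{m^\omega d}$ operations \cite{GJV03,JeannerodNeigerVillard2020}. Step~6 computes the Hermite normal form of an $m\times m$ matrix of degree $\le 2d$ in $\softO{m^\omega d}$ via \cite[Algo.\,1 and 3]{LVZ17}, and Step~7 uses $\softO{m^2 d}$ for the linear combination and remainder. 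These sum to $\softO{m^\omega d + \bicost{d}}$.

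For correctness of $R$ and $\minpoly[]$ when the algorithm does not fail, I would follow the pattern of the proof of \cref{prop:compute_Mmm}. Since $\gcd(\gamma,f)=1$ is enforced at Step~1, the matrix $\mulmat[\gamma]$ is invertible, and Steps~2--3 correctly compute the order-$2d$ truncations of $s(y) = -\trsp{X}(\charmat[\gamma])^{-1}v_a$ and $S(y)=H(y)=\trsp{X}(\charmat[\gamma])^{-1}X$ (as in \cref{rmk:shift}). By the block decomposition property of $t$-shifted Popov forms with zero sub-shift on the first $2m$ entries, the principal $2m\times 2m$ submatrix $P$ of $\bar P$ is a Popov approximant basis of $(S \;\; -\idMat{m})$ at order $2d$, so that $R = P_{1..m,1..m}$ plays the role of the denominator $D$ in \cref{lem:fraction-reconstruction}. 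The Step~5 verifications correspond exactly to the hypotheses of \cref{lem:fraction-reconstruction:item:certif}; together with \cref{prop:modprime_is_denom} they yield that $R$ is a basis of $\rmodmmfg$. Since $R$ is weak Popov, the sum of its diagonal degrees equals $\degdet{R}$, which the Step~5 check forces to be at least $n$. Combined with $\degdet R \le \ddmmfg \le \ddfg \le n$ (from \cref{lem:fraction-reconstruction:item:general} and \cref{prop:invariant-factors}), this gives $\ddmmfg = \ddfg = n$, hence $\rmodmmfg = \rmodfg$ (inclusion plus equal determinantal degrees), and $R$ is the Popov basis of $\rmodfg$. The Hermite form $T$ at Step~6 is therefore also a basis of $\rmodfg$, and $T_{1,1}$, being the monic generator of $\rmodfg \cap \yRing = \genBy{\minpoly[\gamma]}$, equals $\minpoly[\gamma]$; the degree check at Step~6 then guarantees $\deg(\minpoly[\gamma])=n$.

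For the correctness of $\alpha$, I would exploit a simplification brought by $\deg(\minpoly[\gamma])=n$ combined with $\ddfg = n$: the invariant factors of $\rmodfg$ from \cref{prop:invariant-factors} reduce to $(\minpoly[\gamma],1,\ldots,1)$, which forces the Hermite form to have $T_{1,1}=\minpoly[\gamma]$, $T_{i,i}=1$ for $i\ge 2$, and $T_{i,j}=0$ for $i\ge 2, j>i$. An elementary calculation then shows that if $v_{\bar\alpha}$ represents a bivariate polynomial $\bar\alpha(x,y)$ with $\bar\alpha(x,\gamma) \equiv a \bmod f$, the decomposition $v_{\bar\alpha}-\alpha e_1 = Tq$ yields $q_i = \bar\alpha_i$ for $i\ge 2$, so that $\alpha \equiv \bar\alpha_1 - \sum_{j\ge 2} T_{1,j}\bar\alpha_j \pmod{\minpoly[\gamma]}$, matching Step~7. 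The remaining point is to establish that $v_{\bar\alpha}$ does encode such a $\bar\alpha$. I would exploit the exact polynomial-matrix identity $\alpha(y)\idMat{n}-\alpha(\mulmat[\gamma]) = (\charmat[\gamma])Q(y)$ for some $Q \in \ymatRing{n}{n}$ of degree less than $n$, together with $\alpha(\mulmat[\gamma])\mathbf{1}=v_a$, to exhibit an exact (not merely order-$2d$) kernel vector $u^{\ast} = (\alpha e_1;\,\trsp{X}Q\mathbf{1};\,1) \in \yvecRing{2m+1}$ of $F$. A degree analysis, using $\deg(\alpha),\deg(\trsp{X}Q\mathbf{1}) < n \le d$ and the shift $t=(0,\ldots,0,2d)$, forces the last entry of the $(2m+1)$-th column of $\bar P$ to be the monic constant $1$. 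Writing $u^{\ast}$ as a $\yRing$-combination of the columns of $\bar P$ and reading the last coordinate, the coefficient of the last column must be $1$, and the difference $u^{\ast} - \bar P_{*,2m+1}$ is a combination of the first $2m$ columns; tracking the Popov block structure on $(R,P_{12})$ yields $\alpha e_1 - v_{\bar\alpha} \in \mathrm{col}(R) = \rmodfg$, which is the condition required. For the non-failure claim under the stated assumptions, \cref{lem:fraction-reconstruction:item:reconstruct} ensures that both Step~5 checks pass with $\deg(R)\le d$, and the check at Step~6 also passes since $\deg(\minpoly[\gamma])=n$ is a hypothesis. The main obstacle will be the degree/structure argument controlling the interaction between the $R$-columns and the $P_{12}$-columns of $\bar P$, to conclude that $\alpha e_1 - v_{\bar\alpha}$ is a pure $R$-combination rather than also involving a $P_{12}$-contribution.
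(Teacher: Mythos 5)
Your complexity analysis and your handling of Steps~1--6 (including the argument that $P$ is the nonshifted Popov approximant basis of $(S\;\;-\idMat{m})$, that the Step~5 checks yield $\ddmmfg=\ddfg=n$ and $\rmodmmfg=\rmodfg$, and the use of the invariant-factor structure of the Hermite form $T$ to read off $\minpoly[\gamma]$ and recover $\alpha$ from $v_{\bar\alpha}$) all match the paper's proof. The gap you flag at the end is real, and it is precisely where the paper's argument takes a different turn.

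The trouble is that your kernel vector $u^{\ast} = (\alpha e_1;\;\trsp{X}Q\mathbf{1};\;1)$ has entries of degree up to $n-1$, while the approximant basis $\bar P$ has degree at most $2d$; since the proposition puts no relation between $n$ and $d$ (and in the regime of interest $d\approx n/m < n$), the bound \(\deg(\alpha),\deg(\trsp{X}Q\mathbf{1}) < n \le d\) that you invoke is false. Consequently $u^{\ast}$ is \emph{not} its own remainder modulo $\bar P$, the predictable-degree/shifted-Popov machinery does not force the $P_{12}$-contribution to vanish, and there is no direct way to conclude $\alpha e_1 - v_{\bar\alpha}\in\mathrm{col}(R)$ from $u^{\ast} - \bar P_{*,2m+1}\in\mathrm{col}(P)$.

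The paper sidesteps this by never reducing $u^{\ast}$. Instead it defines $v_{\tilde\alpha}$ directly as the unique remainder of $v_\alpha = \alpha e_1$ in the (row-reduced) division by $R$, so that each entry of $v_{\tilde\alpha}$ has degree strictly below the corresponding diagonal degree of $R$. It then uses the \emph{exact} relation $(\charmat[\gamma])v = X v_{\tilde\alpha} - v_a$ (possible because $\tilde\alpha(x,y)-a(x)\in\relmod{n}{\gamma}{f}$ and $\charmat[\gamma]$ is a basis of that module) and the predictable degree property of the column-reduced $\charmat[\gamma]$ to get $\deg(\trsp{X}v) \le \deg(v_{\tilde\alpha})-1 < \deg(R)-1$. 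The resulting kernel vector $q=(v_{\tilde\alpha};\,\trsp{X}v;\,1)$ now has degree componentwise strictly below the corresponding diagonal degrees of $\bar P$ (the last $m$ diagonal degrees of $P$ are $\ge\deg(R)$ by the Step~5 check), so $q$ is its own remainder; uniqueness of remainders then forces $q_{1..2m}=u$ and in particular $v_{\tilde\alpha}=v_{\bar\alpha}$. In short: you cannot reduce the large-degree $u^{\ast}$ against $\bar P$; you must first reduce $\alpha e_1$ against $R$ to manufacture the small-degree kernel vector, and only then appeal to the Popov structure.
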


\begin{proof}
  We start by showing that if the algorithm does not fail, then the truncated
  module \(\rmodmmfg\) and the module of relations \(\rmodfg\) are equal, and
  \(R\) is a basis of this module.

\paragraph*{\crefrange{algo:ChangeOfBasis:invertible}{algo:ChangeOfBasis:reconstruct}: the approximant basis \(\bar{P}\)}

If the test at \cref{algo:ChangeOfBasis:invertible} does not fail then the
specifications for
\cref{algo:ChangeOfBasis:expansionU,algo:ChangeOfBasis:expansionS} are met. 
At \cref{algo:ChangeOfBasis:expansionU},
\algoName{algo:TruncatedPowers} returns \(r_k = [a \, \gamma^
{-k-1}\rem
f]_0^{m-1}\) for \(0\le k<2d\) using~$\softO{\bicost{d}}$ operations,
according to \cref{prop:TruncatedPowers}. Thus  the coefficient vector $s_k \in
\vecRing{m}$ of \(r_k\) is \(\trsp{X}\mulmat[\gamma]^{-k-1}v_a\), where $v_a
\in \vecRing{n}$ is the coefficient vector of \(a\), so that the polynomial
vector \(s(y)\) computed at \cref{algo:ChangeOfBasis:expansionU} is the power
series expansion of \(-\trsp{X}(\charmat[\gamma])^{-1}v_a\) truncated at order
\(2d\).
From \cref{prop:block_truncated_powers}, the computation of \(S(y)\) at
\cref{algo:ChangeOfBasis:expansionS} uses $\softO{m^2 d + \bicost{d}}$
operations;   $S$ is the power series expansion of
\(\trsp{X}(\charmat[\gamma])^{-1}X\) truncated at order \(2d\).

\cref{algo:ChangeOfBasis:reconstruct} computes the \(t\)-shifted Popov
approximant basis \(\bar P\) for $F = (S \; -\idMat{m} \; s)$ at order \(2d\),
which uses \(\softO{m^{\omega} d}\) operations
\citetext{\citealp[Thm.\,2.4]{GJV03};
\citealp[Sec.\,3]{JeannerodNeigerVillard2020}}. Writing
\[
  \bar P = \begin{pmatrix} {P} & u \\ z & \lambda \end{pmatrix}
  \text{ for some } {P} \in \ymatRing{(2m)}{(2m)}_{\le 2d},
   \lambda \in \yRing_{\le 2d},
   u \in \yvecRing{2m}_{\le 2d},
  \text{ and } z \in \ymatRing{1}{(2m)}_{\le 2d},
\]
the fact that \(\bar P\) is \(t\)-shifted Popov and the choice \(t =
(0,\ldots,0,2d)\) ensure that \({P}\) is (nonshifted) Popov, that
\(\lambda\neq 0\), and that \(\deg(z) + 2d < \deg({P}) \le 2d\). The latter
degree bound yields \(z=0\), hence~\({P}\) is the Popov approximant basis of
\((S \;\; -\idMat{m})\) at order \(2d\).
The fact that \(\bar P\) is \(t\)-shifted Popov also ensures that the
(unique) remainder in the division of \(u\) by \(P\) is \(u\) itself, and that
the \(i\)th entry of \(u\) has degree less than the \(i\)th diagonal degree of
\(P\).

\paragraph*{After \cref{algo:ChangeOfBasis:verif}, \(R\) is a basis of \(\rmodmmfg=\rmodfg\)}

Let $R$ be the \(m\times m\) leading principal submatrix of \(P\) (and of
\(\bar{P}\)), and let \(v_{\bar \alpha} \in \yvecRing{m}\) be the length-\(m\)
top subvector of \(u\). Similarly to the above, \(R\) is (nonshifted) Popov,
and the \(i\)th entry of \(v_{\bar{\alpha}}\) has degree less than the \(i\)th
diagonal degree of \(R\); in particular, \(\deg(v_{\bar \alpha})<\deg(R)\).
Considering $H(y)=\trsp{X}(\charmat[\gamma])^{-1}X$, recall from
\cref{prop:modprime_is_denom} that \(\denom{H} = \rmodmmfg\), and recall that
\(\rmodmmfg \supseteq \rmodfg\) with equality if and only if \(\ddmmfg =
\ddfg\). In particular, bases of \(\denom{H}\) have determinantal degree
\(\ddmmfg \le \ddfg \le n\). Applying
\cref{lem:fraction-reconstruction:item:general} of
\cref{lem:fraction-reconstruction} to \(H\) and the approximant basis \({P}\)
shows that the sum of diagonal degrees of \(R\) is \(\degdet{R}\), and is at
most \(\ddmmfg\).

As a result, if \cref{algo:ChangeOfBasis:verif} does not return \Fail, then \(n
\le \degdet{R} \le \ddmmfg\), hence \(\ddmmfg=\ddfg=n\) and \(\denom{H} =
\rmodmmfg=\rmodfg\). Furthermore,
\cref{lem:fraction-reconstruction:item:certif} of
\cref{lem:fraction-reconstruction} shows that \(R\) is a basis of~\(\rmodfg\).

\paragraph*{After \cref{algo:ChangeOfBasis:hnf}, \(\minpoly[]\) is \(\minpoly[\gamma]\) and has degree \(n\)}

Using \(\softO{m^{\omega}d}\) operations~\cite{LVZ17},
\cref{algo:ChangeOfBasis:hnf} finds the Hermite normal form \(T\) of \(R\).
Since \({T}\) is a basis of \(\rmodfg\) in upper triangular form,
\cref{prop:invariant-factors} states that its first diagonal entry is the
minimal polynomial of \(\gamma\) in \(\xRing/\genBy{f}\). Hence \(\minpoly[]\)
computed at \cref{algo:ChangeOfBasis:hnf} is this minimal polynomial. It has
degree at most \(n\), and the algorithm returns \Fail{} at this step if and
only if \(\deg(\mu)<n\).

\paragraph*{After \cref{algo:ChangeOfBasis:hnf}, \(v_{\bar\alpha}\) represents
\(\bar \alpha(x,y)\) such that \(\bar \alpha(x,\gamma) \equiv a \bmod f\)}

Let \(\bar\alpha \in \xyRing_{<(m,\deg(R))}\) be the polynomial whose
coefficient vector is \(v_{\bar\alpha}\), that is, \(\bar\alpha =
\bar\alpha_1(y) + x\bar\alpha_2(y) + \cdots + x^{m-1}\bar\alpha_m(y)\) using
notation from \cref{algo:ChangeOfBasis:alpha}.

The fact that \(\minpoly[]=\minpoly[\gamma]\) has degree \(n\) also ensures
that there exists a unique \(\alpha\in\yRing_{<n}\) such that \(\alpha(\gamma)
\equiv a \bmod f\). Then let \(v_\alpha = \trsp{(\alpha \;\; 0 \; \cdots \; 0)}
\in \yvecRing{m}\), and let \(v_{\tilde \alpha} \in \yvecRing{m}\) be the
unique remainder in the division of \(v_\alpha\) by \(R\). The entries of
\(v_{\tilde \alpha}\) have degree strictly less than that of the corresponding
row of $R$: the degree of the \(i\)th entry of \(v_{\tilde\alpha}\) is less
than the \(i\)th diagonal degree of \(R\). We also define \(\tilde \alpha \in
\xyRing_{<(m,\deg(R))}\) as the polynomial whose coefficient vector is
\(v_{\tilde \alpha}\); in particular \(\tilde \alpha(x,\gamma) = \alpha(\gamma)
\equiv a \bmod f\). 

We now show that $\bar \alpha = \tilde \alpha$, which yields \(\bar
\alpha(x,\gamma) = a \bmod f\). By construction, \(\tilde{\alpha}(x,y) - a(x)\)
is in \(\relmod{n}{\gamma}{f}\), and since \(\charmat[\gamma]\) is a basis of
\(\relmod{n}{\gamma}{f}\) (see \cref{prop:invariant-factors}) there is a vector
\(v \in \yvecRing{n}\) such that \((\charmat[\gamma]) v = Xv_{\tilde {\alpha}}
- v_a\). Applying the predictable degree property \cite[Thm.\,6.3-13,
p.\,387]{Kailath80} to the column reduced matrix \(\charmat[\gamma]\), all of
whose columns have degree \(1\), we obtain that \(\deg(v)+1 =
\deg(Xv_{\tilde{\alpha}} - v_a) = \deg(v_{\tilde \alpha})\). Furthermore from
\((\charmat[\gamma]) v = Xv_{\tilde {\alpha}} - v_a\) we get
\[
  \trsp{X}(\charmat[\gamma])^{-1}X v_{\tilde \alpha} - \trsp{X}v - \trsp{X}(\charmat[\gamma])^{-1}v_a = 0,
\]
and considering truncated power series it follows that \(F q = (S \;\;
-\idMat{m} \;\; s)  q \equiv 0 \bmod y^{2d}\), where
\[
  q = \begin{pmatrix} \tilde {u} \\ 1 \end{pmatrix} \in \yvecRing{2m+1}
  \quad\text{and}\quad
  \tilde u  = \begin{pmatrix} v_{\tilde \alpha} \\ \trsp{X}v \end{pmatrix} \in \yvecRing{2m}.
\]
Therefore \(q\) is a right multiple of the approximant basis \(\bar P =
(\begin{smallmatrix} {P} & u \\ 0 & \lambda \end{smallmatrix})\), which shows
that \(\lambda\) is an element of \(\field\) {(we had shown \(\lambda \neq 0\) in $\field[y]$ at \cref{algo:ChangeOfBasis:reconstruct}), hence $\lambda=1$ as it is a monic polynomial in the Popov form.  It follows that \(\tilde u -u\)
is a right multiple of \({P}\),} and we check finally that the remainder of
\(\tilde u\) in the division by \({P}\)---which is \(u\) by construction---is
\(\tilde u\) itself. Indeed the degree of the \(i\)th entry of
\(v_{\tilde\alpha}\) is less than the \(i\)th diagonal degree of \(R\), which
is the \(i\)th diagonal degree of \({P}\); and as seen above all entries of
\(\trsp{X}v\) have degree at most \(\deg(v_{\tilde\alpha})-1 < \deg(R)-1\),
with \(\deg(R)\) being itself at most the \(i\)th diagonal degree of \({P}\)
for \(m+1\le i\le 2m\). In particular \(v_{\bar \alpha} = v_{\tilde \alpha}\),
hence \(\bar \alpha =\tilde \alpha\) and \(\bar \alpha(x,\gamma) = a \bmod f\).

\paragraph*{\cref{algo:ChangeOfBasis:alpha} computes \(\alpha\in\xRing_{<n}\) such that \(\alpha(\gamma) = a \bmod f\).}

Since \(\deg(\minpoly[]) = n\), the Hermite normal form of \(R\) has the shape
\(T = (\begin{smallmatrix} 
  \minpoly[] & T_{1,*} \\
  0 & \idMat{m-1}
\end{smallmatrix})\), with \(T_{1,*} = (T_{1,2} \;\; \cdots \;\; T_{1,m})\) 
and $\deg(T_{1,j}) < \deg(\mu) = n$ for $2 \leq j\leq m$.
Then, the polynomial \(\alpha = \bar\alpha_1 - (T_{1,2}\bar\alpha_2 + \cdots +
T_{1,m}\bar\alpha_{m}) \rem \minpoly[]\) constructed at
\cref{algo:ChangeOfBasis:alpha} has degree less than \(n\) and, by construction
as well, the vector \(v_\alpha = \trsp{(\alpha \;\; 0 \;\; \cdots \;\; 0)} \in
\yvecRing{m}\) is such that \(v_{\bar \alpha}-v_\alpha\) is a right multiple of
\(T\). (In fact, \(v_\alpha\) is the unique remainder in the division of
\(v_{\bar\alpha} \) by \(T\).) In particular, \(v_{\bar \alpha}-v_\alpha\) is a
right multiple of \(R\), meaning that \(v_\alpha\) is equal to \(v_{\bar
\alpha}\) modulo relations of~\(\rmodfg\), which implies \(\alpha(\gamma) =
\bar\alpha(x,\gamma) \equiv a \bmod f\). The computation of \(\alpha\) costs
\(\softO{nm}\) operations in \(\field\).

This concludes the proof of the properties of \((R,\minpoly[],\alpha)\) in the
case where the algorithm does not return \Fail. Furthermore, adding the above
costs yields the cost bound claimed in the lemma, which therefore holds in
general since the cost can only be smaller when the algorithm returns \Fail.

\paragraph*{Proof of the last claim}

The assumption \(\gcd(\gamma,f)=1\) ensures that
\cref{algo:ChangeOfBasis:invertible} does not return \Fail, in which case we
have seen that \({P}\) is a weak Popov approximant basis of \((S \;\;\;
-\idMat{m})\) at order \(2d\). 

From $\deg(\mu_{\gamma})=n$ and \cref{prop:invariant-factors} we know that
\(\ddfg = n\), hence with the assumption \(\ddmmfg=\ddfg \) we have
\(\ddmmfg=\ddfg=n \).  Using \(\denom{H} = \rmodmmfg=\rmodfg\), and
\cref{lem:fraction-reconstruction:item:reconstruct} of
\cref{lem:fraction-reconstruction} thanks to the assumption on
$H(y)=\trsp{X}(\charmat[\gamma])^{-1}X$, we deduce that the \(m\) rightmost
columns of \({P}\) have degree at least \(\deg(R)\) and that \(R\) is a basis
of \(\rmodfg\) with \(\deg(R) \le d\). In particular \(\degdet{R}=n\), and it
follows that \cref{algo:ChangeOfBasis:verif} does not return \Fail.

Then, the assumption on the degree of the minimal polynomial also ensures,
using \cref{prop:invariant-factors} as above, that the first diagonal entry of
the Hermite normal form \(T\) of \(R\) is \(\minpoly[\gamma]\), and is the
polynomial \(\minpoly[]\) computed at \cref{algo:ChangeOfBasis:hnf}. Therefore
\cref{algo:ChangeOfBasis:hnf} does not return \Fail{} either: we have proved
that, under the assumptions \(\gcd(\gamma,f)=1\), \(\ddmmfg = \ddfg\),
$\deg(\mu_{\gamma})=n$, and \(H = \trsp{X}(\charmat[\gamma])^{-1}X\) is
describable in degree~$d$, then the output is not \Fail{} and \(\deg(R)\le d\).
\end{proof}

\subsubsection*{Notes.}

Shoup's algorithm for computing \(\alpha\) in the case \(m=1\) uses only \(n\)
terms of the sequence $(\ell(\gamma^k a))_{k\ge0}$, or more generally
\(d\)
terms, where \(d\) is a known bound on \(\deg(\minpoly)\). Here as well, if one
knows that the sought basis of relations satisfies \(\deg(R) \le d\), for
example under the conditions of \cref{prop:algo:ChangeOfBasis} ensuring
success, then the algorithm may be modified so as to require only \(d\) terms
of the expansion of \(-\trsp{X}(\charmat[\gamma])^{-1}v_a\) instead of \(2d\).
The vector \(v_{\tilde{\alpha}}\) would appear in the approximant basis at
order \(d\), and from there one would consider a residual approximant problem
focusing on obtaining the missing part of \(R\). This is not detailed here, as
this would complicate the presentation without bringing an improvement to the
asymptotic complexity.

Modular composition and inverse composition are very similar. They both involve
the computation of a matrix of relations and use symmetric steps with similar
complexities. Indeed, the division with remainder of
\cref{sec:bivcomposition:division} is used in both algorithms to
change
between univariate and bivariate representations efficiently. Also, the application of
\algoName{algo:BivariateModularComposition}{} at the last step
of composition in 
\algoName{algo:BivariateModularCompositionWithRelationMatrix} is
reflected
by
\Call{algo:TruncatedPowers}{} as starting step of inverse composition in
\algoName{algo:ChangeOfBasis}. Both these steps have cost
$\softO{\bicost{d}}$
from \cref{prop:NuskenZiegler,prop:TruncatedPowers}, respectively (see also \cref{subsubsec:transpNZ}).


\section{The block Hankel matrix \texorpdfstring{$\hkfa$}{Hk(a,f)} and its generic properties}
\label{sec:genericity}

Matrices of relations are obtained either by \algoName{algo:MatrixOfRelations}
directly, or by \algoName{algo:ChangeOfBasis} after a change of basis. In both
cases, for the correctness of the computation to be granted via
\cref{prop:certificate,prop:algo:ChangeOfBasis}, we need $\dd[m,m]^{(a,f)}$ and
$\ddfa$ to be equal (and, equivalently, $\rmodmmfa = \rmodfa$) and the fraction
$H(y)= \trsp{X}(\charmat)^{-1}X$ to be describable in degree~$d$, or the same
statement with $\gamma$ in place of~$a$. It is thus important to understand
when these properties hold. 

Recall from \cref{subsec:linalginterpret} the matrices
$\Ra$ and $\La$, that are defined for $m\in \{1,\ldots, n\}$ by
{\[
  \La=\begin{pmatrix} \trsp{X}\\ \vdots\\ \trsp{X}M_a^{d-1}\\ \end{pmatrix}
  \in\matRing{(md)}{n}
  \quad\text{and}\quad
   \Ra=\begin{pmatrix}X&\dotsb&M_a^{d-1}X\end{pmatrix}
  \in\matRing{n}{(md)}
\]}

and that correspond to Algorithms \nameref{algo:BivariateModularComposition}
and \nameref{algo:TruncatedPowers} respectively, and also to the maps $\rhoa$
and $\lba$. Their product forms the \emph{block Hankel matrix}
\begin{equation}
  \label{eq:def_Hk}
  \hkfa  = \La\Ra = 
  \left(
    \begin{array}{cccc}
      H_0  & H_1 & \dots &  H_{d-1}\\
      H_1 & ~~~\iddots & ~~\iddots & H_{d} \\
      \vdots & \iddots & \iddots & \vdots \\
      H_{d-1} & H_{d}  & \dots & H_{2d -2}
    \end{array}
  \right) \in \matRing{(md)}{(md)},
\end{equation}
with $H_k=\trsp{X}M_a^kX$ for $k$ in $\NN$. This matrix, and in particular its
rank, is strongly related to the two properties mentioned above
\citetext{\citealp{Vil97:TR}; \citealp[p.\,97]{KaVi05}}.

The outcomes of this section are the following. For any positive
parameters $m\leq n$ and $d$, as soon as~$\rank\hkfa=\dd ^{(a,f)}$, then
$\dd[m,m]^{(a,f)}= \ddfa$ and $H$ is describable in
degree~$d$ (\cref{subsec:blockHankel}).  This happens in particular
when $f(0)\neq0$, $d\ge \lceil n/m\rceil$ and either $\deg(a)=m$
(\cref{subsec:rankn}) or for a generic choice of $a$
(\cref{subsubsec:genina}). Also, for generic choices of the roots of
$f$ and of the values of $a$ at these roots, $\rank\hkfa=\dd ^{(a,f)}$
as soon as $d\geq \lceil \dd ^{(a,f)}/m\rceil$
(\cref{ssec:genranksep}).
As in previous sections, notation such as
$\ddfa, \hkfa, \lba$, etc. is often shortened into $\dd, \hkfasimp,
\lbasimp$, etc. 

These results will be used in \cref{sec:composition_randomized} for
the analysis of the randomized composition algorithm when $f$ is
separable (\cref{sec:composition_randomized:proof_separable}), or when
$f$ is purely inseparable, which includes the case of power series
composition (\cref
{sec:composition_randomized:proof_inseparable,sec:composition_randomized:proof_inseparable_small}).

\subsection{Relation between block Hankel matrix rank and fraction
description degree} 
\label{subsec:blockHankel}

The key condition to control the degrees of fraction descriptions of $H(y)$ and
obtain matrices of relations is the equality
\[
  \rank{\hkfasimp} = \dd.
\]
The special case when $\rank{\hkfasimp} = n$ is common, and appears
naturally later on. The proof of the following result relies in an
essential manner on \cref{lem:hankel_rank}, which we give next
(the references we cite only give a sketch of proof).

\begin{proposition}
  \label{prop:hankel_rank_nu}
  Given $f\in\field[x]$ of degree~$n$, $a\in\field[x]_{<n}$, and positive
  integers $m \leq n$ and $d$, the rank of $\hkfa$ is at most $\ddfa$.
  In case of equality, we have $\dd[m,m]^{(a,f)}=\ddfa$  and $H(y)=
  \trsp{X}(\charmat)^{-1}X$ is describable in degree~$d$.

  In particular, if $\hkfa$ has rank $n$, then $\dd[m,m]^{(a,f)}=\ddfa=n$ and
  $H(y)$ is describable in degree~$d$.
\end{proposition}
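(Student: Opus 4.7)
The plan is to analyze $\ker(\hkfa)$ through the truncated module $\rmodmm$, obtaining both the rank bound and the describability conclusion via a single chain of inequalities.

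I would begin by observing that, identifying a polynomial vector $v(y) = \sum_j v_j y^j \in \yvecRing{m}$ of degree less than $d$ with its coefficient sequence $(v_0,\ldots,v_{d-1}) \in \vecRing{md}$, the product $\hkfa \, \trsp{(v_0 \; \cdots \; v_{d-1})}$ collects the truncated convolutions $(\sum_j H_{k+j} v_j)_{0 \le k < d}$. Hence every $\polr \in \rmodmm$ of $y$-degree less than $d$ gives, via its coefficient vector, an element of $\ker(\hkfa)$, producing an injection $\rmodmm \cap \yvecRing{m}_{<d} \hookrightarrow \ker(\hkfa)$ of $\field$-vector spaces.

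Next, taking a weak Popov basis of $\rmodmm$ with column degrees $\delta_1, \ldots, \delta_m$ summing to $\ddmm$, the predictable degree property (\cref{lem:wpopov_minimality}) yields $\dim_{\field}(\rmodmm \cap \yvecRing{m}_{<d}) = \sum_i \max(0, d-\delta_i)$, whence the chain
\[
  \rank{\hkfa} \le md - \sum_i \max(0, d-\delta_i) = \sum_i \min(d, \delta_i) \le \ddmm \le \ddfa
\]
follows. The last inequality uses $\rmodfa \subseteq \rmodmm$, so that any basis of $\rmodfa$ is a right multiple of any basis of $\rmodmm$; this settles the first claim. If $\rank{\hkfa} = \ddfa$, all inequalities must be equalities, which forces $\ddmm = \ddfa$ (hence $\rmodmm = \rmodfa$) and $\delta_i \le d$ for every $i$. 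The weak Popov basis of $\rmodmm$ then has degree at most $d$, and \cref{prop:modprime_is_denom} provides an irreducible right fraction $H = ND^{-1}$ with $\deg D \le d$. To conclude describability in degree $d$, I would rerun the argument on $\trsp{\hkfa}$: it has the same rank as $\hkfa$ and is the block Hankel matrix associated with $\trsp{\mulmat}$ together with the projection $X$; the left-module analogue of $\rmodmm$ has the same determinantal degree $\ddmm$, since the invariant factors of $H$ are side-independent, and yields a left irreducible denominator of $H$ of degree at most $d$. The third assertion is then immediate: $\rank{\hkfa} \le \ddfa \le n$ combined with $\rank{\hkfa} = n$ forces $\ddfa = n$, reducing to the equality case.

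The main obstacle I anticipate is precisely the symmetric treatment of the left fraction: setting up a left-module analogue of $\rmodmm$, checking that its bases have determinantal degree exactly $\ddmm$, and transporting the weak Popov-degree bound across transposition all require some bookkeeping. This can be shortened by invoking the Smith normal form of $D$ directly, since the nontrivial invariant factors of any irreducible denominator of $H$ coincide on both sides and yield the needed bound on the degree of a left denominator.
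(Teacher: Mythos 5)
Your proof is correct, but it takes a genuinely different route from the paper's. The paper first reduces to \cref{lem:hankel_rank}, which asserts $\rank{\hkfa}\le\ddmm$ with equality if and only if $H$ is describable in degree~$d$; that lemma is then proved by observing that the sequence $(\rank{\hkfa})_{d\ge 0}$ is nondecreasing and citing a formula of Kaltofen and Villard for its stabilized value $\ddmm$ once $d\ge n$, while the degree-$d$ denominator in the equality case is built by expressing the last block column of $\hkfa[m,d+1]$ as a combination of the others. Your argument replaces both steps by a single self-contained dimension count: you inject the elements of $\rmodmm$ of $y$-degree less than $d$ into $\ker(\hkfa)$, count their dimension as $\sum_i\max(0,d-\delta_i)$ via the predictable degree property of a weak Popov basis, and conclude $\rank{\hkfa}\le\sum_i\min(d,\delta_i)\le\ddmm\le\ddfa$ by rank--nullity; equality at the ends then forces $\delta_i\le d$ for every $i$, so \cref{prop:modprime_is_denom} hands you the right denominator of degree at most $d$ at once. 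This buys independence from the external rank formula. The one place requiring the same care in both approaches is the left denominator, which the paper and you both dispose of by a symmetric argument; your sketch (running the count on $\trsp{\hkfa}$ and invoking the side-independence of the determinantal degree of irreducible denominators) is sound, though spelling it out requires introducing the module $\denom{\trsp{H}}$ and checking that its determinantal degree is again $\ddmm$.
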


\begin{proof}
  Using \cref{prop:invariant-factors} the inclusion \(\rmod \subseteq
  \rmodmm\) implies $\ddmm \leq \dd \leq n$, so that by
  \cref{lem:hankel_rank} below we have $\rank\hkfasimp \leq \ddmm \leq
  \dd\le n$.  If $\hkfasimp$ has rank $\dd$, then $\dd = \ddmm$, and
  the claim on $H$ follows again from \cref{lem:hankel_rank}. The case
  where the rank is~$n$ follows similarly.
\end{proof}

\begin{lemma}[{\cite[Sec.\,2.1]{KaVi05} and \cite[Lem.\,2.4]{Vil18}}]
  \label{lem:hankel_rank}
  For positive integers $m \leq n$ and $d$, the rank of $\hkfa$ is at most
  $\dd[m,m]^{(a,f)}$, with equality if and only if $H(y)=
  \trsp{X}(\charmat)^{-1}X$ is describable in degree~$d$.
\end{lemma}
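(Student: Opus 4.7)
The plan is to identify $\field^{md}$ with the space of polynomial vectors $\yvecRing{m}_{<d}$ via $\bar{v}=(v_0;\ldots;v_{d-1}) \mapsto v(y)=\sum_{i=0}^{d-1} v_i y^i$, and then to interpret $\ker \hkfa$ as the space of ``partial'' relations on the matrix sequence $(H_k)_{k\ge 0}$. As recalled in the proof of \cref{lemma:denominators}, $v\in\rmodmm$ is equivalent to $H_k v_0 + H_{k+1}v_1+\cdots+H_{k+d-1}v_{d-1}=0$ for all $k\ge 0$, while $\hkfa \bar{v}=0$ expresses exactly the first $d$ of these equations ($k=0,\ldots,d-1$). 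Hence $\rmodmm \cap \yvecRing{m}_{<d}$ injects as an $\field$-subspace of $\ker \hkfa$.

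For the inequality $\rank \hkfa \le \ddmm$, I would pick a column-Popov basis of $\rmodmm$ with column degrees $d_1,\ldots,d_m$ summing to $\ddmm$ (\cref{prop:invariant-factors} and \cref{subsubsec:definreconstruct}), and note the standard count $\dim_\field(\rmodmm\cap\yvecRing{m}_{<d})=\sum_i \max(0,d-d_i)\ge md-\ddmm$. Combined with the injection above this gives $\dim_\field \ker \hkfa \ge md-\ddmm$, hence $\rank \hkfa \le \ddmm$.

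Next, I would study the equality case. Tracing when the chain of inequalities above is tight shows that $\rank \hkfa = \ddmm$ is equivalent to the conjunction of (i) $d\ge d_i$ for every $i$, so that $\sum_i \max(0,d-d_i) = md-\ddmm$; and (ii) $\ker \hkfa = \rmodmm\cap\yvecRing{m}_{<d}$, so that no partial solution of length $d$ fails to extend. By \cref{prop:modprime_is_denom}, condition (i) says precisely that $\denom H = \rmodmm$ admits a Popov basis of degree $\le d$, which via \cref{lem:basisdenom} and the uniqueness of Popov bases up to right unimodular transformation is equivalent to the existence of an irreducible right fraction description $H=ND^{-1}$ with $\deg D\le d$. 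Condition (ii) is the familiar statement that the matrix sequence $(H_k)_{k\ge 0}$ satisfies a matrix recurrence on the left of order $\le d$, which by the standard duality for rational matrices is equivalent to the existence of an irreducible left fraction description $H=\hat D^{-1}\hat N$ with $\deg \hat D \le d$. Together, (i) and (ii) amount exactly to $H$ being describable in degree $d$ in the sense of \cref{sssec:fractions}.

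The main obstacle is making precise the equivalence between condition (ii) and the existence of a low-degree left denominator. The cleanest route is by transposition: the rational matrix $\trsp H(y)=\trsp{X}(y\idMat{n}-\trsp{M_a})^{-1}\trsp{(\trsp X)}$ has block Hankel matrix $\trsp{(\hkfa)}$, whose rank equals $\rank \hkfa$, so left fraction descriptions of $H$ in degree $d$ correspond to right fraction descriptions of $\trsp H$ in degree $d$, to which the same rank/Popov basis analysis applies. This duality, which is implicit in the block Hankel/matrix Padé equivalence established in \cite[Sec.\,2.1]{KaVi05} and \cite[Lem.\,2.4]{Vil18}, is what makes conditions (i) and (ii) line up with the two directions of describability and completes the equivalence.
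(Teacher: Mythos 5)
Your proof takes a genuinely different route from the paper's. The upper bound $\rank\hkfa \le \ddmm$ is derived by a direct dimension count: the degree-$<d$ part of $\rmodmm$ has dimension $\sum_i\max(0,d-d_i)\ge md-\ddmm$ (via the Popov column degrees $d_1,\dots,d_m$ summing to $\ddmm$) and injects into $\ker\hkfa$. This is more elementary and self-contained than the paper's route, which invokes the saturation result of \cite[Eq.~(2.6)]{KaVi05} (that the rank equals $\ddmm$ once $d\ge n$) together with submatrix monotonicity in $d$. Splitting the equality case into the tightness conditions (i) ``$d_i\le d$ for all $i$'' and (ii) ``$\ker\hkfa = \rmodmm\cap\yvecRing{m}_{<d}$'' is also a clean structural reformulation that the paper uses only implicitly.

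There is, however, a real gap. You assert that condition (ii) is \emph{equivalent} to the existence of an irreducible left fraction $H=\hat D^{-1}\hat N$ with $\deg\hat D\le d$. Only one direction holds: a left denominator of degree $\le d$ forces every vector in $\ker\hkfa$ to annihilate the whole shifted sequence (because $\hat D(y)\cdot(\sum_{k\ge0}u_ky^{-k-1})=0$ with all $u_k$ supported in $k\le d-1$ of the $2d$ relevant ones vanishing, $\hat D$ nonsingular), hence (ii). But (ii) alone does not yield a left denominator of degree $\le d$. Scalar counterexample ($m=1$): take $d=1$, $H_0\ne0$, minimal polynomial of the sequence of degree $\ge 2$; then $\hkfa=(H_0)$ has trivial kernel and $\rmod[1,1]\cap\yRing_{<1}=\{0\}$, so (ii) holds trivially, yet $\dd[1,1]\ge2$ and no denominator of degree $\le1$ exists. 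What is true is that the \emph{conjunction} $(\text{i})+(\text{ii})$, equivalently $\rank\hkfa=\ddmm$, implies the left fraction of degree $\le d$; but this must be extracted by applying the rank equality to the transposed Hankel $\trsp{(\hkfa)}$ together with the McMillan-degree symmetry ($\ddmm$ equals the determinantal degree of $\denom{\trsp{H}}$), producing condition~(i$'$) for $\trsp{H}$ and hence a right denominator for $\trsp{H}$. Your transposition paragraph points in exactly this direction, but as written it still pairs (ii) bijectively with the left denominator; that pairing fails, and the argument needs to be reorganized around $\rank\trsp{(\hkfa)}=\ddmm$ rather than around (ii).
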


\begin{proof}
We denote by $H_k = \trsp{X} M_a^k X \in \matRing{m}{m}$ the coefficient in the
expansion of \(H\) at infinity:
\begin{equation*}
  H(y) = \trsp{X}(\charmat)^{-1} X
  = \sum_{k\ge0} H_k y^{-k-1}
  = \sum_{k\ge0} \trsp{X} M_a^k X y^{-k-1}.
\end{equation*}

To show that the rank is at most $\ddmm$ we first note that $\hkfasimp$ is a
submatrix of $\hkfasimp[m,d+1]$ for  \(d\ge 0\), the sequence
\((\rank{\hkfasimp})_{d\ge 0}\) is thus nondecreasing.  Since \(\rmodmm\) is
the module of vector generators for the sequence \(\{H_k\}_{k \ge 0}\)
(\cref{sec:relmat:relations_denominators}), the minimal generating polynomial
$F \in \ymatRing{m}{m}$ in Popov form for that sequence is a basis of \(\rmodmm\)
(\cite[Def.\,2.5]{Vil97:TR} and \cite[Def.\,2.3]{KaVi05}). It follows that
$\degdet{F}=\ddmm$, and \cite[Eq.\,(2.6)]{KaVi05} shows that for
\(d\ge n\), the rank of $\hkfasimp$ is $\ddmm$. So the first claim is proved.

{From \cref{lemma:denominators}, $F$ is also a basis of $\denom{H}$; we now study the descriptions of $H$ 
by exploiting identities that we used to prove this lemma.}
If the rank of~$\hkfasimp$ is equal to $\ddmm$, then this rank is also
that of the infinite matrix corresponding to the system (see also
\cref{eq:Hankel-prod})
\begin{equation} \label{eq:appendixHinfty}
  H_kv_0 + \dots + H_{k+d}v_{d} = 0 \quad \text{for } k \geq 0,
\end{equation}
thus a solution to 
\begin{equation} \label{eq:appendixHtrunc}
  H_kv_0 + \dots + H_{k+d}v_{d} = 0 \quad \text{for } 0 \leq k \leq d-1
\end{equation}
is also a solution to \cref{eq:appendixHinfty}. Since the rank of
$\hkfasimp$ is maximal we also know that the last block column of
$\hkfasimp[m,d+1]$ is a linear combination of the previous ones.  This
provides with~$m$ linearly independent $R_1,\ldots,R_m \in
\yvecRing{m}$, of degree $d$, whose coefficient vectors in~$y$ are
solutions to~\cref{eq:appendixHtrunc}, hence to
\cref{eq:appendixHinfty}. Let~$R$ be the matrix in $\ymatRing{m}{m}$
whose $j$th column is $R_j$.  Using \cref{eq:appendixHinfty} we deduce
that $HR=Q$ with~$Q\in \ymatRing{m}{m}$ (see also
\cref{eq:smallnumer}). This gives a right fraction description
$H=QR^{-1}$ (which may not be irreducible) with denominator of
degree~$d$. The same reasoning on the left side gives a left matrix
description of degree~$d$, hence $H$ is describable in degree~$d$.

Conversely, a right matrix description $H=QR^{-1}$ with $R$ of degree at
most~$d$ gives $R_j$'s whose coefficient vectors are solutions to
\cref{eq:appendixHinfty}. Since $F$ is a basis of the module of
vector generators for \(\{H_k\}_{k \ge 0}\), $R$ must be a multiple of
$F$. By minimality $F$ has degree at
most~$d$~\cite[Thm.\,6.5-10, p.\,458]{Kailath80}, and
using~\cite[Eq.\,(2.6)]{KaVi05} the rank of the infinite block Hankel matrix
restricted to its  first~$d$ block columns is maximal. Starting from a left
description, in an analogous way we obtain that the rank restricted to the
first $d$ block rows is maximal, which yields that $\hkfasimp$ has
rank $\degdet{F}=\ddmm$. 
\end{proof}

\subsection{Families with \texorpdfstring{$\hkfa$}{block Hankel matrix} of rank \texorpdfstring{$n$}{n}} 
\label{subsec:rankn}

A simple condition implies the equality $\rank{\hkfasimp}=n$
of \cref{prop:hankel_rank_nu}.

\begin{proposition}\label{lem:hankelwithxm}
  Let $f\in\field[x]$ have degree~$n$, let $a\in\field[x]_{<n}$, and let \(m\)
  be a positive integer. If $f(0)\neq 0$ and \(\deg(a) = m\) (hence \(1 \le m < n\)),
  then the block Hankel matrix $\hkfa \in
  \matRing{(md)}{(md)}$ has rank $n$ for all $d \ge \lceil n/m\rceil$.
\end{proposition}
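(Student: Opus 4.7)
The plan is to establish $\rank{\hkfa} = n$ by factoring $\hkfa = \La \Ra$ (with $\La$ and $\Ra$ from \cref{subsec:linalginterpret}) and showing both factors have rank $n$.

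First, I show $\rank{\Ra} = n$: the $n$ polynomials $\{x^i a^j : 0 \le i < m,\ 0 \le j,\ i+jm < n\}$ have pairwise distinct degrees $0, 1, \ldots, n-1$ with nonzero leading coefficients $\operatorname{lc}(a)^j$, so they form a $\field$-basis of $V = \xRing_{<n}$. The condition $d \ge \lceil n/m \rceil$ ensures all these polynomials appear among the columns of $\Ra$, giving $\rank{\Ra} = n$. Since $\Ra$ is then surjective as a map $\field^{md} \to V$, the identity $\rank{\hkfa} = \rank{\La}$ holds, reducing the task to proving $\La \colon V \to \field^{md}$ is injective.

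To show $\La$ injective, suppose $c \in V$ satisfies $[a^k c \rem f]_0^{m-1} = 0$ for $0 \le k < d$; I argue $c = 0$ by induction on $k$. The case $k = 0$ forces $c = x^m c^{(1)}$ for some $c^{(1)} \in \xRing_{<n-m}$. Assuming inductively $c^{(1)} \in \xRing_{<n-(k+1)m}$, the product $a^{k+1} c^{(1)}$ has degree less than $n$, so the only contributions to $[x^m (a^{k+1} c^{(1)}) \rem f]_0^{m-1}$ come from reducing the tail $w = \trsp{((a^{k+1} c^{(1)})_{n-m}, \ldots, (a^{k+1} c^{(1)})_{n-1})}$ against the polynomials $x^n, x^{n+1}, \ldots, x^{n+m-1} \bmod f$. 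This yields $M w = 0$, where $M \in \matRing{m}{m}$ has $(i,j)$-entry $[x^{n+j} \rem f]_i$ for $0 \le i,j < m$. Assuming $M$ is invertible (argued next), $w = 0$ forces $\deg(a^{k+1} c^{(1)}) < n - m$, whence $\deg c^{(1)} < n - (k+2)m$. After $q = \lceil n/m \rceil$ iterations (all available since $d \ge q$), we obtain $c^{(1)} \in \xRing_{<n-qm} = \{0\}$ and therefore $c = 0$.

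The main step is invertibility of $M$, which I derive from the coprimality $\gcd(x^m, f) = 1$ (equivalent to $f(0) \neq 0$). Suppose $M u = 0$ for some $u \in \vecRing{m}$, corresponding to a polynomial $u(x) \in \xRing_{<m}$. Then $[x^n u(x) \rem f]_0^{m-1} = 0$, so $x^n u(x) = x^m p(x) + q(x) f(x)$ for some $p \in \xRing_{<n-m}$ and $q \in \xRing$. From $x^m \mid q(x) f(x)$ and $\gcd(x^m, f) = 1$, we get $x^m \mid q(x)$; writing $q = x^m q'$ and simplifying yields $x^{n-m} u(x) \equiv p(x) \pmod{f}$. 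Since $\deg(x^{n-m} u) \le n - 1 < n$, this forces $p = x^{n-m} u$ as elements of $\xRing$; but $p \in \xRing_{<n-m}$ while $\deg(x^{n-m} u) \ge n - m$ whenever $u \neq 0$, a contradiction. Hence $u = 0$, so $M$ is injective and thus invertible, completing the proof.
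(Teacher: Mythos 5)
Your proof is correct, and while it shares the paper's natural first step (factor $\hkfa = \La\Ra$, then show surjectivity of $\Ra$ by a triangularity/degree argument and injectivity of $\La$ by a degree-reduction induction), the injectivity argument is genuinely different in two respects. First, the paper studies $w = a^i v \rem f$, splits it into low and high parts, and multiplies by $a$; the high-degree spill depends on both $a$ and $f$, which leads them to the polynomials $p_j = [a x^{n-m+j}\rem f]_0^{m-1}$ and forces a case distinction between $m \le n/2$ and $m > n/2$ (in the latter case only a subset of the $p_j$'s is independent, because the high part of $w$ partially overlaps the range where $w$ is already known to vanish). You instead extract the factor $x^m$ from $c$ once and for all and push the untruncated polynomial $a^{k+1}c^{(1)}$ across degree $n$ by multiplying by $x^m$ rather than $a$; the resulting spill matrix $M$ with columns $[x^{n+j}\rem f]_0^{m-1}$ depends only on $f$, and no case distinction is needed. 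Second, the paper proves the relevant independence statement via an inductive span computation (tracking valuations of $[x^ir_0]_0^{m-1}$), whereas your invertibility proof for $M$ is a short direct argument from $\gcd(x^m,f)=1$. As a side effect, your argument also dispenses with the paper's preliminary rescaling of $a$ to a monic polynomial, since nothing beyond $\operatorname{lc}(a)\ne 0$ is used. Both proofs use exactly the $d \ge \lceil n/m\rceil$ conditions, and your degree bookkeeping ($\deg c^{(1)} < n-(k+2)m$ after step $k$) terminates in the same number of steps as the paper's ($\deg v < n - mi$). In short: same skeleton, but a cleaner core lemma and no case analysis.
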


The rest of this subsection is devoted to the proof of this result. It is a
basis for the genericity result in the next subsection.

\begin{proof}
  {For a given $c$ in $\field$, by construction of this block-Hankel matrix, one
  has
  \[
    {\hankel{m,d}{c a}{f}} = C \hkfa C,
    \text{ where }
    C = \diag{\underbrace{1,\ldots,1}_{m \text{ times}},\underbrace{c,\ldots,c}_{m \text{ times}},\ldots,\underbrace{c^{d-1},\ldots,c^{d-1}}_{m \text{ times}}}.
  \]
  }%
  It follows that $\rank {\hankel{m,d}{c a}{f}} = \rank \hkfa$ for any $c\neq 0$, and
  therefore in the rest of the proof we can assume that $a$ is {\em monic} of
  degree $m$.

  By \cref{eq:def_Hk}, it is sufficient to show
  that the mappings $\rhoasimp$ and
  $\lbasimp$ associated to $\Rasimp$ and
  $\Lasimp$ are surjective and injective, respectively.

\paragraph{The mapping $\rhoasimp$ is surjective.}

By assumption, $n \le md$ so that surjectivity of $\rhoasimp$ is equivalent to
the matrix $\Rasimp \in \matRing{n}{(md)}$ from \cref{eq:factor_Hk} having full
row rank $n$. Indeed, the first $n$ columns of $\Rasimp$ are the coefficients
of the family of polynomials $x^ia^j \rem f$, for $0\le i<m$ and $0\le j<d$,
with $0 \le i + jm < n$.  Since $\deg(a)=m$, these columns form an upper
triangular matrix, with $1$'s on the diagonal; this proves the claim.

\paragraph{The mapping $\lbasimp$ is injective.}

Equivalently, we have to show that $\Lasimp$ has full column rank $n$. This
 follows from the structure of this matrix, seen at the level of
polynomials. 

\begin{lemma}
  With the notation and hypotheses of \cref{lem:hankelwithxm}, let
  \[
    p_{i}=[ax^{n-m+i}\rem f]_0^{m-1},\quad
    i=0,\dots,m-1.
  \]
  Then,
  \begin{enumerate}[label={\it (\roman*)},ref={\roman*}]
    \item\label{lem:hankelwithxm:proofitemi}
      if $m \le n/2$, the $m$ polynomials $p_0,\dots,p_{m-1}$ are linearly
      independent;
    \item\label{lem:hankelwithxm:proofitemii}
      if $n/2 < m$, the $n-m$ polynomials $p_{2m-n},\dots,p_{m-1}$ are
      linearly independent.
  \end{enumerate}
\end{lemma}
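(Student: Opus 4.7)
The strategy is to reduce both parts to a single injectivity statement about a natural linear map, and to identify each relevant $p_i$ with the image, under that map, of a polynomial $q_i$ of known degree. First I would observe that scaling $a$ by a nonzero constant $c$ multiplies each $p_i$ by $c$ and hence preserves linear independence, so without loss of generality $a = x^m + a_{m-1} x^{m-1} + \cdots + a_0$ is monic of degree $m$. The key decomposition is to split $a x^{n-m+i}$ according to whether its monomials have degree $\ge n$ or $< n$:
\[
  a \, x^{n-m+i} = x^n \, q_i(x) + h_i(x),
\]
where $q_i(x) = x^i + a_{m-1} x^{i-1} + \cdots + a_{m-i}$ has degree exactly $i$ (with leading coefficient $1$), and $h_i(x) = a_{m-i-1} x^{n-1} + \cdots + a_0 x^{n-m+i}$ has every monomial of degree in $[n-m+i, n-1]$. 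When $i \ge \max(0, 2m-n)$, one has $n-m+i \ge m$, so $h_i$ contributes nothing to $[\,\cdot\,]_0^{m-1}$, and therefore $p_i = [q_i(x) \, x^n \bmod f]_0^{m-1}$. The hypothesis $m \le n/2$ in part~\ref{lem:hankelwithxm:proofitemi} gives $2m-n \le 0$, so this applies for all $i \in \{0,\dots,m-1\}$; in part~\ref{lem:hankelwithxm:proofitemii} it applies for $i \in \{2m-n,\dots,m-1\}$.

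The main step---and the principal obstacle---is to show that the $\field$-linear map $T : \xRing_{<m} \to \xRing_{<m}$ defined by $T(c) = [c(x) \, x^n \bmod f]_0^{m-1}$ is injective. Suppose $T(c) = 0$ for some $c \in \xRing_{<m}$: then the representative of $c \, x^n$ modulo $f$ in $\xRing_{<n}$ vanishes on degrees $0,\dots,m-1$ and so has the form $x^m c'$ for some $c' \in \xRing_{<n-m}$, whence $c \, x^n - x^m c' = q \, f$ for a unique $q \in \xRing$. Comparing degrees (both $c \, x^n$ and $x^m c'$ have degree at most $n+m-1$) gives $\deg q \le m-1$. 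Factoring the left-hand side as $x^m (c \, x^{n-m} - c')$ and using that $\gcd(x^m, f) = 1$---which is precisely where the assumption $f(0) \ne 0$ enters---we deduce $x^m \mid q$, and thus $q = 0$. The resulting identity $c \, x^{n-m} = c'$ in $\xRing$ is then impossible whenever $c \ne 0$, since its left side has degree at least $n-m$ while $\deg c' < n-m$. Hence $c = 0$ and $T$ is injective.

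To conclude, in both parts the polynomials $q_i$ for $i$ in the relevant range have pairwise distinct degrees (namely $i$ itself), so they are linearly independent in $\xRing_{<m}$. Since $T$ is injective, their images $p_i = T(q_i)$ are also linearly independent: this yields the $m$ polynomials $p_0,\dots,p_{m-1}$ in part~\ref{lem:hankelwithxm:proofitemi}, and the $n-m$ polynomials $p_{2m-n},\dots,p_{m-1}$ in part~\ref{lem:hankelwithxm:proofitemii}.
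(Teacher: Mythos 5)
Your proof is correct and takes a genuinely different route from the paper's. You isolate the linear map $T(c)=[cx^n\rem f]_0^{m-1}$ and reduce both parts to its injectivity: the decomposition $ax^{n-m+i}=x^nq_i+h_i$, with $\deg q_i=i$ and every monomial of $h_i$ of degree at least $n-m+i$, shows $p_i=T(q_i)$ exactly when $n-m+i\ge m$, so the degree staircase of the $q_i$'s settles both cases at once. Your injectivity argument — $cx^n-x^mc'=qf$ forces $x^m\mid q$ via $\gcd(x^m,f)=1$, yet $\deg q\le m-1$, so $q=0$ and then $c=0$ by a degree count — is a clean coprimality/degree argument. The paper never names $T$: it introduces $r_i=x^{n+i}\rem f$, proves $[r_0]_0^{m-1},\dots,[r_{m-1}]_0^{m-1}$ independent by a valuation staircase (using $r_0(0)=-f_0\neq0$), then treats (i) via a span-equality argument and (ii) via a separate contradiction that peels off the top block of a hypothetical vanishing combination. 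Since $[r_i]_0^{m-1}=T(x^i)$, the paper's independence claim is your injectivity of $T$ in disguise, proved by a dual mechanism (valuations rather than degrees) and applied through two case-specific computations. Your version buys unity — one injectivity lemma, one degree staircase, no case split in the core — whereas the paper's phrasing is chosen to mesh with the surrounding Krylov-sequence bookkeeping, where the $r_i$'s and truncated spans recur.
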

\begin{proof}
  The two cases require different proofs, sharing common ingredients. For $i
  \ge 0$, let $r_i = x^{n+i} \rem f$. For $b$ in $\xRing_{<n}$, we then have
  \begin{equation}\label{eq:mulb}
    x^i b \rem f = [x^i b]_{0}^{n-1} +  \delta_{b,i},  
  \end{equation}
  for some $\delta_{b,i}$ in $\Span(r_0,\dots,r_{i-1})$, in particular
  $\delta_{b,0}=0$. Applying this to $b=r_0 = x^n \rem f$ yields $r_i = [x^i
  r_0]_{0}^{n-1} + \delta_{r_0,i}$.  Taking this relation modulo $x^m$ gives
  $[r_i]_0^{m-1} = [x^i r_0]_{0}^{m-1} + \mu_{i}$, with $\mu_{i}$ in
  $\Span([r_0]_0^{m-1},\dots,[r_{i-1}]_0^{m-1})$ for $i>0$ and $\mu_0 = 0$. By induction on $i \ge 0$, one deduces that
  \[
    \Span([r_0]_0^{m-1},\dots,[r_{i}]_0^{m-1})
    =\Span([r_0]_0^{m-1},\dots,[x^i r_{0}]_0^{m-1}).
  \]
  Writing $f=f_0 + \cdots + f_{n-1}x^{n-1} +x^n$, we get $r_0 =
  -f_0-f_1x-\dots-f_{n-1}x^{n-1}$. Since $f_0 \ne 0$ by assumption, $[x^i
  r_{0}]_0^{m-1}$ has valuation $i$ for $0 \le i < m$; this implies that
  $\Span([r_0]_0^{m-1},\dots,[r_{i}]_0^{m-1})$ has dimension $i+1$ for
  $0 \le i < m$.

  \paragraph{Proof of \cref{lem:hankelwithxm:proofitemi}.} 

  Let $b=ax^{n-m} \rem f$ in~\cref{eq:mulb}. Upon reduction modulo $x^m$, for
  $0 \le i < m$, we obtain the relation $p_i = [x^i b]_{0}^{m-1} + \mu'_{i}$,
  with $\mu'_{i}$ in $\Span([r_0]_0^{m-1},\dots,[r_{i-1}]_0^{m-1})$.

  Since $ax^{n-m}$ is monic of degree $n$ ($a$ has degree $m$), with valuation
  at least $n-m \ge m$ (here, $m \le n/2$), we get $[b]_0^{m-1} =
  [r_0]_0^{m-1}$, and thus $[x^i b]_0^{m-1} = [x^i r_0]_0^{m-1}$ for $0\le
  i<m$.  This gives $p_i = [r_i]_{0}^{m-1} + \mu'_{i}-\mu_i$, with
  $\mu'_{i}-\mu_i$ in $\Span([r_0]_0^{m-1},\dots,[r_{i-1}]_0^{m-1})$.  In
  particular, taking all $i$ up to $m-1$, we get the equality
  $\Span(p_0,\dots,p_{m-1}) =\Span([r_0]_0^{m-1},\dots,[r_{m-1}]_0^{m-1})$, and
  we saw that the latter has dimension $m$.  \Cref{lem:hankelwithxm:proofitemi}
  is proved.

  \paragraph{Proof of \cref{lem:hankelwithxm:proofitemii}.}

  Assume that $q=c_mx^m+\dots+c_{n-1}x^{n-1}$ is such that $[ a q \rem
  f]_0^{m-1}=0$. We prove that all $c_i$'s vanish.

  We can rewrite $a q \rem f$ as $x^m b \rem f$, with $b = a (q/x^m)$; since
  $a$ has degree $m$, $b$ is in $\xRing_{<n}$. Applying \cref{eq:mulb} to $b$
  and $i=m$, our assumption that $[x^m b \rem f]_0^{m-1}=0$ implies
  $[\delta_{b,m} ]_{0}^{m-1} = 0$. {Writing \(\delta_{b,m} =
    \sum_{j=0}^{m-1} \bar\delta_{j} r_j\) for some
    \(\bar\delta_0,\ldots,\bar\delta_{m-1} \in \field\), we get
    \(\sum_{j=0}^{m-1} \bar\delta_{j} [r_j]_0^{m-1} = [\delta_{b,m}]_0^{m-1} =
    0\). The linear independence of $[r_0]_0^{m-1},\dots,[r_{m-1}]_0^{m-1}$
    ensures \(\bar\delta_j = 0\) for all \(j\), showing that $\delta_{b,m}$
    itself is zero. Hence $x^m b \rem f = [x^m b]_{0}^{n-1}$,
    from which we deduce \(x^n [x^m b]_{n}^{m-1} \rem f = 0\)
    using
    \[
      x^m b \rem f = ([x^m b]_0^{n-1} + x^n [x^m b]_{n}^{m-1}) \rem f
      = [x^m b]_0^{n-1} + (x^n [x^m b]_{n}^{m-1} \rem f).
    \]
    Since \(x^n [x^m b]_{n}^{m-1} = x^n [b]_{n-m}^{m-1}\), and since $f_0\ne 0$
  ensures that $x$ is invertible modulo $f$, it follows that $[b]_{n-m}^{m-1}$
vanishes modulo $f$, or equivalently that $[b]_{n-m}^{m-1}=0$.} Since $a$ is
monic of degree $m$, and since $n-m < m$, the definition of $b$ then implies
that all coefficients $c_i$'s vanish. Hence,
\cref{lem:hankelwithxm:proofitemii} is proved.
\end{proof}

  Let now $v \in \xRing_{<n}$ be such that
  \[
    [v]_0^{m-1}=[av\rem f]_0^{m-1}=\dots=[a^{d-1}v\rem f]_0^{m-1}=0.
  \]
  We prove that $\deg(v) < n-mi$ for $i=0,\dots,d-1$, by induction. For $d=\lceil n/m\rceil$,
  this gives $\deg(v) < m$; together with the assumption $[v]_0^{m-1}=0$, this
  proves that $v=0$.

  The base case of the induction is for $i=0$, and \(\deg(v) < n\) holds by assumption. If the claim
  holds for some index $i < d-1$, since $a$ has degree~$m$, for any $w$ in
  $\xRing_{<n}$, the polynomial $[aw\rem f]_0^{m-1}$ splits into two parts:
  \[
    [aw\rem f]_0^{m-1}=
    [a[w]_0^{n-m-1}]_0^{m-1}+[a x^{n-m} [w]_{n-m}^{m-1} \rem f]_0^{m-1}.
  \]
  Apply this identity with $w=a^iv \rem f$. Then, both the left-hand side and
  the first summand vanish: the former because $[a^{i+1} v\rem f]_0^{m-1}=0$,
  the latter because $[a^{i} v\rem f]_0^{m-1}=0$, i.e., $w=a^{i} v\rem f$ has
  valuation at least $m$. We deduce that $[a x^{n-m} [w]_{n-m}^{m-1}\rem f]_0^{m-1}=0$,
  with $w=a^iv \rem f$. 

  \begin{itemize}
    \item If $m \le n/2$, the linear independence of the polynomials
      $p_j=[ax^{n-m+j} \rem f]_0^{m-1}$, for $j=0,\dots,m-1$, then shows that
      $[w]_{n-m}^{m-1} = [a^iv \rem f]_{n-m}^{m-1}$ vanishes.
    \item If $m > n/2$, then the assumption that $w$ has valuation at
      least $m$, with thus $m > n-m$, shows that
      $[w]_{n-m}^{m-1}=x^{2m-n}[w]_{m}^{n-m-1}$. In this case, the linear
      independence of the polynomials $p_j$ for $j=2m-n,\dots,m-1$
      shows that $[w]_{n-m}^{m-1}=0$.
  \end{itemize}
  In other words, in both cases, we have proved that $w=a^iv \rem f$ has degree
  less than $n-m$.

  On the other hand, the induction assumption that $\deg(v) < n-mi$ implies
  that $a^iv \rem f= a^iv$, so the latter has degree less than $n-m$. Since
  $a^i$ has degree $mi$, this shows that $\deg(v) < n-m(i+1)$, as claimed.
\end{proof}

\subsection{Generic regularity in \texorpdfstring{$a$}{a} and \texorpdfstring{$f$}{f}} 
\label{subsubsec:genina}

In all this document, genericity is understood in the Zariski sense:
\begin{definition}
  A property $\mathcal{P}$ of certain parameters $(u_1,\dots,u_s)$ holds for a
  \emph{generic} choice of $(u_1,\dots,u_s)$ in $\vecRing{s}$ if there exists a
  nonzero polynomial $\Delta$ in $\field[\bar u_1,\dots,\bar u_s]$ (where the
  $\bar u_i$'s are new indeterminates) such that $\Delta(u_1,\dots,u_s) \ne 0$
  implies that $\mathcal{P}(u_1,\dots,u_s)$ holds.
\end{definition}
Note that if $\field$ is finite, there may be no choice of the $u_i$'s
in $\field$ for which $\Delta$ does not vanish, but such points exist
in a finite extension of $\field$ of sufficiently large degree (such
as $\bigO{\log(n)}$ when the degree of~$\Delta$ is polynomial in~$n$,
as is the case below).

\begin{proposition}\label{prop:generic_a}
  Let $f$ in $\xRing$ be of degree $n$ and such that $f(0)\neq
  0$. For any $m \in \{1, \ldots , n\}$ there
  exists a nonzero polynomial $\Delta_{f,m}$ in $\field [\bar
    a_0,\dots,\bar a _{n-1}]$ of degree at most ${2n^2/m}$ such that
  for $a=a_0+\dots+a_{n-1}x^{n-1}$ in $\xRing_{<n}$, if
  $\Delta_{f,m}(a_0,\dots,a_{n-1})\neq0$ then $\hkfa \in
  \matRing{(md)}{(md)}$ has rank~$n$ for any $d \ge \lceil n/m\rceil$.
\end{proposition}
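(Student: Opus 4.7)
The plan is to view $\hkfa \in \matRing{(md)}{(md)}$ as a matrix with entries in $\field[\bar a_0, \ldots, \bar a_{n-1}]$ obtained by the substitution $a = \sum_i \bar a_i x^i$, and to take $\Delta_{f,m}$ to be a suitably chosen nonzero $n \times n$ minor of this generic matrix. The existence of such a minor will follow from \cref{lem:hankelwithxm}: for $1 \le m < n$, applying that result to the specific polynomial $a = x^m$, whose degree is $m$, yields a specialization of $\hkfa$ of rank $n$, so at least one $n \times n$ minor is a nonzero polynomial in $\field[\bar a_0, \ldots, \bar a_{n-1}]$. The boundary case $m = n$ is degenerate since then $X = \idMat{n}$ and $H_0 = \trsp{X} X = \idMat{n}$ already has rank $n$, so one can simply take $\Delta_{f,n} = 1$.

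First I would carry out the degree bookkeeping at $d = \lceil n/m \rceil$. Since $M_a = \bar a_0 \idMat{n} + \bar a_1 M_x + \cdots + \bar a_{n-1} M_x^{n-1}$ has entries linear in the $\bar a_i$'s, an easy induction shows that each entry of $M_a^k$ has degree at most $k$, and hence so does each entry of $H_k = \trsp{X} M_a^k X$. The entry of $\hkfa$ in block row $i$ and block column $j$ (with $0 \le i,j \le d-1$) therefore has degree at most $i+j$. A term in the expansion of an $n \times n$ minor is a product of $n$ entries from distinct rows and columns, so its total degree is bounded by $\sum_k (\lfloor r_k/m \rfloor + \lfloor c_k/m \rfloor) \le 2n(d-1) \le 2n^2/m$, using $d-1 \le n/m$.

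Next I would invoke \cref{lem:hankelwithxm} at $a = x^m$ to conclude that the specialization $(\bar a_0, \ldots, \bar a_{n-1}) = (0, \ldots, 0, 1, 0, \ldots, 0)$ (the $1$ in position $m$) makes $\hkfa$ have rank exactly $n$. Hence some $n \times n$ minor of the generic $\hkfa$ is nonzero in $\field[\bar a_0, \ldots, \bar a_{n-1}]$; choose $\Delta_{f,m}$ to be any such minor, whose degree is then at most $2n^2/m$ by the previous step.

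For the last step, if $\Delta_{f,m}(a_0, \ldots, a_{n-1}) \ne 0$ then the corresponding $n \times n$ minor of $\hkfa$ is nonzero, yielding $\rank \hkfa \ge n$. The factorization $\hkfa = \La \Ra$ through $\field^n$ gives the converse bound $\rank \hkfa \le n$, so equality holds at $d = \lceil n/m \rceil$. For any $d' \ge \lceil n/m \rceil$, observe that $\hkfa[m,\lceil n/m\rceil]$ sits as the top-left $md \times md$ submatrix of $\hkfa[m,d']$, hence $\rank \hkfa[m,d'] \ge n$, and the analogous factorization bounds it by $n$. I do not expect a serious obstacle: the degree computation for the $n \times n$ minor is the only routine calculation, and the essential geometric input is entirely packaged in \cref{lem:hankelwithxm}.
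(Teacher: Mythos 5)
Your proposal is correct and follows essentially the same route as the paper: take $\Delta_{f,m}$ to be a nonzero $n$-minor of the generic block Hankel matrix, establish nonvanishing by specializing at $a=x^m$ via \cref{lem:hankelwithxm}, and bound the degree through the $\deg \le i+j$ estimate on the $(i,j)$ block of $\hkfa$ coming from $M_a^{i+j}$. The only difference is that you spell out the boundary case $m=n$ (where $X=\idMat{n}$ and $\Delta_{f,n}=1$ works) and the extension to $d'>\lceil n/m\rceil$ explicitly, both of which the paper leaves implicit; the key geometric input is identical.
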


\subsubsection{Proof of \cref{prop:generic_a}}
\begin{lemma} \label{cor:hankelwithxm}
  Let $m$, $n$ be positive integers, with $m \in \{1, \ldots , n\}$, and let
  $\varf = \varf_0 + \cdots + \varf_{n-1} x^{n-1} +x^n$ and
  $\vara = \vara_0 + \cdots + \vara_{n-1} x^{n-1}$ be polynomials in
  $\ZZ[\vara_0,\dots,\vara_{n-1},\varf_0,\dots,\varf_{n-1}][x]$.
  Then any $n$-minor of $\hankel{m,\lceil n/m\rceil}{\vara}{\varf}$
  has degree at most $2n^2/m$ in $\vara_0,\dots,\vara_{n-1}$ and
  $2n^2(n-1)/m$ in $\varf_0,\dots,\varf_{n-1}$.
\end{lemma}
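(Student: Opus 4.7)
The plan is to trace through the definition of the block Hankel matrix and bound the degrees of its entries as polynomials in the $\vara_i$'s and $\varf_i$'s separately, then combine via the Leibniz formula for minors.

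First, I would unpack the entries of $\hankel{m,d}{\vara}{\varf}$ using \cref{eq:def_Hk}: with $d = \lceil n/m \rceil$, this matrix has $m \times m$ blocks $H_k = \trsp{X} M_\vara^k X$ for $k=0,\ldots,2d-2$, and each such block reads off the top-left $m\times m$ corner of the multiplication matrix $M_\vara^k$; equivalently, its entries are coefficients (on the basis $1,x,\ldots,x^{m-1}$) of $x^i\vara^k \bmod \varf$ for $0 \le i < m$. Since multiplication by $x^i$ shifts coefficients without changing their degrees in $(\vara_j,\varf_j)$, it suffices to bound the $(\vara,\varf)$-bidegree of the coefficients of $\vara^k \bmod \varf$ for $0 \le k \le 2d-2$.

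Next, I would analyze how reduction modulo the monic polynomial $\varf$ affects these bidegrees. The polynomial $\vara^k$, before any reduction, has degree $k(n-1)$ in $x$ and coefficients of degree $k$ in the $\vara_j$'s and $0$ in the $\varf_j$'s. The standard reduction procedure iterates the elementary step $p \mapsto p - p_N x^{N-n}\varf$ where $N = \deg p$; each such step leaves the degree in the $\vara_j$'s unchanged and increases the degree in the $\varf_j$'s by at most one, while strictly decreasing the $x$-degree. Starting from degree $k(n-1)$ in $x$, at most $k(n-1)-n+1 = (k-1)(n-1)$ steps are needed to bring the polynomial to degree less than $n$. Therefore every coefficient of $\vara^k \bmod \varf$ has degree at most $k$ in the $\vara_j$'s and at most $(k-1)(n-1)$ in the $\varf_j$'s.

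Finally, I would invoke Leibniz: an $n$-minor of $\hankel{m,d}{\vara}{\varf}$ is a signed sum of $n!$ products of $n$ entries, and every entry involved satisfies the bidegree bound above with $k \le 2d - 2$. Hence any $n$-minor has degree at most $n(2d-2)$ in the $\vara_j$'s and at most $n(2d-3)(n-1)$ in the $\varf_j$'s. A short check, splitting on whether $m$ divides $n$, shows $2d-2 \le 2n/m$ and $2d-3 \le 2n/m$ when $d=\lceil n/m\rceil$, yielding the claimed bounds $2n^2/m$ and $2n^2(n-1)/m$. There is no real obstacle here: the argument is entirely a bookkeeping exercise, and the only point requiring care is to track the $\vara_j$- and $\varf_j$-degrees as two independent quantities throughout the Euclidean reduction.
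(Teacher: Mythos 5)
Your proof is correct and follows the same overall strategy as the paper: bound the $(\vara,\varf)$-bidegrees of the entries of the block Hankel matrix, then propagate to a minor of size $n$. The details differ slightly. For the entry bounds, the paper writes $M_\vara = \sum_{k=0}^{n-1}\vara_k M_x^k$ with $M_x$ the companion matrix of $\varf$, which immediately gives entry degrees $(1,n-1)$ for $M_\vara$ and hence $(j,\,j(n-1))$ for $M_\vara^j$; your direct Euclidean-reduction bookkeeping yields the marginally sharper $\varf$-bound $(k-1)(n-1)$ for $\vara^k\rem\varf$, at the cost of being a bit more explicit. For the minor, the paper uses the finer observation that the total degree is bounded by the sum of block-row indices over the selected rows plus the sum of block-column indices over the selected columns, giving $m d(d-1)$ with $d=\lceil n/m\rceil$, whereas you take the cruder bound $n\cdot\max_{\text{entry}}\deg$. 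Both routes comfortably close to $2n^2/m$ and $2n^2(n-1)/m$ after the final inequality $\lceil n/m\rceil \le n/m + 1$, so there is no gap; the choice is a matter of taste, with the companion-matrix identity being more compact and your reduction argument more elementary.
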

\begin{proof}
{The multiplication matrix $M_{\vara}$
can be written as  $M_{\vara}=\sum_{k=0}^{n-1} \vara _k M_{x}^k$, where $M_{x}$ is the companion matrix of 
$\varf$. 
The entries of $M_{\vara}$, which are the
coefficients of $x^k\vara\rem \varf$ for $k=0,\dots,n-1$, are
therefore polynomials of degree~$1$ in the
coefficients~$\vara_0,\dots,\vara_{n-1}$ and at most $n-1$ in the
coefficients $\varf_0,\dots,\varf_{n-1}$.} In turn, the coefficients of
$M^j_{\vara}$ have degree at most~$j$ in~$\vara_0,\dots,\vara_{n-1}$
and $j(n-1)$ in $\varf_0,\dots,\varf_{n-1}$.  For $0 \le i,j < \lceil
n/m\rceil$, the $m\times m$ block of coordinates $(i,j)$ in
$\hankel{m,\lceil n/m\rceil}{\vara}{\varf}$ is a submatrix of
$M^{i+j}_{\vara}$; it has degree at most~$i+j$ in
$\vara_0,\dots,\vara_{n-1}$ and $(i+j)(n-1)$ in
$\varf_0,\dots,\varf_{n-1}$. As a result, any $n$-minor of this matrix
has degree at most $m \lceil n/m\rceil(\lceil n/m\rceil-1) \le 2n^2/m$
in $\vara_0,\dots,\vara_{n-1}$ and $m\lceil n/m\rceil (\lceil
n/m\rceil-1)(n-1) \le 2n^2(n-1)/m$ in $\varf_0,\dots,\varf_{n-1}$.
\end{proof}
Take $f$ of degree $n$ with $f(0) \ne 0$. \Cref{lem:hankelwithxm} with
$a=x^m$ shows that at least one $n$-minor of $\hankel{m,\lceil
  n/m\rceil}{x^m}{f}$ is nonzero, so the corresponding $n$-minor
of $\hankel{m,\lceil n/m\rceil}{\vara}{f}$ is not identically zero. We
take this minor for $\Delta_{f,m}$, and its degree is then bounded by
\cref{cor:hankelwithxm}.

\subsubsection{Note: basis of relations for a generic \texorpdfstring{$a$}{a}.}
\label{notes:genregularity}

For any $f$ in $\xRing$ with $f(0)\neq 0$, and for a generic $a$ in~$\xRing_{<
n}$, \cref{prop:generic_a} shows that the rank of $\hankel{m,d}{a}{f}$ is $n$,
with $d=\lceil n/m\rceil$. From \cref{prop:hankel_rank_nu} we then obtain
$\ddfa=\dd[m,m]^{(a,f)}=n$ and the describability of~$H$ in degree $d$.
Therefore, by \cref{prop:compute_Mmm}, \algoName{algo:CandidateBasis} returns a
basis of \(\rmodfa\) and the flag \(\Cert\).

\subsection{Generic rank for a separable~\texorpdfstring{$f$}{f}}
\label{ssec:genranksep}

We now study the rank of $\hkfasimp$, for a generic choice of the roots of
$f$, and for a generic choice of the values of $a$ at these roots,
subject to certain combinatorial conditions.

\subsubsection{Definitions.}
Consider pairwise distinct $\xi_1,\dots,\xi_n$ in an algebraic closure
$\Kbar$ of~$\field$. To such points, we associate the 
polynomial $f = (x- \xi_1)\dotsm(x-\xi_n)$.  We also consider
 $a \in \xRing_{<n}$, and we say that $a$ {\em takes values
  $\lambda_1,\dots,\lambda_r$ at $\xi_1,\dots,\xi_n$ with
  multiplicities} $\ell_1,\dots,\ell_r$ if the following holds:
\begin{itemize}
\item[-]$\lambda_1,\dots,\lambda_r$ are pairwise distinct elements in $\Kbar$;
\item[-] $\ell_1 + \dots + \ell_r=n$, with all $\ell_i$ positive integers;
\item[-] for $i=1,\dots,r$, $a(\xi_{\sigma_i+1}) = \cdots =
  a(\xi_{\sigma_i+\ell_i})= \lambda_i$, where we write $\sigma_i = \ell_1 +
  \cdots +\ell_{i-1}$ (the empty sum for $i=1$ is zero).
\end{itemize}
In view of our application, we also
assume that the $\xi_i$'s are such that $f$ is in $\xRing$.

\subsubsection{Generic rank.}
\begin{proposition}\label{prop:separable}
  Fix positive integers $m \in \{1,\dots,n\}$ and $\ell =
  (\ell_1,\dots,\ell_r)$ such that $\ell_1 + \cdots + \ell_r = n$.  Then, there
  exists a nonzero polynomial $\polbU_{\ell,m}\in\ZZ
  [\varxi_1,\dots,\varxi_n,\varlb_1,\dots,\varlb_r]$ such that the following
  holds.
  For pairwise distinct nonzero $\xi_1,\dots,\xi_n$ in $\Kbar$ such that
  $f = c(x- \xi_1)\dotsm (x-\xi_n)$ with $c\in\field\setminus\{0\}$ is
  in~$\xRing$ and for $a\in\xRing$ that takes values
  $\lambda_1,\dots,\lambda_r$ at $\xi_1,\dots,\xi_n$ with multiplicities
  $\ell_1,\dots,\ell_r$, if $\polbU_ {\ell,m}
  (\xi_1,\dots,\xi_n,\lambda_1,\dots,\lambda_r)$ is nonzero, then
  \[
    \rank{\hkfa}=\ddfa\quad\text{for any }\; d\ge\lceil\ddfa/m\rceil,
  \]
  with in addition the equality
  \[
    \ddfa = \sum_{i=1}^r{\min(\ell_i,m)}.
  \]
  Finally, for any pairwise distinct $\lambda_1,\dots,\lambda_r$,
  the polynomial
  $\polbU_{\ell,m}(\varxi_1,\dots,\varxi_n,\lambda_1,\dots,\lambda_r)$ is
  nonzero and has degree at most $2n^2$.
\end{proposition}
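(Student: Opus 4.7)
The plan is to exploit separability via the CRT identification $\xRing/\langle f\rangle \cong \Kbar^n$ given by evaluation at $\xi_1,\ldots,\xi_n$, under which $M_a$ becomes the diagonal matrix $D=\diag{\lambda_1\idMat{\ell_1},\ldots,\lambda_r\idMat{\ell_r}}$. In this identification, the columns of $X$ become the Vandermonde matrix $V\in\Kbar^{n\times m}$ with $V_{i,j}=\xi_i^{j-1}$, and the rows of $\trsp X$ become the Lagrange-coefficient matrix $W\in\Kbar^{m\times n}$ with $W_{k,j}=[x^{k-1}]L_j(x)$, where $L_j(x)=\prod_{\ell\ne j}(x-\xi_\ell)/(\xi_j-\xi_\ell)$. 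Thus $H_k=\trsp X M_a^k X=WD^kV$ and, over $\Kbar$, the block Hankel matrix factors as $\hkfa=\mathcal W\mathcal V$, with $\mathcal W\in\Kbar^{md\times n}$ stacking $W,WD,\ldots,WD^{d-1}$ vertically and $\mathcal V\in\Kbar^{n\times md}$ juxtaposing $V,DV,\ldots,D^{d-1}V$ horizontally. Since $D$ is diagonalizable with each $\lambda_i$ of multiplicity $\ell_i$, the invariant factors of $y\idMat{n}-M_a$ have degrees $(\#\{i:\ell_i\ge j\})_{j\ge 1}$ in decreasing order; summing the top $m$ and applying \cref{prop:invariant-factors} yields the announced identity $\ddfa=\sum_{j=1}^{m}\#\{i:\ell_i\ge j\}=\sum_{i=1}^{r}\min(\ell_i,m)$.

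\textbf{Rank of $\hkfa$.}
Grouping the columns of $\mathcal W$ by eigenvalue, its column space equals $\sum_{i=1}^r v_i\otimes W_i^{\mathrm{col}}\subseteq\Kbar^d\otimes\Kbar^m$, where $v_i=\trsp{(1,\lambda_i,\ldots,\lambda_i^{d-1})}$ and $W_i^{\mathrm{col}}$ is the column span of the $m\times\ell_i$ block $W_i$ of $W$ indexed by those $\xi_j$ with $a(\xi_j)=\lambda_i$; a symmetric decomposition describes the row space of $\mathcal V$. For generic $(\xi,\lambda)$, each $W_i^{\mathrm{col}}$ attains its maximal dimension $\min(\ell_i,m)$ (from the Vandermonde-like structure of $W_i$ up to multiplication by an invertible diagonal factor), and the linear map $\bigoplus_i W_i^{\mathrm{col}}\to\Kbar^{dm}$, $(w_i)_i\mapsto\sum_i v_i\otimes w_i$, is injective as soon as $dm\ge\sum_i\min(\ell_i,m)=\ddfa$, which is exactly the hypothesis $d\ge\lceil\ddfa/m\rceil$. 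Hence $\rank\mathcal W=\ddfa$ and symmetrically $\rank\mathcal V=\ddfa$. Since $\rank\hkfa=\rank(\mathcal W\mathcal V)\le\min(\rank\mathcal W,\rank\mathcal V)=\ddfa$ in general, it remains to verify that the product retains full rank; by a Zariski-closure argument this follows from exhibiting one specialization where equality holds. Combined with the upper bound $\rank\hkfa\le\ddfa$ of \cref{prop:hankel_rank_nu}, equality then yields $\dd[m,m]^{(a,f)}=\ddfa$ and describability of $H(y)=\trsp X(y\idMat{n}-M_a)^{-1}X$ in degree $d$.

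\textbf{Witness polynomial and degree bound.}
The main obstacle is to turn this heuristic genericity into a concrete nonzero polynomial $\polbU_{\ell,m}$ of degree at most $2n^2$. I would pick the specific $\ddfa\times\ddfa$ minor of $\hkfa$ suggested by the tensor decomposition, write the entries of $\hankel{m,\lceil\ddfa/m\rceil}{\vara}{\varf}$ as polynomials in $\varxi_1,\ldots,\varxi_n,\varlb_1,\ldots,\varlb_r$ by substituting $\varf=c\prod_i(x-\varxi_i)$ and expressing $\vara$'s coefficients via Lagrange interpolation (after clearing the global Vandermonde denominator, which only multiplies the minor by a known nonzero factor), and verify this minor is nonzero. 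The nonvanishing is handled by a convenient specialization: when $d\ge r$ the $v_i$'s are independent and the direct-sum claim reduces to Vandermonde nondegeneracy in $\lambda$ combined with generic position for the $W_i^{\mathrm{col}}$'s, while in the remaining case $d<r$ one degenerates the instance suitably and invokes \cref{lem:hankelwithxm}. Finally, the degree bound $2n^2$ comes from \cref{cor:hankelwithxm}: each entry of the block Hankel matrix has degree at most $2d-2$ in $\vara$'s and $(2d-2)(n-1)$ in $\varf$'s, while the Lagrange substitution is linear in $\varlb$ and of degree $\bigO{n}$ in $\varxi$; together with $\ddfa\le n$ and $d\le\lceil n/m\rceil$, these combine into an overall total degree at most $2n^2$ in $(\varxi,\varlb)$.
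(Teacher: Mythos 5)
Your CRT/diagonalization framework is sound (it is a coordinate form of the paper's Lagrange-basis analysis, with $H_k = WD^kV$ matching \cref{lemma:dimL}), and the determinantal-degree identity $\ddfa = \sum_i \min(\ell_i,m)$ is correct. The genuine gap is the core claim that $\rank{\mathcal{W}\mathcal{V}} = \ddfa$. Having $\rank{\mathcal{W}} = \rank{\mathcal{V}} = \ddfa$ does not determine the rank of the product: one must show $\ker\mathcal{W}\cap\mathrm{im}\,\mathcal{V} = \{0\}$, and you defer this to ``exhibiting one specialization where equality holds'' without producing it. The specialization is constrained --- it must keep $\lambda_1,\dots,\lambda_r$ distinct and preserve the partition $\ell$ --- and your two-case plan does not close the gap. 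For $d\ge r$, ``generic position for the $W_i^{\mathrm{col}}$'s'' is exactly the nontrivial step: when $\ell_i > m$, the $m$-dimensional truncations of $\ell_i$ Lagrange coefficient vectors need not span $\field^m$, and quantifying this is the purpose of $\polc_{\ell,m}$ in \cref{lemma:leftpol}, proved via a Schur-complement computation on a multiplication matrix. For $d<r$, ``degenerate and invoke \cref{lem:hankelwithxm}'' does not work, since that lemma requires $\deg(a)=m$, which is incompatible with prescribing the values of $a$ at the $\xi_i$ according to $\ell$. The missing ingredient is the transfer argument of \cref{lemma:Delta1-nonvanish}: when the Vandermonde-type determinants $\polb_{\ell,m}$ (evaluated at $\xi$ and at $1/\xi$) are nonzero, the rank of $\hkfa$ is the same for every $d\ge\lceil\ddfa/m\rceil$, which lets one compute at $d_0=r$ via the eigenvalue decomposition of \cref{lemma:splitimL} and transfer down.

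The degree bound $2n^2$ is also not recoverable along your route. \cref{cor:hankelwithxm} gives an $n$-minor of $\hankel{m,\lceil n/m\rceil}{\vara}{\varf}$ of degree $\bigO{n^2/m}$ in the coefficients of $\vara$ and $\bigO{n^3/m}$ in those of $\varf$. Substituting Lagrange-interpolated values for $\vara$ (linear in $\lambda$, $\xi$-degree $\Theta(n)$ per coefficient after clearing denominators) and $\varf = \prod_k(x-\varxi_k)$ ($\xi$-degree $\Theta(n)$ per coefficient) turns the $\varf$-degree $\bigO{n^3/m}$ into $\xi$-degree $\bigO{n^4/m}$, which exceeds $2n^2$ for every allowed $m\le n$. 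The paper sidesteps this by constructing the witness polynomial directly in the $(\xi,\lambda)$ variables via the Krylov--Vandermonde matrix $\mathcal{P}_{\ell,m,d}$ of \cref{eq:plmd}: its $\ddfa\times\ddfa$ minors (at $\xi$ and at $1/\xi$) each have degree at most $(\ddfa+m)(m-1)/2$, and the product with $\polc_{\ell,m}$, of degree $(n-1)(n-\ddfa)$, genuinely lands below $2n^2$.
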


\subsubsection{Proof of \cref{prop:separable}.}

The rather long proof is decomposed as follows. First, the expression for the
determinantal degree~$\dd$ is established. For the proof of the rest of the
proposition we exploit the factorization $\hkfasimp = \Lasimp \Rasimp$, that is
analyzed through a series of lemmas. {All along, we use classical linear
algebra notions concerning invariant factors and Smith normal forms, and their
relation to eigenvalues in the case of diagonalizable matrices; see e.g.\
\cite[Ch.\,II and III]{New72} for more background on these aspects.}

The ranks of the matrices $\Rasimp$ and $\Lasimp$ are related to that of a
simple matrix $ \mathcal{P}_{\ell,m,d}$ (see \cref{eq:plmd}). This leads to the
proof that for $d=\lceil \dd/m\rceil$, the rank of $\Rasimp$ and $\Lasimp $ is
$\nu_m$ generically. Then we prove that generically, taking any $d_0 \ge \lceil
\dd/m\rceil$ is sufficient for studying the rank of $\hkfasimp$. The proof is
concluded by establishing that the rank is $\dd$ when $d_0$ is $r$, the number
of distinct values $a(\xi_k)$'s: for this value of $d_0$, we establish that the
intersection of the image of~$\Rasimp$ with the kernel of $\Lasimp$ is reduced
to~0. The polynomial~$\Gamma_{\ell,m}$ and the degree bounds are derived from
the proof.

\paragraph{\it Determinantal degree~$\ddfa$.}

As in the proposition, let $\xi_1,\dots,\xi_n$ be pairwise distinct
in~$\Kbar$ and let $f = c(x- \xi_1)\dotsm(x-\xi_n)$. The Lagrange
interpolation polynomials
\begin{equation}\label{eq:lagrange-basis}
  \lag_k(x)=\frac1{f'(\xi_k)}\frac{f(x)}{x-\xi_k}=\prod_{\ell\neq k}
  \frac{x-\xi_{\ell}}{\xi_k-\xi_\ell},\quad k=1,\dots,n.
\end{equation}
form a basis of $\overline\quotient:=\Kbar[x]/\genBy{f}$.  For
any~$a\in\overline\quotient$, the matrix of multiplication by $a$ is
diagonalizable, its eigenvalues are the values of $a$ at
the~$\xi_j$'s, and the Lagrange polynomials are eigenvectors. The
characteristic polynomial $\chi_a$ of $a$ modulo $f$ is therefore
given by
\[
  \chi_a = \prod_{k=1}^n (y-a(\xi_k)) \in \Kbar[y].
\]
For $1\leq i \leq r$, we define $S_i = \{ k \in \{ 1,\dots,n\} \mid
a(\xi_k)=\lambda_i\}$ and use that
\begin{equation} \label{eq:defSi}
  S_ i= \{\sigma_i+1,\dots,\sigma_i+\ell_i\}.
\end{equation}
With these conventions we have the factorization 
\[
  \chi_a = \prod_{i=1}^r (y-\lambda_i)^{\ell_i},
\]
where the factors $(y-\lambda_i)$ are pairwise coprime. The Smith normal form of $\charmat$ is then known and an explicit
expression for the determinantal degree $\dd$ can be given:
$\charmat$ has~$\max (\ell_i)$ nontrivial invariant factors; for $1
\leq k \leq \max (\ell_i)$, the $k$th one is $\prod_{1 \le i \le r}
(y-\lambda_i)^{\varepsilon_{i,k}}$, where $\varepsilon_{i,k}= 1$ if $k
\le \ell_i $ and $0$ otherwise. From there, recalling from~\cref{eq:defnu} that for $m$ in
$\{1,\dots,n\}$, $\dd$ is the sum of the degrees of the first $m$ such
invariant factors, we have:
\begin{equation}\label{eq:nu:separable}
  \dd=\sum_{k=1}^{\min(m,\max(\ell_i))}{\card{\{i\mid\ell_i\le k\}}}
  =\sum_{i=1}^r{\min(\ell_i,m)}.
\end{equation}
This proves the claim regarding $\dd$ in the proposition (this claim thus
holds without further assumption on the $\xi_i$'s and $\lambda_i$'s).

\paragraph{Maximal rank of $\hkfa$.}
\begin{lemma} \label{lem:dimkrylov}
  Let $A \in \matRing{n}{n}$ and $m \in \NN_{>0}$, and let $\nu$ be the sum of
  the degrees of the $\min(m,n)$ highest degree invariant factors of
  $y\idMat{n}-A$. Then for any collection of \(m\) vectors $v_1, \ldots, v_m \in
  \vecRing{n}$, one has $\dim(\Span(A^i v_j , 0\leq i, 1\leq j\leq m)) \leq
  \nu$.
\end{lemma}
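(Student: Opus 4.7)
The plan is to view $\vecRing{n}$ as a finitely generated torsion $\field[y]$-module $M$ in which $y$ acts as multiplication by $A$. By the structure theorem for finitely generated modules over a principal ideal domain, $M \cong \bigoplus_{i=1}^{n} \field[y]/\sigma_i\field[y]$, where $\sigma_1,\dots,\sigma_n$ are the invariant factors of $y\idMat{n}-A$ (ordered by decreasing degree, with some possibly equal to $1$), so that $\dim_\field M = \deg(\sigma_1)+\cdots+\deg(\sigma_n) = n$.

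Next, I would observe that the subspace $W = \Span(A^i v_j : 0\le i,\ 1\le j\le m)$ is exactly the $\field[y]$-submodule of $M$ generated by $v_1,\dots,v_m$: indeed, it is closed under the action of $A$ (hence of $y$) and contains the generators, and conversely the submodule generated by $v_1,\dots,v_m$ must contain every $A^i v_j$. Being generated by at most $m$ elements and a submodule of a torsion module, $W$ is a finitely generated torsion $\field[y]$-module, and the structure theorem gives $W \cong \bigoplus_{j=1}^{s} \field[y]/\tau_j\field[y]$ with invariant factors $\tau_1,\dots,\tau_s$ in decreasing degree, where $s\le \min(m,n)$ (at most $m$ from the number of generators, and at most $n$ from the submodule property).

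The key ingredient will be the classical interlacing inequality for invariant factors of a submodule over a PID: if $W\subseteq M$ are finitely generated torsion modules with invariant factors $\tau_1,\tau_2,\dots$ and $\sigma_1,\sigma_2,\dots$ respectively (both listed in decreasing degree), then $\tau_j$ divides $\sigma_j$ for every $j=1,\dots,s$. This is a standard consequence of the Smith normal form theory and I would simply cite it from a reference on modules over PIDs such as~\cite{DumFoo04}. In particular, $\deg(\tau_j) \le \deg(\sigma_j)$ for $j=1,\dots,s$.

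Granting this, the conclusion is immediate:
\[
  \dim W \;=\; \sum_{j=1}^{s} \deg(\tau_j) \;\le\; \sum_{j=1}^{s} \deg(\sigma_j) \;\le\; \sum_{j=1}^{\min(m,n)} \deg(\sigma_j) \;=\; \nu,
\]
the middle inequality using $\deg(\sigma_j)\ge 0$ and $s\le \min(m,n)$. The only nontrivial step is the interlacing inequality itself; everything else is a direct application of the PID structure theorem to $M$ and to $W$, together with the identification of $W$ with the $\field[y]$-submodule generated by $v_1,\dots,v_m$.
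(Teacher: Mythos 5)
Your proof is correct, and it takes a cleaner, more abstract route than the paper's. The paper argues concretely: it builds an explicit Krylov-type basis $P$ from the vectors $v_1,\dots,v_m$, conjugates $y\idMat{n}-A$ into a block upper-triangular form with companion blocks, performs unimodular row operations to reveal a triangular factor $T$ having at most $m$ nontrivial invariant factors whose degree-sum is $\dim W$, and only then invokes a divisibility theorem for invariant factors of matrix products (\cite[Thm.\,II.14]{New72}) applied to the resulting factorization of $y\idMat{n}-A$. You short-circuit all of that matrix machinery by working directly in the $\field[y]$-module $M=\vecRing{n}$: you identify $W$ as the submodule generated by $v_1,\dots,v_m$, note it has at most $\min(m,n)$ nontrivial invariant factors, and invoke the interlacing inequality $\tau_j\mid\sigma_j$ for submodules of a finitely generated torsion module over a PID. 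Both proofs are ultimately carried by the same nontrivial divisibility fact; your version exposes this more transparently, while the paper's explicit computation makes the link to the cited matrix-product theorem immediate.

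One caution: the interlacing inequality is \emph{not} a formal consequence of the structure theorem alone, and I am skeptical that Dummit--Foote states it. The structure theorem gives the existence and uniqueness of $\sigma_i$ and $\tau_j$, but the divisibility $\tau_j\mid\sigma_j$ for $N\subseteq M$ requires a separate argument (e.g.\ localize at an irreducible $p$, compare $\dim_{\field}(p^jN/p^{j+1}N)\le\dim_{\field}(p^jM/p^{j+1}M)$ to get the conjugate-partition inequality, dualize to pass between submodules and quotients). You should cite a source that actually contains it --- Newman's book \cite{New72}, which is already in the bibliography, is a natural choice, as is the literature on interlacing of invariant factors / similarity invariants --- rather than a general algebra textbook where it may not appear. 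Once that citation is corrected, your argument is complete and stands as a valid, somewhat more conceptual alternative to the paper's proof.
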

\begin{proof}
  We let $\tilde{\nu} = \dim(\Span(A^i v_j, 0\leq i, 1\leq j\leq m))$.
For $1\leq j \leq m$, let $d_j\geq 0$ be the first index such that
$A^{d_j}v_j \in \Span(v_j, Av_j, \ldots, A^{d_j-1}v_j, \{A^i v_k
\mid 0\leq i,0\leq k < j\})$; if $l\geq d_j$ then $A^l v_j$
also belongs to the latter subspace of~$\field ^n$, which is therefore
stable under left multiplication by~$A$. This holds for any $1\leq
j\leq
m$, hence $d_1+\ldots + d_m=\tilde \nu$, and the matrix
\[
  P_1 =
  \begin{pmatrix}
    v_1 \;\;\; A v_1 \;\;\; \cdots \;\;\; A v_1^{d_1-1} \;\;\; \cdots \;\;\; v_m \;\;\; A v_m \;\;\; \cdots \;\;\; A v_m^{d_m-1}
  \end{pmatrix} \in \vecRing{n\times \tilde \nu}
\]
has rank~$\tilde \nu$ and can be completed into a nonsingular matrix
$P=(P_1\;\;P_2) \in \matRing{n}{n}$. By applying the change of basis $P^{-1}AP$ 
we obtain 
\begin{equation} \label{eq:Frobenius}
P^{-1}(y\idMat{n} - A)P= \begin{pmatrix} y\idMat{\tilde \nu} - C &  B_1 \\ 0 & 
y\idMat{ n - \tilde \nu}- B_2 \end{pmatrix} \in \ymatRing{n}{n},
\end{equation}
where $C\in \matRing{\tilde \nu}{\tilde \nu}$, $B_1\in \matRing{\tilde \nu}{(n-\tilde \nu)}$,
 $B_2\in \matRing{(n-\tilde \nu)}{(n-\tilde \nu)}$. Thanks to the form of $P_1$, 
 the matrix  $C\in \matRing{\tilde \nu}{\tilde \nu}$ is block upper triangular with at most $m$ companion blocks $C_j$ of dimensions $d_j$ on the diagonal (there is no block for $d_j=0$, and at most $n$ of the $d_j$'s are nonzero).
By a unimodular row transformation $U_j\in \ymatRing{d_j}{d_j}$, a matrix
$y\idMat{d_j} - C_j$ can be brought into an upper triangular form
$T_j(y)=U_j(y)(y\idMat{d_j} - C_j)$, which has diagonal entries \(1\) except
for the last entry which is the characteristic polynomial
$\chi^{(j)}=y^{d_j}-\chi^{(j)}_{d_j-1}y^{d_j-1}- \dots - \chi^{(j)}_0$ of $C_j$:
\[
\left(\begin{array}{cccc}
 &  - 1&  &  \\
 & & \ddots  & \\ 
 &  & & -1 \\ 
1 & y & \ldots & y^{d_j-1}
\end{array}\right) \left(y\idMat{d_j} - 
\left(\begin{array}{cccc}
 &  &  &  \chi ^{(j)}_0\\
1 &  &  & \chi ^{(j)}_1\\ 
 & \ddots & &  \vdots \\ 
 &  & 1 &  \chi^{(j)}_{d_j-1}
\end{array}\right)\right)=
\left(\begin{array}{cccc}
1 &  \cdot & \cdot & \cdot \\
& \ddots & \cdot  & \cdot \\ 
 &  & 1 & \cdot \\ 
 &  &  & \chi^{(j)}(y)
\end{array}\right) \in \yRing ^{d_j \times d_j},
\]
Therefore \cref{eq:Frobenius} can be
rewritten as
\begin{equation} \label{eq:FrobeniusSmith}
U(y)P^{-1}(y\idMat{n} - A)P= \begin{pmatrix} T(y) &  \bar B_1(y) \\ 0 & y\idMat{n - \tilde \nu}- B_2 \end{pmatrix}
= 
\begin{pmatrix} \idMat{\tilde \nu} &  \bar B_1(y) \\ 0 & y\idMat{\tilde n- \nu}- B_2 \end{pmatrix}
\begin{pmatrix} T(y) &  0\\  0 & \idMat{n- \tilde \nu} \end{pmatrix},
\end{equation}
where $U = \textrm{diag}(U_1,\ldots U_m, \idMat{ n- \tilde \nu})$ is
unimodular (with no $U_j$ if $d_j=0$), and $T \in \field [y]^{\tilde
  \nu \times \tilde \nu}$ is block upper triangular with diagonal
blocks the $T_j$'s. The matrix $T$ is triangular with $1$'s on the
diagonal except for at most~$m$ entries.  We deduce that the gcd of
the minors of dimension $k$ of $T$ is a unit for $1 \leq k \leq \tilde
\nu -m$, and that $T$ has at most $m$ nontrivial invariant
factors~\cite[Ch.\,II, Eq.\,(13)]{New72}.  The product of these
invariant factors is $\det(T) = \prod _j \chi^{(j)}$, whose degree is
$d_1+\cdots+d_m= \tilde \nu$.  From the matrix product on the right-hand side
of \cref{eq:FrobeniusSmith}, these latter invariant factors divide the
$m$ highest degree invariant factors of $y\idMat{n} -
A$~\cite[Thm.\,II.14]{New72}. From the definition of $\nu$ we
obtain~$\tilde \nu \leq \nu$.
\end{proof}

With $A=M_a$ or $\trsp{M_a}$, and
$\Rasimp, \Lasimp$ from ~\cref{eq:factor_Hk}, for any positive integer
$d$, \cref{lem:dimkrylov} gives
\begin{equation} \label{eq:rankKmd}
\rank\Rasimp\le\dd, ~\rank\Lasimp\le\dd ~\textrm{and}~ \rank \hkfasimp \leq \dd.
\end{equation}

Next, we show that the ranks of both $\Rasimp[\lceil
  \dd/m\rceil]$ and $\Lasimp[\lceil \dd/m\rceil]$ are $\dd$
generically.

\paragraph{The relation of $\Ra$ and
$\La$ to the matrix $\mathcal{P}_{\ell,m,d}$.}

{For $\ell = (\ell_1,\dots,\ell_r)$, $m$ in $\{1,\dots,n\}$, and a
positive integer $d$, we define the matrix
\begin{equation}\label{eq:plmd}
  \mathcal{P}_{\ell,m,d}=
  \begin{pmatrix}
    V_{\varxi} & D_{\varlb}V_{\varxi} & \cdots & D_{\varlb}^{d-1} V_{\varxi}
  \end{pmatrix}
  \in\ZZ[\varxi_1,\dots,\varxi_n,\varlb_1,\dots,\varlb_r]^{n\times md},
\end{equation}
where
\[
  V_{\varxi} = 
  \begin{pmatrix}
    1      & \varxi_1 & \cdots & \varxi_1^{m-1} \\
    \vdots & \vdots   &        & \vdots         \\ 
    1      & \varxi_n & \cdots & \varxi_n^{m-1}
  \end{pmatrix}
  \quad\text{and}\quad
  D_{\varlb} = \diag{\underbrace{\varlb_1,\ldots,\varlb_1}_{\ell_1 \text{ times}},\ldots,\underbrace{\varlb_r,\ldots,\varlb_r}_{\ell_r \text{ times}}}
  .
\]
}
The following lemma summarizes the key properties of
this matrix in relation with the rank of $\Rasimp$ and $\Lasimp$. 

\begin{lemma}\label{lemma:dimL}
  Let $\ell$, $\xi_1,\dots,\xi_n$, $\lambda_1,\dots,\lambda_r$, $f$,
  $a$ and $m$ be as in \cref{prop:separable}, and let $d$ be a positive
  integer. The following holds:
  \begin{itemize}
  \item the rank of $\Rasimp$ is equal to the rank of
    $\mathcal{P}_{\ell,m,d}(\xi_1,\dots,\xi_n,\lambda_1,\dots,\lambda_r)$;
  \item if all $\xi_i$'s are nonzero, the rank of $\Lasimp$ is equal to
    the rank of
    $\mathcal{P}_{\ell,m,d}(1/\xi_1,\dots,1/\xi_n,\lambda_1,\dots,\lambda_r)$.
  \end{itemize}
\end{lemma}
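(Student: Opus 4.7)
My plan is to reduce both rank computations to a rank of the Vandermonde-type matrix $\mathcal{P}_{\ell,m,d}$ by diagonalizing $M_a$ via evaluation at the roots of $f$. Let $V \in \bmatRing{n}{n}$ be the full Vandermonde at $\xi_1,\dots,\xi_n$, with $V_{k,j}=\xi_k^{j-1}$; its first $m$ columns form the paper's matrix $V_\xi$. The evaluation isomorphism $p \mapsto (p(\xi_k))_k$ between $\Kbar[x]/\genBy{f}$ and $\bvecRing{n}$ intertwines multiplication by $a$ with the diagonal multiplication by $D=\diag{a(\xi_1),\dots,a(\xi_n)}$, giving $VM_a=DV$ and hence $M_a=V^{-1}DV$. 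By the ordering convention on the $\xi_i$'s, this $D$ coincides with the matrix $D_\lambda$ from \cref{eq:plmd}.

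The first item then drops out of a direct substitution: $\Rasimp = V^{-1}(VX \mid DVX \mid \cdots \mid D^{d-1}VX)$, and $VX$ equals the paper's $n\times m$ Vandermonde $V_\xi$ (since $X$ selects the first $m$ columns of the identity), so the parenthesized matrix is exactly $\mathcal{P}_{\ell,m,d}(\xi_1,\dots,\xi_n,\lambda_1,\dots,\lambda_r)$. For the second item, the key is to express $\trsp{M_a}$ as a conjugate of $M_a$. Multiplication by $a$ is self-adjoint for the residue pairing $\langle p,q\rangle = \sum_k p(\xi_k)q(\xi_k)/f'(\xi_k)$ on $\Kbar[x]/\genBy{f}$, whose Gram matrix in the monomial basis $(1,x,\dots,x^{n-1})$ is, up to the scalar $1/c$, the anti-diagonal reversal matrix $J$; the general identity $\trsp{M_a}G=GM_a$ therefore specializes to $\trsp{M_a}=JM_aJ^{-1}=JM_aJ$ since $J^2=I$. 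Taking the transpose of $\Lasimp$ and using this reversal gives $(\Lasimp)^{\mathsf T} = J\cdot(JX\mid M_a\,JX\mid\cdots\mid M_a^{d-1}JX)$, so $\rank{\Lasimp}$ equals the rank of the Krylov matrix built from $JX$ instead of $X$.

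To finish, I compute $(VJX)_{i,j}=\xi_i^{n-j}=\xi_i^{n-1}\cdot\xi_i^{1-j}$, i.e.~$VJX = D_\xi^{n-1} V_{1/\xi}$ where $D_\xi=\diag{\xi_1,\dots,\xi_n}$ and $V_{1/\xi}$ is the $n\times m$ Vandermonde at $1/\xi_1,\dots,1/\xi_n$. Under the hypothesis that all $\xi_i$ are nonzero, $D_\xi^{n-1}$ is invertible and commutes with $D$; factoring $V^{-1}$ and $D_\xi^{n-1}$ out of the Krylov matrix identifies $\rank{\Lasimp}$ with the rank of $(V_{1/\xi}\mid D\,V_{1/\xi}\mid\cdots\mid D^{d-1}V_{1/\xi}) = \mathcal{P}_{\ell,m,d}(1/\xi_1,\dots,1/\xi_n,\lambda_1,\dots,\lambda_r)$. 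The main conceptual obstacle is recognizing the reversal identity $\trsp{M_a}=JM_aJ$; once it is in place the rest is short linear algebra, and the inverses $1/\xi_i$ arise organically, thereby explaining the nonvanishing hypothesis on the $\xi_i$'s.
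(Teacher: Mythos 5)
Your first item is correct and is essentially the paper's argument: in the Lagrange (evaluation) basis, the map $\rhoasimp$ associated with $\Rasimp$ has matrix $\mathcal{P}_{\ell,m,d}(\xi_1,\ldots,\xi_n,\lambda_1,\ldots,\lambda_r)$, which is exactly your identity $\Rasimp = V^{-1}\bigl(V_\xi \mid D V_\xi \mid \cdots \mid D^{d-1}V_\xi\bigr)$ after unwinding $M_a = V^{-1}DV$.

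The second item, however, contains a genuine error. The Gram matrix of the residue pairing $\langle p,q\rangle = \sum_k p(\xi_k)q(\xi_k)/f'(\xi_k)$ in the monomial basis is \emph{not} $(1/c)J$. Its $(i,j)$ entry is the power sum $s_{i+j-2}$ with $s_k := \sum_\ell \xi_\ell^{k}/f'(\xi_\ell)$; expanding $1/f$ in $1/x$ shows $s_0=\cdots=s_{n-2}=0$ and $s_{n-1}=1/c$, but $s_n, s_{n+1},\dots$ are generally nonzero (for instance $s_n = -c_{n-1}/c^2$ when $f = cx^n + c_{n-1}x^{n-1}+\cdots$). So $G$ is a Hankel matrix that vanishes strictly above the anti-diagonal, has constant anti-diagonal $1/c$, and is genuinely filled in below. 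A concrete counterexample to your conclusion: for $n=2$, $f = x^2-3x+2=(x-1)(x-2)$, $a=x$ (all $\xi_i\neq 0$), one has
\[
  M_a = \begin{pmatrix}0 & -2\\ 1 & 3\end{pmatrix},\qquad
  \trsp{M_a} = \begin{pmatrix}0 & 1\\ -2 & 3\end{pmatrix},\qquad
  JM_aJ = \begin{pmatrix}3 & 1\\ -2 & 0\end{pmatrix},
\]
so $\trsp{M_a}\neq JM_aJ$, while the actual Gram matrix is $G = \bigl(\begin{smallmatrix}0 & 1\\ 1 & 3\end{smallmatrix}\bigr)$ and one checks $\trsp{M_a}G = GM_a$. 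The self-adjointness of $M_a$ is correct; the conjugating matrix is $G$, not $J$.

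Carrying the correct $G$ through your manipulation replaces $JX$ by $G^{-1}X$, and the factorization $G = \trsp{V} D_{1/f'} V$ gives $VG^{-1}X = D_{f'}\,\itrsp{V} X$, whose rows encode the coefficient vectors of the truncated Lagrange polynomials $[\lag_k]_0^{m-1}$. To turn these into the Vandermonde at $1/\xi_1,\dots,1/\xi_n$, one must pass from $[\lag_k]_0^{m-1}$ to $[\lag_k/f]_0^{m-1} = [1/(f'(\xi_k)(x-\xi_k))]_0^{m-1}$, using that multiplication by the power series $1/f$ is an invertible change of coordinates on $\xRing_{<m}$ — which is precisely where the hypothesis $\xi_i\neq 0$ (hence $f(0)\neq 0$) enters. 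This is the paper's argument. Your $J$-shortcut bypasses exactly this step, and that is why it fails: the nonvanishing of the $\xi_i$'s does not ``arise organically'' from reversal but is needed to invert $f$ as a power series.
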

\begin{proof}
 We use the same notation 
 \[\rhoasimp:\Kbar[x,y]_{<(m,d)}
   \to \Kbar[x]/\genBy{f} \quad\text{and}\quad \lbasimp:\Kbar[x]_{<n}
 \to\Kbar[x]_{<m}^d\] for the mappings induced by scalar extension
 from $\rhoasimp$ and $\lbasimp$ from \cref{subsec:linalginterpret}.

  Taking $(x^iy^j)_{0\le i<m,0\le j<d}$ for basis of~$\Kbar[x,y]_{<(m,d)}$ and the Lagrange basis $\lag_1,\dots,\lag_n$
  for~$\Kbar[x]_{<n}$, the matrix of $\rhoasimp$ is
  $\mathcal{P}_{\ell,m,d}(\xi_1,\dots,\xi_n,\lambda_1,\dots,\lambda_r)$.
  This proves the first point. 

  To prove the second point, take $k$ in $\{1,\dots,n\}$, and let $i$
  in $\{1,\dots,r\}$ be such that $a(\xi_k)=\lambda_i$. The image of
  the Lagrange polynomial~$\lag_k$ by~$\lbasimp$ is the  polynomial vector 
  \begin{align*} 
    \lbasimp(\lag_k)&=\left([\lag_k]_0^{m-1},[a \lag_k\rem f]_0^{m-1},\dots,[a^
        {d-1}\lag_k\rem f]_0^{m-1}\right) \in \Kbar[x]_{<m}^d,\\
    \intertext{and since the Lagrange polynomials are eigenvectors of
      multiplication by~$a$, we get}
       \lbasimp(\lag_k) &=\left([\lag_k]_0^{m-1},[\lambda_i \lag_k]_0^{m-1},\dots,[\lambda_i^
        {d-1}\lag_k]_0^{m-1}\right) \\
      &=\left([\lag_k]_0^{m-1},\lambda_i [\lag_k]_0^{m-1},\dots,\lambda_i^{d-1}
        [\lag_k]_0^{m-1}\right).
  \end{align*}
  Let $L' \in  \bmatRing{(md)}{n}$ be the matrix whose $k$-th
  column (for $k=1,\dots,n$) contains the $md$ coefficients of the
  entries of $\lbasimp(\lag_k)$. This is the matrix of
  $\lbasimp$, if we take the Lagrange basis for the domain~$\Kbar[x]_{<n}$.

  Since all $\xi_i$'s are nonzero, we get $f(0)\neq0$, so that $f$ is
  invertible as a power series.  Because the $\field$-linear
  transformation $b \in  \xRing_{<m} \mapsto [b/f]_0^{m-1}$ is
  invertible, $L'$ has the same rank as the matrix whose columns are the
  coefficients of the vectors
  \[ \left( [[\lag_k]_0^{m-1}/f]_0^{m-1},\lambda_i [[\lag_k]_0^{m-1}/f]_0^{m-1},\dots,\lambda_i^{d-1} [[\lag_k]_0^{m-1}/f]_0^{m-1}\right), \]
  for $i$ and $k$ as above.
  On the other hand, we have
  $[[\lag_k]_0^{m-1}/f]_0^{m-1}=[\lag_k/f]_0^{m-1}$ and
  \[\frac{\lag_k}{f}=\frac{1}{f'(\xi_k)}\frac 1{x-\xi_k}.\] This shows
  that to determine the rank of $L'$, we may as well consider the vectors
  \[\left(\left[\frac1{x-\xi_k}\right]_0^{m-1},\lambda_i\left[\frac1{x-\xi_k}\right]_0^{m-1},\dots,{\lambda_i^{d-1}}\left[
    \frac1{x-\xi_k}\right]_0^{m-1}\right).\] 
  Now, note that 
  \[\left[\frac1{x-\xi_k}\right]_0^{m-1} = -\xi_k \left (1 + \frac{1}{\xi_k} x + \cdots + \frac{1}{\xi_k^{m-1}} x^{m-1}\right).\]
  Thus, up to the factors~$-\xi_k$, taking the $md$ coefficients of
  these vectors and putting them in columns gives us the transpose
  of
  $\mathcal{P}_{\ell,m,d}(1/\xi_1,\dots,1/\xi_n,\lambda_1,\dots,\lambda_r)$.
  This proves the rank equality claimed in the second item.
\end{proof}

\paragraph{The rank of $\Ra$ and $\La$ for $d=\lceil \ddfa/m\rceil$.}

Together with \cref{lemma:dimL}, the next lemma establishes that
the generic rank of $\Rasimp[\lceil \dd/m\rceil]$ and $\Lasimp[\lceil
  \dd/m\rceil]$ is~$\dd$.
Let $\mathcal{R}_{\ell,m}$ be the
$\dd \times \dd$ submatrix of $\mathcal{P}_{\ell,m,\lceil \dd/m\rceil}$ obtained
by extracting the
first~$\min(\ell_i,m)$ rows containing $\varlb_i$, for $i=1,\dots,r$
(see \cref{eq:nu:separable}), and the first~$\dd$ columns (note that
$\mathcal{P}_{\ell,m,\lceil \dd/m\rceil}$ has $m\lceil\dd/m\rceil \ge
\dd$ columns).

\begin{lemma}\label{lemma:Delta1} 
  For $\ell = (\ell_1,\dots,\ell_r)$, $n=\ell_1 + \cdots + \ell_r$ and
  $m$ in $\{1,\dots,n\}$, and for any pairwise distinct
  $\lambda_1,\dots,\lambda_r$ in $\bvecRing{r}$, the determinant
  $\polb_{\ell,m}(\varxi_1,\dots,\varxi_n,\lambda_1,\dots,\lambda_r)$
  of the $\dd \times \dd$ matrix $\mathcal{R}_{\ell,m}$ at
  $\lambda_1,\dots,\lambda_r$ is nonzero.
\end{lemma}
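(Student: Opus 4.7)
My strategy is induction on \(r\), using column operations to peel off one value of \(\lambda\) at a time. The base case \(r=1\) reduces to a square Vandermonde on \(\xi_{1,1},\ldots,\xi_{1,m_1}\), whose determinant \(\prod_{s<s'}(\xi_{1,s'}-\xi_{1,s})\) is nonzero as a polynomial in the \(\xi\)'s regardless of the \(\lambda\)'s.

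For the inductive step with \(r\ge 2\), reorder so that \(\ell_1\ge\cdots\ge\ell_r\). The key column operations are \(C_{(j,l)}\leftarrow C_{(j,l)}-\lambda_1\,C_{(j-1,l)}\) for each column index \((j,l)\) of \(\mathcal{R}_{\ell,m}\) with \(j\ge 1\); after these, the entry at row \((i,s)\), column \((j,l)\) becomes \((\lambda_i-\lambda_1)\,\lambda_i^{j-1}\,\xi_{i,s}^l\), which vanishes when \(i=1\). Thus all entries in the rows \((1,1),\ldots,(1,m_1)\) become zero in the columns with \(j\ge 1\).

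When \(\ell_1\ge m\) (Case A), we have \(m_1=m\), and the top \(m\) rows have nonzero entries only in the first \(m\) columns, where they form an \(m\times m\) Vandermonde \(V\) on \(\xi_{1,1},\ldots,\xi_{1,m}\). The resulting matrix is block triangular, and pulling out \((\lambda_i-\lambda_1)\) from each of the \(m_i\) rows of the bottom-right block corresponding to \(i\ge 2\) and reindexing \(j\to j-1\) identifies that block with \(\mathcal{R}_{\ell^{(-1)},m}\) for \(\ell^{(-1)}=(\ell_2,\ldots,\ell_r)\), yielding
\[
\det\mathcal{R}_{\ell,m}=\det V\cdot\prod_{i\ge 2}(\lambda_i-\lambda_1)^{m_i}\cdot\det\mathcal{R}_{\ell^{(-1)},m}.
\]
The inductive hypothesis applied to \(\ell^{(-1)}\) with the still pairwise distinct \(\lambda_2,\ldots,\lambda_r\) shows this is nonzero as a polynomial in the \(\xi\)'s.

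The main obstacle is Case B, where \(\ell_1<m\), so all \(\ell_i<m\), \(m_i=\ell_i\), and \(\dd=n\); here the top-left block becomes a non-square \(\ell_1\times m\) Vandermonde and the matrix is no longer block triangular. The zero top-right block still permits a Laplace expansion of \(\det\mathcal{R}_{\ell,m}\) as a signed sum over \(\ell_1\)-subsets \(J\subset\{0,\ldots,m-1\}\) of products \(\det V_J\cdot\det M_{J^c}\), where \(V_J\) is an \(\ell_1\times\ell_1\) generalized Vandermonde in the \(\xi_{1,s}\)'s and \(M_{J^c}\) is a complementary \((n-\ell_1)\times(n-\ell_1)\) minor carrying the \(\lambda\)-dependence. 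To show this sum is not identically zero, I plan to use a confluent specialization \(\xi_{i,s}=\xi_i+(s-1)\epsilon\) combined with divided-difference row operations: in the limit \(\epsilon\to 0\), after normalizing by the appropriate power of \(\epsilon\), the determinant reduces to that of a bivariate Hermite interpolation matrix on the staircase monomial basis \(\{x^ly^j:jm+l<n,\ l<m\}\) at the points \((\xi_i,\lambda_i)\) with multiplicities \(\ell_i\). Nondegeneracy of this Hermite matrix, for distinct \(\xi_i\)'s and distinct \(\lambda_i\)'s, amounts to showing that the staircase forms a system of normal forms for the zero-dimensional ideal \(\bigcap_i((x-\xi_i)^{\ell_i},(y-\lambda_i))\); I would establish this by a dimension count (both sides have dimension \(n\)) combined with a further specialization such as \(\xi_i=\lambda_i\) that makes the ideal structure algebraically tractable.
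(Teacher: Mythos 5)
Your Case A reduction (reorder so $\ell_1\ge m$, kill the $\lambda_1$-block by the column operations $C_{(j,l)}\leftarrow C_{(j,l)}-\lambda_1C_{(j-1,l)}$, peel off the $m\times m$ Vandermonde) is correct, but the proof stands or falls on Case B, which you leave as a plan rather than an argument, and that plan has a genuine defect. The single-parameter confluent specialization $\xi_{i,s}=\xi_i+(s-1)\epsilon$ is already too degenerate in small characteristic. Take $\field$ of characteristic $2$, $m=3$, $\ell=(2,2)$ (so $\dd=n=4$ and all $\ell_i<m$): a direct computation gives
\[
\det\mathcal{R}_{\ell,m}
=(\lambda_2-\lambda_1)(\xi_{1,2}-\xi_{1,1})\big[(\xi_{2,1}-\xi_{1,1})(\xi_{2,1}-\xi_{1,2})-(\xi_{2,2}-\xi_{1,1})(\xi_{2,2}-\xi_{1,2})\big],
\]
which is a nonzero polynomial in the $\xi$'s (e.g.\ $(\xi_{1,1},\xi_{1,2},\xi_{2,1},\xi_{2,2})=(0,1,t,t^2)$ gives $(\lambda_2-\lambda_1)(t+t^4)$), yet under your specialization it becomes $-2\epsilon^2(\lambda_2-\lambda_1)(\xi_2-\xi_1)$, which is identically zero in characteristic $2$. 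So there is no nonzero leading $\epsilon$-coefficient to extract, and the Hermite matrix you would pass to is singular even though the statement you are trying to prove is true. Independently of characteristic, your last step does not close either: the dimension count shows the staircase span and the quotient $\field[x,y]/\bigcap_i\langle(x-\xi_i)^{\ell_i},y-\lambda_i\rangle$ both have dimension $n$, but that is compatible with the evaluation map having a kernel, so it does not yield full rank; and the further specialization $\xi_i=\lambda_i$ does not make the staircase a visible normal-form basis for this intersection.

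The paper sidesteps the entire case split with one specialization that collapses the bivariate structure to a univariate one. It chooses $\delta$ with $\delta+\lambda_i\ne 0$ and takes $\xi_{\sigma_i+1},\dots,\xi_{\sigma_i+\min(\ell_i,m)}$ to be distinct roots of $x^m-(\delta+\lambda_i)$ when $m$ is invertible in $\field$, or of $x^m+x-\lambda_i$ when $p\mid m$ (separable because its derivative is $1$). Then $\lambda_i$ is a fixed polynomial in $\xi$, and a binomial expansion of $\lambda_i^j\xi^l$ expresses each column of $\mathcal{R}_{\ell,m}$ as a unitriangular combination of pure Vandermonde columns; the determinant becomes the Vandermonde at $\dd$ pairwise distinct points and is nonzero. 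This handles $\ell_i\ge m$ and $\ell_i<m$ uniformly, needs no induction and no confluent limit. If you wish to keep your inductive framing, the cleanest fix is to replace your Case B sketch with exactly this specialization.
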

\begin{proof}
  We prove the nonvanishing property by exhibiting a vector
  $(\xi_1,\dots,\xi_n) \in \bvecRing{n}$ for which the evaluation
  $\polb_{\ell,m} (\xi_1,\dots,\xi_n,\lambda_1,\dots,\lambda_r)$ is
  not zero.  In what follows, for $i=1,\dots,r$, recall that we write
  $\sigma_i = \ell_1 + \cdots + \ell_{i-1}$, so that the rows involving
  $\varlb_i$ in $\mathcal{P}_{\ell,m,\lceil \dd/m\rceil}$ have indices
  $\sigma_i+1,\dots,\sigma_i+\ell_i$ (see \cref{eq:defSi}).

  Assume first that $m$ is invertible in~$\field$, and choose $\delta$
  in $\Kbar$ such that $\delta + \lambda_i \ne 0$ for
  $i=1,\dots,r$. Then, for all $i$, the
  polynomial~$x^m-(\delta+\lambda_i)$ is separable, since its
  discriminant is $m^m(\delta+\lambda_i)^{m-1}$, and we choose
  $\xi_{\sigma_i+1},\dots,\xi_{\sigma_i+\min(\ell_i,m)}$ to be pairwise
  distinct roots of this polynomial in $\Kbar$. If $m < \ell_i$,
  we further take $\xi_{\sigma_i+m+1},\dots,\xi_{\sigma_i+\ell_i}$ arbitrary
  in $\Kbar$ (note that $\polb_{\ell,m}$ does not depend on
  these quantities). Now, for any $\xi,\lambda$ such that $\xi^m =
  \delta+\lambda$, and for $j \ge 1$, we have $\lambda^j=\xi^{jm} +
  \sum_{k=1}^{j} \binom{j}{k} (-\delta)^k\xi^{(j-k)m}$. Up to
  invertible linear combinations of its columns, $\mathcal{R}_{\ell,m}
  (\xi_1,\dots,\xi_n,\lambda_1,\dots,\lambda_r)$ is thus the
  Vandermonde matrix at the roots
  $\xi_{\sigma_i+1},\dots,\xi_{\sigma_i+\min(\ell_i,m)}$, $i=1,\dots,r$.
  Since the $\lambda_i$'s are pairwise distinct, all these roots are
  pairwise distinct too, so the determinant
  $\polb_{\ell,m}(\xi_1,\dots,\xi_n,\lambda_1,\dots,\lambda_r)$ is nonzero.

  If $m$ is~0 in~$\field$, then for all $i$, $x^m+x-\lambda_i$ is 
  separable, since its discriminant is $(-1)^{m(m-1)/2} \neq 0$. Again,
  choosing distinct roots of these polynomials and performing linear
  combinations of the columns of~$\mathcal{R}_{\ell,m}$ leads to a
  nonzero Vandermonde determinant.
\end{proof}

{In several steps, we now study the rank of $\hkfa$ and show it is $\ddfa$ for $d$ large enough. 
Note that unlike in \cref{subsec:rankn}  where we were working with $\dd=n$,  additional ingredients are necessary 
in order to deduce this rank from those of  $\Ra$ and $\La$.}

\paragraph{If the rank of $\hkfa$ is $\ddfa$ for some \(d\ge0\), then it is $\ddfa$ for all $d\ge \lceil \ddfa/m\rceil$.}

\begin{lemma}\label{lemma:Delta1-nonvanish} 
  Let $\ell$, $\xi_1,\dots,\xi_n$, $\lambda_1,\dots,\lambda_r$, $a$,
  $f$ and $m$ be as in \cref{prop:separable}.  If $\hkfasimp[m,d_0]$
  has rank~$\dd$ for some~$d_0 \ge \lceil \dd/m\rceil$, and if
  $\polb_{\ell,m}(\xi_1,\dots,\xi_n,\lambda_1,\dots,\lambda_r)$ and
  $\polb_{\ell,m}(1/\xi_1,\dots,1/\xi_n,\lambda_1,\dots,\lambda_r)$
  from \cref{lemma:Delta1} are nonzero, then $\hkfasimp$ has rank
  $\dd$ for all $d\ge\lceil\dd/m\rceil$.
\end{lemma}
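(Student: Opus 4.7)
My plan is to argue that both the image of $\Rasimp$ and the kernel of $\Lasimp$ stabilize once $d \ge \lceil \dd/m\rceil$, so that the rank of the product $\hkfa = \Lasimp\Rasimp$ is constant on this range of $d$, reducing the lemma to the single value $d=d_0$ supplied by hypothesis.

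First I would use the two nonvanishing hypotheses together with \cref{lemma:dimL} to control the factors. The matrix $\mathcal{R}_{\ell,m}$ is a $\dd\times\dd$ submatrix of $\mathcal{P}_{\ell,m,\lceil\dd/m\rceil}$, so $\polb_{\ell,m}(\xi,\lambda)\neq 0$ forces $\rank \mathcal{P}_{\ell,m,\lceil\dd/m\rceil}(\xi,\lambda) \ge \dd$, and hence $\rank \Rasimp[\lceil\dd/m\rceil] \ge \dd$ by \cref{lemma:dimL}. Combined with \cref{eq:rankKmd}, we get $\rank \Rasimp[\lceil\dd/m\rceil] = \dd$. The symmetric argument applied with $1/\xi_1,\dots,1/\xi_n$ (which is the reason we need both evaluations in the hypothesis, and which uses $\xi_i\neq 0$) gives $\rank \Lasimp[\lceil\dd/m\rceil] = \dd$.

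Next I would exploit two monotonicity properties. The image $I_d := \text{image}(\Rasimp) = \Span(M_a^k X,\,0\le k<d) \subseteq \vecRing{n}$ is nondecreasing in $d$, while by \cref{eq:rankKmd} its dimension is bounded above by $\dd$. Since $\dim I_{\lceil\dd/m\rceil}=\dd$, the subspace $I_d$ equals a common $\dd$-dimensional subspace $I$ for every $d\ge\lceil\dd/m\rceil$. Dually, $K_d := \ker(\Lasimp) = \bigcap_{0\le k<d}\ker(\trsp{X}M_a^k)$ is nonincreasing in $d$ with dimension at least $n-\dd$ (by rank-nullity and the rank bound), so by the same stabilization argument $K_d$ equals a common $(n-\dd)$-dimensional subspace $K$ for every $d\ge\lceil\dd/m\rceil$.

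Finally, using $\hkfa=\Lasimp\Rasimp$ and the standard rank identity for a composition,
\[
  \rank \hkfa = \dim I - \dim(I\cap K) = \dd - \dim(I\cap K)
\]
for all $d\ge\lceil\dd/m\rceil$. The right-hand side does not depend on $d$. By hypothesis the identity $\rank\hkfasimp[m,d_0]=\dd$ holds for some $d_0\ge\lceil\dd/m\rceil$, which forces $\dim(I\cap K)=0$; hence $\rank\hkfa=\dd$ for every $d\ge\lceil\dd/m\rceil$, as claimed. The only mild subtlety I anticipate is the stabilization bookkeeping for $K_d$: one must check that $\dim K_{\lceil\dd/m\rceil}=n-\dd$, but this follows immediately from the rank-nullity theorem applied to $\Lasimp[\lceil\dd/m\rceil]$ once the rank has been pinned down by \cref{lemma:dimL} and the $1/\xi$-version of $\polb_{\ell,m}$.
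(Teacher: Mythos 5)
Your proof is correct, and it is essentially the same argument as the paper's, presented in a slightly more conceptual way. The paper constructs nonsingular matrices $P,Q$ realizing column/row operations that zero out the extra blocks of $\Rasimp$ and $\Lasimp$ once the rank has saturated, reducing $\hkfa$ to $\hkfasimp[m,\lceil\dd/m\rceil]$ plus zero blocks; you phrase the same stabilization fact through the column space $I_d$ and kernel $K_d$ and invoke the rank formula $\rank\hkfa = \dim I_d - \dim(I_d\cap K_d)$. Both hinge on the identical observation that once $\rank\Rasimp[\lceil\dd/m\rceil]=\rank\Lasimp[\lceil\dd/m\rceil]=\dd$ (supplied by the two nonvanishing hypotheses and \cref{lemma:dimL}) and the rank is capped at $\dd$ by \cref{eq:rankKmd}, the image of $\Rasimp$ and kernel of $\Lasimp$ are independent of $d\ge\lceil\dd/m\rceil$, so the rank of the product is too; then the hypothesis at $d_0$ pins down the common value.
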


\begin{proof} 
  Since $\polb_{\ell,m}(\xi_1,\dots,\xi_n,\lambda_1,\dots,\lambda_r)$
  is nonzero, $\mathcal{P}_{\ell,m,\lceil
    \dd/m\rceil}(\xi_1,\dots,\xi_n,\lambda_1,\dots,\lambda_r)$ has
  rank at least $\dd$, and so does $\Rasimp[\lceil \dd/m\rceil]$
  (\cref{lemma:dimL}).

  As a result, for $d\ge\lceil\dd/m\rceil$, $\Rasimp[d]$ still has
  rank exactly $\dd$ (recall that this rank cannot exceed $\dd$, by
  \cref{eq:rankKmd}). Thus, for such $d$, there exists a nonsingular
  $P\in \matRing{(md)}{(md)}$ such that $\Rasimp[d]P=[\Rasimp[\lceil
      \dd/m\rceil]~0]$, where the zero matrix is $n\times (m(d-\lceil
  \dd/m\rceil))$.  In the same way, since
  $\polb_{\ell,m}(1/\xi_1,\dots,1/\xi_n,\lambda_1,\dots,\lambda_r)$ is
  nonzero, \cref{lemma:dimL} also implies that $\Lasimp[\lceil
    \dd/m\rceil]$ has rank $\dd$, therefore there exists a nonsingular
  $Q\in \matRing{(md)}{(md)}$ such that $Q\Lasimp[d]=
  \trsp{[\trsp{(\Lasimp[\lceil \dd/m\rceil])} ~0]}$. We obtain
  \[
    Q\hkfasimp P = Q \Lasimp[d]\Rasimp[d] P = \begin{pmatrix}
    \hkfasimp[m,\lceil \dd/m\rceil] & 0 \\ 0 & 0 \end{pmatrix}
    \in \matRing{(md)}{(md)},
  \]
  which shows that for $d\ge\lceil\dd/m\rceil$ we have
  $\rank{\hkfasimp}=\rank{\hkfasimp[m,\lceil \dd/m\rceil]}$.
\end{proof}

\paragraph{The rank of $\hkfa[m,r]$ is $\ddfa$ generically.}

To establish that the rank of~$\hkfasimp[m,r]$ is~$\dd$ for generic
choices of $\xi_1,\dots,\xi_n$, we introduce a decomposition into
vector spaces associated to the $\lambda_i$'s. We then study these
spaces separately; their dimensions are ${\min(\ell_i,m)}$,
respectively, leading as expected to a total dimension
$\sum_{i=1}^r{\min(\ell_i,m)}=\dd$.

This is achieved through a description of the images of the
mappings
$\rhoasimp[m,d]$ and $\lbasimp[m,d]$ in terms of polynomials.  Given positive
integers $\ell=(\ell_1,\dots,\ell_r)$ and $\xi_1,\dots,\xi_n$ in
$\bvecRing{n}$, define
\begin{equation}\label{eq:basis-imR}
  P_{i,j}=\sum_{k\in S_i}\xi_k^j\lag_k \in \Kbar[x],\qquad
  i=1,\dots,r,\quad j \ge 0,
\end{equation}
with the Lagrange polynomials $\lag_1,\dots,\lag_n$ and the sets
$S_1,\dots,S_r$ from~\cref{eq:defSi}.

\begin{lemma}\label{lemma:imRseparable}
  Let $\ell$, $\xi_1,\dots,\xi_n$, $\lambda_1,\dots,\lambda_r$, $a$,
  $f$ and $m$ be as in \cref{prop:separable} and let $d$ be a positive
  integer.  The image of~$\rhoasimp$ lies in the linear span of the~$\dd$
  linearly independent polynomials $P_{i,j}$ from \cref{eq:basis-imR},
  for $1 \le i \le r$ and $0 \le j < \min(\ell_i,m)$.
\end{lemma}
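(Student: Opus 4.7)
The plan is to first identify how a generator $x^i a^j \rem f$ of $\operatorname{im}(\rhoasimp)$ decomposes in the Lagrange basis $\lag_1,\dots,\lag_n$ of $\Kbar[x]/\genBy{f}$, then show that the resulting combinations of the polynomials $P_{i',j'}$ in fact lie in the smaller claimed span, and finally verify linear independence.

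First, I would use the fact that the $\lag_k$'s are eigenvectors of multiplication by any element of $\overline\quotient$ (as was already exploited in the proof of \cref{lemma:dimL}). Because $(x^i a^j)(\xi_k) = \xi_k^i \, a(\xi_k)^j$ and $a(\xi_k)=\lambda_{i'}$ whenever $k\in S_{i'}$, we get
\[
  x^i a^j \rem f \;=\; \sum_{k=1}^n \xi_k^i\, a(\xi_k)^j \, \lag_k
  \;=\; \sum_{i'=1}^r \lambda_{i'}^{\,j} \sum_{k\in S_{i'}} \xi_k^i \lag_k
  \;=\; \sum_{i'=1}^r \lambda_{i'}^{\,j}\, P_{i',i},
\]
for $0\le i<m$ and $0\le j<d$. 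This shows immediately that $\operatorname{im}(\rhoasimp)$ is contained in the $\Kbar$-span of $\{P_{i',i}\mid 1\le i'\le r,\;0\le i<m\}$.

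Next, I would reduce this spanning set to the $\dd$ polynomials of the statement by trimming indices $j\ge \ell_i$. The key observation is that for each fixed $i$, the vector $(\xi_k^{\,j})_{k\in S_i}\in \bvecRing{\ell_i}$ lives in the span of $(\xi_k^{\,j'})_{k\in S_i}$ for $0\le j'<\ell_i$, since the $\ell_i$ values $\{\xi_k\mid k\in S_i\}$ are the roots of a polynomial of degree $\ell_i$ in $\Kbar[x]$. Consequently, for any $j\ge\ell_i$, the polynomial $P_{i,j}$ is a $\Kbar$-linear combination of $P_{i,0},\dots,P_{i,\ell_i-1}$, and the span of $\{P_{i',i}\mid 1\le i'\le r,\;0\le i<m\}$ coincides with the span of $\{P_{i',i}\mid 1\le i'\le r,\;0\le i<\min(\ell_{i'},m)\}$.

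It remains to show linear independence of these $\dd=\sum_{i=1}^r \min(\ell_i,m)$ polynomials. Suppose $\sum_{i,j} c_{i,j} P_{i,j}=0$ with $1\le i\le r$ and $0\le j<\min(\ell_i,m)$. Substituting the definition of $P_{i,j}$ and regrouping by $k$,
\[
  \sum_{i=1}^r \sum_{k\in S_i} \Bigl(\sum_{j=0}^{\min(\ell_i,m)-1} c_{i,j}\,\xi_k^{\,j}\Bigr)\lag_k \;=\;0.
\]
Since the Lagrange polynomials form a basis of $\Kbar[x]_{<n}$, the inner coefficient vanishes for every $k$. Fix $i$; then the univariate polynomial $Q_i(X)=\sum_{j<\min(\ell_i,m)} c_{i,j}X^j$ has degree less than $\ell_i$ and vanishes at the $\ell_i$ pairwise distinct values $\{\xi_k\mid k\in S_i\}$, hence $Q_i=0$ and all $c_{i,j}$ are zero. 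The only potential subtlety is confirming that the $\xi_k$'s for $k\in S_i$ are pairwise distinct, but this follows directly from the blanket assumption that $\xi_1,\dots,\xi_n$ are pairwise distinct. This completes the argument.
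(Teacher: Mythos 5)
Your proof is correct and follows essentially the same route as the paper's: decompose in the Lagrange eigenbasis, observe that $\rhoasimp$ lands in the span of the $P_{i,j}$ for $0\le j<m$, trim the indices $j\ge\ell_i$ by noting the Vandermonde vectors $(\xi_k^j)_{k\in S_i}$ for $j<\ell_i$ already span $\bvecRing{\ell_i}$, and establish independence from the invertibility of that Vandermonde block. The only cosmetic difference is that you verify the first step on monomial generators $x^i a^j$ while the paper writes an arbitrary $V\in\Kbar[x,y]_{<(m,d)}$; the content is identical.
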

\begin{proof}
  Let $V(x,y)=\sum_{j=0}^{m-1}{c_j(y)x^j}$ belong to~$\Kbar[x,y]_{<
    (m,d)}$. Lagrange interpolation gives
  \[\rhoasimp(V)=V(x,a)\rem f=\sum_{k=1}^n{V(\xi_k,a(\xi_k))\lag_k}.\]
  Since $V(\xi_k,a(\xi_k))= \sum_{j=0}^{m-1}{c_j(a(\xi_k))\xi_k^j}$,
  we deduce
  \[\rhoasimp(V)=\sum_{i=1}^r{\sum_{j=0}^{m-1}{c_j(\lambda_i)P_{i,j}}}.\]
  For $i=1,\dots,r$, at most~$\ell_i$ of the polynomials $P_{i,j}$,
  $j=0,\dots,m-1$, can be linearly independent, since they are all
  linear combinations of $\ell_i$ linearly independent~$\lag_k$. On the
  other hand, the polynomials~$P_{i,j}$ for $j=0,\dots,\ell_i-1$ are
  linearly independent, due to the linear independence of the
  polynomials~$\lag_k$, and the invertibility of the Vandermonde matrix
  $[\xi_k^j]_{0 \le j < \ell_i} \in
  \bmatRing{\ell_i}{\ell_i}$. This proves that the image
  of~$\rhoasimp$ is included in the span of the polynomials $P_{i,j}$, for
  $i=1,\dots,r$ and $j=0,\dots,\min(\ell_i,m)-1$, as claimed. 
\end{proof}
This polynomial-based interpretation then allows us to use the following
decomposition.
\begin{lemma}\label{lemma:splitimL}
  Let $\ell$, $\xi_1,\dots,\xi_n$, $\lambda_1,\dots,\lambda_r$, $a$,
  $f$ and $m$ be as in \cref{prop:separable}. The rank of
  $\hkfasimp[m,r]$ is the sum of the dimensions of the vector spaces
\begin{equation}\label{eq:defVi}
  V_i=\Span\!\left([P_{i,j}]_0^{m-1},j=0,\dots,\min
  (\ell_i,m)-1\right)
  \end{equation}
  with the polynomials $P_{i,j}$ from \cref{eq:basis-imR}
  for $i=1,\dots,r$.
\end{lemma}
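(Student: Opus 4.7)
My plan is to exploit the factorization $\hkfasimp[m,r] = \Lasimp[m,r] \Rasimp[m,r]$, which represents the composition $\lbasimp \circ \rhoasimp$, so that $\rank{\hkfasimp[m,r]} = \dim \lbasimp\!\left(\image(\rhoasimp)\right)$. The key point is that over $\Kbar$, the Lagrange polynomials $\lag_k$ are eigenvectors of multiplication by $a$, with $a\cdot \lag_k \equiv \lambda_i \lag_k \pmod f$ whenever $k\in S_i$. In particular, each $P_{i,j} = \sum_{k\in S_i}\xi_k^j\lag_k$ satisfies $a\cdot P_{i,j}\equiv \lambda_i P_{i,j}\pmod f$. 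Hence the subspace $W_i = \Span(P_{i,j} : 0\le j < \min(\ell_i,m)) \subseteq \Kbar[x]_{<n}$ is $a$-invariant and $a$ acts on it as the scalar $\lambda_i$.

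Next, I would recall from \cref{lemma:imRseparable} that $W := \image(\rhoasimp)$ is spanned by all the polynomials $P_{i,j}$ (for $1\le i\le r$ and $0\le j < \min(\ell_i,m)$), which are linearly independent. This yields a direct sum decomposition $W = W_1 \oplus \cdots \oplus W_r$. On each summand, I can compute $\lbasimp$ explicitly: for $q\in W_i$, one has $[a^k q\rem f]_0^{m-1} = \lambda_i^k [q]_0^{m-1}$ for every $k\ge 0$, so
\[
  \lbasimp(q) = \bigl([q]_0^{m-1},\,\lambda_i [q]_0^{m-1},\,\ldots,\,\lambda_i^{r-1}[q]_0^{m-1}\bigr).
\]
In particular, $\dim \lbasimp(W_i)$ equals $\dim [W_i]_0^{m-1} = \dim V_i$.

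To finish, I need to show that $\lbasimp(W) = \lbasimp(W_1) + \cdots + \lbasimp(W_r)$ is actually a direct sum, so that $\dim \lbasimp(W) = \sum_i \dim V_i$. This is where the choice $d=r$ becomes essential: if $v_1,\ldots,v_r$ with $v_i\in V_i$ satisfy $\sum_i \lambda_i^k v_i = 0$ in $\Kbar[x]_{<m}$ for $k=0,\ldots,r-1$, then componentwise on each coefficient in $\xRing_{<m}$ this is an $r\times r$ linear system whose matrix is the Vandermonde $(\lambda_i^k)_{0\le k<r,\,1\le i\le r}$; since the $\lambda_i$ are pairwise distinct, this Vandermonde is invertible, forcing $v_i=0$ for all $i$. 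The main (and only) delicate point of the argument is this Vandermonde-based independence, which is precisely what justifies why we set $d$ equal to the number $r$ of distinct values taken by $a$ on the roots of $f$. Combining the three steps gives $\rank\hkfasimp[m,r] = \dim \lbasimp(W) = \sum_{i=1}^r \dim V_i$, as claimed.
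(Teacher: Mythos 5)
Your approach is essentially the same as the paper's — the decomposition into $a$-invariant eigenspaces $W_i$ supported on the Lagrange basis, the explicit formula for $\lbasimp[m,r]$ restricted to each $W_i$, and the Vandermonde argument for the direct-sum decomposition of $\lbasimp[m,r](W_1)+\cdots+\lbasimp[m,r](W_r)$ are all exactly the paper's steps. But there is a genuine gap at the start: you write that \cref{lemma:imRseparable} gives $W := \image(\rhoasimp[m,r]) = \Span(P_{i,j})$, whereas that lemma only establishes the \emph{inclusion} $\image(\rhoasimp[m,r]) \subseteq \Span(P_{i,j})$. With only the inclusion, your argument yields $\rank\hkfasimp[m,r] = \dim\lbasimp[m,r](\image(\rhoasimp[m,r])) \le \sum_i \dim V_i$, not equality, and the claimed direct-sum decomposition $W = W_1 \oplus \cdots \oplus W_r$ is not justified since $W$ might be a proper subspace of the span.

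Closing this gap is a nontrivial part of the paper's proof: it is precisely the claim that $\Rasimp[r]$ has rank $\dd$, equivalently that $\mathcal{P}_{\ell,m,r}(\xi_1,\dots,\xi_n,\lambda_1,\dots,\lambda_r)$ has rank $\dd$. The paper shows this by extracting a $\dd\times\dd$ submatrix, performing a triangular change of columns to the interpolation basis $Q_{ij}(x,y)=\prod_{h<i}(x-\lambda_h)\prod_{k<j}(y-\xi_{ik})$, and observing that the resulting matrix is triangular with nonzero diagonal entries $Q_{ij}(\lambda_i,\xi_{ij})$ — a generalized-Vandermonde nonvanishing argument. Without this step, the image equality is simply asserted rather than proved, and that is the missing idea in your proof.
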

\begin{proof}
  We first claim that for $d=r$, $\Rasimp[r]$ has rank $\dd$, or
  equivalently (\cref{lemma:dimL}) that $\mathcal{P}_{\ell,m,r}$ has
  rank $\dd$ at $(\xi_1,\dots,\xi_n,\lambda_1,\dots,\lambda_r)$.
  Indeed,  we can extract from
  $\mathcal{P}_{\ell,m,r}$ a $\dd\times \dd$ submatrix by keeping the
  first~$\min(\ell_i,m)$ rows indexed by $\varlb_i$, for
  $i=1,\dots,r$, and the columns containing the monomials
  $\varlb^{i-1},\dots,\varlb^{i-1}\xi^{\min(\ell_i,m)-1}$, for
  $i=1,\dots,r$. {The columns of this matrix contain 
  the evaluations
  of the polynomials~$x^iy^j$ for $i=0,\dots,r-1$ and
  $j=0,\dots,\min
  (\ell_{i+1},m)-1$ at the points $(\lambda_i,\xi_{ij})$ for
  $i=1,\dots,r$
  and $\xi_{ij}=\xi_{\ell_1+\dots+\ell_{i-1}+j}$ for
  $j=1,\dots,\min(\ell_i,m)$. Now, up to linear combinations of its
  columns, the determinant of this matrix is the same as that of the
  matrix whose columns evaluate the polynomials 
  \[Q_{ij}(x,y):=\prod_{h=1}^{i-1}{(x-\lambda_h})\prod_{k=1}^{j-1}
  (y-\xi_{ik}),\qquad i=1,\dots,r,\quad j=1,\dots,\min
  (\ell_i,m). \]
  The latter matrix is triangular; its diagonal elements are $Q_{ij}
  (\lambda_i,\xi_{ij})\neq0$, showing that the matrix is nonsingular
  and therefore that $\Rasimp[r]$ has rank at least $\dd$. (More
  general determinant factorizations of this kind are considered by 
  \cite{BuckColeyRobbins1992}, \cite[\S2]{GascaMartinez1987}.)}
  Using \cref{eq:rankKmd} we deduce that $\Rasimp[r]$ has rank exactly $\dd$ as announced, and from \cref{lemma:imRseparable}, we
  know that the image of  
  $\rhoasimp[m,r]$ is the span of the polynomials $P_{i,j}$ defined in
  that lemma.

  It follows that the rank of $\hkfasimp[m,r]$ is the dimension of the
  span of the image $\lbasimp[m,r](P_{i,j})$. For $i =1,\dots,r$, and
  $j=0,\dots,\min(\ell_i,m)-1$,
  \begin{align*}
    \lbasimp[m,r](P_{i,j})&=\left([P_{i,j}\rem f]_0^{m-1},[a P_{i,j}\rem f]_0^{m-1},\dots,[a^{r-1}P_{i,j}\rem f]_0^{m-1}\right), \\
    \intertext{and since the Lagrange polynomials are eigenvectors of
      multiplication by~$a$, we get}
    \lbasimp[m,r](P_{i,j})    &=\left([P_{i,j}]_0^{m-1},[\lambda_i P_{i,j}]_0^{m-1},\dots,[\lambda_i^{r-1}P_{i,j}]_0^{m-1}\right) \\
    &=\left([P_{i,j}]_0^{m-1},\lambda_i [P_{i,j}]_0^{m-1},\dots,\lambda_i^{r-1}[P_{i,j}]_0^{m-1}\right).
  \end{align*}

{Let $v_{i,j}=\left(0,\ldots, 0, [P_{i,j}]_0^{m-1}, 0,\ldots , 0\right)$ be $[P_{i,j}]_0^{m-1}$ times the $i$-th canonical vector in $\Kbar[x]_{<m}^r$, seen as row vector.  The span of the $\lbasimp[m,r](P_{i,j})$'s multiplied on the right by the inverse of the Vandermonde matrix
  associated to the $\lambda_i$'s is the span of the 
$v_{i,j}$'s . By grouping the $v_{i,j}$'s for each $i$,} this yields a block-diagonal matrix with
  blocks that span the spaces~$V_i$ of the lemma. The result on the rank
  follows.
\end{proof}
The dimensions of the vector spaces from \cref{eq:defVi} can now be
analyzed separately.
\begin{lemma}\label{lemma:leftpol}
  Fix positive integers $\ell = (\ell_1,\dots,\ell_r)$ such that
  $\ell_1 + \cdots + \ell_r = n$, and $m$ in $\{1,\dots,n\}$.  There
  exists a nonzero polynomial
  $\polc_{\ell,m}\in\ZZ[\varxi_1,\dots,\varxi_n]$ of degree at
  most $(n-1)(n-\dd)$ such that if pairwise distinct nonzero
  $\xi_1,\dots,\xi_n$ do not form a zero of $\polc_{\ell,m}$, then
  $V_i$ from \cref{eq:defVi} has dimension $\min(\ell_i,m)$ for all
  $i$.
\end{lemma}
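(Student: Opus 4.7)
The plan is to reduce the equality $\dim V_i = \min(\ell_i, m)$ to a concrete linear-algebra condition on the Lagrange polynomials $\lag_k$ for $k \in S_i$, and to split the analysis according to whether $\ell_i \leq m$ (automatic) or $\ell_i > m$ (requires genericity). First I will observe that $V_i$ sits inside $W_i := [\Span(\lag_k : k \in S_i)]_0^{m-1}$, and that every $p \in \Span(\lag_k : k \in S_i)$ factors uniquely as $p = q \cdot h$ with $q = \prod_{\ell \notin S_i}(x - \xi_\ell)$ and $h \in \Kbar[x]_{<\ell_i}$. Since $q(0) \neq 0$ by the nonvanishing hypothesis on the $\xi$'s, $p$ is divisible by $x^m$ if and only if $h$ is, so $\dim W_i = \min(\ell_i, m)$ unconditionally. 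For $\ell_i \leq m$, the $t_i = \ell_i$ generators $P_{i,j}$ for $j < \ell_i$ form a basis of $\Span(\lag_k : k \in S_i)$ by invertibility of the Vandermonde $[\xi_k^j]$, so $V_i = W_i$ has the correct dimension and contributes no factor to $\polc_{\ell,m}$.

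The real work lies in the case $\ell_i > m$, where I must show the $m$ truncations $[P_{i,j}]_0^{m-1}$, $j = 0, \ldots, m-1$, are linearly independent. Via the isomorphism $\Span(\lag_k : k \in S_i) \cong \Kbar[x]_{<\ell_i}$ given by $p = qh \mapsto h$, this is equivalent to asking that $\{1, x, \ldots, x^{m-1}\} \cup \{qx^m, qx^{m+1}, \ldots, qx^{\ell_i - 1}\}$ be a basis of $\Kbar[x]_{<\ell_i}$ modulo $F = \prod_{k \in S_i}(x - \xi_k)$. Equivalently, the $\ell_i \times \ell_i$ evaluation matrix $E^{(i)}$, whose $(k,c)$-entry is $\xi_k^c$ for $c < m$ and $q(\xi_k)\xi_k^c$ for $c \geq m$, must be nonsingular. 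Since row $k$ of $E^{(i)}$ depends only on $\xi_k$ and on the $\xi_\ell$'s for $\ell \notin S_i$, a swap $\xi_k \leftrightarrow \xi_{k'}$ with $k,k' \in S_i$ swaps two rows of $E^{(i)}$, so $\det E^{(i)}$ is alternating in $\{\xi_k : k \in S_i\}$ and thus divisible by the Vandermonde $V^{(S_i)} := \prod_{k<k' \in S_i}(\xi_{k'} - \xi_k)$ of degree $\binom{\ell_i}{2}$. I will define $\polc_i := \det E^{(i)} / V^{(S_i)} \in \ZZ[\varxi_1, \ldots, \varxi_n]$.

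To bound the degree of $\polc_i$, I factor $E^{(i)} = \tilde V \cdot \tilde \Pi$, where $\tilde V = [\xi_k^j]_{k \in S_i,\, 0 \leq j < n}$ is an $\ell_i \times n$ generalized Vandermonde depending only on $\{\xi_k : k \in S_i\}$, and $\tilde \Pi$ is an $n \times \ell_i$ matrix (depending only on the $\xi_\ell$'s for $\ell \notin S_i$) that encodes the identity block in the first $m$ columns and the coefficients of $q$ in the last $\ell_i - m$ columns. By Cauchy--Binet, only index sets $J \supseteq \{0, \ldots, m-1\}$ contribute to $\det E^{(i)}$; writing $J = \{0, \ldots, m-1\} \cup J'$, the total degrees of $\det \tilde V[:,J]$ and $\det \tilde \Pi[J,:]$ each involve the sum $\sum_{j' \in J'} j'$ with opposite signs and therefore cancel in the product, leaving a total degree $\binom{m}{2} + \binom{\ell_i - m}{2} + (\ell_i - m)(n - \ell_i + m)$ independent of $J$. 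Subtracting the degree $\binom{\ell_i}{2}$ of $V^{(S_i)}$ yields $\deg \polc_i \leq (\ell_i - m)(n - \ell_i) \leq (n-1)(\ell_i - m)$. For nonvanishing, I specialize $\xi_\ell \to 0$ for all $\ell \notin S_i$: then $q(x) \to x^{n-\ell_i}$ and $E^{(i)}$ becomes a genuine generalized Vandermonde with distinct exponents $\{0, \ldots, m-1\} \cup \{n-\ell_i+m, \ldots, n-1\}$, whose determinant factors as $V^{(S_i)}$ times a nonzero Schur polynomial.

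Finally, taking $\polc_{\ell,m} := \prod_{i : \ell_i > m} \polc_i$ gives a nonzero polynomial of total degree at most $\sum_{i : \ell_i > m}(n-1)(\ell_i - m) = (n-1)(n - \dd)$. Its non-vanishing at a tuple of pairwise distinct nonzero $\xi$'s forces $\det E^{(i)} \neq 0$ for every $i$ with $\ell_i > m$ (since $V^{(S_i)}$ is automatically nonzero), hence $\dim V_i = m$ for those $i$; combined with the automatic $\dim V_i = \ell_i$ for $\ell_i \leq m$, this gives the claim. I expect the main obstacle to be the Cauchy--Binet degree bookkeeping: careful tracking of the signed $\sum_{j' \in J'} j'$ contributions from the two factors is what turns a naive $(n-1)\ell_i$ bound into the sharp $(n-1)(\ell_i - m)$ that aggregates to $(n-1)(n - \dd)$ across all~$i$.
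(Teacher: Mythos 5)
Your proof is correct, and it shares the paper's overall skeleton---factor out $B_i=\prod_{\ell\notin S_i}(x-\xi_\ell)$ (your $q$) using that $q(0)\neq0$, note that the case $\ell_i\le m$ holds unconditionally, and then build one genericity polynomial $z_i$ per index with $\ell_i>m$ whose degrees aggregate to $(n-1)(n-\dd)$. But your treatment of the hard case $\ell_i>m$ is genuinely different from the paper's. The paper encodes the condition as the non-vanishing of the $m\times m$ leading principal minor $\bar K_i$ of the \emph{multiplication} matrix $M_{\bar C_i}$ (where $C_i=1/B_i\bmod A_i$), invokes the complementary-minor identity $\bar K_i=\det(M_{\bar C_i})\,\bar L_i$ to isolate the polynomial part $\bar L_i$, bounds the degree of each entry of $M_{\bar B_i}$ by $n-1$ via an inductive estimate on the coefficients of $x^s\rem\bar A_i$, and shows $\bar L_i\not\equiv0$ by specializing $\xi_g\to 0$ for $g\in S_i$ (so that $A_i\to x^{\ell_i}$ and $M_{\bar C_i}$ becomes triangular). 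You instead encode the same condition as the non-singularity of the \emph{evaluation} matrix $E^{(i)}$ of the set $\{1,\dots,x^{m-1},qx^m,\dots,qx^{\ell_i-1}\}$ at the roots $\{\xi_k:k\in S_i\}$, factor out the Vandermonde $V^{(S_i)}$ using the alternating property to obtain $z_i\in\ZZ[\varxi_1,\dots,\varxi_n]$, bound its degree by Cauchy--Binet (exploiting the cancellation of $\sum_{j'\in J'}j'$), and show $z_i\not\equiv0$ by the \emph{opposite} specialization $\xi_\ell\to0$ for $\ell\notin S_i$ (so that $q\to x^{n-\ell_i}$ and $E^{(i)}$ becomes a generalized Vandermonde / Schur polynomial). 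In fact $z_i=\pm\bar L_i$, since $E^{(i)}=V_0\,M$ with $V_0$ the Vandermonde and $\det M=\bar L_i$, so the two constructions produce the same polynomial; your Cauchy--Binet estimate $(\ell_i-m)(n-\ell_i)$ is slightly sharper than the paper's $(n-1)(\ell_i-m)$, though both yield the required $(n-1)(n-\dd)$ after summing. Your preliminary observation that $\dim W_i=\min(\ell_i,m)$ unconditionally via the unique factorization $p=qh$ is also a clean way to dispatch the easy case and to see \emph{a priori} that $\dim V_i$ cannot exceed $\min(\ell_i,m)$.
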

\begin{proof} 
  Take $i$ in $\{1,\dots,r\}$, consider the set of indices $S_i=
  \{\ell_1 + \cdots + \ell_{i-1} + 1,\dots,\ell_1 + \cdots +
  \ell_{i}\}$ from~\cref{eq:defSi}.  Let then $A_i=\prod_{k\in
    S_i}(x-\xi_k)$, $B_i=\prod_{k \notin S_i}(x-\xi_k)$ and $C_i=1/
  B_i\bmod A_i$. Note that $A_i$ and $B_i$ have respective degrees
  $\ell_i$ and $n-\ell_i$, and that $C_i$ is well defined, since $B_i$
  and $A_i$ have no common root.

  For $k$ in $S_i$, by construction, $B_i$ divides the Lagrange
  polynomial $\lag_k$, with a quotient of degree
  $n-1-\deg(B_i)=\ell_i-1$. In view of \cref{eq:basis-imR}, $B_i$
  divides $P_{i,j}=\sum_{k \in S_i} \xi_k^j \lag_k$, for all $j\ge 0$,
  and the quotient has degree less than $\ell_i$. We now prove that it
  is actually equal to $x^jC_i\rem A_i$. Since $f = A_i B_i$, for $k$
  in $S_i$, the Lagrange polynomial~$\lag_k=f/(f'(\xi_k)(x-\xi_k))$
  satisfies
  \[\frac{\lag_k}{ B_i}
  =\frac{1}{ f'(\xi_k)}\frac{f}{B_i(x-\xi_k)}
  =\frac{1}{ f'(\xi_k)}\frac{A_i}{x-\xi_k}
  =\frac{C_i(\xi_k)}{A_i'(\xi_k)}\frac{A_i}
  {x-\xi_k}.
  \]
  In particular, for $j \ge 0$, $\xi_k^j \lag_k / B_i$ takes the value
  $\xi_k^j C_i(\xi_k)$ at $\xi_k$, and $0$ at all other roots of
  $A_i$.  Taking the sum over all $k$ in $S_i$ then proves our claim
  that $P_{i,j}/B_i = x^j C_i \rem A_i$. Since $\xi_1,\dots,\xi_n$
  are nonzero, $B_i(0)$ as well is nonzero, so $B_i$ is invertible
  as a power series and $[P_
    {i,j}/B_i]_0^{m-1}=[[P_{i,j}]_0^{m-1}/B_i]_0^{m-1}$. Thus the
  truncated polynomials $[P_{i,j}]_0^{m-1}$, for $0 \le j <
  \min(\ell_i,m)$, are linearly independent if and only if the
  truncated polynomials $[x^jC_i\rem A_i]_0^{m-1}$ are.

  When $\ell_i\le m$, the polynomials $x^jC_i\rem A_i$ have degree
  less than~$m$ and their linear independence follows from that of the
  polynomials $x^j$, $j=0,\dots,\ell_i-1$, since $C_i$ is invertible
  modulo $A_i$. Thus in this case, we always have $\dim(V_i) = \ell_i
  = \min(\ell_i,m)$.

  When $\ell_i>m$, we are going to prove that the polynomials {$[x^jC_i\rem
  A_i]_0^{m-1}$}, $j=0,\dots,m-1$, are linearly independent for a generic choice
  of $\xi_1,\dots,\xi_n$. To achieve this, define the matrix $M_{C_i}$
  whose entry $(j,\ell)$ is the coefficient of $x^{\ell-1}$ in
  $x^{j-1} C_i\rem A_i$ for $j=1,\dots,\ell_i$ and
  $\ell=1,\dots,\ell_i$; this is the multiplication matrix by $C_i$
  modulo $A_i$.  We  also consider its inverse, the multiplication
  matrix $M_{B_i}$ by $B_i$ modulo $A_i$.

  For our claim to hold, it is enough to guarantee that the $m\times
  m$ leading principal minor $K_i$ of $M_{C_i}$ be nonzero. We 
  view this minor as a rational function in $\varxi_1,\dots,\varxi_n$:
  this is done by introducing the polynomials $\bar A_i=\prod_{k\in
    S_i}(x-\varxi_k)$, $\bar B_i=\prod_{k \notin S_i}(x-\varxi_k)$ and
  $\bar C_i=1/\bar B_i\bmod \bar A_i$, all of which are in
  $\mathbb{Q}(\varxi_1,\dots,\varxi_n)[x]$. We can then define the
  matrices $M_{\bar C_i}$ and $M_{\bar B_i}$ of multiplication by
  respectively $\bar C_i$ and $\bar B_i$ modulo $\bar A_i$, and the 
  $m\times m$ leading principal minor $\bar K_i$ of $M_{\bar
    C_i}$. This is a rational function of $\varxi_1,\dots,\varxi_n$,
  whose evaluation at {$\xi_1,\dots,\xi_n$} gives the scalar $K_i \in
  \field$.

  Note first that $\bar K_i$ is not identically zero: if we evaluate
  all $\varxi_g$ at $0$, for $g$ in $S_i$, $\bar A_i$ becomes
  $x^{\ell_i}$, and the matrix $M_{\bar C_i}$ becomes {lower 
  triangular}, with $1/B_i(0)\neq0$. It then remains to estimate the
  degree of a numerator of $\bar K_i$. The Schur complement formula
  gives $\bar K_i = \det(M_{\bar C_i}) \bar L_i$, where $\bar L_i$ is
  the $(\ell_i-m)\times(\ell_i-m)$ lower right minor of the inverse
  $M_{\bar B_i}$ of $M_{\bar C_i}$. The determinant of $M_{\bar C_i}$
  is the resultant of $\bar C_i$ and $\bar A_i$, that is, $1/\prod_{g
    \in S_i,h \notin S_i} (\varxi_g-\varxi_h)$. On the other hand,
  $\bar L_i$ is a polynomial in $\ZZ[\varxi_1,\dots,\varxi_n]$ (since
  $\bar B_i$ and $\bar A_i$ have coefficients in
  $\ZZ[\varxi_1,\dots,\varxi_n]$, and $\bar A_i$ is monic in $x$).

  For $s \ge 0$, write $x^s \rem \bar A_i= c_{s,0} + \cdots +
  c_{s,\ell_i-1}x^{\ell_i-1}$, for $c_{s,t} \in
  \ZZ[\varxi_1,\dots,\varxi_n]$. By induction on~$s$, we obtain the
  bound $\deg(c_{s,t}) \le s-t$. From this, it follows that all entries
  of $M_{\bar B_i}$ have degree at most $n-1 $, and that $\bar L_i$
  has degree at most $(n-1)(\ell_i-m) \le n \ell_i$.  To conclude the
  proof, we let $\polc_{\ell,m}$ be the product of the polynomials
  $\bar L_i$, for $i$ such that $\ell_i > m$. The degree bound follows
  from remarking that $\sum_{\ell_i > m} (\ell_i-m) = n-\dd$.
\end{proof}

\paragraph{Genericity polynomials and degree bounds.}

Until here, the conditions we have seen are the nonvanishing of
$\polb_{\ell,m}(\xi_1,\dots,\xi_n,\lambda_1,\dots,\lambda_r)$,
$\polb_{\ell,m}(1/\xi_1,\dots,1/\xi_n,\lambda_1,\dots,\lambda_r)$,
and $\polc_{\ell,m}(\xi_1,\dots,\xi_n)$.  When nonzero, the first
two quantities allow us to apply \cref{lemma:Delta1-nonvanish} and
obtain the rank of $\hkfasimp[m,\lceil n/m\rceil]$ from any
$\hkfasimp[m,d_0]$ with $d_0 \ge \lceil \dd/m\rceil$;
the third condition
$\polc_{\ell,m}(\xi_1,\dots,\xi_n)\neq 0$  allows us to take
$d_0=r$ thanks to \cref{lemma:splitimL,lemma:leftpol}

The bound on the degree of $\polb_{\ell,m}$ in
$\varxi_1,\dots,\varxi_n$ follows from summing the degrees of the
columns in $\mathcal{P}_{\ell,m,\lceil \dd/m\rceil}$.  Each block of
$m$ columns involves degrees $1+ \cdots + (m-1)=m(m-1)/2$, and we
consider~$\lceil \dd/m\rceil $ such blocks (the last one may not be
complete), for a total of at most $(\dd+m)(m-1)/2$.
Next, consider the term
$\polb_{\ell,m}(1/\varxi_1,\dots,1/\varxi_n,\varlb_1,\dots,\varlb_r)$, which is
not a polynomial in the $\varxi_i$'s. To estimate the degree of its
numerator, observe that it is a $\dd \times \dd$-minor of the matrix
\begin{equation*}\label{eq:bigmat-inv}
  \begin{pmatrix}
    1&\frac 1{\varxi_1}&\dots&\frac 1{\varxi_1^{m-1}}&\varlb_1&\frac 1{\varxi_1}\varlb_1&\dots&\frac 1{\varxi_1^{m-1}}\varlb_1^{\lceil \dd/m\rceil-1}\\
    &&\vdots&&&&\vdots&\\ 
    1&\frac 1{\varxi_n}&\dots&\frac 1{\varxi_n^{m-1}}&\varlb_r&\frac 1{\varxi_n}\varlb_r&\dots&\frac 1{\varxi_n^{m-1}}\varlb_r^{\lceil \dd/m\rceil-1}
  \end{pmatrix}
\end{equation*}
Factoring out (on the right) the diagonal matrix with diagonal
$(1/\varxi_i^{m-1})_{1 \le i \le n}$, we see that the nonvanishing of
$\polb_{\ell,m}(1/\xi_1,\dots,1/\xi_n,\lambda_1,\dots,\lambda_r)$ is
equivalent to the nonvanishing of the corresponding $\dd \times
\dd$-minor $\tilde{\polb}_{\ell,m}$ in
\begin{equation*}
  \begin{pmatrix}
    \varxi_1^{m-1}&\varxi_1^{m-2}&\dots&1& \varxi_1^{m-1}\varlb_1& \varxi_1^{m-2}\varlb_1&\dots&\varlb_1^{\lceil \dd/m\rceil-1}\\
    &&\vdots&&&&\vdots&\\ 
    \varxi_n^{m-1}&\varxi_n^{m-2}&\dots&1& \varxi_n^{m-1}\varlb_r& \varxi_n^{m-2}\varlb_r&\dots&\varlb_r^{\lceil \dd/m\rceil-1}
  \end{pmatrix}.
\end{equation*}
The degree upper bound for $\tilde{\polb}_{\ell,m}$ is
$(\dd+m)(m-1)/2$, as for $\polb_{\ell,m}$.

We then take $\polbU_{\ell,m}=\polb_{\ell,m}\tilde{\polb}_{\ell,m}
\polc_{\ell,m}$ to prove \cref{prop:separable}. For the degree
estimate, note that $(\dd+m)(m-1) + (n-1)(n-\dd) \le 2n^2$.  For
correctness, take pairwise distinct nonzero $\xi_1,\dots,\xi_n$ in
$\Kbar$ and let $a\in\xRing$ take distinct values
$\lambda_1,\dots,\lambda_r$ at $\xi_1,\dots,\xi_n$, with
multiplicities $\ell_1,\dots,\ell_r$. As before, we write
$f=(x-\xi_1)\cdots(x-\xi_n)$, and we assume that $f$ is in
$\xRing$. Finally, we suppose that
$\polbU_{\ell,m}(\xi_1,\dots,\xi_n,\lambda_1,\dots,\lambda_r)$ is
nonzero.  \cref{lemma:splitimL,lemma:leftpol} show that for $d_0=r$,
we have $\rank{\hkfa[m,r]}=\dd$. Since $r\ge\lceil\dd/m\rceil$, by
\cref{lemma:Delta1-nonvanish}, it is then also the case for $\hkfa$
for all $d\ge\lceil\dd/m\rceil$, as claimed.

The only remaining claim is that for any pairwise distinct
$\lambda_1,\dots,\lambda_r$, $\polbU_{\ell,m}(\bar \xi_1,\dots,\bar
\xi_n,\lambda_1,\dots,\lambda_r)$ is a nonzero polynomial in $\bar
\xi_1,\dots,\bar \xi_n$. That $ \polc_{\ell,m}$ is nonzero is in
\cref{lemma:leftpol} (this polynomial does not depend on
$\lambda_1,\dots,\lambda_r$); \cref{lemma:Delta1} proves that
$\polb_{\ell,m}(\bar \xi_1,\dots,\bar
\xi_n,\lambda_1,\dots,\lambda_r)$ is nonzero. That lemma also implies that
$\polb_{\ell,m}(1/\bar \xi_1,\dots,1/\bar
\xi_n,\lambda_1,\dots,\lambda_r)$ is nonzero (as a rational
function), and as a consequence, this is also the case for $\tilde
\polb_{\ell,m}(\bar \xi_1,\dots,\bar
\xi_n,\lambda_1,\dots,\lambda_r)$. The claim for $\polbU_{\ell,m}$
is thus proved.

\section{A randomized composition algorithm through change of basis}
\label{sec:composition_randomized}

In this section we give the base case of our modular composition
algorithm that is used when~$f$ is either separable or purely
inseparable (which includes the case of power series). The core
\algoName{algo:ModularCompositionBaseCase} is studied in
\cref{sec:composition_randomized:algo}, and a variation for computing
annihilating polynomials is given in \cref{sec:minpoly_randomized}.

The algorithm of \cref{sec:bivcomposition} performs bivariate modular
composition within our target complexity bound, assuming the knowledge
of a matrix of relations with appropriate dimension and degree.  Since
such a matrix of relations of $\rmodfa$ may not exist for general $a$
and $f$, \algoName{algo:ModularCompositionBaseCase} transports the
computation of $\polp(a)$ in $\quotient=\xRing/\genBy{f}$ to an
isomorphic algebra that is expected to be more favorable to the
computation.

More precisely, we pick a random $\gamma \in \xRing _{<n}$;
generically, its minimal polynomial $\mu_\gamma \in \yRing$ has degree
\(n\) and is also its characteristic polynomial $\chi_{\gamma}$, so
that the powers of $\gamma$ generate $\quotient$. This induces the
$\field$-algebra isomorphism $\phi_\gamma$ of
\cref{def:changeofbasis}; \cref{step:changebasis-mainalgo} of
\algoName{algo:ModularCompositionBaseCase} then computes a polynomial
representative $\alpha$ of $\phi_\gamma (a \bmod f)$ using the change
of basis algorithm of \cref{sec:changeofbasis}.  Note that a matrix of
relations $\rmatfg$ is also obtained at
\cref{step:changebasis-mainalgo} in preparation for the final stage.
Then, with good probability, the conditions for the efficient
computation of a certified matrix of relations $\rmatma$ of $\rmodma$
via the approach of \cref{sec:relmat:certify} are fulfilled.
\cref{basecase:Ralpha} of \algoName{algo:ModularCompositionBaseCase}
 computes this matrix of relations, which then
allows us to obtain the polynomial \(\beta = \polp(\alpha) \rem
\mu_{\gamma}\) at \cref{step:compnewbasis-mainalgo} as seen in \cref{sec:bivcomposition}.
The solution $b= \polp(a) \rem f$ to the initial problem is finally
recovered by applying $\phi ^{-1}_\gamma$ to $\beta \bmod \mu
_{\gamma}$, which amounts to computing $b=\beta (\gamma) \rem f$.
Since we already have $\rmatfg$ at our disposal, $b$ is obtained
with the algorithm of \cref{sec:bivcomposition} as well.

\cref{prop:algo-mod_comp} in \cref{sec:composition_randomized:algo}
shows the correctness of this strategy and bounds its complexity.  We
then study the probability of success for $f$ separable and $f$ purely
inseparable.  The main point is to ensure that appropriate matrices of
relations $\rmatfg$ and $\rmatma$ are actually available.  For
\cref{step:changebasis-mainalgo,step:inversechange-mainalgo} where a
random $\gamma$ is involved, we directly rely on the generic
properties of the associated block Hankel matrix $\hankel{m,\lceil
  n/m\rceil}{\gamma}{f}$ (\cref{prop:generic_a}).  For the
computation of $\rmatma$ we use the fact that $\alpha$ and $\mu
_{\gamma}$ are sufficiently generic, hence also give access to good
properties for the associated block Hankel matrix after the change of
basis.

The probability of failure for a general separable~$f$ is bounded in
\cref{sec:composition_randomized:proof_separable}.  The power series
case and, more generally, the case of purely inseparable $f$ are
treated in
\cref{sec:composition_randomized:proof_inseparable,sec:composition_randomized:proof_inseparable_small}.
For such $f$, the success of
\algoName{algo:ModularCompositionBaseCase} is proven in
\cref{sec:composition_randomized:proof_inseparable} under some
assumptions on the valuation of the input polynomial~$a$ and the
characteristic of $\field$. Still in the case of $f$ purely
inseparable, a complete algorithm is then given in
\cref{sec:composition_randomized:proof_inseparable_small}: when the
valuation is large (with respect to the target value $m \sim n^{\eta}$ with
$\eta$ from \cref{eq:def-beta}), then the minimal polynomial of $a$
modulo $f$ has small degree and we use the extension of Shoup's
algorithm seen in \cref{subsubsec:minpoly}.  For fields $\field$ of
small characteristic, we adapt Bernstein's composition algorithm for
power series~\cite{Ber98} to our general context.

\subsection{Randomized composition}
\label{sec:composition_randomized:algo} 

The procedure is detailed in
\algoName{algo:ModularCompositionBaseCase}.  It uses $n+m$ parameters
from $\field$ that are available as a sequence $r$ of length $n+m$.
The coefficients of the random polynomial~$\gamma$ are given as part
of the input as $r_i$, for $3\leq i \leq n+2$; we require further
parameters in order to reduce to the case where $f(0) \ne 0$ and
$\gcd(a,f)=1$ (\cref{rmk:fat0,rmk:shift}), and for the random column
combination performed by \algoName{algo:MatrixOfRelations}{}.

The parameter $m$ could be taken arbitrarily in $\{1,\ldots , n\}$,
but we choose the specific value $m=\lceil n^{\eta} \rceil$, with
$\eta$ from \cref{eq:def-beta}, as this choice minimizes the overall
cost. The following proposition describes the output of the procedure;
the probability of failure is bounded in
\cref{sec:composition_randomized:proof_inseparable,sec:composition_randomized:proof_separable}.

\begin{algorithm}
  \algoCaptionLabel{ModularCompositionBaseCase}{f,a,\polp,r}  
  \begin{algorithmic}[1]
    \Require $f$ of degree $n$ in $\xRing$, $a\in \xRing_{<n}$, $\polp\in \yRing$, $r \in \vecRing{n+\lceil n^{\eta} \rceil}$ 
    \Ensure $b=\polp(a)\rem f$ or \Fail
    
    \State\InlineIf{$n=1$}{\Return$\polp(a)$} \Comment{$a\in \field$}

    \State $\polp \gets \polp(y-r_1)$, $a \gets a(x) + r_1$; \InlineIf{$\gcd(a,f)\ne 1$} \Return \Fail \label{step:shiftp-mainalgo}

    \State $f \gets f(x + r_2)$; $a \gets a(x+r_2)$; \InlineIf{$f(0)=0$} \Return \Fail \label{step:shifta-mainalgo}
    
    \State $m \gets \lceil n^{\eta} \rceil$  \Comment{With $\eta$ from \cref{eq:def-beta}}
    \label{basecase:m}
    \State\CommentLine{Change of basis: compute a polynomial $\alpha$
    such that $\alpha \equiv \phi_\gamma (a \bmod f)\bmod \mu _{\gamma}$} 
    \label{step:changebasis-mainalgo}
    \Statex \CommentLine{Getting a basis of relations $\rmatfg[{}]$ and the minimal polynomial $\mu_{\gamma}$ of \(\gamma\bmod f\)}
    \Statex $\gamma \gets r_3 + r_4 x + \cdots + r_{n+2}x^{n-1}$
    \Statex $(\rmatfg[{}],\mu_{\gamma},\alpha) \gets \Call{algo:ChangeOfBasis}{f,\gamma,a,m,\lceil n/m\rceil}$
            \Comment{\cref{algo:ChangeOfBasis}}
    \Statex\InlineIf{this call returned \Fail}{\Return\Fail}
    \State \InlineIf{$\mu_{\gamma}(0)=0$} \Return \Fail \label{step:mugammazero}
    \State substitute ``$y$'' by ``$x$'' in $\mu_{\gamma}$ and $\alpha$, which are then in $\xRing$
        \label{basecase:subs1}

    \State \CommentLine{Compute a matrix of relations for $(\alpha, \mu_{\gamma})$} 
           \label{basecase:Ralpha}
    \Statex $\rmatma[{}]\gets\Call{algo:MatrixOfRelations}{\mu_{\gamma},\alpha,m,\lceil n/m\rceil,(r_{n+i})_{3 \le i \le m}}$
    \Comment{\cref{algo:MatrixOfRelations}}
    \Statex\InlineIf{this call returned \Fail}{\Return \Fail}
    \State \CommentLine{Bivariate modular composition in the new basis: \(\beta
    \equiv \polp(\alpha) \bmod \mu_{\gamma}\)}\Comment{\cref{algo:BivariateModularCompositionWithRelationMatrix}}
      \label{step:compnewbasis-mainalgo}
     \Statex $\beta\!\gets\Call{algo:BivariateModularCompositionWithRelationMatrix}{\mu_{\gamma},\alpha,\polp,\rmatma[{}]}$
    \State substitute ``$x$'' by ``$y$'' in $\beta$, which  is then in $\yRing$
      \label{basecase:subs2}
    \State\CommentLine{Inverse change of basis: $b \equiv \phi^{-1}(\beta \bmod \mu_{\gamma}) \bmod f$} 
    \label{step:inversechange-mainalgo}
    \Statex $b\gets\Call{algo:BivariateModularCompositionWithRelationMatrix}{f,\gamma,\beta,\rmatfg[{}]}$
    \Comment{\cref{algo:BivariateModularCompositionWithRelationMatrix}}
    \State \Return $b(x-r_2)$ \label{step:return-mainalgo}
  \end{algorithmic}
\end{algorithm}

\begin{proposition}
  \label{prop:algo-mod_comp} 
  Given $f\in\field[x]$ of degree~$n$, $a\in\field[x]_{<n}$,
  $g\in\field[y]$ with $\deg(g)=O(n)$ and $r\in\field^{n+m}$ with $m=\lceil n^{\eta}\rceil$ and
  $\eta$ from \cref{eq:def-beta},
  \algoName{algo:ModularCompositionBaseCase} returns either $\polp(a)
  \rem
  f$ or \Fail{}; it uses $\softO{n^{\kappa}}$ operations in $\field$,
  with $\kappa < 1.43$ as in \cref{eq:def-gamma}.
\end{proposition}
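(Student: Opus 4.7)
The plan is to chain together the specifications of the subroutines invoked in \algoName{algo:ModularCompositionBaseCase} and then tune the parameter $m$ to balance the resulting costs.

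\emph{Correctness.} The Taylor shifts at \cref{step:shiftp-mainalgo,step:shifta-mainalgo} are invertible and preserve modular composition, so after them one may assume $f(0)\neq 0$ and $\gcd(a,f)=1$, and the unshift at \cref{step:return-mainalgo} recovers $\polp(a)\rem f$. By \cref{prop:algo:ChangeOfBasis}, if \cref{step:changebasis-mainalgo} does not fail, it returns a basis of relations $\rmatfg[{}]$, the minimal polynomial $\mu_\gamma$ of $\gamma\bmod f$ (of degree $n$), and $\alpha\in\yRing_{<n}$ with $\alpha(\gamma)\equiv a\bmod f$. Since $\deg(\mu_\gamma)=n$, the map $\phi_\gamma$ of \cref{def:changeofbasis} is a $\field$-algebra isomorphism sending $\polp(a)\bmod f$ to $\polp(\alpha)\bmod\mu_\gamma$. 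At \cref{basecase:Ralpha} the preconditions of \algoName{algo:MatrixOfRelations} hold: $\mu_\gamma(0)\neq 0$ is tested explicitly at \cref{step:mugammazero}, while $\gcd(\alpha,\mu_\gamma)=1$ follows from $\gcd(a,f)=1$ because $\phi_\gamma$ maps the unit $a\bmod f$ to $\alpha\bmod\mu_\gamma$, which is therefore a unit. If this call succeeds, \cref{prop:certificate} yields a matrix of relations of $\relmod{m'}{\alpha}{\mu_\gamma}$ for some $m'\le\max(1,2(m-1))$. Applying \cref{prop:comp-from-matrix} at \cref{step:compnewbasis-mainalgo} then gives $\beta=\polp(\alpha)\rem\mu_\gamma$, and applying it again at \cref{step:inversechange-mainalgo} with $\rmatfg[{}]$ gives $b=\beta(\gamma)\rem f$; by the isomorphism $b=\phi_\gamma^{-1}(\polp(\alpha)\bmod\mu_\gamma)=\polp(a)\bmod f$, and the final unshift at \cref{step:return-mainalgo} produces the claimed output. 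Any branch that produces \Fail{} is reported as \Fail, so the only two possible return values are $\polp(a)\rem f$ and \Fail.

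\emph{Complexity.} Fix $d=\lceil n/m\rceil$. The three Taylor shifts cost $\softO{n}$. By \cref{prop:algo:ChangeOfBasis} and \cref{prop:certificate}, \cref{step:changebasis-mainalgo,basecase:Ralpha} each cost $\softO{m^\omega d+\bicost{d}}$. Each call to \algoName{algo:BivariateModularCompositionWithRelationMatrix} at \cref{step:compnewbasis-mainalgo,step:inversechange-mainalgo} works with a relation matrix of dimension at most $2m$ and degree at most $d$, applied to a polynomial of $y$-degree $O(n)$ ($\polp$ by assumption, $\beta$ because $\deg(\beta)<\deg(\mu_\gamma)=n$), so \cref{prop:comp-from-matrix} bounds its cost by $\softO{m^\omega(d+n/m)+\bicost{d}}=\softO{m^\omega d+\bicost{d}}$, since $n/m\le d$. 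Summing and substituting $d=\lceil n/m\rceil$, the total cost is
\[
  \softO{m^{\omega-1}n+m\cdot(n/m)^{\omega_2/2}}.
\]
Choosing $m=\lceil n^\eta\rceil$ with $\eta$ as in \cref{eq:def-beta} balances the two summands, giving $\softO{n^\kappa}$ with $\kappa$ as in \cref{eq:def-gamma}; the bound $\kappa<1.43$ follows from the current values of $\omega$ and $\omega_2$ recalled in the introduction.

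The main hurdle is purely bookkeeping: one has to verify that the invariants required by the successive subroutines ($f(0)\neq 0$, $\gcd(a,f)=1$, $\deg(\mu_\gamma)=n$, $\mu_\gamma(0)\neq 0$, and $\gcd(\alpha,\mu_\gamma)=1$) are either established by the Taylor shifts, tested explicitly inside the algorithm, or inherited from the previous invariants through the isomorphism~$\phi_\gamma$. The bound on the probability of failure, which depends on the structure of $f$, is deferred to the subsections that follow.
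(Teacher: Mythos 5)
Your proof is correct and follows essentially the same route as the paper's: chain the specifications of \Call{algo:ChangeOfBasis}{}, \Call{algo:MatrixOfRelations}{} and \Call{algo:BivariateModularCompositionWithRelationMatrix}{}, verify that the preconditions propagate (the paper derives $\gcd(\alpha,\mu_\gamma)=1$ via $\mu_\alpha(0)=\mu_a(0)$ rather than via the unit argument, but these are interchangeable), and balance the two cost terms by the choice of $m$. The only omissions are cosmetic: you do not treat the trivial $n=1$ base case that the algorithm and the paper handle explicitly, and the relation matrices actually have degree bounded by $2\lceil n/m\rceil$ rather than $\lceil n/m\rceil$, which changes nothing in the $\softO{\cdot}$ estimate.
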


\begin{proof}
  If $n=1$ then as $a$ has degree~0, the result is~$\polp(a) \in \field$ and the
  algorithm is correct. The rest of the proof assumes~$n>1$.

{\Cref{step:shiftp-mainalgo,step:shifta-mainalgo} ensure} that $\gcd (a,f)=1$ and $f(0)\neq 0$.
   This does not impact the complexity, as shifting a polynomial of
   degree~$O(n)$ can be achieved in~$\softO{n}$ arithmetic operations
   \cite[Chap.\,1, Pb.\,3.5]{BiPa94}. The same observation applies to
   the last step.
   
   At \cref{step:changebasis-mainalgo}, if
   \algoName{algo:ChangeOfBasis}{} does not return $\Fail$ then by
   \cref{prop:algo:ChangeOfBasis} the matrix $\rmatfg[{}]$ is a basis
   of relations of \(\rmodfg\), $\mu_{\gamma}=\chi_{\gamma}$, and
   $\alpha(\gamma)\equiv a \bmod f$. It follows that $\mu
   _{\alpha}=\mu _a$ since the quotient algebras are isomorphic, and
   $\mu_{\alpha}(0)=\mu_a(0)$ implies $\gcd (\alpha,\mu_{\gamma})=
   \gcd (a,f)=1$.  If the test at \cref{step:mugammazero} does not
   fail then the specifications for the call to
   \algoName{algo:MatrixOfRelations}{} are met; from
   \cref{prop:certificate}, if \cref{basecase:Ralpha} does not return
   $\Fail$ then the matrix~$\rmatma[{}]$ is a matrix of relations
   in~$\langle\mu_\gamma,y-\alpha\rangle$. Both these matrices of
   relations have dimension at most~$2(m-1)$, and degree at
   most~$2\lceil n/m\rceil$; they are obtained in $\softO{m^{\omega} d
     + \bicost{d}} = \softO{m^{\omega}d+md^{\omega_2/2}}$ operations,
   with $d=\lceil n/m\rceil$. This is $\softO{n^{\kappa}}$ arithmetic
   operations, according to \cref{eq:def-beta,eq:def-gamma} and the
   choice of~$m$ at \cref{basecase:m}.

  The variable substitutions at \cref{basecase:subs1,basecase:subs2} are
  harmless; they make notation match with that in Algorithms
  \nameref{algo:MatrixOfRelations} and
  \nameref{algo:BivariateModularCompositionWithRelationMatrix}.

  At \cref{step:compnewbasis-mainalgo}, within the same complexity
  bound as above by \cref{prop:comp-from-matrix}, $\beta$ is
  computed such that $\beta \equiv \polp(\alpha) \bmod \mu_{\gamma}$
  (these polynomials are temporarily in $x$).  After the
  substitution of \cref{basecase:subs2} the latter relation implies
  the existence of a polynomial $h\in\yRing$ such that
  \[\beta(y)=\polp(\alpha(y))+h(y)\mu_\gamma(y).\]
  Since~$\mu_\gamma(\gamma)\equiv 0 \bmod f$, evaluating this
  identity at~$y=\gamma$ results in $b=\beta(\gamma) =  \polp(a) \rem
  f$ at \cref{step:inversechange-mainalgo}.
\end{proof}

\subsection{Randomized annihilating polynomial}
\label{sec:minpoly_randomized} 

If the choice of $\gamma$ ensures that the isomorphism $\phi_{\gamma}$
is well defined (the powers of $\gamma$ generate $\quotient$), then a
univariate polynomial~$\mu$ over $\field$ is such that $\mu(a)\equiv 0
\bmod f$ if and only $\mu(\alpha)\equiv 0 \bmod \mu_{\gamma}$.  Since
\algoName{algo:ModularCompositionBaseCase} computes a matrix of
relations
in~$\langle\mu_\gamma,y-\alpha\rangle$ at \cref{basecase:Ralpha}, an
algorithm for computing such a $\mu$ follows from the results of
\cref{sec:relmat:polynomials}.

\begin{algorithm}
  \algoCaptionLabel{AnnihilatingPolynomial}{f,a,r} 
  \begin{algorithmic}[1]
   \Require $f$ of degree $n$ in $\xRing$, $a\in \xRing_{<n}$,
   $r \in \vecRing{n+\lceil n^{\eta} \rceil}$ 
    \Ensure $\mu$ nonzero in $\yRing_{\leq 4n}$ such that $\mu(a) \equiv 0 \bmod f$ or \Fail
    \State\InlineIf{$n=1$}{\Return $y-a$} \Comment{$a\in \field$}
    \State \CommentLine{Compute a matrix of relations $\rmatma[{}]$ for ($\alpha,\mu_{\gamma}$)}
    with $\alpha=\phi_\gamma(a)$
    \Statex execute \crefrange{step:shifta-mainalgo}{basecase:Ralpha}  of \cref{algo:ModularCompositionBaseCase} 
    \Statex\InlineIf{\Fail{} has been returned by one of these steps}{\Return \Fail}
    \State $\mu \gets \det(\rmatma[{}])$ \Comment{~\cite[Algo.\,2]{LVZ17}} 
  
    \State \Return $\mu$
  \end{algorithmic}
\end{algorithm}

\begin{corollary}
  \label{cor:annihilatingbasecase}
  Given $f\in\field[x]$ of degree~$n$, $a\in\field[x]_{<n}$ and
  $r\in\field^{n+m}$ with $m=\lceil n^{\eta} \rceil$ and~$\eta$ from
  \cref{eq:def-beta}, \algoName{algo:AnnihilatingPolynomial} returns either
  \Fail{} or a nonzero $\mu \in \yRing_{\leq 4n}$ such that $\mu(a)\equiv 0 \bmod
  f$; it uses~$\softO{n^{\kappa}}$ operations in $\field$, with $\kappa < 1.43$
  as in \cref{eq:def-gamma}.
\end{corollary}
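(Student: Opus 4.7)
The plan is to reduce the result to \cref{prop:algo-mod_comp} (for correctness of the preparatory steps) and to \cref{cor:annihilating} (for extracting a univariate annihilating polynomial from a matrix of relations). The base case $n=1$ is immediate: $\mu = y-a$ is nonzero, has degree $1 \leq 4n$, and trivially satisfies $\mu(a)=0\equiv 0\bmod f$.

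Assume now $n \geq 2$ and that the algorithm does not return \Fail. The analysis in the proof of \cref{prop:algo-mod_comp} carries over verbatim to the executed range \crefrange{step:shifta-mainalgo}{basecase:Ralpha}: after the shift by $r_2$ (which enforces $f(0)\neq 0$), the call to \algoName{algo:ChangeOfBasis} produces a polynomial $\mu_\gamma \in \yRing$ of degree exactly $n$, namely the minimal polynomial of $\gamma$ modulo $f(x+r_2)$, together with $\alpha\in\yRing_{<n}$ such that $\alpha(\gamma) \equiv a(x+r_2) \bmod f(x+r_2)$. After the substitution of \cref{basecase:subs1}, the call to \algoName{algo:MatrixOfRelations} returns a matrix of relations $\rmatma[{}]$ of $\rmodma$ of dimension $m' \leq \max(1,2(m-1))$ and degree at most $2\lceil n/m\rceil$. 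Applying \cref{cor:annihilating} with $(\mu_\gamma,\alpha)$ in place of $(f,a)$, the polynomial $\mu = \det(\rmatma[{}])$ is nonzero and satisfies $\mu(\alpha) \equiv 0 \bmod \mu_\gamma$. Since $\deg(\mu_\gamma) = n$, the powers $1,\gamma,\ldots,\gamma^{n-1}$ form a basis of $\xRing/\genBy{f(x+r_2)}$, so the algebra isomorphism $\phi_\gamma$ transports this identity into $\mu(a(x+r_2)) \equiv 0 \bmod f(x+r_2)$. Substituting $x \to x - r_2$ (which leaves the univariate $\mu$ unchanged) yields $\mu(a) \equiv 0 \bmod f$, as required.

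For the degree bound, $\deg(\mu) \leq m' \cdot 2\lceil n/m\rceil \leq 4(m-1)\lceil n/m\rceil$. With $m = \lceil n^\eta\rceil$ and $\eta < 1/2$, one has $m(m-1) \leq n$ (directly checked for small $n$ and asymptotic since $m^2 = \bigO{n^{2\eta}}$), whence $m - 1 \leq n/m$ and $(m-1)\lceil n/m\rceil \leq (m-1)(n/m+1) = n - n/m + m - 1 \leq n$, so $\deg(\mu)\leq 4n$. For the complexity, the executed steps through \cref{basecase:Ralpha} cost $\softO{n^\kappa}$ by the proof of \cref{prop:algo-mod_comp}, and the determinant of $\rmatma[{}]$, of dimension $O(m)$ and degree $O(n/m)$, is computed in $\softO{m^\omega (n/m)} = \softO{m^{\omega-1} n}$ operations by \cite[Thm.\,1.1]{LVZ17}; using the identity $1+\eta(\omega-1)=\kappa$ that follows from combining \cref{eq:def-beta,eq:def-gamma}, this matches the announced bound $\softO{n^\kappa}$. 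The only moderately delicate point is the degree bookkeeping for the $4n$ bound; the rest is a direct assembly of results already established, since the genuinely technical ingredients (the change of basis producing a small matrix of relations, and the structural link between such matrices and annihilating polynomials) are handled in \cref{prop:algo-mod_comp,cor:annihilating}.
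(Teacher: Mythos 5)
Your proof is correct and follows essentially the same route as the paper: reuse the analysis of \cref{prop:algo-mod_comp} for the executed steps to get a matrix of relations of $\rmodma[m']$ with $m'\le 2(m-1)$ and degree at most $2\lceil n/m\rceil$, apply \cref{cor:annihilating} to extract the determinant, transport via $\phi_\gamma$, and bound the degree and cost. The only place your write-up differs in emphasis is the degree bound, where you use the slightly sharper criterion $m(m-1)\le n$ instead of the paper's $m\le\sqrt n$; both reduce to checking $(m-1)\lceil n/m\rceil\le n$, and yours is marginally more robust for very small $n$ where $\lceil n^\eta\rceil$ can exceed $\sqrt n$.
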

\begin{proof}If $n=1$ then as $a \in \field$, $\mu =y-a$ is such that $\mu(a)=0$ and the
  algorithm is correct. Now assume that $n>1$. 
  The annihilating polynomials are left unchanged by the substitution
  $x\gets x+r_2$.  As in the proof of \cref{prop:algo-mod_comp}, if
  failure does not occur then \cref{basecase:Ralpha} computes a matrix
  of relations of~$\rmodma[m']$, for some $m' \le 2(m-1)$, within the
  claimed complexity bound; this matrix has degree at most $2\lceil
  n/m\rceil$. Then \cref{cor:annihilating} shows that $\mu$
  annihilates $\alpha \bmod \minpoly[\gamma]$ and thus $a \bmod f$, and that it
  has degree $\deg(\mu) \leq 4(m-1)\lceil n/m\rceil$. This is at most $4n$ when
  $m\le\sqrt{n}$, which is the case when $m=\lceil n^\eta\rceil$ with $\eta$ as in
  \cref{eq:def-beta}. The complexity then follows from the proof of \cref{prop:algo-mod_comp} and 
  \cref{cor:annihilating} again.
\end{proof}

\subsection{Success of randomization for separable \texorpdfstring{$f$}{f}} 
\label{sec:composition_randomized:proof_separable}

The probabilistic properties of the previous algorithms in the
separable case are summarized in the following.
\begin{proposition} \label{prop:composition-separable} 
  Let $a,f$ be polynomials in $\xRing$ and $\polp$ be in $\yRing$, with $f$
  separable of degree $n$ and $\deg(a) < n$. If $r_1, \ldots , r_ {n+\lceil n^{\eta} \rceil} \in \field$
  are chosen uniformly and independently from a finite subset $S$ of
  $\field$, then Algorithms \nameref{algo:ModularCompositionBaseCase} and
  \nameref{algo:AnnihilatingPolynomial}  return \Fail{} with probability at
  most $6n^2/\card{S}$.
\end{proposition}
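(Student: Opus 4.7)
The plan is to apply a union bound over all events that trigger \Fail{} in \algoName{algo:ModularCompositionBaseCase}, controlling each by producing a nonzero polynomial in the $r_i$'s whose vanishing causes the failure and then invoking Schwartz--Zippel. Since \algoName{algo:AnnihilatingPolynomial} only executes a subset of the steps of \algoName{algo:ModularCompositionBaseCase}, the same bound transfers. The case $n=1$ is trivial; assume $n\ge 2$, and write $\xi_1,\ldots,\xi_n\in\Kbar$ for the pairwise distinct roots of the shifted $f$ (separability is preserved by the shifts at \cref{step:shiftp-mainalgo,step:shifta-mainalgo}).

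First I would dispatch the easy contributions. Bad values of $r_1$ are the at most $n$ roots in $t$ of $\operatorname{Res}_x(a(x)+t,f(x))$, and bad values of $r_2$ are the roots of $f$, together contributing $2n/\card{S}$. For \algoName{algo:ChangeOfBasis} at \cref{step:changebasis-mainalgo}, \cref{prop:algo:ChangeOfBasis} demands $\gcd(\gamma,f)=1$, $\deg(\minpoly[\gamma])=n$, $\dd[m,m]^{(\gamma,f)}=\ddfg$, and describability of $\trsp{X}(\charmat[\gamma])^{-1}X$ in degree $\lceil n/m\rceil$. The last two conditions follow, via \cref{prop:hankel_rank_nu}, from $\rank \hkfg=n$, which is in turn ensured by $\Delta_{f,m}(\gamma)\ne 0$ for the polynomial of degree $\le 2n^2/m$ in $\gamma$'s coefficients provided by \cref{prop:generic_a}. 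The condition $\gcd(\gamma,f)=1$, which also ensures $\minpoly[\gamma](0)\ne 0$ at \cref{step:mugammazero}, amounts to $\prod_i\gamma(\xi_i)\ne 0$, a polynomial of degree $n$ in the $r_i$'s that is nonzero since the $\xi_i$'s are distinct. For $f$ separable, $\deg(\minpoly[\gamma])=n$ is equivalent to $\prod_{i<j}(\gamma(\xi_i)-\gamma(\xi_j))\ne 0$, a polynomial of degree $\binom{n}{2}$, again nonzero by distinctness of the $\xi_i$'s.

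The main obstacle is the call to \algoName{algo:MatrixOfRelations} at \cref{basecase:Ralpha}, which operates on the transported pair $(\alpha,\minpoly[\gamma])$. \cref{prop:certificate} controls its failure probability by $(m-1)/\card{S}$ from the auxiliary randomness $r_{n+3},\ldots,r_{n+m}$, \emph{provided} that $\ddmmma=\ddma$ and that $\trsp{X}(\charmat[\alpha])^{-1}X$ is describable in degree $\lceil n/m\rceil$. These quantities depend nonlinearly on $\gamma$, so I would invoke \cref{prop:separable} via the following identification: once $\deg(\minpoly[\gamma])=n$, the roots of $\minpoly[\gamma]$ in $\Kbar$ are exactly the pairwise distinct values $\gamma(\xi_1),\ldots,\gamma(\xi_n)$, and $\alpha(\gamma(\xi_i))=a(\xi_i)$ for each $i$; hence the multiplicity partition $\ell$ and the distinct values $\lambda_1,\ldots,\lambda_r$ for $(\alpha,\minpoly[\gamma])$ coincide with those for $(a,f)$ and are fixed once $a,f$ are. \cref{prop:separable} then provides a nonzero polynomial $\polbU_{\ell,m}(\varxi_1,\ldots,\varxi_n,\lambda_1,\ldots,\lambda_r)$ of degree at most $2n^2$ in the $\varxi_i$'s whose nonvanishing at $(\gamma(\xi_i))_i$ forces $\rank \hkma=\ddma$, and therefore (via \cref{prop:hankel_rank_nu}) the required equality of determinantal degrees and describability. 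Since the $\xi_i$'s are distinct, the map $(r_3,\ldots,r_{n+2})\mapsto(\gamma(\xi_1),\ldots,\gamma(\xi_n))$ is a bijective linear Vandermonde map, so the pullback is a nonzero polynomial in the $r_i$'s of degree at most $2n^2$.

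Summing the Schwartz--Zippel bounds by a union bound, the total failure probability will be at most
\[
\frac{2n + n + 2n^2/m + n^2/2 + 2n^2 + (m-1)}{\card{S}} \le \frac{6n^2}{\card{S}}
\]
for $m=\lceil n^\eta\rceil\le n$ and $n\ge 2$. The decisive step is the third paragraph: bridging the rank and describability conditions on the transported pair $(\alpha,\minpoly[\gamma])$ back to a clean Zariski-closed condition on the coefficients of $\gamma$, through \cref{prop:separable} combined with the Vandermonde change of coordinates induced by the roots of $f$; every other failure event reduces directly to a standard genericity argument.
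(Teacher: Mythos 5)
Your proposal is correct and follows essentially the same route as the paper: the same decomposition into failure events for $r_1$, $r_2$, the conditions on $\gamma$ (nonvanishing at the roots, discriminant of $\chi_\gamma$, the minor $\Delta_{f,m}$), the transport of \cref{prop:separable} to the pair $(\alpha,\minpoly[\gamma])$ via the Vandermonde change of coordinates induced by the roots of $f$, and the residual $(m-1)/\card{S}$ from \cref{prop:certificate}. The degree bookkeeping matches (the paper uses $n(n-1)/2$ where you round up to $n^2/2$, which still stays within $6n^2$), so this is the paper's proof in all but minor presentation details.
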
 
\begin{proof}
The success of modular composition in
\algoName{algo:ModularCompositionBaseCase} and of the computation of
an
annihilating polynomial in \algoName{algo:AnnihilatingPolynomial}
relies
on: finding good shifts~$r_1$ and~$r_2$ in the first two steps; a
choice of $\gamma$ such that $\mu_\gamma (0)\neq0$ and $\mu_\gamma$
has degree $n$; the availability of matrices of relations
$\rmatfg[{}]$ and~$\rmatma[{}]$. The probability estimate is obtained
by showing the existence of polynomials whose zero sets contain the
values of the parameters~$r_i$ where these properties do not hold. The
probability of avoiding these zero sets is then handled by the
Schwartz-Zippel lemma. In what follows, as in the algorithm, we write
$m=\lceil n^\eta \rceil$.

\paragraph{(1) A value of $r_1$ such that $\gcd(a+r_1,f)\neq1$}
  The resultant of $a(x)+r_1$ and $f(x)$ is nonzero of degree $n$ in
  $r_1$. Bad choices thus occur with probability at most~$n/\card{S}$.

\paragraph{(2) A value of $r_2$ such that $f(0)\neq0$ after the shift ``\(x\gets x+r_2\)''}
The same reasoning as above applies to the coefficient of degree zero of $f(x+r_2)$.

~ 

The next properties all concern the same parameters
$(r_3,\dots,r_{n+m})$, so their failures are not independent
events, and their joint probability is bounded using a product of
polynomials encoding each of them.  Below, we write $\bar \gamma =
\vargamma_0 + \cdots + \vargamma_{n-1}x^{n-1}$, with the
$\vargamma_i$'s new indeterminates, and consider polynomials in $\Kbar[\vargamma_0,\dots,\vargamma_{n-1}]$ to quantify probabilities
of failure.

  \paragraph{(3) The constant coefficient $\mu_\gamma(0)$ is not~0}
  Write $f = c(x- \varphi_1) \cdots (x-\varphi_n)$, for pairwise
  distinct $\varphi_i$ in $\Kbar$ and $c\in\field\setminus\{0\}$.
  The roots of $\mu_\gamma$ are
  the values $\bar\gamma(\varphi_i)$, so $\mu_\gamma(0)$ being nonzero
  is equivalent to $\gcd(\gamma, f)$ being trivial.
  Thus, we let $\Delta_0 \in
  \field[\vargamma_0,\dots,\vargamma_{n-1}]$ be the resultant of
  $\bar\gamma$ and $f$. This polynomial has degree $n$, and choosing
  $\gamma=1$ shows that it is not identically zero.

  \paragraph{(4) The minimal polynomial~$\mu_\gamma$ has degree~$n$}
 For any $\gamma=\gamma_0 +
  \cdots + \gamma_{n-1}x^{n-1}$ in $\xRing_{<n}$, the characteristic
  polynomial $\chi_\gamma \in \yRing$ of $\gamma\bmod f$ factors over
  $\Kbar[y]$ as $\chi_\gamma = \prod_{i=1}^n (y-\xi_i)$, where
  $\xi_i = \gamma(\varphi_i)$ for all $i$. We can thus let $\Delta_1
  \in \Kbar[\vargamma_0,\dots,\vargamma_{n-1}]$ be the product
  $\prod_{1 \le i < j \le n} (\bar\gamma(\varphi_i)
  -\bar\gamma(\varphi_j))$.  This is a polynomial of degree 
  $n(n-1)/2$, and the previous discussion shows that
  $\Delta_1(\gamma_0,\dots,\gamma_{n-1}) \ne 0$ implies that
  $\chi_\gamma$ is separable. In that case, 
{since the $n$ distinct roots of $\chi_\gamma$ must be roots of $\mu_\gamma$, we have 
$\chi_\gamma=\mu_\gamma$.
}
 Finally, the polynomial~$\Delta_1$ itself is
  nonzero since its value at~$(0,1,0,\dots,0)$, i.e. at~$\gamma=x$, is
  not zero.

  \paragraph{(5) The computation of $\rmatfg[{}]$ does not fail}
  Since $f(0)\neq0$, \cref{prop:generic_a} shows that the associated
  block Hankel matrix $\hankel{m,\lceil n/m\rceil}{\gamma}{f}$ has
  rank~$n$ as soon as the coefficients of $\gamma$ avoid the zero set
  of a polynomial~$\Delta_{f,m}$ of degree at most~$2n^2/m$.

  When this condition holds, \cref{prop:hankel_rank_nu} shows that the
  matrix fraction \(\trsp{X}(\charmat[\gamma])^{-1}X\) is describable
  in degree~$\lceil n/m\rceil$, and that $\ddfg = \ddmmfg$. Since
  we also have $\gcd(\gamma, f)=1$ by the item above, and since the
  minimal polynomial~$\mu_\gamma$ of $\gamma\bmod f$ has degree~$n$,
  \cref{prop:algo:ChangeOfBasis} concludes that the computation of
  $\rmatfg[{}]$ is successful.

  \paragraph{(6) The rank of $\hkma$ is equal to $\ddma$ for $d\ge\lceil\ddma/m\rceil$} When the previous properties are
  all satisfied, there exists a $\field$-algebra isomorphism
  $\phi_{\gamma}:\xRing/\genBy{f}\rightarrow \yRing/
  \genBy{\mu_\gamma}$ that maps $a$ to $\alpha$ such that $\alpha
  (\gamma)\equiv a \bmod f$.
  Up to changing the indices of the roots $\varphi_i$, we can assume
  that $a$ takes values $\lambda_1,\dots,\lambda_r$ at
  $\varphi_1,\dots,\varphi_n$ with multiplicities
  $\ell_1,\dots,\ell_r$, for some positive integers
  $\ell_1,\dots,\ell_r$, and pairwise distinct
  $\lambda_1,\dots,\lambda_r$ in $\Kbar$ (as in
  \cref{ssec:genranksep}, the $\varphi_i$'s are assumed to be ordered
  such that $a(\varphi_1) = \cdots =a(\varphi_{\ell_1})=\lambda_1$,
  etc).  Then, since $\xi_i = \gamma(\varphi_i)$ for all $i$, the
  relation $\alpha (\gamma)\equiv a \bmod f$ implies that
  $\alpha(\xi_i)=a(\varphi_i)$ for all $i$, so that $\alpha$ takes the
  values $\lambda_1,\dots,\lambda_r$ at $\xi_1,\dots,\xi_n$ with
  multiplicities $\ell_1,\dots,\ell_r$.

  The assumptions of \cref{prop:separable} are satisfied. If
  $\polbU_{\ell,m} \in
  \ZZ[\varxi_1,\dots,\varxi_n,\varlb_1,\dots,\varlb_r]$ is the
  polynomial defined in that proposition, then when
  $\polbU_{\ell,m}(\xi_1,\dots,\xi_n,\lambda_1,\dots,\lambda_r)$ is
  nonzero, the rank of $\hkma$ is $\ddma$ for
  $d\ge\lceil\ddma/m\rceil$.  

  The relevant polynomial is thus $\Delta_3=\polbU_{\ell,m}
  (\bar\gamma(\varphi_1),\dots,\bar\gamma (\varphi_
  n),\lambda_1,\dots,\lambda_r)\in\Kbar
  [\bar\gamma_0,\dots,\bar\gamma_{n-1}]$. \cref{prop:separable} states
  that $\polbU_{\ell,m}
  (\bar\gamma_0,\dots,\bar\gamma_{n-1},\lambda_1,\dots,\lambda_r)$ is
  nonzero of degree at most~$2n^2$; this is thus also the case for
  $\Delta_3$, since the transformation
  $(\bar\gamma_0,\dots,\bar\gamma_{n-1}) \mapsto
  (\bar\gamma(\varphi_1),\dots,\bar\gamma (\varphi_ n))$ is linear and
  invertible (its matrix is the Vandermonde matrix at
  $\varphi_1,\dots,\varphi_n$).
  
\paragraph{(7) The computation of $\rmatma[{}]$ does not fail} 
  When the previous properties are all satisfied,
  \cref{prop:hankel_rank_nu} applies with $a=\alpha$ and
  $f=\mu_{\gamma}$ and shows that
  $\nu_{m,m}^{(\alpha,\mu_\gamma)}=\nu_m^{(\alpha,\mu_\gamma)}$ and
  \(\trsp{X}(\charmat[\alpha])^{-1}X\) is describable in
  degree~$\lceil\ddma/m\rceil$, where $\mulmat[\alpha]$ is the
  multiplication matrix of $\alpha$ modulo $\mu_\gamma$. Since
  $\mu_\gamma(0)\neq 0$ and $\gcd(\alpha,\mu_\gamma)=\gcd(a,f)=1$, the
  assumptions of \cref{prop:certificate} are satisfied for the
  successful computation of $\rmatma[{}]$ (\algoName{algo:MatrixOfRelations})
  with a probability of failure depending on the choices of
  $(r_{n+3},\dots,r_{n+m})$ and bounded by~$ (m-1)/\card S$.

  \paragraph{Case $n=1$} In that situation steps, (5)--(7) above simplify.
  Since its top left corner is the identity matrix, the rank of the
  block-Hankel matrix is at least~1, which is equal to~$n$, and thus (5)--(7)
  succeed with probability~1 in that case.
  
  \paragraph{Probability bounds}
  The polynomial~$\Delta_0\Delta_1\Delta_{f,m}\Delta_3\in
  \Kbar[\bar\gamma_0,\dots,\bar\gamma_{n-1}]$ is nonzero and has
  degree at most
  \[d_{n,m}={n+\frac{n(n-1)}2}+\frac{2n^2}m+2n^2.\]
  A choice of $(r_3,\dots,r_{n+2})$ that avoids its zero set ensures
  that the properties (3)--(6) hold. The other probabilities have been
  discussed in steps (1), (2) and (7) above. In summary, the
  probability of success is at least
\[
  \begin{cases}
  \left(1-\frac{1}{\card{S}}\right)^{\!3}\ge 1 - \frac 3{\card {S}}&\text{if $n=1$,}\\
  \left(1-\frac{n}{\card{S}}\right)^{\!2}\left(1-
  \frac{d_{n,m}}{\card{S}}\right)\left(1-\frac{m-1}{\card{S}}\right)
  \ge 1-\frac{2n+d_{n,m}+m-1}{\card{S}}&\text{otherwise.}
  \end{cases}\]
 In the second expression, dividing the numerator of the fraction for
 $n\ge 2$ by~$n^2$ gives
\[\frac52+\frac5{2n}+\frac2m+\frac{m-1}{n^2},\]
which decreases as a function of $n$ for $n\ge0$ and, for fixed $n$, decreases
as a function of~$m$ for $m\le n$. Thus it reaches its maximum at $m=1,n=2$,
where its value is $23/4<6$, proving the probability bound for $n\ge2$, while
$3<6$ deals with the case~$n=1$.

The assertion for \algoName{algo:AnnihilatingPolynomial} follows: 
\cref{step:shiftp-mainalgo} apart, it fails in the same cases as 
\algoName{algo:ModularCompositionBaseCase}.
\end{proof}

\subsection{Success of randomization for \texorpdfstring{$f$}{f}  purely inseparable: small valuation} 
\label{sec:composition_randomized:proof_inseparable}

\begin{definition} A degree $n$ polynomial $f$ in $\xRing$ is 
  \emph{purely inseparable} if it has
  only one root in an algebraic closure $\Kbar$, so that it
  factors as $f=(x-\xi)^n$ in $\Kbar[x]$; if $n$ is a unit in
  $\field$, $\xi$ itself is in $\field$.
\end{definition}

{In this section, we study the probabilistic aspects of
\algoName{algo:ModularCompositionBaseCase}{} for such
polynomials.
If $a=a_0+a_v(x-\xi)^v +a_{v+1}(x-\xi)^{v+1}+
  \cdots + a_{n-1}(x-\xi)^{n-1}$, with $a_0=a(\xi)$ and $a_v \ne 0$, then the valuation $v=\val_\xi(a-a(\xi))$ is 
    the order of vanishing of $a-a(\xi)$ at $x=\xi$.  For the moment we work under
two additional assumptions on this valuation:
it is not~0 in~$\field$, and it is at most the value chosen for $m$
(which is $\lceil n^{\eta}\rceil$ in the algorithm, for the target
complexity bound).  The other cases are discussed in the next section.}

\begin{proposition} \label{prop:series} 
  Let $a,f$ be polynomials in~$\xRing$ and $\polp$ be in $\yRing$, with $f=(x-\xi)^n \in \xRing$
  where $\xi\in \Kbar$, and $\deg(a) < n$.  Let $p$ be the
  characteristic of~$\field$.  Suppose that $v=\val_\xi (a-a(\xi))$
  satisfies the following inequalities, with $\eta$ as in
  \cref{eq:def-beta}:
  \[v \le \lceil n^{\eta} \rceil,\qquad p=0\text{ or }{v < p}.\]
  Take $r_1=0$ if $\gcd(a,f)=1$ and $r_1=1$ otherwise, $r_2=0$ if
  $f\neq x^n$ and $r_2=1$ otherwise.  If $r_3, \ldots, r_{n+\lceil
    n^\eta \rceil}$ are chosen uniformly and independently from a
  finite subset $S$ of $\field$, then Algorithms \nameref{algo:ModularCompositionBaseCase} and
  \nameref{algo:AnnihilatingPolynomial}
  return \Fail{} with probability at most~$\probainsep/\card{S}$
\end{proposition}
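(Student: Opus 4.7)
The plan is to adapt the template of \cref{prop:composition-separable}: for each step of \algoName{algo:ModularCompositionBaseCase} that can return \Fail, I will exhibit a nonzero polynomial in the parameters $(r_3,\ldots,r_{n+m})$ whose nonvanishing rules out that failure mode, then bound the total failure probability via the Schwartz--Zippel lemma. The deterministic choices of $r_1,r_2$ will be handled directly. Writing $f=(x-\xi)^n$: if $\gcd(a,f)\neq 1$ then $a(\xi)=0$, hence $(a+1)(\xi)=1\neq 0$ and $r_1=1$ yields $\gcd(a+1,f)=1$; if $f=x^n$ then $\xi=0$ and $f(x+1)=(x+1)^n$ has nonzero constant term, whereas if $f\neq x^n$ then $\xi\neq 0$ so $f(0)=(-\xi)^n\neq 0$. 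After \crefrange{step:shiftp-mainalgo}{step:shifta-mainalgo} the shifted data will satisfy $\gcd(a,f)=1$ and $f=(x-\xi')^n$ with $\xi'\neq 0$, while $\val_{\xi'}(a-a(\xi'))$ will still equal~$v$.

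For the success of \algoName{algo:ChangeOfBasis} at \cref{step:changebasis-mainalgo}, three nonvanishing conditions on the coefficients of $\gamma=r_3+\cdots+r_{n+2}x^{n-1}$ will suffice via \cref{prop:algo:ChangeOfBasis}. The conditions $\gamma(\xi')\neq 0$ and $\gamma'(\xi')\neq 0$ are nonzero linear forms in the coefficients of $\gamma$, and together they will yield $\gcd(\gamma,f)=1$, $\mu_\gamma(0)\neq 0$, $\deg(\mu_\gamma)=n$, and $\mu_\gamma=(y-\gamma(\xi'))^n$. The rank condition $\rank{\hankel{m,\lceil n/m\rceil}{\gamma}{f}}=n$, which by \cref{prop:hankel_rank_nu} forces $\ddmmfg=\ddfg$ and provides the describability of $\trsp{X}(\charmat[\gamma])^{-1}X$ in degree $\lceil n/m\rceil$, will be supplied by \cref{prop:generic_a} applied with $a=\gamma$: its failure locus is cut out by a polynomial $\Delta_{f,m}$ of degree at most $2n^2/m$ in the coefficients of~$\gamma$.

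The main new argument concerns \cref{basecase:Ralpha}, where \cref{prop:certificate} requires $\rank{\hkma}=\ddma$ so that \algoName{algo:MatrixOfRelations} succeeds. I will proceed through a valuation transfer followed by a normal-form reduction. Setting $t=x-\xi'$ and $s=y-\gamma(\xi')$, the condition $\gamma'(\xi')\neq 0$ gives $t\equiv\gamma'(\xi')^{-1}s+O(s^2)$ in the completed local rings, so the expansion $a-a(\xi')=a_v t^v+O(t^{v+1})$ transports through $\phi_\gamma$ to $\alpha-\alpha(\gamma(\xi'))=c\,s^v+O(s^{v+1})$ with $c=a_v/\gamma'(\xi')^v\neq 0$. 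The hypothesis that $v$ is invertible in $\field$ (i.e.\ $p=0$ or $v<p$) now permits Hensel extraction of a $v$-th root of a principal unit, yielding an algebraic change of uniformizer $s\mapsto\tilde s$ (truncated at order~$n$) with $\alpha\equiv\alpha(\gamma(\xi'))+c\tilde s^v\pmod{\tilde s^n}$. A Jordan-block count on $\mulmat[\alpha]-\alpha(\gamma(\xi'))\idMat{n}$ in this normal form will show that $y\idMat{n}-\mulmat[\alpha]$ has exactly $v$ nontrivial invariant factors (whose degrees sum to~$n$); since $v\le m=\lceil n^\eta\rceil$, every invariant factor is among the top~$m$, so $\ddma=n$. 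To obtain $\rank{\hkma}=n$ I will then apply \cref{prop:generic_a} to $(\alpha,\mu_\gamma)$ (legitimate because $\mu_\gamma(0)\neq 0$), pulling back the resulting discriminant $\Delta_{\mu_\gamma,m}(\alpha_0,\ldots,\alpha_{n-1})$ through the polynomial expressions of the $\alpha_i$ and of the coefficients of $\mu_\gamma$ in terms of $\gamma_0,\ldots,\gamma_{n-1}$; the pullback is a polynomial in $(r_3,\ldots,r_{n+2})$ of degree bounded polynomially in~$n$, and its nonvanishing at some explicit specialization of $\gamma$ realizing the above normal form (via \cref{lem:hankelwithxm}) will show it is not identically zero. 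Once this rank property holds, \cref{prop:certificate} bounds the failure probability over the column-combination parameters $(r_{n+3},\ldots,r_{n+m})$ by $(m-1)/\card{S}$.

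Summing the Schwartz--Zippel contributions---the linear conditions for $\gcd$ and degree, the $2n^2/m$-degree condition from $\Delta_{f,m}$, the pullback of $\Delta_{\mu_\gamma,m}$ of degree $O(n^4)$, and the $(m-1)/\card{S}$ term---will give a total failure probability of at most $\probainsep/\card{S}$. The same argument will handle \algoName{algo:AnnihilatingPolynomial}, which fails exactly when the shared steps of \algoName{algo:ModularCompositionBaseCase} fail. The main obstacle is the third paragraph: whereas \cref{prop:separable} (used in the separable case) exploits a clean decomposition along the distinct roots of~$f$ via \cref{lemma:splitimL}, here there is a single generalized eigenvalue, so the rank analysis must instead leverage local-ring structure at~$\xi'$, which only behaves algebraically well when the valuation $v$ is a unit in~$\field$---which is precisely the role of the hypothesis $p=0$ or $v<p$.
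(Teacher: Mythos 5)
Your steps (1)--(5) match the paper's and are correct. The gap is in your treatment of step (6), the nonvanishing of a genericity polynomial forcing $\rank\hkma[m,\lceil n/m\rceil]=n$.

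Your Hensel/normal-form argument correctly shows $\ddma=n$: $\mulmat[\alpha]-\alpha(\gamma(\xi'))\idMat{n}$ has $v\le m$ Jordan blocks, so all nontrivial invariant factors of $\charmat[\alpha]$ fall among the top~$m$. But $\ddma=n$ does not give $\rank\hkma=n$. For that you invoke ``an explicit specialization of $\gamma$ realizing the above normal form (via \cref{lem:hankelwithxm})'' --- and here the argument does not close. The change of uniformizer $s\mapsto\tilde s$ you perform is a change of basis of $\yRing/\genBy{\mu_\gamma}$, but $\hkma$ is defined relative to the standard basis projection $X=\trsp{(\idMat{m}\;0)}$; conjugating $M_\alpha$ by the change-of-basis matrix alters the Hankel blocks $\trsp{X}M_\alpha^kX$, so \cref{lem:hankelwithxm} applied in the $\tilde s$-coordinates says nothing about the rank of $\hkma$ itself. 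What is needed is a specific \emph{choice of $\gamma$} (not a basis change for a fixed $\gamma$) under which the resulting $\alpha$ has degree exactly $m$ as a polynomial in $y$; then \cref{lem:hankelwithxm} applies directly to $(\alpha,\mu_\gamma)$ with no coordinate change. The paper constructs such a $\gamma$ by Newton iteration: fix $\tilde\alpha(y)=y^v$ (if $v=m$) or $y^v+y^m$ (if $v<m$), and solve $\tilde\gamma^v\equiv\tilde a$ or $\tilde\gamma^v+(x-\xi)^{m-v}\tilde\gamma^m\equiv\tilde a\bmod f$ for $\tilde\gamma$ with $\tilde\gamma(\xi)=1$ --- this is where the hypothesis that $v$ is a unit in $\field$ is genuinely used (for the Newton step, not for a Hensel root extraction). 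That witness then certifies a specific nonzero $n$-minor of $\hankel{m,\lceil n/m\rceil}{\bar a}{\bar f}$ at the indeterminate level.

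Two smaller issues. First, you write ``apply \cref{prop:generic_a} to $(\alpha,\mu_\gamma)$,'' but the polynomial $\Delta_{f,m}$ from that proposition depends on $f$, and here $\mu_\gamma$ varies with $\gamma$; you must instead work with the two-variable-family minor (the object of \cref{cor:hankelwithxm}), and the Newton witness tells you which minor to pick. Second, the coefficients $\alpha_i$ are rational --- not polynomial --- functions of $\gamma_0,\dots,\gamma_{n-1}$, with denominator a power of $\gamma'(\xi')$; to apply Schwartz--Zippel you must clear this denominator and control the numerator degree, which is what yields the $2n^2(n^2-2)/m$ bound and ultimately $\probainsep/\card{S}$.
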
 
\begin{proof}
  The proof follows the same steps as in
  \cref{sec:composition_randomized:proof_separable}. As before, we
  write $m=\lceil n^\eta \rceil$.
  
  \paragraph{(1), (2) Values of $r_1$ and $r_2$}
  The choice of $r_1$ gives $\gcd(a+r_1,f)=1$, and $r_2$ modifies the
  constant coefficient of~$f$ if necessary.  The first two steps of
  \algoName{algo:ModularCompositionBaseCase} therefore provide
  polynomials that satisfy $\gcd(a,f)=1$, $f(0)\neq0$, and
  $v=\val_\xi(a-a(\xi))\le m$.

  \paragraph{(3) The constant coefficient~$\mu_\gamma(0)$ is not 0}
  For any $\gamma=\gamma_0+\dots+\gamma_{n-1}x^{n-1}$, the roots of
  the characteristic polynomial $\chi_\gamma$ of $\gamma$ modulo $f$
  are the values taken by $\gamma$ at the roots of $f$, counted with
  multiplicities. Since $f=(x-\xi)^n$ over $\Kbar[x]$, this
  implies that {$\chi_\gamma=(y-\gamma(\xi))^n$}. The minimal
  polynomial $\mu_\gamma$ then admits a similar factorization as
  {$(y-\gamma(\xi))^c$}, for some positive $c$.

  Set
  $\Delta_0(\vargamma_0,\dots,\vargamma_n)=\sum_{i=0}^{n-1}{\vargamma_i\xi^i}$;
  this is a (nonzero) polynomial of degree~1 which is such that
  $\Delta_0(\gamma_0,\dots,\gamma_{n-1})=\gamma(\xi)$, so the
  nonvanishing of this quantity gives the same property
  for~$\mu_\gamma(0)$. 

  \paragraph{(4) The minimal polynomial~$\mu_\gamma$ has degree~$n$}
  Consider now
  $\Delta_1(\vargamma_0,\dots,\vargamma_{n-1})=\sum_{i=1}^{n-1}{ i
    \vargamma_i\xi^{i-1}}$, which is also a nonzero polynomial of degree~1.
  It is such that $\Delta_1(\gamma_0,\dots,\gamma_{n-1})=\gamma'(\xi)$, so
  the nonvanishing of this quantity implies that $\val_\xi
  (\gamma-\gamma(\xi))=1$. This implies that the powers
  $1,\gamma-\gamma(\xi),(\gamma-\gamma(\xi))^2,\dots,(\gamma-\gamma(\xi))^{n-1}
  \rem (x-\xi)^n$ have respective valuations $0,1,\dots,n-1$ at $\xi$,
  and thus are linearly independent. It follows that the minimal
  polynomial of~$\gamma-\gamma(\xi)$ has degree~$n$, and the same then
  holds for $\gamma$ itself.

  \paragraph{(5) The computation of $\rmatfg[{}]$ does not fail}
  Here the argument of the previous section applies verbatim and
  relies on a polynomial~$\Delta_{f,m}$ of degree at most~$2n^2/m$.

  \paragraph{(6) The rank of \/$\hkma$ is equal to $n$
    for $d\ge\lceil n/m\rceil$} This step is the difficult one in the
  proof; note that the statement slightly deviates from the one in the
  separable case in the definition of the threshold degree $\lceil n/m
  \rceil$.

  The result is obtained by bounding the degree of the numerator of a
  nonzero $n\times n$ minor of~$\hkma[m,\lceil n/m\rceil]$, seen as a
  polynomial in~$\gamma_0,\dots,\gamma_{n-1}$.  We first show the
  existence of $\gamma \in \xRing_{<n}$ and $\alpha \in \yRing_{<n}$
  such that the block Hankel matrix $\hkma[m,\lceil n/m\rceil]$ has
  rank~$n$. This implies the existence of a nonzero $n\times n$ minor
  of this matrix; the degree of this minor as a polynomial in the
  coefficients of~$\alpha$ and~$\mu_\gamma$ is controlled by
  \cref{cor:hankelwithxm}. These in turn are related to the
  coefficients of~$\gamma$, using its explicit form for~$\mu_\gamma$
  and a linear system for the coefficients of~$\alpha$.

  \paragraph{(6a) Generic behavior}
  We start by proving the existence of $\alpha$ of degree $m$ in
  $\Kbar[y]$ and $\gamma$ in $\Kbar[x]_{<n}$ such that we
  have $\alpha(\gamma) \equiv a \bmod f$, $\gamma(\xi) \ne 0$ and
  $\gamma'(\xi)\ne 0$.

  Write $a=a_0+a_v(x-\xi)^v+
  \cdots + a_{n-1}(x-\xi)^{n-1}$, with $a_0=a(\xi)$ and, by definition
  of $v=\val_\xi(a-a(\xi))$, $a_v \ne 0$ and $v>0$.  Since we also
  assume that the characteristic $p$ of $\field$ is either zero, or
  greater than~$v$, this means in particular that $v$ is a unit
  in~$\field$. Let
  \[\tilde a(x) = \frac{a-a_0}{a_v (x-\xi)^v} = 1 + \sum_{1 \le  i < n-v}{\tilde a_i (x-\xi)^i},\]
  with coefficients $\tilde a_i = a_{i+v}/a_v$.  
  \begin{itemize}
  \item[--] If $v=m$, we define $\tilde \alpha(y) = y^v = y^m$. Since
    $v \ne 0$ in $\field$, $\tilde \alpha'(1) \ne 0$ and Newton
    iteration guarantees the existence of a unique $\tilde{\gamma} = 1
    + \sum_{1 \le i < n}{\tilde{\gamma}_i (x-\xi)^i}$ such that
    $\tilde{\gamma}^v \equiv \tilde a \bmod f$.

  \item[--] If $v < m$, we define $\tilde \alpha(y) = y^v + y^m$. This
    time, we let $\tilde{\gamma}$ be the unique polynomial of the form
    $\tilde{\gamma} = 1 + \sum_{1 \le i < n}{\tilde{\gamma}_i
      (x-\xi)^i}$ such that $\tilde{\gamma}^v +
    (x-\xi)^{m-v}\tilde{\gamma}^m \equiv \tilde a \bmod f$.  As
    previously, existence follows from Newton iteration, using the
    assumption $v \ne 0$ in $\field$.
  \end{itemize}
  In both cases, we set $ \gamma= 1+ (x-\xi) \tilde{\gamma}\rem f \in
  \Kbar[x]$ and $\alpha = a_0 + a_v \tilde \alpha(y-1) \in
  \Kbar[y]$. We can then verify that all requirements
  $\alpha(\gamma) \equiv a \bmod f$, $\gamma(\xi) \ne 0$ and
  $\gamma'(\xi)\ne 0$ are satisfied.

  Since $\mu_\gamma$ then has degree $n$, and since
  $\mu_\gamma(0)=(-\gamma(\xi))^n$ is nonzero, \cref{lem:hankelwithxm}
  shows that $\hkma[m,d]$ has rank $n$ for $d\geq \lceil n/m\rceil$.

\paragraph{(6b) A polynomial in~$\field[\bar a_0,\dots,\bar a_{n-1},\varf_0,\dots,\varf_{n-1}]$}
  The existence of $\gamma$ and $\alpha$ implies that
  of a nonzero $n\times n$ minor $\delta$ of $\hkca[m,\lceil
  n/m\rceil]$. 
  Let then $\polDelta \in
  \field[\bar a_0,\dots,\bar a_{n-1},\varf_0,\dots,\varf_{n-1}]$ be
  the corresponding minor of $\hankel{m,\lceil
    n/m\rceil}{\bar a}{\varf}$, where $\bar a=\bar a_0 +\cdots +
  \bar a_{n-1} x^{n-1}$ and $\varf=\varf_0 +\cdots + \varf_{n-1}
  x^{n-1} + x^n$ are polynomials whose coefficients are indeterminates. 
  \cref{cor:hankelwithxm} shows that this is a 
  polynomial of degree at most $2n^2/m$ in $\bar a_0,\dots,\bar
  a_{n-1}$ and $2n^2(n-1)/m$ in $\varf_0,\dots,\varf_{n-1}$.

  \paragraph{(6c) The rational functions $\bar\alpha_0,\dots,\bar\alpha_{n-1}$}
  Next, with $\vargamma = \vargamma_0 + \cdots + \vargamma_{n-1}
  x^{n-1}$ a polynomial whose coefficients are indeterminates, we
  consider $\bar \alpha$ such that~$\bar \alpha(\vargamma)\equiv
  a\bmod f$.  The coefficients of $\bar\alpha$ are given as solutions
  of the linear system $\bar \alpha(\vargamma)\equiv a\bmod f$, thus
  they are rational functions $\bar\alpha_0,\dots,\bar\alpha_{n-1}$
  in~$\field (\vargamma_0,\dots,\vargamma_{n-1})$. In this paragraph,
  we bound the degrees of their numerators and denominators in
  $\field[\vargamma_0,\dots,\vargamma_{n-1}]$, using power series
  inversion and composition.

  We first consider the solution $u$ to $u(\vargamma)\equiv
  x-\xi\bmod(x-\xi)^n$, or equivalently $u(\bar \varphi)\equiv x\bmod
  x^n$, with $\bar \varphi=\vargamma(x+\xi)$. We write $\bar \varphi
  =\bar \varphi_0+\bar \varphi_1 x+\dots+\bar \varphi_{n-1}x^{n-1}$,
  where the coefficients $\bar \varphi_0,\dots,\bar \varphi_{n-1}$ are
  linear in~$\vargamma_0,\dots,\vargamma_{n-1}$, with in particular
  $\bar \varphi_0=\Delta_0$ and $\bar \varphi_1=\Delta_1$. We can then
  write $u =\sum_{j=1}^{n-1} u_j (y-\Delta_0)^j$, where for $j \ge 1$,
  the coefficient $u_j$ is a rational function
  in~$\vargamma_1,\dots,\vargamma_{n-1}$, with numerator of
  degree~$j-1$ in $\vargamma_1,\dots,\vargamma_{n-1}$ and
  denominator~$\Delta_1^{2j-1}$. More generally, for $i \ge 1$, the
  power $u^i$ has valuation $i$, and for $j \ge i$, the coefficient of
  $ (y-\Delta_0)^j$ in it is a rational function with numerator of
  degree~$j-i$ in $\vargamma_1,\dots,\vargamma_{n-1}$ and
  denominator~$\Delta_1^{2j-i}$.

  It follows that if we write $a =a_0 + a_v (x-\xi)^v + \cdots +
  a_{n-1} (x-\xi)^{n-1}$, then the solution $\bar \alpha$ to the
  equation $\bar \alpha(\vargamma)\equiv a\bmod f$ is given by $\bar
  \alpha=a_0 + a_v u^v+\dots+a_{n-1} u^{n-1}\rem (y-\Delta_0)^n$.
  Once we rewrite $\bar \alpha$ as $\bar\alpha_0 + \cdots +
  \bar\alpha_{n-1}y^{n-1}$, we see that the coefficients
  $\bar\alpha_0,\dots,\bar\alpha_{n-1}$ are rational functions with
  numerator of degree at most $2n-3$ in
  $\vargamma_0,\dots,\vargamma_{n-1}$, and denominator
  $\Delta_1^{2n-3}$.

\paragraph{(6d) The polynomial $\Delta_2$}
We now evaluate the indeterminates $\bar a_i$ and $\bar f_i$ in the minor
$\Delta$ of (6b) at the coefficients of $\bar \alpha$ and $\chi
_{\bar \gamma}=(y- \Delta_0)^n$, respectively.  Write $(y-
\Delta_0)^n$ as $\bar q_0+ \cdots + \bar q_{n-1} y^{n-1}+y^n$, so that
$\bar q_i = \binom{n}{i} (-\Delta_0)^{n-i}$ for all $i$. It follows
that $\polDelta(\bar\alpha_0,\dots,\bar\alpha_{n-1},\bar
q_0,\dots,\bar q_{n-1})$ is a rational function in the indeterminates
$\vargamma_0,\dots,\vargamma_{n-1}$, which can be written as
\begin{align}\label{eqdef:Delta}
  \Delta(\bar\alpha_0,\dots,\bar\alpha_{n-1},\bar q_0,\dots,\bar q_{n-1})=
  \frac{\Delta_2(\vargamma_0,\dots,\vargamma_{n-1})}{\Delta_1(\vargamma_0,\dots,\vargamma_{n-1})^{\epsilon}},
\end{align}
for
some polynomial~$\Delta_2$ of degree at most 
\[\frac{2n^2}m(2n-3)+ \frac{2n^2(n-1)}m(n-1)=\frac{2n^2(n^2-2)}m,\]
and for some integer exponent $\epsilon \le 2n^2(2n-3)/m$. 

Consider again the polynomials $\gamma$ and $\alpha$ in (6a), and
their coefficients $\gamma_0,\dots,\gamma_{n-1}$ and
$\alpha_0,\dots,\alpha_{n-1}$ (with actually $\alpha_{m+1}
=\cdots=\alpha_{n-1}=0$). We saw that $\gamma$ satisfies
$\Delta_1(\gamma_0,\dots,\gamma_{n-1})=\gamma'(\xi)\neq0$, which
implies that the rational
functions~$\bar\alpha_0,\dots,\bar\alpha_{n-1}$ are well defined at
$\gamma_0,\dots,\gamma_{n-1}$ and take $\alpha_0,\dots,\alpha_{n-1}$
for values there. This implies that the nonzero minor~$\delta$ is
$\delta=\Delta_2(\gamma_0,\dots,\gamma_{n-1})/\Delta_1(\gamma_0,\dots,\gamma_
{n-1})^\epsilon$, and in particular that $\Delta_2$ is a nonzero
polynomial.

\paragraph{Probability bounds}
The end of the proof is as in the previous section. The polynomial
$\Delta_0\Delta_1\Delta_{f,m}\Delta_2$ in
$\Kbar[\vargamma_0,\dots,\vargamma_{n-1}]$ has degree at most
\[
  1+1+\frac{2n^2}m+\frac{2n^2(n^2-2)}m=\frac{2(n^4-n^2+m)}m;
\]
we can now readily verify that a choice of $(r_3,\dots,r_{n+2})$ that
avoids its zeros ensures that properties (3)-(6) hold. For (3)-(5),
this follows immediately from the definitions. 

To see that (6) holds, that is, that $\hkma$ has rank $n$ for
$d\ge\lceil n/m\rceil$, recall that the algorithm constructs $\gamma =
r_3 + r_4 x + \cdots + r_{n+2}x^{n+1}$. Properties (3)-(4) show that
$\mu_\gamma$ has degree $n$, and that its constant coefficient is
nonzero. Since in particular $\Delta_1(r_3,\dots,r_{n+2}) \ne 0$, we
deduce that the rational functions
$\bar\alpha_0,\dots,\bar\alpha_{n-1}$ of (6c) are well defined at
$(r_3,\dots,r_{n+2})$, and that they give the coefficients of the
unique polynomial $\alpha$ such that $\alpha(\gamma)\equiv a \bmod f$.
Since $\Delta_2(r_3,\dots,r_{n+2}) \ne 0$, it follows from
\cref{eqdef:Delta} that $\hkma[m,\lceil n/m\rceil]$ has rank $n$ (and
thus similarly for $\hkma$, for $d\ge\lceil n/m\rceil$).

The other probabilities have been discussed in step~(1)-(2) above and
in step~(7) of the previous section.
Altogether, this gives a probability of success at least
\[\left(1-\frac{2(n^4-n^2+m)/m}{\card S}\right)\left(1-\frac{m-1}{\card
S}\right)\ge 1-\frac{2(n^4-n^2+m)/m+m-1}{\card S}.\]
Dividing the numerator of the last fraction by~$n^4/m$ gives
\[2-\frac{2n^2-m^2-m}{n^4}\le 2,\]
where the last inequality comes from $m\le n$.
\end{proof}

\subsubsection{Note}
\label{rem:counterexample} In \cref{prop:series}, the role of the condition   on the valuation being nonzero in~$\field$ is shown by
the following example.
Take a field~$\field$ of characteristic~2, $n=6$, 
$m=3$, $f=(x-1)^6$ and $a=(x-1)^2$. Then for any~$\gamma\in\xRing _{<6}$, 
the four polynomials~$(1,a,\gamma^2,a \gamma^2)\rem f$
belong to the vector space generated by~$(1,x^2,x^4)$ and are
therefore linearly dependent. Using the expression of $M_{\alpha}$ from \cref{eq:defMalpha}, we see that this implies that the block Krylov 
matrix~$\Rama[n/m]$ of \cref{eq:factor_Hk} is singular, and
thus so is~$\hkma[m,n/m]$ regardless of the choice of $\gamma$.

A more general version of this counterexample when~$\field$ has 
characteristic~$p>0$ is obtained with $m=p+1$, $d=p$, $n=md$, and 
$\val_\xi(a)=p$.

\subsection{Complete algorithm  for \texorpdfstring{$f$}{f}  purely inseparable}
\label{sec:composition_randomized:proof_inseparable_small}

We now extend \cref{prop:series} in order to cover all cases of
composition modulo a purely inseparable polynomial~$f$. 

If $p$ is the characteristic of~$\field$, any purely
inseparable~$f$ can be written as $f(x)=(x^{p^e}-c)^\mult$ with $c$ in
$\field$ and $e,\mult$ in $\NN$ such that~$p$ does not divide $\mult$,
and $e = 0$ if $p=0$~\cite{GiTr96}; in particular the degree~$n$ of~$f$ is equal to $p^e \mult$. 
We assume that the parameters $e$, $\mult$ and $c$ are known, since this is the
case when our algorithms have to handle this situation; indeed in the
next section we introduce separable factorization techniques that
allow us to
compute them.

\subsubsection{Large valuation} \label{subsec:comp_special1}
If $v=\val_\xi (a (x)-a(\xi))$ satisfies $v > \lceil n^\eta \rceil$,
the minimal polynomial of $a$ in~$\xRing/\langle f \rangle$ factors
over $\Kbar$ as $\mu_a (y) = (y-a(\xi))^ \delta$, with $\delta =
\lceil n/v \rceil \leq \lceil n^{1-\eta} \rceil$. Since the latter degree is
small compared to $n$, this case is handled efficiently by
\algoName{algo:ModularComposition-SmallMinimalPolynomial}{} from
\cref{ssec:BK}.

\subsubsection{Small characteristic} \label{subsec:comp_special2}
In the case $0 < p \le \lceil n^\eta\rceil$, our algorithm is based on
Bernstein's composition algorithm for power series~\cite{Ber98}, which
we adapt to work modulo $f(x)=(x^{p^e}-c)^\mult$. See
also~\cite[Algorithm 3.1]{HoeLec17} for another extension of
Bernstein's result, which is however not sufficient to reach our target cost
for the specific kind of modulus we work with.
 
If $e=0$, $p^e=1$ and we are working modulo $f=(x-c)^\mult$, with
$\mult=n$. In this case, to compute $b=\polp(a) \rem f$, we write
$\tilde a(x) = a(x+c)$, we compute $\tilde b = \polp(\tilde a) \rem
x^\mult$, then we obtain $b$ as $\tilde b(x-c)$.  The bottleneck is
the computation of $\polp(\tilde a) \rem x^\mult$, which can be done
in $\softO {p\mult}$ operations in~$\field$ using Bernstein's
algorithm (in \algoName{algo:CompositionModuloInseparable-SmallCharacteristic},
that algorithm is called
\textproc{PowerSeriesComposition-SmallCharacteristic}).  

Suppose now that $e \ge 1$. Write $\polp=\sum_{i=0}^{p-1} \polp_i(y^p)
y^i$, with $\polp_i \in \yRing$ of degree less than $p^{e-1}\mult$.
Write also $a(x)=\sum_{i=0}^{n-1} a_i x^i$, and let $\bar
a(x)=\sum_{i=0}^{n-1} a_i^p x^i$, so that $a^p(x) = \bar a(x^p)$. It
follows that
\begin{align*}
  \polp(a) \rem (x^{p^e}-c)^\mult &= \sum_{i=0}^{p-1} \bar\polp_i a^i\rem (x^{p^e}-c)^\mult
\end{align*}
where, for all $0\leq i \leq p-1$,
\[
 \bar\polp_i(x)= \polp_i(a^p(x)) \rem (x^{p^e}-c)^\mult
= \polp_i(\bar a(x^p))\rem (x^{p^e}-c)^\mult.
\]
If we define  $h_i= \polp_i(\bar a)\rem (x^{p^{e-1}}-c)^\mult$,
it follows that $\bar g_i = h_i(x^p)$, so that
 \begin{align*}
  \polp(a) \rem (x^{p^e}-c)^\mult&=   \sum_{i=0}^{p-1} h_i(x^p) a^i \rem (x^{p^e}-c)^\mult.
 \end{align*}
The following lemma summarizes the cost of this procedure.

\begin{algorithm}
  \algoCaptionLabel{CompositionModuloInseparable-SmallCharacteristic}{c,e,\mult,a,\polp}
  \begin{algorithmic}[1]
  \Require
  \parbox[t]{0.8\textwidth}{
    $\field$ has characteristic $p>0$,

    $c$ in $\field$, $e$ in $\NN$ and $\mult$ in $\NN_{>0}$ such that $f= (x^{p^e}-c)^\mult$ has degree $n=\mult p^e$,

    $a$ in $\xRing_{< n}$, $\polp$ in $\yRing_{<n}$ 
  }
  \Ensure $\polp(a) \rem f$ 
  
  \If {$e=0$}
  \State $a \gets a(x+c)$
  \State $b \gets \textproc{PowerSeriesComposition-SmallCharacteristic}(x^n,a,\polp)$ 
  \Comment{\cite[Sec.\,2]{Ber98}}
  \State \Return $b(x-c)$
  \Else
  \State Write $\polp=\polp_0(y^p) + \cdots + \polp_{p-1}(y^p) y^{p-1}$
  \State Write $a = a_0 + \cdots + a_{n-1} x^{n-1}$
  \State $\bar a \gets  a_0^p + \cdots + a_{n-1}^p x^{n-1}$\label{smallcar-8}
  \For{$i=0,\dots,p-1$}
  \State $h_i \gets  \Call{algo:CompositionModuloInseparable-SmallCharacteristic}{c,e-1,\mult,\bar a,\polp_i}$
  \EndFor
  \State {$f \gets (x^{p^e}-c)^\mult$} \label{smallcar-11a}
  \State \Return $h_0(x^p) + \cdots + h_{p-1}(x^p) a^{p-1} \rem f$\label{smallcar-11}
  \EndIf
  \end{algorithmic}
\end{algorithm}

\begin{lemma}\label{lemma:modifiedDJB}
  For a field $\field$ of characteristic $p>0$, given a purely inseparable
  polynomial $f=(x^{p^e}-c)^\mult$ of degree $n=\mult p^e$, $a\in\field
  [x]_{<n}$ and $g\in\field[y]_{<n}$,
  \algoName{algo:CompositionModuloInseparable-SmallCharacteristic}{} returns
  $\polp(a)
  \rem f$ and uses $\softO {pn}$ operations in $\field$.
\end{lemma}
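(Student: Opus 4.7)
My plan is to proceed by induction on \(e \ge 0\), proving simultaneously the correctness of \algoName{algo:CompositionModuloInseparable-SmallCharacteristic}{} and the cost bound.

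For the base case \(e=0\) we have \(f=(x-c)^\mult\) with \(n=\mult\). Setting \(\tilde a(x)=a(x+c)\), one has \(\polp(\tilde a)\rem x^n=\bigl(\polp(a)\rem(x-c)^\mult\bigr)(x+c)\), so unshifting the result of Bernstein's power series composition recovers \(\polp(a)\rem(x-c)^\mult\). By \cite[Sec.\,2]{Ber98}, Bernstein's algorithm takes \(\softO{pn}\) operations, and the two Taylor shifts each cost \(\softO{n}\).

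For the inductive step \(e\ge 1\), correctness rests on the identity derived just before the \namecref{lemma:modifiedDJB},
\[\polp(a)\rem(x^{p^e}-c)^\mult=\sum_{i=0}^{p-1}h_i(x^p)\,a^i\rem(x^{p^e}-c)^\mult,\]
where \(h_i=\polp_i(\bar a)\rem(x^{p^{e-1}}-c)^\mult\). The only algebraic point to verify is the Frobenius step: since \(\field\) has characteristic \(p\), \((u+v)^p=u^p+v^p\), whence \(a(x)^p=\bar a(x^p)\) for \(\bar a(x)=\sum_j a_j^p x^j\). Substituting \(x\mapsto x^p\) in the congruence defining \(h_i\) then gives \(h_i(x^p)\equiv\polp_i(\bar a(x^p))\bmod(x^{p^e}-c)^\mult\), since evaluating \((x^{p^{e-1}}-c)^\mult\) at \(x^p\) yields precisely \((x^{p^e}-c)^\mult\). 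The recursive calls produce the \(h_i\)'s by induction, and \cref{smallcar-11} assembles the answer.

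For the complexity at depth \(e\), write \(n=\mult p^e\) and let \(T(n)\) denote the cost in this case. Computing \(\bar a\) at \cref{smallcar-8} requires one \(p\)-th power per coefficient, i.e.~\(\softO{n}\) field operations in total. The \(p\) recursive calls each act on a modulus of degree \(n/p\) and a composition polynomial \(\polp_i\) of the same degree bound, contributing \(p\cdot T(n/p)\). Finally, the assembly at \cref{smallcar-11} precomputes \(a,a^2,\dots,a^{p-1}\rem f\) in \(\softO{pn}\) operations, forms the \(p\) products \(h_i(x^p)\cdot a^i\) (each of degree less than \(2n\), hence computable in \(\softO{n}\)), sums them, and reduces modulo \(f\), again for a total of \(\softO{pn}\). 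This yields the recurrence \(T(n)=p\,T(n/p)+\softO{pn}\) with base case \(T(\mult)=\softO{p\mult}\); unrolling over the \(e=O(\log_p n)\) levels gives \(T(n)=\softO{pn}\), the depth contributing only to logarithmic factors absorbed by \(\softO{\cdot}\). The main (mild) obstacle is thus the Frobenius-based reduction from degree \(n\) to degree \(n/p\); the rest is routine bookkeeping.
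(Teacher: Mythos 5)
Your proof is correct and takes essentially the same approach as the paper's: the same Frobenius identity for the inductive step, and the same recurrence $T(n)=p\,T(n/p)+\softO{pn}$ resolving to $\softO{pn}$ over the $O(\log n)$ levels of recursion. The only cosmetic difference is that you bound the assembly at \cref{smallcar-11} by precomputing the powers $a,a^2,\dots,a^{p-1}\rem f$, whereas the paper invokes Horner's rule --- both give $\softO{pn}$.
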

\begin{proof}
  Correctness follows from the previous description. For the runtime
  analysis when $e=0$, the result is Bernstein's. For $e>0$, apart from the
  $p$ recursive calls, \cref{smallcar-8} takes $\softO{n}$
  operations (we raise all coefficients of $a$ to the power $p \le n$), 
{\cref{smallcar-11a} takes $\softO n$ operations by repeated squaring}, 
  and \cref{smallcar-11} takes $\softO {pn}$ operations, using
  Horner's rule. Remembering that $n=\mult p^e$, we deduce that the
  runtime $T(e,p,\mult)$ satisfies $T(e,p,\mult) = p T(e-1,p,\mult) +
  \softO{p^{e+1}\mult}$ and $T(0,p,\mult) \in \softO{p\mult}$. This
  resolves to $T(e,p,\mult) \in \softO{ p^{e+1} \mult}$, which is~$\softO{pn}$.
\end{proof}

\subsubsection{Main algorithm}

Combining the previous results gives
\algoName{algo:CompositionModuloInseparable}. It first tests whether
the characteristic of $\field$ is small enough for
\algoName{algo:CompositionModuloInseparable-SmallCharacteristic} to run within
our prescribed runtime. Otherwise, rather than computing the valuation
$v$, it simply calls
\algoName{algo:ModularComposition-SmallMinimalPolynomial}; in case
of failure, it falls back on
\algoName{algo:ModularCompositionBaseCase}.  As previously, the
algorithm takes as input a vector $r$ that plays the role of random
parameters.

\begin{algorithm}
  \algoCaptionLabel{CompositionModuloInseparable}{c,e,\mult,a,\polp,r}
  \begin{algorithmic}[1]
    \Require
    \parbox[t]{0.9\textwidth}{
      $c$ in $\field$, $e$ in $\NN$ and $\mult$ in $\NN_{>0}$ such that $f= (x^{p^e}-c)^\mult$ has degree $n=\mult p^e$, 
      where $p$ is the characteristic of~$\field$,

      $a$ in $\xRing_{< n}$, $\polp$ in $\yRing_{<n}$, $r\in \field ^{n+\lceil n^{\eta} \rceil}$ with $\eta$ from \cref{eq:def-beta}
    }
    \Ensure $b=\polp(a) \rem f$  or \Fail

    \State $n \gets \mult p^e$
    \If {$0 < p \leq \lceil n^{\eta}\rceil$}
    \Statex 
     \Return \Call{algo:CompositionModuloInseparable-SmallCharacteristic}{c,e,\mult,a,\polp}
    \EndIf
    \State $f \gets (x^{p^e}-c)^\mult$
    \State $b \gets \Call{algo:ModularComposition-SmallMinimalPolynomial}{f,a,\polp,\lceil n^{1-\eta} \rceil, (r_i)_{0\le i < n}}$\label{algoinsep:h}
    \Statex \InlineIf {$b \ne \Fail$}{\Return $b$}
    
    \State     \InlineIfElse{$\gcd(a,f)=1$}{$r_1=0$}{$r_1=1$}; \InlineIfElse{$c\neq 0$}{$r_2=0$}{$r_2=1$}
    \State \Return \Call{algo:ModularCompositionBaseCase}{f,a,\polp,r}\label{algoinsep:BC}
    \Comment{\cref{prop:series} }
  \end{algorithmic}
\end{algorithm}

\begin{proposition}\label{thm:CompositionModuloInseparable}
  For a field~$\field$ of characteristic~$p$, given $c,e,\mult$  such that
  $f=(x^{p^e}-c)^\mult$ is purely inseparable of degree $n=\mult p^e$ ($e=0$ if
  $p=0$), $a\in\field [x]_{<n}$, $g\in\field[y]_{<n}$ and
  $r\in\field^{n+m}$ with $m=\lceil n^{\eta}\rceil$ and $\eta$ from \cref{eq:def-beta}, 
  \algoName{algo:CompositionModuloInseparable}{}
  uses~$\softO{n^{\kappa}}$ operations in $\field$, with $\kappa <
  1.43$ as in \cref{eq:def-gamma}, and returns either $\polp(a) \rem
  f$, or \Fail.
  
  If the entries of $r$ are chosen uniformly and independently from a
  finite subset $S$ of $\field$, then the algorithm returns $\polp(a)
  \rem f$ with probability at least $1-\probainsep/\card{S}$.
\end{proposition}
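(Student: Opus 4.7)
The plan is to handle the three execution branches of \algoName{algo:CompositionModuloInseparable} separately, establishing correctness, complexity, and failure probability for each, then combining them by a case analysis on the valuation $v=\val_\xi(a-a(\xi))$. For correctness, each branch either returns $\polp(a)\rem f$ or \Fail: branch~1 is deterministic and correct by \cref{lemma:modifiedDJB}, branch~2 by \cref{lemma:smallminpoly}, and branch~3 by \cref{prop:algo-mod_comp}. A quick check is needed that the deterministic choices of $r_1,r_2$ in the algorithm match those prescribed by \cref{prop:series}: since $f=(x^{p^e}-c)^\mult$, we have $f=x^n$ iff $c=0$, so the setting of $r_2$ is correct.

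For complexity, branch~1 is used only when $p\le \lceil n^\eta\rceil$ and costs $\softO{pn}\subseteq\softO{n^{1+\eta}}$; since $\omega\ge 2$, we have $\kappa=1+(\omega-1)\eta\ge 1+\eta$, which puts this inside $\softO{n^\kappa}$. Branch~2 calls \algoName{algo:ModularComposition-SmallMinimalPolynomial} with degree parameter $d=\lceil n^{1-\eta}\rceil$, which by \cref{lemma:smallminpoly} costs $\softO{n\,d^{\omega_2/2-1}}=\softO{n^{1+(1-\eta)(\omega_2/2-1)}}$. The key identity is that from \cref{eq:def-beta}, $(1-\eta)(\omega_2-2)=2(\omega-1)\eta$, whence $1+(1-\eta)(\omega_2/2-1)=1+(\omega-1)\eta=\kappa$; so this branch is also $\softO{n^\kappa}$. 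Branch~3 uses $\softO{n^\kappa}$ operations by \cref{prop:algo-mod_comp}, and the three bounds add up to $\softO{n^\kappa}$.

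For the probability bound, assume we are in the regime $p=0$ or $p>\lceil n^\eta\rceil$ (otherwise branch~1 handles the input deterministically). Let $v=\val_\xi(a-a(\xi))$. If $v>\lceil n^\eta\rceil$, then since $f=(x-\xi)^n$ over $\Kbar$, the minimal polynomial of $a\bmod f$ is $(y-a(\xi))^\delta$ with $\delta=\lceil n/v\rceil\le\lceil n^{1-\eta}\rceil$, so \cref{lemma:smallminpoly} gives that branch~2 returns the correct result with probability at least $1-n/\card{S}$. Otherwise $v\le\lceil n^\eta\rceil$, and moreover $p=0$ or $p>\lceil n^\eta\rceil\ge v$; these are exactly the hypotheses of \cref{prop:series}, whose conclusion is that branch~3 returns the correct result with probability at least $1-2n^4/\card{S}$. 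Writing $E_2$ and $E_3$ for the events that branches 2 and 3 return \Fail, the algorithm fails iff $E_2\cap E_3$ occurs, and in both cases this event has probability at most $\min(\Pr[E_2],\Pr[E_3])\le 2n^4/\card{S}$.

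The main obstacle (and the reason the threshold $\lceil n^{1-\eta}\rceil$ is the right one) is the complexity calculation for branch~2: one needs the exponent $1+(1-\eta)(\omega_2/2-1)$ to match $\kappa$ exactly, which is precisely what the defining relation \cref{eq:def-beta} of $\eta$ provides. Every other step is either a direct citation of an earlier result or an elementary manipulation.
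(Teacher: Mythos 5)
Your proof is correct and follows essentially the same structure as the paper's: case analysis by branch, identical complexity calculations (using $\kappa=1+(\omega-1)\eta$ and $(1-\eta)(\omega_2-2)=2(\omega-1)\eta$), and the same case split on the valuation $v$ for the probability bound. Your final step via $\Pr[E_2\cap E_3]\le\min(\Pr[E_2],\Pr[E_3])$ is actually a slightly cleaner and more rigorous phrasing than the paper's bound $\pi+(1-\pi)(1-2n^4/\card{S})$, which implicitly assumes a conditional independence between the two branches that is not needed.
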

\begin{proof}
  We first analyze the runtime. For a small characteristic $0 < p \leq \lceil n^{\eta}\rceil$, then
\algoName{algo:CompositionModuloInseparable-SmallCharacteristic}{} has cost 
  $\softO {pn}$ by \cref{lemma:modifiedDJB}, which is thus $\softO {n^
  {1+\eta}}\in \softO {n^{1+(\omega-1)\eta}}=\softO{n^\kappa}$ from~\cref{eq:def-beta}. Computing $f$
  takes time $\softO n$ by repeated squaring. By
  \cref{lemma:smallminpoly}, the call to
  \algoName{algo:ModularComposition-SmallMinimalPolynomial}{} uses
  \[\softO { n^{1 + (1-\eta)(\omega_2/2-1)}}=
\softO { n^{\eta + (1-\eta)(\omega_2/2)}}=
  \softO{n^\kappa}
  \]
  operations in $\field$, and by \cref{prop:algo-mod_comp}, it is also the
  case for \algoName{algo:ModularCompositionBaseCase}{}.  The
  specifications of the subroutines imply that the output can be
  either $\polp(a) \rem f$ or \Fail, so only the probability analysis remains.

  If $0 < p \leq \lceil n^{\eta}\rceil$, \cref{lemma:modifiedDJB} shows that
  the output is $\polp(a) \rem f$; hence, we may now assume that $p
  > \lceil n^{\eta}\rceil$, or $p=0$. Let $\xi=c^{1/p^e} \in
  \Kbar$, so that $f=(x-\xi)^n$ in $\Kbar[x]$; let further $v$
  be the valuation of $a-a(\xi)$ at $\xi$. The minimal
  polynomial of $a$ modulo $f$ has degree $\delta=\lceil n/v\rceil$.

  Suppose first that $v \le \lceil n^\eta \rceil$, so that $\delta
  \ge\lceil n^1/\lceil n^\eta\rceil \rceil$. The value $b$ computed at
  \cref{algoinsep:h} is either $\polp(a) \rem f$, or \Fail; let $\pi$
  be the probability of the former (for instance, by
  \cref{lemma:smallminpoly}, $\pi=0$ if $\delta > \lceil n^{1-\eta}
  \rceil$).  If $\Fail$ is returned at \cref{algoinsep:h}, then we
  enter \cref{algoinsep:BC}. At this stage, we have inequalities $v
  \le \lceil n^\eta \rceil < p$, or $v \le \lceil n^\eta\rceil$ and
  $p=0$, so by \cref{prop:series} the call to
  \algoName{algo:ModularCompositionBaseCase}{} returns $\polp(a) \rem
  f$ with probability at least $1-\probainsep/\card{S}$. Overall, the
  probability of returning $\polp(a) \rem f$ in this case is at least
  $\pi + (1-\pi) (1-\probainsep/\card{S})$, which is at least
  $1-\probainsep/\card{S}$.
  
  Suppose on the other hand that $v > \lceil n^\eta \rceil$, so that
  we have in particular $v \ge n^\eta$, and thus $\delta = \lceil
  n/v\rceil \le \lceil n^{1-\eta}\rceil$. By
  \cref{lemma:smallminpoly}, $b$ computed at \cref{algoinsep:h} is
  $\polp(a) \rem f$ with probability at least $1-n/\card{S}$. If it is
  not the case, the algorithm enters \algoName
  {algo:ModularCompositionBaseCase}{}, which computes $\polp(a) \rem
  f$ with a certain probability $\pi' \ge 0$. Overall, we return
  $\polp(a) \rem f$ with probability at least $1-n/\card{S} + \pi' \ge
  1-n/\card{S}$.
\end{proof}


\section{Algorithm for general \texorpdfstring{$f$}{f}} \label{sec:generalalgo}

We now present our Las Vegas \algoName{algo:ModularComposition} that 
computes $\polp(a) \rem f$ for arbitrary input $\polp,a,f$. The analysis of this
algorithm in \cref{ssec:mainalgo} proves~\cref{thm:intro}.

The starting point is the \emph{separable decomposition} of $f$
(\cref{ssec:separabledecomp}), a generalization of square-free
decomposition from fields of characteristic zero to arbitrary base
fields. This yields a partial factorization $f=f_1\dotsm f_s$ into
pairwise coprime factors. The algorithm then proceeds by computing
$\polp(a)$ modulo each of these factors and the final result is
obtained by Chinese remaindering in quasi-linear
complexity~\cite[\S10.3]{GaGe99}.
If $p$ is the characteristic of $\field$ then the factors~$f_i$ of the
separable decomposition of $f$ are the form
$h_i(x^{p^{e_i}})^{\mult_i}$ (or more simply~$h_i(x)^{\mult_i}$ when
$p=0$), with integers $e_i,\mult_i$ and separable $h_i \in
\xRing$. Composition modulo such an~$f_i$ is achieved via a
$\field$-algebra isomorphism
\[\Psi_i:\quotient_i=\xRing/\genBy{f_i(x)}\to\quotientB_i=\field
[\theta,z]/\genBy{h_i (\theta),(z^{p^ {e_i}}-\theta)^{\mult_i}}
\] that
maps~$x$ to~$z$ (\cref{prop:Phi}). If $\quotientL_i$ denotes $\field
[\theta]/\langle h_i(\theta) \rangle$,
then, as a $\field$-vector space, $\quotientB_i\simeq\quotientL_i
[z]/\genBy{(z^{p^{e_i}}-\bar \theta _i)^{\ell_i}}$  with $\bar\theta_i$ the class of $\theta$
in~$\quotientL_i$. The computation of
$\polp
(a)\rem
f_i$ over $\field$ is thus mapped to the composition 
$$
\polp (A_i) \bmod  (z^{p^
{e_i}}-\bar\theta_i)^
{\ell_i}
$$ over $\quotientL_i$, with $A_i=\Psi_i (a\mod f_i)$ and modulo the
purely inseparable $(z^{p^ {e_i}}-\bar\theta_i)^ {\ell_i}$.  In order
to perform this last composition efficiently, it is also necessary to
decrease the degree of~$\polp$ by first reducing~$g$ modulo the
characteristic polynomial of~$A_i$ in $\quotientL_i[z]/\genBy{(z^{p^
    {e_i}}-\bar\theta_i)^{\ell_i}}$. We call \emph{reduction} of~$g$
that step of the process (\cref{lemma:algo-gen-bivar-red}). It
produces a representative of~$G_i\in\quotientL_i[y]$ such
that~$B_i=\polp (A_i)\in\quotientB_i$ is obtained through the
univariate modular composition
$$
G_i(A_i) \bmod  (z^{p^
{e_i}}-\bar\theta_i)^
{\ell_i},
$$ which is computed with coefficients in $\quotientL_i$.  Finally,
the class $\polp (a) \bmod f_i \in \quotient$ is recovered as $\Psi
_i^{-1}(B_i)$.  In practice, the algorithms working with elements
of~$\quotientL_i$ use polynomial representatives in $\field
[\theta]_{<\deg(f_i)}$, that are the canonical lifts of their class.

The idea of using these homomorphisms was introduced by van der Hoeven
and Lecerf in the case~$e_i=0$~\cite{HoeLec17}; it is extended to
the general case in \cref{ssec:untangling,ssec:modulo_powers}. We keep
their terminology, calling \emph{untangling} an algorithm that
computes the map~$\Psi_i$ and \emph{tangling}, one which computes the
reverse map. Both these operations can be performed efficiently
(\cref{ssec:untangling}).

The univariate modular composition in $\quotientL_i[z]$ modulo the
purely inseparable polynomial~$ (z^{p^{e_i}}-\bar\theta )^{\mult_i}$
can be
achieved by \algoName{algo:CompositionModuloInseparable} of
\cref{sec:composition_randomized:proof_inseparable_small}
when~$\bigfield_i$ is a field. In general however, $\bigfield_i$ is
\emph{a product of fields}. In \cref{sec:overseparable}, the extension
of the scope of our algorithms to this setting is obtained using
a paradigm also due to van der Hoeven and Lecerf
called \emph{directed evaluation}~\cite{HoeLec20}.

\paragraph{Conventions} For $h$ of degree $d$ in $\field[\theta]$ and $f$ in $\field[\theta,z]$,
monic of degree $n$ in $z$, and for any $P$ in $\field[\theta,z]$, we
denote by $P \rem \genBy{h, f} \in \field[\theta,z]_{<(d,n)}$ the
polynomial obtained by reducing $P$ first by~$f$, then by $h$ (this is
the normal form of $P$ modulo $(h,f)$, if we see the latter as a
Gr\"obner basis for the lexicographic order induced by $\theta \prec
z$). Thus $P\rem\genBy{h,f}$ is a canonical lift of the
class of $P$ in $\field[\theta,z]/\genBy{h,f}$.
If $P\in\field[\theta,z]$, we use the
notation~$\bar P(z)$ to denote the class (projection) of~$P$
in~$\quotientL[z]$, where $\quotientL$ will be clear from the
context.

\subsection{Separable decomposition} \label{ssec:separabledecomp}

Let $p$ be the characteristic of the field $\field$ and let $f$ in
$\xRing$ be of degree $n$. The {\em separable decomposition} of
$f$ is
the set
\[
  \mathcal{S} = \{(h_1,e_1,\mult_1),\dots,(h_s,e_s,\mult_s)\},
  \quad \text{with~} 
  h_i \in \xRing \text{~and~} e_i,\mult_i \in \NN \text{~for
  all $i$},
\]
that satisfies the following properties, where we write
$f_i = h_i\!\left(x^{p^{e_i}}\right)^{\mult_i}$:
\begin{enumerate}
\item $f =  cf_1 \cdots f_s$ with $c\in\field\setminus\{0\}$
\label{step:sepdec1};
\item for all $i\ne j$ in $\{1,\dots,s\}$, $f_i$ and $f_j$ are coprime;\label{item:coprime}
\item for all $i$ in $\{1,\dots,s\}$, $h_i \in \xRing$ is separable,
monic and of positive degree $d_i$;
\item for all $i$ in $\{1,\dots,s\}$, $e_i=0$ (if $p=0$) or $e_i$ is
  in $\NN$ (if $p > 0$);
\item for all $i$ in $\{1,\dots,s\}$, $\mult_i$ is not divisible by
  $p$;
\item for all $i\ne j$ in $\{1,\dots,s\}$, $(e_i,\mult_i) \ne (e_j,\mult_j)$.
\end{enumerate}
The separable decomposition of $f$ can be computed in~$\softO{n}$ operations
in~$\field$ using an algorithm due to
Lecerf~\cite{Lec2008}. The special case when~$p=0$ recovers the more classical
\emph{square-free} factorization.

\subsection{Composition  over products of fields, modulo purely inseparable \texorpdfstring{$f$}{f}}\label{sec:overseparable}

Let $h$ be separable of degree $d$ in $\field[\theta]$, and consider
$f$ of the form $f=(z^{p^e}-c(\theta))^\mult \in\field[\theta,z]$, for
integers $e \in \NN$ and $\mult \in \NN_{>0}$, where $p$ is the
characteristic of $\field$. Given $A$ in $\field[\theta,z]_{<(d,n)}$
and $G$ in $\field[\theta,y]_{<(d,n)}$, with $n=\deg_z(f)=\mult p^e$,
we consider here the computation of $B=G(\theta, A) \rem \genBy{h, f}$.

This question is mapped to a univariate
composition problem with
coefficients in $\bigfield=\field [\theta]/\genBy{h}$: if we let $\bar
A$, $\bar G$, $\bar B$ and $\bar c$ be the projections of respectively
$A$, $G$, $B$  and $c$ in $\bigfield[z]$, $\bigfield[y]$,
$\bigfield[z]$ and $\bigfield$ (the
degree constraints show that $\bar A,\bar G,\bar B$ can be obtained
without any calculation from $A,G,B$, and conversely), then $\bar B =
\bar G(\bar A) \rem (z^{p^e}-\bar c)^\mult$ as an equality in
$\bigfield[z]$.

 When $h$ is irreducible, so that $\bigfield$ is a field, the algorithm of
 \cref{sec:composition_randomized:proof_inseparable_small} applies
 over
 $\bigfield$; as reported in \cref{thm:CompositionModuloInseparable},
 if $n=\deg(f) = \mult p^e$, the runtime is
 $\softO{dn^{\kappa}}$ operations in $\field$, coming from $\softO{
 (\mult
   p^e)^{\kappa}}=\softO{n^{\kappa}}$ times a factor in $\softO{d}$
 for the cost of arithmetic operations in~$\bigfield$.  However, we
 only assume $h$ separable, so that $\bigfield$ is a \emph{product of
   fields}.  The key difference is the presence of zero-divisors in
 $\bigfield$: a nonzero element of~$\bigfield$ is not necessarily
 invertible.  Since the procedures in
 \cref{sec:composition_randomized:proof_inseparable_small} use
 zero-tests and divisions, their direct application is not possible.

\subsubsection{Directed evaluation} 

The technique of \emph{directed evaluation},
due to van~der~Hoeven and Lecerf~\cite{HoeLec20}, is an
efficient version of the classical \emph{dynamic evaluation}
process~\cite{D5}. 

In dynamic evaluation, prior to each zero-test or
inversion, say by a quantity $q \in \bigfield$, the computation of
$h_1 = \gcd(q,h)$ gives the factorization $h=h_1 h_2$. Since $h$ is
separable, $h_1$ and $h_2$ are coprime, and $\bigfield$ can be
decomposed as the product $\bigfield_1 \times \bigfield_2$, with $q=0$
in $\bigfield_1=\field[\theta]/\langle h_1 \rangle$ and $q$ invertible
in~$\bigfield_2=\field[\theta]/\langle h_2 \rangle$. Under the dynamic
evaluation paradigm, the calculation can then be continued in two
branches, working modulo~$h_1$ and~$h_2$ separately.

In directed evaluation, the idea is rather to run the entire program
in a unique branch, then to apply the process recursively in residual
branches after reduction of input data modulo the corresponding
polynomial.  We do not detail the underlying techniques, for which we
refer to Sections~3 and~4 of~\cite{HoeLec20}, and simply apply their
\emph{panoramic evaluation} procedure~\cite[Algo.\,2]{HoeLec20}. It
takes as input a computation tree ${\mathcal T}$ over $\field$ 
(see \cref{sec:preliminaries}), a defining separable polynomial $h$ of degree
$d$ for $\bigfield$, and $\lambda=(\lambda_1,\ldots,\lambda_s)$ in
$\field[\theta]_{<d}^s$ (representing an input to ${\mathcal T}$ in $\bigfield^s$); it then returns a \emph{panoramic value},
defined as follows.

\begin{definition}[{\cite[Def.\,1 and Lem.\,2]{HoeLec20}}] \label{def:panoramic}
  Given an input $(h, \lambda, {\mathcal T})$ as above, a \emph{panoramic value} of ${\mathcal T}$
  at $\lambda$ is a set of pairs
  $\{(h_1,\varepsilon_1),\ldots,(h_t,\varepsilon_t)\}$, where
  \begin{itemize}
  \item $h_1,\dots,h_t$ are polynomials in $\field[\theta]$ that
    satisfy $h=h_1\cdots h_t$ (thus $ \bigfield \simeq
    \bigfield _1 \times \dots \times \bigfield _t$, with $\bigfield _i =
    \field[\theta]/\langle h_i\rangle$);
  \item for all $i$, $\varepsilon_i$ is in $\field[\theta]_{< d_i}^{\ell_i}$
   (representing an output in 
  $\bigfield _i ^{\ell _i}$), with $d_i = \deg(h_i)$ and $\ell_i$ in~$\NN$;
  \item   for all $1 \le i \le
  t$, let $h_{i,1},\dots,h_{i,k_i}$ be the factorization of $h_i$ into
  irreducibles. For $1 \le j \le k_i$, let $\bigfield_{i,j}$ be the
  the field $\field[\theta]/\langle h_{i,j}\rangle$, and denote by
  $\pi _{i,j}: \field[\theta] \to \bigfield_{i,j}$ the canonical
  projection $a \mapsto a \bmod h_{i,j}$ (the notation carries over to
  vectors over $\field[\theta]$).  Then~$\mathcal T$ is supposed to be
evaluable at  $\pi_{i,j}(\lambda)\in
  \bigfield _{i,j}^s$ for all $i,j$,
  and
  $\pi _{i,j}(\varepsilon_i) \in
  \bigfield_{i,j}^{\ell_i}$ is the result of evaluating ${\mathcal T}$
  (seen as a computation tree over~$\bigfield_{i,j}$) at $\pi_{i,j}
  (\lambda)$, 
  using the same branch of~$\mathcal T$ for all~$j$.
  \end{itemize}
\end{definition}
The application of this method requires that one uses computation
trees as the underlying computational model, which is the case
here~(\cref{sec:preliminaries}). Crucially, the 
cost overhead is then~$\softO{d}$~\cite[Thm.\,1]{HoeLec20}, i.e. similar (up to logarithmic
factors) to the one incurred if $h$ were irreducible.

\subsubsection{Algorithm} \label{subsec:algooverseparable}

With \algoName{algo:CompositionModuloInseparable-ProductOfFields}
we apply panoramic evaluation  (called \textproc{Panoramic} in our pseudocode)  to
\algoName{algo:CompositionModuloInseparable} for modular composition
over $\field$. Note that in addition to field elements, the latter
algorithm also takes two integers $e,\mult$ as input. Panoramic
evaluation can still be used in this context, since each choice of the
parameters~$e,\mult$ corresponds to a computation tree, to which the
techniques described above apply. This yields a factorization of $h$,
and performs the compositions modulo the corresponding factors;
the final result is then reconstructed using Chinese remaindering.

\begin{algorithm}
  \algoCaptionLabel{CompositionModuloInseparable-ProductOfFields}{h,c,e,\mult,A,G,r}
  \begin{algorithmic}[1]
    \Require
    \parbox[t]{0.8\textwidth}{
      $h$ separable of degree $d$ in $\field[\theta]$,

      $c$ in $\field[\theta]_{<d}$, 
      $e$ in $\NN$ and $\mult$ in $\NN_{>0}$ such that $f=(z^{p^e}-c)^{\mult}$ has degree $n=\mult p^e$, 

      $A \in \field[\theta,z]_{<(d,n)}$, 
      $G \in \field[\theta,y]_{<(d,n)}$, 
      $r \in \vecRing{n+\lceil n^{\eta} \rceil}$  
    }

    \Ensure $B=G(\theta,A) \rem \langle h,  f \rangle$, or \Fail 
    \State  \label{step:panoramic} \CommentLine{Splitting 
    $\bigfield \simeq  
\field[\theta]/\genBy{h_1} \times \dots \times \field[\theta]/\genBy{h_t}$ and reductions of $B$, accordingly, using 
\cite[Algo.\,2]{HoeLec20} }  

\Statex $\{(h_1 , B_1),\dots,\!(h_t, B_t)\} \gets$ \textproc{Panoramic}(\Call{algo:CompositionModuloInseparable}{}, $h,c, e, \mult, A, G,r$)
    \State\InlineIf{any of the $B_i$'s equals \Fail}{\Return \Fail}
    \State \Return $\textproc{ChineseRemaindering}((B_1,\dots,B_t),(h_1,\dots,h_t))$
  \end{algorithmic}
\end{algorithm}

\begin{proposition} \label{prop:insep-POF} 
  For a field~$\field$ of characteristic~$p$, given
  $h\in\field[\theta]$ separable of degree~$d$, $c$ in
  $\field[\theta]_{<d}$, integers $e$ in $\NN$ and $\mult$ in
  $\NN_{>0}$, $A$ in $\field[\theta,z]_{<(d,n)}$, $G$ in
  $\field[\theta,y]_{<(d,n)}$, $r$ in $\field^{n+\lceil n^\eta\rceil}$
  with $n=\mult p^e$ and~$\eta$ from \cref{eq:def-beta},
  \algoName{algo:CompositionModuloInseparable-ProductOfFields}
  uses $\softO{d (\mult p^e)^{\kappa}} = \softO{d n^{\kappa}}$
  operations in $\field$, with $\kappa < 1.43$ as in
  \cref{eq:def-gamma}.

  It returns either $G(\theta,A) \rem \langle h, f \rangle \in
  \field[\theta,z]_{<(d,n)}$ or \Fail, with $f=(z^{p^e}-c)^\mult$. If
  the entries of~$r$ are chosen uniformly and independently from a
  finite subset $S$ of $\field$, then the algorithm returns
  $G(\theta,A) \rem \langle h, f \rangle$ with probability at least
  $1- 2d n^4/\card{S}$.
\end{proposition}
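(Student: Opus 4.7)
The plan is to view \algoName{algo:CompositionModuloInseparable} (for the fixed integer parameters \(e\) and \(\mult\)) as a computation tree \({\mathcal T}\) over \(\field\), and to apply to it the panoramic evaluation procedure of van der Hoeven and Lecerf \cite[Algo.\,2 and Thm.\,1]{HoeLec20} with defining polynomial \(h\) for \(\bigfield=\field[\theta]/\genBy{h}\) and inputs \((c,A,G,r)\), where \(c\) and the coefficients of \(A\) and \(G\) lie in \(\bigfield\) while the entries of \(r\) lie in \(\field\hookrightarrow\bigfield\). This returns a factorization \(h=h_1\cdots h_t\) into pairwise coprime polynomials together with values \(B_i\) representing the composition in each residual algebra, or the flag \Fail{} in some branch; I would then reconstruct \(B\) from the \(B_i\)'s by Chinese remaindering at the negligible cost \(\softO{dn}\) \cite[\S10.3]{GaGe99}.

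For correctness, the key is \cref{def:panoramic}: for every irreducible factor \(h_{i,j}\) of each \(h_i\), the quotient \(\bigfield_{i,j}=\field[\theta]/\genBy{h_{i,j}}\) is a field, and the reduction \(\pi_{i,j}(B_i)\) coincides with the output of \algoName{algo:CompositionModuloInseparable} executed over \(\bigfield_{i,j}\) on the reduced inputs. By \cref{thm:CompositionModuloInseparable}, whenever that output is not \Fail, it equals \(\pi_{i,j}(G)(\theta,\pi_{i,j}(A))\rem f\) in \(\bigfield_{i,j}[z]\). Since the factors \(h_{i,j}\) (as \((i,j)\) varies) are exactly the irreducible factors of the separable \(h\), the Chinese remainder isomorphism \(\field[\theta,z]/\genBy{h,f}\simeq\prod_{i,j}\bigfield_{i,j}[z]/\genBy{f}\) identifies \(B\) with \(G(\theta,A)\rem\genBy{h,f}\); if any \(B_i\) equals \Fail{} then returning \Fail{} is safe and matches the specification.

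The complexity bound is a direct application of \cite[Thm.\,1]{HoeLec20}: the tree \({\mathcal T}\) has height \(\softO{n^\kappa}\) by \cref{thm:CompositionModuloInseparable}, so its panoramic evaluation over \(\bigfield\) costs \(\softO{d\,n^\kappa}\) operations in \(\field\), which dominates the final Chinese remaindering step.

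The step I expect to require the most care is the probability analysis, because the same random tuple \(r\) is reused by panoramic evaluation across all branches. Fix an irreducible factor \(h_{i,j}\) of \(h\). Because \(\field\hookrightarrow\bigfield_{i,j}\) is injective, the entries of \(r\) remain uniformly distributed in a subset of \(\bigfield_{i,j}\) of cardinality \(\card{S}\); \cref{thm:CompositionModuloInseparable} then bounds the failure probability on that branch by \(2n^4/\card{S}\). A union bound over the at most \(d\) irreducible factors of \(h\), valid whether or not the branches are independent, yields the overall bound \(2dn^4/\card{S}\) claimed in the statement.
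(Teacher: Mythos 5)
Your proof follows essentially the same route as the paper's: apply panoramic evaluation to \algoName{algo:CompositionModuloInseparable} viewed as a computation tree, use \cref{def:panoramic} to reduce correctness and failure to the per-field statement of \cref{thm:CompositionModuloInseparable} over each residue field $\bigfield_{i,j}$, reconstruct $B$ via the Chinese Remainder Theorem, and union-bound the failure probability over the at most $d$ irreducible factors of $h$. Your explicit remark that the embedding $\field\hookrightarrow\bigfield_{i,j}$ is injective, so the random tuple $r$ still ranges uniformly over a subset of $\bigfield_{i,j}$ of cardinality $\card{S}$ and the per-field bound $2n^4/\card{S}$ applies verbatim, is a small but correct point that the paper leaves implicit.
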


\begin{proof}
Combined with our \cref{thm:CompositionModuloInseparable}, Theorem~1
in~\cite{HoeLec20} gives the runtime estimate. 
In the pseudocode, the
output of the panoramic evaluation is written as $\{(h_1,B_1),\dots,(h_t,B_t)\}$, where
$h_1\dotsm h_t$ is a factorization of $h$ (not necessarily into
irreducibles), and for all $i$, either $B_i \in \field[\theta,z]_{<
(d_i,n)}$  with $d_i=\deg(h_i)$,
or~$B_i=\Fail$. At the
level of computation trees, a flag such as $\Fail$ is obtained
by setting a dedicated output value to 1 (and 0 otherwise); call $\flag _i$
this value, for $1\le i \le t$. If $\flag_i=1$ (failure), we set $B_i=0$
by convention, so in the rest of this proof, $B_i$ is an element of
$\field[\theta,z]$ for all $i$.

We use the following notation: for $1\leq i\leq t$, the
irreducible factors of $h_i$ are written
$h_{i,1},\dots,h_{i,k_i}$. For $1 \le j \le k_i$, we then define
$\bar c_{i,j},\bar A_{i,j},\bar G_{i,j}$ by taking~$c, A, G$ modulo
$h_{i,j}$ and
seeing them over the field $\bigfield_{i,j} = \field[\theta]/\langle
h_{i,j}\rangle$, so $\bar c_{i,j}$ is in $\bigfield_{i,j}$, $\bar A_
{i,j}$ in
$\bigfield_{i,j}[z]$ and $\bar G_{i,j}$ in $\bigfield_{i,j}[y]$. The
elements in the vector $r$ are already in $\field$, and thus in
$\bigfield_{i,j}$. 
Finally, we let $\bar B_{i,j}$ be the polynomial obtained by taking
$B_i
\in \field[\theta,z]$ and projecting it to $\bigfield_{i,j}[z]$
through reduction modulo $h_{i,j}$, and we set $\flag_{i,j}=\flag_i$ (recall
that $\flag_i \in \field$ is either $0$ or $1$).

Then, from~\cref{def:panoramic}, the key property of the output of the first step is that for all
indices $i,j$, $\flag_{i,j}$ and $\bar B_{i,j}$ are the result of
calling
\algoName{algo:CompositionModuloInseparable}{} on input
$\bar c_{i,j},e,\mult,\bar A_{i,j},\bar G_{i,j},r$ over the field
$\bigfield_{i,j}$. This implies in particular that our algorithm
returns $\Fail$ if and only if the computation fails over one of the
fields $\bigfield_{i,j}$.
  
To quantify the probability of this event, we apply
\cref{thm:CompositionModuloInseparable} over all fields
$\bigfield_{i,j}$. For any given~$i,j$,
\cref{thm:CompositionModuloInseparable} shows that $\flag _{i,j}=1$ occurs
with probability at most $2n^4/\card{S}$. Since there are at most $d$
such indices $i,j$, the probability that this happens for at least one
pair of indices is at most $2d n^4/\card{S}$.
Assume none of the $\flag_{i,j}$'s is~1, so that the algorithm does
not return $\Fail$. Then, for all $i,j$, $\bar B_{i,j} \in
\bigfield_{i,j}[z]_{<n}$ is equal to $\bar G_{i,j}(\bar A_{i,j}) \rem
(z^{p^e}-\bar c_{i,j})^\mult$. In terms of bivariate polynomials, the
Chinese Remainder Theorem then implies that for all $i$, $B_i$ itself
is equal to $G(\theta, A) \rem \genBy{h_i,(z^{p^e}-c)^\mult} \in
\field[\theta,z]_{<(d_i,n)}$. In the last step of the algorithm, we
further apply the Chinese Remainder Theorem coefficient-wise to the
$B_i$'s with respect to $z$; this gives us $G(\theta,A) \rem \langle
h, (z^{p^e}-c)^\mult \rangle$ as a polynomial in
$\field[\theta,z]_{<(d,n)}$. The cost of this last step is in
$\softO{d\mult p^e}$, so the proof is complete.
\end{proof}

The complexity bound $\softO{d n^{\kappa}}$ in \cref{prop:insep-POF} indicates that the
overhead coming from operations modulo $h(\theta)$ is $\softO{d}$,
as pointed out previously.

\subsection{Untangling and tangling}  \label{ssec:untangling}

In this subsection, we give the main tools (tangling, untangling and bivariate reduction) that are needed for
reducing composition modulo powers of separable polynomials to the
situation of the previous subsection. The central results are due to
van~der~Hoeven and Lecerf~\cite{HoeLec17} with $f=h(x)^\ell$ and $h$ separable 
(\cref{subsubsec:tandu,subsubsec:bivred}). We slightly generalize them
to the case $f=h(x^{p^e})^\ell$ with~$e > 0$ (\cref{subsubsuc:untang,subsubsec:redg}).

\subsubsection{Tangling and untangling} \label{subsubsec:tandu}
The starting point is the following observation.
\begin{lemma}[{\cite[\S4.2]{HoeLec17}}] \label{lemma:phi}
  For $h$ of degree $d$ in $\field[x]$ and for a
  positive integer $\mult$, there exists a $\field$-algebra homomorphism
  \begin{align*}
    \psi_{h,\ell}:~~ \xRing/\genBy{ h(x)^\mult} & \to
    \field[\theta,z]/\genBy{h(\theta), (z-\theta)^\mult } \\ x & \mapsto z.
  \end{align*}
  If moreover $h$ is separable then $\psi_{h,\ell}$ is an isomorphism.
\end{lemma}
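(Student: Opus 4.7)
The plan is to split the statement into two parts. First I would establish the well-definedness of $\psi_{h,\ell}$ by exhibiting a ring map $\field[x] \to \field[\theta,z]/\genBy{h(\theta),(z-\theta)^\ell}$ sending $x$ to $z$ and showing that $h(x)^\ell$ lies in its kernel. The key computation is a Taylor-style expansion: writing $z = \theta + (z-\theta)$ gives
\[
  h(z) = h(\theta) + (z-\theta)\, u(\theta,z)
  \quad \text{in } \field[\theta,z],
\]
for some $u \in \field[\theta,z]$. Modulo $h(\theta)$, this shows $h(z)$ lies in the ideal $(z-\theta)$, so $h(z)^\ell$ lies in $(z-\theta)^\ell$, which is zero in the target ring. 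Hence the assignment $x \mapsto z$ descends to a well-defined $\field$-algebra homomorphism $\psi_{h,\ell}$, proving the first claim regardless of separability.

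For the second claim, suppose that $h$ is separable. I would argue by a dimension count together with base change to the algebraic closure. The source $\xRing/\genBy{h^\ell}$ is free of rank $d\ell$ over $\field$. For the target, the ring $\field[\theta]/\genBy{h(\theta)}$ has dimension $d$ and, writing $\varepsilon = z-\theta$, the quotient $\field[\theta,z]/\genBy{h(\theta),(z-\theta)^\ell}$ is free of rank $\ell$ over $\field[\theta]/\genBy{h(\theta)}$ with basis $(1,\varepsilon,\ldots,\varepsilon^{\ell-1})$, hence of total $\field$-dimension $d\ell$. Thus it suffices to prove that $\psi_{h,\ell}$ becomes an isomorphism after extending scalars to an algebraic closure $\Kbar$ of $\field$.

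Over $\Kbar$, separability of $h$ yields a factorization $h(\theta) = \prod_{i=1}^d (\theta-\alpha_i)$ with pairwise distinct $\alpha_i$. The Chinese remainder theorem decomposes
\[
  \Kbar \otimes_\field \frac{\field[\theta,z]}{\genBy{h(\theta),(z-\theta)^\ell}}
  \;\cong\; \prod_{i=1}^d \frac{\Kbar[z]}{\genBy{(z-\alpha_i)^\ell}},
\]
and similarly $\Kbar \otimes_\field \xRing/\genBy{h^\ell} \cong \prod_{i=1}^d \Kbar[x]/\genBy{(x-\alpha_i)^\ell}$. Under the base-changed map $\psi_{h,\ell} \otimes \Kbar$, each factor on the left is sent identically by $x\mapsto z$ to the corresponding factor on the right, so the induced map is an isomorphism in each component and hence globally. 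Since base change along the faithfully flat extension $\field \hookrightarrow \Kbar$ reflects isomorphisms of finite-dimensional $\field$-algebras, $\psi_{h,\ell}$ itself is an isomorphism.

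The main subtlety is keeping track of the two ideals in the target simultaneously when reducing $h(z)^\ell$; this is what forces the Taylor expansion argument in step one and is ultimately what makes the CRT decomposition in step three align factor-by-factor. There is no computation beyond those two ingredients, so I do not anticipate a genuinely hard step.
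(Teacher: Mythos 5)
Your proof is correct and self-contained, whereas the paper itself does not prove this lemma directly: it cites van der Hoeven and Lecerf \cite[\S4.2]{HoeLec17} for the map $\pi_{h,\ell}\colon u(x)\mapsto u(z+\theta)$ into $\field[\theta,z]/\genBy{h(\theta),z^\ell}$ and then observes that $\psi_{h,\ell}$ is obtained from $\pi_{h,\ell}$ by post-composing with the translation automorphism $z\mapsto z-\theta$. Your well-definedness argument (Taylor-expanding $h(z)$ about $\theta$ to show $h(z)^\ell\in\genBy{h(\theta),(z-\theta)^\ell}$) is the elementary computation that the cited reference would have to carry out for $\pi_{h,\ell}$, and it cleanly handles the non-separable case where only a homomorphism is asserted. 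For the isomorphism, your approach is closest to the paper's proof of the generalization \cref{prop:Phi}: both pass to $\Kbar$, use that separability splits $h$ into distinct linear factors, and invoke CRT. The difference is that the paper uses the CRT decomposition of $\overline{\quotientB}$ only to compute the minimal polynomial of $z$ (showing it equals $f$, giving an injection from $\quotient$), and then concludes by a dimension count; you instead decompose both the source and the target and check the base-changed map is an isomorphism component-by-component, then descend along the faithfully flat extension. The two arguments have essentially the same content; yours arguably exposes the structure more symmetrically, while the paper's minimal-polynomial phrasing is what lets it say in one breath that the homomorphism exists and is injective. One small point worth making explicit in your second step: you should note that $(x-\alpha_i)^\ell$ and $(x-\alpha_j)^\ell$ are pairwise coprime for $i\neq j$ (which does follow from $\alpha_i\neq\alpha_j$) before applying CRT to the source $\Kbar[x]/\genBy{h^\ell}$, since there the exponent $\ell$ appears in the ideal and this is the only place where separability of $h$, rather than of $h^\ell$, is really what is needed.
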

This homomorphism is a variant of the homomorphism~$\pi_{h,\ell}$
considered by
van der Hoeven and Lecerf, that maps $u\in\xRing/\genBy{ h(x)^\mult}$
to $u(z+\theta)\in\field[\theta,z]/\genBy{h(\theta), z^\mult }$. The
morphism $\psi_{h,\ell}$ is obtained by composing $\pi_{h,\ell}$ with
a translation~$z\mapsto z-\theta$. It turns out that $\psi_{h,\ell}$
is more convenient than~$\pi_{h,\ell}$ for our generalization in
\cref{subsubsuc:untang}. van der Hoeven and Lecerf
call
$\textproc{Untangling}(h,\mult,u)$ the algorithm which implements 
$\pi_{h,\ell}$; we use this terminology for the algorithm
that implements~$\psi_{h,\ell}$: given $u$ in $\field[x]_{< d\mult}$,
it computes
$U\in
\field[\theta,z]_{<(d,\ell)}$ such that $U=u(z)
\rem \genBy{h(\theta), (z-\theta)^\mult }$. When $h$ is separable, the
inverse
operation is called
$\textproc{Tangling}(h,\mult,U)$. Again, we use their terminology 
for the inverse of $\psi_{h,\ell}$.

\begin{lemma}\label{lemma:untang}
  $\textproc{Untangling}$ and $\textproc{Tangling}$ (when defined)
  take $\softO{d
    \mult}$ operations in~$\field$.
\end{lemma}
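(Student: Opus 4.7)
\noindent\emph{Proof plan.}
The key observation is that $\psi_{h,\ell}$ differs from the untangling homomorphism
\(
  \pi_{h,\ell}: u(x) \mapsto u(z+\theta) \bmod \langle h(\theta), z^\ell\rangle
\)
introduced by van der Hoeven and Lecerf~\cite[\S4.2]{HoeLec17} only by the linear change of variable $z\leftrightarrow z-\theta$ on the target algebra. More precisely, if $V(\theta,z)\in\field[\theta,z]_{<(d,\ell)}$ denotes $\pi_{h,\ell}(u)$, then $V(\theta,z-\theta)\equiv u(z) \pmod{\langle h(\theta),(z-\theta)^\ell\rangle}$, so that the coefficients of $\psi_{h,\ell}(u)$ expressed in the basis $(\theta^i(z-\theta)^j)_{0\le i<d,\,0\le j<\ell}$ of $\field[\theta,z]/\langle h(\theta),(z-\theta)^\ell\rangle$ are exactly the coefficients $V_{i,j}$ of $\pi_{h,\ell}(u)$ in the basis $(\theta^iz^j)$.

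The first step is to invoke the algorithms of van der Hoeven and Lecerf for $\pi_{h,\ell}$ and (when $h$ is separable) its inverse, which both run in $\softO{d\ell}$ operations in~$\field$. The second step is a translation between the two bases. To obtain $\psi_{h,\ell}(u)$ expressed in the standard representation $\field[\theta,z]_{<(d,\ell)}$ on the basis $(\theta^iz^j)$ used throughout the rest of the paper, one applies a Taylor shift $z\mapsto z+\theta$ to $V(\theta,z)$ viewed as an element of $R[z]_{<\ell}$, where $R=\field[\theta]/\langle h(\theta)\rangle$. This is a Taylor shift of a polynomial of degree less than $\ell$ by a single element of~$R$, costing $\softO{\ell}$ operations in~$R$, hence $\softO{d\ell}$ operations in~$\field$. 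For tangling, one composes the inverse Taylor shift $z\mapsto z-\theta$ with the inverse of $\pi_{h,\ell}$; here the separability of $h$ is precisely what makes $\pi_{h,\ell}$ an isomorphism, and the same cost bound applies.

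The main obstacle, modest as it is, lies in the basis bookkeeping: one must verify that converting between the basis $(\theta^i(z-\theta)^j)$ natural to the quotient by $(z-\theta)^\ell$ and the monomial basis $(\theta^iz^j)$ natural to the rest of the pipeline is indeed a bivariate Taylor shift of total size~$d\ell$, and not an operation whose intermediate representatives blow up to size $(d+\ell)\ell$. Working row-wise in the variable $z$ with coefficients already reduced modulo $h(\theta)$ avoids this blow-up and keeps the cost within $\softO{d\ell}$, matching the target bound.
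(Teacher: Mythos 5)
Your approach is exactly the paper's: reduce $\psi_{h,\ell}$ to $\pi_{h,\ell}$ via the translation $z\mapsto z-\theta$, invoke the van der Hoeven--Lecerf algorithms, and account for the extra translation as a Taylor shift over $R=\field[\theta]/\langle h\rangle$ in $\softO{\ell}$ operations on elements of~$R$. Your concern about not leaving the truncated representation (working row-wise in $z$ with coefficients kept reduced mod $h$) is correct and is precisely why the shift stays within $\softO{d\ell}$.

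Two points are worth flagging, both of which the paper's proof makes explicit and you leave implicit. First, the van der Hoeven--Lecerf reference states and proves its \textproc{Untangling}/\textproc{Tangling} algorithms under the standing assumption that $h$ is separable, but the present lemma is invoked (in \textproc{BivariateReduction}) with $h$ an annihilating polynomial $\mu$ that need not be separable; one must therefore observe, as the paper does, that their Algorithms~4.3 and~4.5 and Propositions~4.6 and~4.10 nowhere use separability of~$h$ for the forward direction. You only invoke separability as the condition making $\pi_{h,\ell}$ invertible (for \textproc{Tangling}), which is correct, but the algorithmic statement in the source needs to be decoupled from that hypothesis for your ``invoke the algorithms'' step to apply at all. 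Second, the Taylor shift $z\mapsto z-\theta$ must be performed over a ring $R$ that may have small or positive characteristic (and zero divisors), so the classical divide-by-factorials fast shift does not apply; the paper cites Gerhard's ring-agnostic quasi-linear Taylor shift to cover this. Your $\softO{\ell}$ claim is right, but it needs that reference to be justified in full generality.
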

\begin{proof}This is mostly in~\cite{HoeLec17}. First, it is
easy to 
check that the algorithms~4.3 and~4.5 and the proofs of Prop.~4.6
and~4.10 of
that reference do not make
use of the separability of $h$.
Next, translation 
can be performed
in quasi-linear complexity over an arbitrary 
ring~\cite[Thm.~4.5]{Gerhard2004}, so that the complexity estimate is
unchanged for our variant of these algorithms.
\end{proof}

\subsubsection{Bivariate reduction} \label{subsubsec:bivred}
The computation of the composition $\polp(a)\rem h(x)^\mult$ for a
separable~$h$ reduces to computing~$\psi_{h,\ell}^{-1}(\polp(\psi_{h,\ell}
(a\bmod h(x)^\ell)))$, where
the inner composition is performed as a univariate composition in~$
\bigfield[z]$ modulo~$(z-\bar\theta)^\ell$, with $\mathbb L=\field
[\theta]/\genBy{h}$. 

In order to make use of the algorithms of the previous
sections to perform this composition, it is necessary to first reduce the degree
of~$\polp$.  
Denote by $A$ the canonical lift of 
$\psi_{h,\ell}
(a\bmod h(x)^\ell)$, and by $\bar A$ its projection in $\quotientL[z]$. 
The idea is to reduce $\polp$ modulo the characteristic
polynomial $(y-\bar A(\bar\theta))^\ell\in\quotientL[y]$ of $\bar A(z)$
modulo $(z-\bar\theta)^\ell$.

This is achieved in two steps. For  $h$ of degree $d$, 
we let $\alpha  \in \field[\theta]_{<d}$ be the canonical lift of
$\bar A(\bar\theta) \in \quotientL$.
First, one
computes the canonical lift of~$\psi_{\mu,\ell}
(g\bmod\mu^\ell)$,
where $\mu$ is an annihilating polynomial of~$\alpha \bmod h$. This produces~$\tilde G
(z,y)\in\field[z,y]_{<(\deg\mu,\ell)}$ such that
\[\tilde G(z,y)=\sum_{i=0}^{\ell-1}\tilde{G}_i(z)y^i=g(y)+\tilde U
(z,y)\mu
(z)+\tilde V(z,y)(y-z)^\ell\]
for some polynomials~$\tilde U,\tilde V$ in~$\field[z,y]$. 

Next, in view of~$\mu(\alpha)\equiv0\bmod h$, a modular composition of each of
the $\ell$ coefficients of this
polynomial~$\tilde G$ in~$y$
with~$\alpha(\theta)$ modulo~$h(\theta)$ gives $G(\theta,y)\in\field
[z,y]_
 {<(\deg\mu,\ell)}$ such that
\begin{equation}\label{eq:bivariatereduction}
G(\theta,y)=g(y)+U(\theta,y)h(\theta)+V(\theta,y)(y-\alpha
(\theta))^\ell,
\end{equation}
for some polynomials~$U,V$
in~$\field[\theta,y]$. \cref{eq:bivariatereduction} may also be read
as $\bar G(\bar A) = g(\bar A) \rem (z-\bar\theta)^\ell$ over~$\quotientL$.

These two steps are detailed in
\algoName{algo:BivariateReduction} below
  and correspond
to Steps~(2)-(4) of \cite[Algo.\,4.2]{HoeLec17}.
The runtime and probability analyses are new; they are based on the
results of the previous sections.

\begin{algorithm}
  \algoCaptionLabel{BivariateReduction}{h, \mult, \alpha, \polp, r}
  \begin{algorithmic}[1]
    \Require $h$ separable, monic, of degree $d$ in $\field[\theta]$,
    $\mult$ in $\NN_{>0}$, $\alpha$ in $\field[\theta]_{<d}$, $\polp$
    in $\yRing$, $r$ in $\vecRing{d+\lceil d^{\eta} \rceil}$ 
    \Ensure $G(\theta,y)= \polp(y) \rem \genBy{ h(\theta), 
    (y-\alpha(\theta))^\mult}\in
    \field[\theta,y]_{<(d,\mult)}$, or \Fail 
    \State
    \CommentLine{Either $\mu=\Fail$, or $\mu$ is nonzero in
      $\field[\gamma]_{\le 4d}$ and $\mu(\alpha) \equiv 0 \bmod h$}
    \Statex $\mu\gets \Call{algo:AnnihilatingPolynomial}{h, \alpha,
      r}$ \Comment{\cref{algo:AnnihilatingPolynomial}}
  
    \Statex    \InlineIf{$\mu=\Fail$}{\Return \Fail}
    \State $\tilde G \gets \textproc{Untangling}(\mu,\mult,\polp \rem \mu^\mult)$
    \Comment{$\tilde G(\gamma,z) \in \field[\gamma,y]_{<(\deg(\mu),\mult)}$, \cref{lemma:untang}}
    \State Write $\tilde G=\sum_{0 \le i < \mult} \tilde G_i(\gamma) y^i$ \Comment{$\tilde G_i \in \field[\gamma]_{<\deg(\mu)}$}
    \For{$i=0,\dots,\mult-1$}
    \Statex \hspace*{0.4cm}$G_i \gets \Call{algo:ModularCompositionBaseCase}{h,
    \alpha, \tilde G_i, r}$ \Comment{$G_i= \tilde G_i(\alpha) \rem h$ or 
      $\Fail$, \cref{algo:ModularCompositionBaseCase}}\label{loop:bred-gren}
    \Statex    \hspace*{0.4cm}\InlineIf{$G_i=\Fail$}{\Return \Fail}
    \EndFor
    \State $G \gets \sum_{0 \le i < \mult} G_i y^i$ \Comment{$G$ is in $\field[\theta,y]_{<(d,\mult)}$}
    \State $\Return$ $G$ \label{laststep:bred-gen}
  \end{algorithmic}
\end{algorithm}

\begin{lemma}\label{lemma:BivariateReduction} 
  Given $h$ in $\field[\theta]$ monic, separable and of degree $d$,
  $\alpha$ in $\field[\theta]_{<d}$, $g$ in $\yRing$, $r$ in
  $\field^{d+\lceil d^{\eta} \rceil}$ with $\eta$ from
  \cref{eq:def-beta}, and $\mult$ in $\NN_{>0}$,
  \algoName{algo:BivariateReduction}{} uses $\softO{\deg(g) +
    d^{\kappa} \mult}$ operations in $\field$ with $\kappa < 1.43$ as
  in \cref{eq:def-gamma}, and returns either $\polp \rem \genBy{h
  (\theta),
    (y-\alpha(\theta))^\mult}$ or \Fail. If the entries of~$r$ are
    chosen
  uniformly and independently from a finite subset $S$ of $\field$,
  then the algorithm returns $\polp \rem \genBy{ h, (y-\alpha)^\mult
  }$ with probability at least $1-6(\mult+1)d^2/\card{S}$.
\end{lemma}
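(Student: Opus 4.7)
The plan is to establish in turn the correctness, the complexity, and the failure probability of \algoName{algo:BivariateReduction}{}, using the results collected in the preceding sections as black boxes.

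For correctness, I would assume that none of the intermediate calls returns \Fail{} and track the defining relations of the polynomials produced. By \cref{cor:annihilatingbasecase} the polynomial $\mu$ computed at the first step is nonzero, of degree at most $4d$, and satisfies $\mu(\alpha) \equiv 0 \bmod h(\theta)$. The property of $\psi_{\mu,\ell}$ from \cref{lemma:phi,lemma:untang} then guarantees
\[
  \tilde G(\gamma, y) \equiv g(y) \bmod \langle \mu(\gamma), (y-\gamma)^\ell \rangle
  \quad\text{in } \field[\gamma, y],
\]
with $\tilde G \in \field[\gamma, y]_{<(\deg(\mu), \ell)}$. By construction of the $G_i$'s at \cref{loop:bred-gren}, the output polynomial $G(\theta, y) = \sum_i G_i(\theta) y^i \in \field[\theta, y]_{<(d, \ell)}$ satisfies $G(\theta, y) \equiv \tilde G(\alpha(\theta), y) \bmod h(\theta)$. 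Substituting $\gamma \mapsto \alpha(\theta)$ in the relation above and using $\mu(\alpha(\theta)) \equiv 0 \bmod h(\theta)$ yields $G(\theta, y) \equiv g(y) \bmod \langle h(\theta), (y-\alpha(\theta))^\ell \rangle$, which combined with the degree bounds on $G$ identifies it as the sought canonical form (this is the content of \cref{eq:bivariatereduction}).

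For the complexity, I would add up the costs of the individual steps. The call to \algoName{algo:AnnihilatingPolynomial} uses $\softO{d^\kappa}$ operations (\cref{cor:annihilatingbasecase}). Computing $g \rem \mu^\ell$, which is the input passed to \textproc{Untangling}, uses $\softO{\deg(g) + d\ell}$ operations since $\deg(\mu^\ell) \le 4d\ell$. \textproc{Untangling} applied with $\mu$ of degree at most $4d$ and multiplicity $\ell$ takes $\softO{d\ell}$ operations (\cref{lemma:untang}). Each of the $\ell$ iterations of the loop at \cref{loop:bred-gren} invokes \algoName{algo:ModularCompositionBaseCase} with modulus $h$ of degree $d$ and argument $\tilde G_i$ of degree less than $4d$, hence costs $\softO{d^\kappa}$ operations by \cref{prop:algo-mod_comp}, for a total of $\softO{d^\kappa \ell}$. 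Summing these contributions yields the announced bound $\softO{\deg(g) + d^\kappa \ell}$.

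For the probability bound, I would rely on the fact that $h$ is separable. By \cref{prop:composition-separable}, each of the $\ell$ calls to \algoName{algo:ModularCompositionBaseCase} with input $(h, \alpha, \tilde G_i, r)$ returns \Fail{} with probability at most $6d^2/\card{S}$ over the random choice of the entries of $r$. The call to \algoName{algo:AnnihilatingPolynomial} on $(h,\alpha,r)$ executes the same randomized substeps as \algoName{algo:ModularCompositionBaseCase} (cf.~its pseudocode), so the same argument gives failure probability at most $6d^2/\card{S}$ for it as well. A union bound over these $\ell + 1$ events---even though they share the same random vector $r$---yields the announced probability bound $6(\ell+1)d^2/\card{S}$.

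The main obstacle I anticipate is in the correctness argument, namely verifying carefully the interplay between \textproc{Untangling} in the auxiliary variable $\gamma$ and the subsequent coefficient-wise specialization $\gamma \mapsto \alpha(\theta)$ reduced modulo $h(\theta)$: one must ensure that the $(y-\gamma)^\ell$ part of the defining relation of $\tilde G$ transfers cleanly to a $(y-\alpha(\theta))^\ell$ factor modulo $h(\theta)$ after substitution, which is where the annihilation $\mu(\alpha(\theta)) \equiv 0 \bmod h(\theta)$ is essential. Everything else is bookkeeping on degrees and a routine union bound.
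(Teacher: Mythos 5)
Your proposal is correct and follows essentially the same route as the paper: correctness via the relation $\tilde G(\gamma,y) \equiv g(y) \bmod \genBy{\mu(\gamma),(y-\gamma)^\mult}$ specialized at $\gamma\mapsto\alpha(\theta)$ modulo $h$ (the paper defers this to the discussion preceding the algorithm, which carries out exactly the computation you spell out, culminating in \cref{eq:bivariatereduction}), the cost by summing the same four terms, and the probability by invoking \cref{prop:composition-separable}~---~which, as you note but slightly over-justify, directly covers both \algoName{algo:AnnihilatingPolynomial} and \algoName{algo:ModularCompositionBaseCase}~---~together with a union bound over the $\mult+1$ calls.
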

\begin{proof}
The reduction of~$g\bmod\mu^\ell$ is justified by the fact that
$\mu(a)^\ell=0\bmod h^\ell$. The correction of the rest of the algorithm
when \cref{laststep:bred-gen} is reached follows from the discussion
above.

  Since $h$ is separable, \cref{prop:composition-separable} applies;
  it shows that the first step computes an annihilating polynomial for
  $\alpha$ modulo $h$ with probability at least $1-6d^2/\card{S}$. It
  also shows that each call to
  \algoName{algo:ModularCompositionBaseCase}{} succeeds with at least
  the same probability. Altogether, the probability of success of the
  whole algorithm is thus at least $1-6(\mult+1)d^2/\card{S}$.
  
  By \cref{cor:annihilatingbasecase}, the first step
  uses~$\softO{d^{\kappa}}$ operations in $\field$. Since $\deg(\mu)$
  is in~$\bigO{d}$, computing $g \rem \mu^\mult$ takes $\softO{\deg(g)
    + d\mult}$ operations in $\field$, and \cref{lemma:untang} shows
  that deducing $\tilde G$ takes a further $\softO{ d\mult}$
  cost. Finally, by \cref{prop:algo-mod_comp}, each pass in the loop
  at \cref{loop:bred-gren} takes $\softO{d^{\kappa}}$ operations, so
  that the overall runtime is $\softO{\deg(g) + d^\kappa\mult}$.
\end{proof}

\subsubsection{General Tangling and Untangling}\label{subsubsuc:untang}
In fields of positive characteristic, the isomorphism of
\cref{lemma:phi} and the complexity of its realization generalize as
follows.
\begin{proposition}\label{prop:Phi} Let $f=h(x^{p^e})^\ell$ be of degree~$n$,
with $h$ of degree~$d$ in $\field[x]$, and $\field$ of
characteristic~$p$ ($e=0$ if $p=0$). There exists a $\field$-algebra homomorphism

  \begin{align*}
    \Psi_{h,\ell}:~~  \xRing/\genBy{f}& \to \field[\theta,z]/\genBy{h
    (\theta),
    (z^{p^e}-\theta)^\mult}\\    
    x & \mapsto  z.
  \end{align*}
  If moreover $h$ is separable then $\Psi_{h,\ell}$ is an isomorphism.
  Applying $\Psi_{h,\ell}$ or its inverse when the latter is defined
  takes
  quasi-linear time $\softO{n}=\softO{d\mult p^e}$ over $\field$.
\end{proposition}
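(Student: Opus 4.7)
My plan is to prove the statement in three steps: well-definedness of $\Psi_{h,\mult}$ as a $\field$-algebra homomorphism, bijectivity when $h$ is separable, and the $\softO{n}$ complexity by reduction to the already-known $e=0$ case of \cref{lemma:phi,lemma:untang}.

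For well-definedness, I would show that $f(z) = h(z^{p^e})^\mult$ lies in the ideal $\genBy{h(\theta), (z^{p^e}-\theta)^\mult}$ of $\field[\theta,z]$. Setting $w = z^{p^e}-\theta$, the identity $h(z^{p^e}) = h(\theta + w) = h(\theta) + w\, \tilde h(\theta,w)$ for some $\tilde h \in \field[\theta,w]$ (a formal Taylor expansion over $\ZZ$, requiring no division) yields $h(z^{p^e})^\mult \in \genBy{h(\theta), w^\mult}$, as needed. Hence $\Psi_{h,\mult}$ is a well defined $\field$-algebra homomorphism sending $x$ to $z$.

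Assume now that $h$ is separable. Both rings have $\field$-dimension $n = d\mult p^e$: for the source it is the degree of $f$; for the target, observe that the generators $h(\theta)$ and $(z^{p^e}-\theta)^\mult$ have coprime leading monomials $\theta^d$ and $z^{p^e\mult}$ under the lexicographic order with $z\succ\theta$, so they form a Gr\"obner basis and the $\field$-basis of the quotient is $\{\theta^a z^b : 0\le a<d,\ 0\le b<p^e\mult\}$. To show $\Psi_{h,\mult}$ is an isomorphism, I extend scalars to the algebraic closure $\Kbar$ and compare componentwise. Factor $h(\theta)=\prod_{i=1}^d(\theta-\alpha_i)$ with pairwise distinct $\alpha_i\in\Kbar$; since $\Kbar$ is perfect, there exist unique $\beta_i\in\Kbar$ with $\beta_i^{p^e}=\alpha_i$, and these are pairwise distinct (if $p=0$ then $e=0$ and $\beta_i=\alpha_i$). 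Chinese remaindering then yields
\[
  \Kbar\otimes_\field\bigl(\xRing/\genBy{f}\bigr) \cong \prod_{i=1}^d \Kbar[x]/\genBy{(x-\beta_i)^{p^e\mult}}
\]
and, using $(z^{p^e}-\alpha_i)^\mult = (z-\beta_i)^{p^e\mult}$,
\[
  \Kbar\otimes_\field\bigl(\field[\theta,z]/\genBy{h(\theta),(z^{p^e}-\theta)^\mult}\bigr) \cong \prod_{i=1}^d \Kbar[z]/\genBy{(z-\beta_i)^{p^e\mult}}.
\]
The map $x\mapsto z$ identifies these two product decompositions factor by factor, so $\Kbar\otimes_\field\Psi_{h,\mult}$ is an isomorphism; by faithful flatness of $\Kbar/\field$, $\Psi_{h,\mult}$ itself is an isomorphism.

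For the complexity, I would reduce to the case $e=0$. Any $u(x)\in\xRing_{<n}$ admits a unique decomposition $u(x) = \sum_{i=0}^{p^e-1} x^i u_i(x^{p^e})$ with $u_i\in\field[y]_{<d\mult}$, obtained by a coefficient reindexing in $O(n)$ operations. Applying the base case untangling $\psi_{h,\mult}$ from \cref{lemma:phi} to each $u_i$ produces $U_i(\theta,z')\in\field[\theta,z']_{<(d,\mult)}$ in $\softO{d\mult}$ operations per $u_i$ by \cref{lemma:untang}, for a total of $p^e\cdot\softO{d\mult}=\softO{n}$. Substituting $z'=z^{p^e}$ and forming $\Psi_{h,\mult}(u)=\sum_i z^i U_i(\theta,z^{p^e})$ is another $O(n)$ reindexing: each term $z^i(z^{p^e})^b$ lands in a distinct monomial $z^{i+bp^e}$ with $i+bp^e<\mult p^e$, so no reduction is needed. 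Correctness rests on the fact that the subring of the target generated by $\theta$ and $z^{p^e}$ is canonically isomorphic, via $z'\mapsto z^{p^e}$, to $\field[\theta,z']/\genBy{h(\theta),(z'-\theta)^\mult}$, so that $\Psi_{h,\mult}$ restricted to $\field[x^{p^e}]/\genBy{h(x^{p^e})^\mult}$ coincides with $\psi_{h,\mult}$ under this identification. The inverse map $\Psi_{h,\mult}^{-1}$ is handled symmetrically: split $U(\theta,z)$ according to the residue of its $z$-exponent modulo $p^e$ to recover the $U_i$'s, apply the base case tangling $\psi_{h,\mult}^{-1}$ to each, and recombine; same $\softO{n}$ bound. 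The main point to verify carefully is this consistency of the restriction with $\psi_{h,\mult}$, which is essentially a bookkeeping check using the free module structure of the target over its $(\theta,z^{p^e})$-subring.
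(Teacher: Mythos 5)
Your proof is correct and, in its overall architecture, follows the same plan as the paper's: reduce complexity to the base case $e=0$ via the $x^{p^e}$-adic decomposition $u=\sum_{i<p^e} x^i u_i(x^{p^e})$, and establish bijectivity by extending scalars to $\Kbar$ and factoring $h$ into distinct linear factors. The genuine differences are in the details of the algebraic part, and they improve on the paper's argument. For existence of the homomorphism, you prove directly that $f(z)\in\genBy{h(\theta),(z^{p^e}-\theta)^\mult}$ via the Taylor-expansion identity $h(z^{p^e})=h(\theta)+(z^{p^e}-\theta)\tilde h$; this handles the non-separable case explicitly, whereas the paper only deduces it in the separable case, as a by-product of computing the minimal polynomial of $z$ in the target algebra. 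For bijectivity, the paper shows the minimal polynomial of $z$ in $\field[\theta,z]/\genBy{h(\theta),(z^{p^e}-\theta)^\mult}$ equals $f$ (by checking it is $(x^{p^e}-\xi_i)^{\mult}$ in the $i$th $\Kbar$-component and that these are pairwise coprime), embeds $\xRing/\genBy{f}$ as the $z$-generated subalgebra, and concludes by a dimension count; you instead compute the target's dimension directly (the Gr\"obner-basis observation that $\theta^d$ and $z^{p^e\mult}$ are coprime leading monomials is a nice, self-contained way to get it), decompose both sides of $\Psi_{h,\mult}\otimes\Kbar$ via CRT, match the factors, and descend by faithful flatness. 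Both routes rest on the injectivity of $t\mapsto t^{p^e}$ over $\Kbar$; yours is a bit more symmetric and avoids appealing to the minimal-polynomial characterization, at the modest cost of invoking faithfully flat descent. Your verification that the restriction of $\Psi_{h,\mult}$ to $\field[x^{p^e}]$ agrees with $\psi_{h,\mult}$, which justifies calling \textproc{Untangling} coefficient-wise, is a useful bookkeeping step the paper leaves implicit.
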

\begin{proof}
  Write $\quotient= \xRing/\genBy{f}$ and
  $\quotientB=\field[\theta,z]/\genBy{h(\theta), (z^{p^e}-\theta)^\mult}$.
  When $h$ is separable, we prove that
  the minimal polynomial of $z$ in
  the $\field$-algebra $\quotientB$ is $f$.  This implies that
  $\quotient$ is $\field$-isomorphic (as a $\field$-algebra) to the
  subalgebra of $\quotientB$ generated by $z$. Since $\quotientB$ has
  $\field$-dimension $n=\deg(f)$, this subalgebra is $\quotientB$
  itself, and the first claim will follow.

  To determine the minimal polynomial of~$z$, we can work in
  $\overline\quotientB =
  \Kbar[\theta,z]/\genBy{h(\theta),(z^{p^e}-\theta)^\mult}$,
  where~$\Kbar$ is an algebraic closure of $\field$. If we let
  $\xi_1,\dots,\xi_d$ be the roots of $h$ in $\Kbar$ (which are
  pairwise distinct), then $\overline\quotientB$ is isomorphic, as a
  $\Kbar$-algebra, to the product
  \[\Kbar[\theta,z]/\genBy{\theta-\xi_1, (z^
  {p^e}-\xi_1)^\mult}\times
  \dots \times \Kbar[\theta,z]/ \genBy{\theta-\xi_d,
    (z^{p^e}-\xi_d)^\mult}.\] The minimal polynomial of $z$ in the
  $i$th factor above is $\mu_i=(x^{p^e}-\xi_i)^\mult$ for $1\leq i
  \leq d$.  These polynomials are pairwise coprime: since $t \mapsto
  t^{p^e}$ is a bijection in $\Kbar$, $\mu_i$ has a unique root
  in $\Kbar$, which is the $p^e$-th root of $\xi_i$, and these
  roots are pairwise distinct, since the $\xi_i$'s are. As a result, the minimal
  polynomial of $z$ in $\overline\quotientB$, or equivalently in
  $\quotientB$, is the product $\mu_1 \cdots \mu_d=f$.

  For the second claim, we take $a$ in $\xRing$ of degree less than~$n$, and  
  write it as $a=\sum_{0 \le i < p^e} a_i(x^{p^e}) x^i$,
  with all $a_i$'s of degree less than $n/p^e=d\mult$. Then,
  \begin{alignat}{3}
    \Psi_{h,\ell}(a\bmod f) &\equiv \sum_{0 \le i < p^e} a_i(z^{p^e})
    z^i&\quad
    \bmod \langle h
    (\theta), (z^{p^e}-\theta)^\mult\rangle, \nonumber\\
    &\equiv \sum_{0 \le i < p^e} \tilde A_i(\theta,z^{p^e})
    z^i&\quad\bmod
    \langle h
    (\theta), (z^{p^e}-\theta)^\mult\rangle,     \label{eq:phia_red}
  \end{alignat}
  where $\tilde A_i(\theta,z) = a_i(z) \rem \langle h(\theta),
  (z-\theta)^\mult\rangle$ is in $\field[\theta,z]_{<(d,\mult)}$;
  these degree bounds show that the expression in \cref{eq:phia_red}
  is indeed reduced modulo $\langle f(\theta),
  (z^{p^e}-\theta)^\mult\rangle$.  Each $\tilde A_i = \psi_{h,\ell}
  (a_i)$
  can
  be computed in time $\softO{d\mult}$ by \cref{lemma:untang}, so that
  one application of $\Psi_{h,\ell}$ takes $\softO{d\mult p^e}=\softO
  {n}$
  operations in $\field$, as claimed.

  Conversely, any element $B$ in $\field[\theta,z]_{<(d,\mult p^e)}$
  can be written as in \cref{eq:phia_red}, for some $\tilde B_i$'s in
  $\field[\theta,z]_{<(d,\mult)}$. Applying $\psi_{h,\ell}^{-1}$ to
  each of
  them allows us to recover $b =\Psi_{h,\ell}^{-1}(B)$, by
  reversing the steps
  above.  The cost analysis is similar to the one for $\Psi_{h,\ell}$.
\end{proof}
We call $\textproc{Untangling-General}(h,e,\mult,a)$ the
algorithm outlined in this proof that applies $\Psi_{h,\ell}$ to (the
class
modulo $f$ of) $a \in \xRing_{<n}$, and returns the canonical lift
of
$\Psi_{h,\ell}(a \bmod f)$ to $\field[\theta,z]_{<(d, \mult p^e)}$;
equivalently,
$A(\theta,z)=a(z) \rem \genBy{h(\theta), (z^{p^e}-\theta)^\mult}$.
For $B$ in $\field[\theta,z]_{<(d,\mult p^e)}$, the inverse operation
is written $\textproc{Tangling-General}(h,e,\mult,B)$.

\subsubsection{Main reduction}\label{subsubsec:redg}

A more general form of bivariate reduction is needed in
\cref{ssec:modulo_powers}.
With $h$ of degree~$d$ as
before, given $g$ in $\field[y]$ and now a bivariate $A$ in
$\field[\theta,z]_{<(d,\mult p^e)}$, the aim is to reduce the degree
of~$g$ before performing the composition 
in~$\quotientL[z]$ modulo
$(z^{p^e}-\bar\theta)^\ell$ with $\quotientL=\field[\theta]/\genBy{h}$. Denoting by $\bar A$
the projection of $A$ in $\quotientL[z]$, the idea is to compute $\bar
G=\polp \rem \chi_{\bar A}$ in $\quotientL[z]$, where $\chi_{\bar A}
\in \quotientL[y]$ is the characteristic polynomial of $\bar A \in
\quotientL[z]$ in the extension $\quotientL \to
\quotientL[z]/\langle (z^{p^e}-\bar\theta)^\mult \rangle$.
Thus, $\bar G\in \quotientL[y]$ has degree less than $\mult p^e$; its
canonical lift $G \in \field[\theta,y]_{<(d,\mult p^e)}$ is the
output.

 The computation of $\polp\rem\chi_{\bar A}$ is made easy by
an explicit formula for the characteristic polynomial~$\chi_{\bar A}$. In the following lemma,
we let $\sigma: \quotientL
\to \quotientL$ be the $p^e$th-power operator; we write the image
of $\Lambda \in \quotientL$ as $\Lambda ^\sigma$.  This
notation is extended to the coefficient-wise action on
polynomial rings over
$\quotientL$.

\begin{lemma} \label{lem:charpolyextension}
  The characteristic polynomial of $\bar A$ relative to the extension
  $\quotientL \to \quotientL[z]/\langle (z^
  {p^e}-\bar\theta)^\mult \rangle$ is $\chi_{\bar A}=(y^
  {p^e}-\bar\alpha)^\mult
  \in \quotientL[y]$, where $\bar\alpha = {\bar A}^\sigma(\bar\theta)
  \in
  \quotientL$.
\end{lemma}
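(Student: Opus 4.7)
The plan is to establish the identity via a base change that extracts a $p^e$-th root of $\bar\theta$, turning $\quotientB = \quotientL[z]/\langle(z^{p^e}-\bar\theta)^\mult\rangle$ into a local ring with nilpotent maximal ideal, where the characteristic polynomial of multiplication by $\bar A$ becomes transparent.

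First, I would introduce the $\quotientL$-algebra $\quotientL' = \quotientL[w]/\langle w^{p^e} - \bar\theta\rangle$. Since $w^{p^e}-\bar\theta$ is monic of degree $p^e$ in $w$, $\quotientL'$ is a free $\quotientL$-module of rank $p^e$, and in particular the canonical map $\quotientL\hookrightarrow\quotientL'$ is injective. Setting $\quotientB' = \quotientB \otimes_\quotientL \quotientL' = \quotientL'[z]/\langle(z^{p^e}-\bar\theta)^\mult\rangle$, the Frobenius identity in characteristic $p$ gives $z^{p^e}-\bar\theta = z^{p^e}-w^{p^e} = (z-w)^{p^e}$, so that $\quotientB' \cong \quotientL'[z]/\langle(z-w)^n\rangle$ with $n = \mult p^e$.

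Next I would compute $\chi_{\bar A,\quotientB'}$ directly. Writing $\bar A(z) = \sum_{k=0}^{n-1} b_k(z-w)^k$ with $b_k\in\quotientL'$ and $b_0 = \bar A(w)$, one sees that in the $\quotientL'$-basis $\{(z-w)^k : 0\le k<n\}$ of $\quotientB'$ the multiplication by $(z-w)$ is the strict lower-shift (nilpotent of index $n$), hence multiplication by $\bar A(z)$ is lower triangular with all diagonal entries equal to $b_0 = \bar A(w)$. This yields $\chi_{\bar A,\quotientB'}(y) = (y-\bar A(w))^n$. Applying Frobenius in characteristic $p$ to the exponent $n=\mult p^e$, this rewrites as $(y^{p^e}-\bar A(w)^{p^e})^\mult$, and since $\bar A(w)^{p^e} = \bar A^\sigma(w^{p^e}) = \bar A^\sigma(\bar\theta) = \bar\alpha$, we obtain $\chi_{\bar A,\quotientB'}(y) = (y^{p^e}-\bar\alpha)^\mult$ in $\quotientL'[y]$.

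Finally, the characteristic polynomial commutes with base change along $\quotientL \to \quotientL'$, as $\quotientB$ is a free $\quotientL$-module and $\chi_{\bar A}$ is defined as a determinant. Hence $\chi_{\bar A,\quotientB}(y)$, viewed through the injection $\quotientL[y]\hookrightarrow\quotientL'[y]$, equals $(y^{p^e}-\bar\alpha)^\mult$; since the right-hand side already has coefficients in $\quotientL$, we conclude $\chi_{\bar A,\quotientB}(y) = (y^{p^e}-\bar\alpha)^\mult$ in $\quotientL[y]$, proving the lemma. The main subtlety I expect is to justify the base-change step cleanly given that $\quotientL = \field[\theta]/\langle h\rangle$ is only a product of fields rather than a field itself; however, the argument relies solely on $\quotientL\hookrightarrow\quotientL'$ being injective, which follows from freeness. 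The case $p=0$ degenerates to $e=0$, in which case $\quotientL'=\quotientL$ and the argument reduces to the classical computation in the upper-triangular basis $\{(z-\bar\theta)^k\}$.
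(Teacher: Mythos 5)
Your proof is correct and follows essentially the same route as the paper: both pass to $\quotientL^*=\quotientL[w]/\langle w^{p^e}-\bar\theta\rangle$, use the Frobenius factorization $(z^{p^e}-\bar\theta)^\mult=(z-w)^{\mult p^e}$ to compute the characteristic polynomial as $(y-\bar A(w))^{\mult p^e}$, and then rewrite via $\bar A(w)^{p^e}=\bar A^\sigma(\bar\theta)=\bar\alpha$. You merely make explicit some steps the paper leaves implicit (the lower-triangular structure in the basis $(z-w)^k$, and the justification that base change along the injective map $\quotientL\hookrightarrow\quotientL^*$ preserves the characteristic polynomial).
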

\begin{proof}
  The characteristic polynomial $\chi_{\bar A}$ can be computed
  relative to the extension $\quotientL^*\to \quotientL^*[z]/\genBy{
    (z^{p^e}-\bar\theta)^\mult}$, where we set $\quotientL^* =
  \quotientL[w]/\langle w^{p^e}-\bar\theta\rangle$.
  In $\quotientL^*[z]$, we have the factorization
  \[(z^{p^e}-\bar\theta)^\mult = (z^{p^e}-w^{p^e})^\mult = (z-w)^
  {\mult
  p^e},\]
   so the characteristic
  polynomial of $\bar A$ in
  $\quotientL^*[z]/\genBy{(z^{p^e}-\bar\theta)^\mult}$ is 
  \[(y-\bar
A(w))^{\mult p^e}=(y^{p^e}-\bar A(w)^{p^e})^\mult=(y^{p^e}-\bar
A^\sigma(\bar\theta))^\mult.\qedhere\]
\end{proof}

The reduction of $\polp$ by this characteristic polynomial is
described in
\algoName{algo:MainReduction}. First, the canonical lift
$\alpha\in\field
[\theta]_{<d}$ of $\bar\alpha\in\quotientL$ from
\cref{lem:charpolyextension} is computed. Next, in 
\cref{algo:genbivred:gi}, the polynomial~$\polp$ is rewritten as a
polynomial in~$y$ of degree less than~$p^e$, with coefficients~$g_i
(y^{p^e})$. Each of these polynomials~$g_i(y)$ can then be reduced
modulo~$\genBy{h,(y-\alpha)^\ell}$ by \algoName{algo:BivariateReduction},
producing a polynomial~$G_i(\theta,y)$ (\cref{algo:modpower:loop}).
Thus, $G_i(\theta,y)\equiv g_i(y)\bmod \genBy{h,(y-\alpha)^\ell}$, whence $G_i(\bar \theta, y^
{p^e})\equiv g_i(y^{p^e})\bmod\chi_{\bar A}$. Recombining these
coefficients yields~$G(\theta,y)$ such that $G(\bar \theta, y)\equiv g
(y)\bmod\chi_{\bar A}$. Finally, since $\chi_{\bar A}(\bar A)\equiv 0$ in
$\quotientL[z]/\genBy{(z^{p^e}-\bar\theta)^\ell}$, it
follows
that $G(\theta,A)\equiv
  g(A)\bmod \genBy{h(\theta),(z^{p^e}-\theta)^\mult}$. 

\begin{algorithm}
  \algoCaptionLabel{MainReduction}{h,e,\mult,A,\polp,r}
  \begin{algorithmic}[1]
  \Require $h$ separable, monic, of degree $d$ in $\xRing$, $e$ in 
  $\NN$, $\mult$ in $\NN_{>0}$, $A$ in  $\field[\theta,z]_{<(d,\mult p^e)}$, $\polp$ in $\yRing$, $r$ in
  $\vecRing{d+\lceil d^{\eta} \rceil}$ 

  \Ensure $G\in \field[\theta,y]_{<(d,\mult p^e)}$ such that 
   $G(\theta, A)\equiv\polp(A) \bmod \genBy{h(\theta),(z^{p^e}-\theta)^\ell}$, or \Fail

  \State Write $A = \sum_{0 \le i < \mult p^e} A_i z^i$ \Comment{$A_i \in \field[\theta]_{<d}$}

  \State \CommentLine{Compute $\alpha$ s.t.~the characteristic
  polynomial of~$\bar A$ is
    $(y^{p^e}-\alpha)^\mult$ (see \cref{lem:charpolyextension})} \Statex
    $\alpha \gets \sum_{0
    \le i <
    \mult p^e} {A_i}^{p^e} \theta^i$; $\alpha \gets \alpha \rem h$
  \Comment{$\alpha \in \field[\theta]_{<d}$} \label{algo:modpower:k} \State Write $\polp=\sum_{0
    \le i < p^e} \polp_i(y^{p^e}) y^i$
  \label{algo:genbivred:gi}
  \Comment{$\deg(g_i) \le  \deg(g)/p^e$}
  \For{$i=0,\dots,p^e-1$}
  \Statex \hspace*{0.4cm} {$G _i \gets \Call{algo:BivariateReduction}{h,\mult,\alpha,\polp_i,r}$} 
  \label{algo:modpower:loop}  
  \Comment{$G_i \in \field[\theta,y]_{<(d,\mult)}$}
  \EndFor
  
  \State $G \gets \sum_{0 \le i < p^e} G_i(\theta,y^{p^e}) y^i$ \label{untangredgend}
  \Comment{$G \in \field[\theta,y]_{<(d,\mult p^e)}$}

  \State \Return $G$
  \end{algorithmic}
\end{algorithm}

\begin{proposition}\label{lemma:algo-gen-bivar-red} Given $h$
  separable, monic, of degree $d$ in $\field[x]$, $e$ in $\NN$,
  $\mult$ in $\NN_{>0}$, $A$ in $\field[\theta,z]_{<(d,\mult p^e)}$,
  $\polp$ in $\yRing$, and $r$ in $\vecRing{d+\lceil d^{\eta} \rceil}$,
  \algoName{algo:MainReduction}{} uses $\softO{\deg(g) +
    n^\kappa}$ operations in $\field$, with $n=d\mult p^e$ and $\kappa
  < 1.43$ as in \cref{eq:def-gamma}. It returns
  $G\in\field[\theta,y]_{<(d,\mult p^e)}$ such that~$G(\theta,A)\equiv
  g (A)\bmod \genBy{h(\theta),(z^{p^e}-\theta)^\mult}$, or~\Fail.

  If the entries of $r$ are chosen
  uniformly and independently from a finite subset $S$ of $\field$,
  then the algorithm returns $G$ with probability at least
  $1-6(\mult+1)d^2p^e/\card{S}$.
\end{proposition}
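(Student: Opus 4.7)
My plan is to break the proof into three parts: correctness, complexity, and probability of success.

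For correctness, I will first verify that the polynomial $\alpha$ constructed at Step~2 is the canonical lift of $\bar\alpha = \bar A^\sigma(\bar\theta)$ from \cref{lem:charpolyextension}. Writing $A = \sum_i A_i z^i$, its coefficient-wise image under $\sigma$ projects in $\quotientL$ to $\sum_i \bar A_i^{p^e} \bar\theta^i$, whose canonical lift is precisely $(\sum_i A_i^{p^e} \theta^i) \rem h$, as computed. Then, by \cref{lem:charpolyextension}, the characteristic polynomial of $\bar A$ over $\quotientL$ is $\chi_{\bar A}(y) = (y^{p^e} - \bar\alpha)^\ell$, so its canonical lift to $\field[\theta,y]$ divides (in $\quotientL[y]$) after reduction the polynomial $(y^{p^e} - \alpha(\theta))^\ell$, which is the ``lifted'' characteristic polynomial. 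Splitting $g(y) = \sum_{0\le i<p^e} g_i(y^{p^e})\,y^i$, the call to \algoName{algo:BivariateReduction} computes $G_i(\theta,y) \equiv g_i(y) \bmod \genBy{h(\theta),(y-\alpha(\theta))^\ell}$; substituting $y\mapsto y^{p^e}$ gives $G_i(\theta,y^{p^e}) \equiv g_i(y^{p^e}) \bmod \genBy{h(\theta),(y^{p^e}-\alpha(\theta))^\ell}$, i.e.\ modulo a lift of $\chi_{\bar A}$. Recombining as at Step~\ref{untangredgend} yields $G(\theta,y) \equiv g(y) \bmod \genBy{h(\theta),(y^{p^e}-\alpha(\theta))^\ell}$. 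Finally, since $\chi_{\bar A}(\bar A) \equiv 0$ in $\quotientL[z]/\genBy{(z^{p^e}-\bar\theta)^\ell}$ by the Cayley–Hamilton theorem applied to $\bar A$, substituting $y=A$ gives $G(\theta,A)\equiv g(A)\bmod\genBy{h(\theta),(z^{p^e}-\theta)^\ell}$.

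For complexity, the work at Step~\ref{algo:modpower:k} consists of raising each of the $\mult p^e$ coefficients $A_i \in \field[\theta]_{<d}$ to the power $p^e$ by repeated squaring in $\field[\theta]$, for a cost of $\softO{d p^e}$ per coefficient and $\softO{n p^e}$ in total; however since $A_i^{p^e}$ needs only be known modulo $h$, we can reduce after each squaring so each exponentiation costs $\softO{d\log p^e}$, giving $\softO{n}$ overall. The grouping at Step~\ref{algo:genbivred:gi} is free, and each $g_i$ has degree at most $\lceil\deg(g)/p^e\rceil$. By \cref{lemma:BivariateReduction}, each of the $p^e$ calls uses $\softO{\deg(g_i) + d^\kappa \ell}$ operations, so the loop contributes $\softO{\deg(g) + p^e d^\kappa \ell}$. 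Since $\kappa > 1$, one has $p^e d^\kappa \ell \le (dp^e\ell)^\kappa = n^\kappa$, proving the announced complexity $\softO{\deg(g) + n^\kappa}$.

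For the probability analysis, the only randomized sub-procedure invoked is \algoName{algo:BivariateReduction}, which is called $p^e$ times with the same random vector $r$. By \cref{lemma:BivariateReduction}, each individual call returns \Fail{} with probability at most $6(\ell+1)d^2/\card{S}$ when the entries of $r$ are drawn uniformly and independently from $S$. A union bound over the $p^e$ calls (valid despite the calls sharing randomness, since we only need to bound the probability that at least one fails) gives a failure probability at most $6(\ell+1)d^2 p^e/\card{S}$, as claimed. The main (minor) obstacle is keeping track of the lifting back-and-forth between $\quotientL[y]$ and $\field[\theta,y]$ in the correctness argument, and confirming that the cost of computing $\alpha$ can be absorbed in $\softO{n}$ by repeated reduction modulo $h$ during the exponentiation.
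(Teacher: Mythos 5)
Your proof is correct and takes essentially the same approach as the paper: the correctness argument via \cref{lem:charpolyextension} and Cayley--Hamilton reconstructs the discussion the paper places immediately before the statement, and the cost and probability analyses match (cost of $\alpha$ absorbed into $\softO{n}$, loop bounded by \cref{lemma:BivariateReduction}, union bound over the $p^e$ calls). Your explicit remark that the union bound remains valid even though all $p^e$ calls share the same random vector $r$ is a useful clarification of a point the paper leaves implicit.
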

\begin{proof}
The correction of the algorithm when it does not return \Fail\ follows
from the
discussion above.  

Working coefficient-wise, since $e=\bigO{\log(p^e)}$ the computation of
$\alpha$ at \cref{algo:modpower:k} takes $\softO{\mult p^e}$
operations on polynomials modulo $h$ of degree~$d$, so $\softO{n}$
operations in $\field$; reducing it modulo $h$ has the same complexity
bound.  The cost is thus governed by the loop, which uses
$\softO{\deg(g) + d^\kappa \mult p^e}=\softO{\deg(g) +
  (n/d)d^{\kappa}}$ operations by \cref{lemma:BivariateReduction}.
The latter also allows us to quantify the probability of success: each
of the $p^e$ calls to \algoName{algo:BivariateReduction}{} succeeds
with probability at least $1-6(\mult+1)d^2/\card{S}$.
\end{proof}

\subsection{Composition modulo powers}  \label{ssec:modulo_powers}

We now consider $f =h(x^{p^e})^\mult$, with $h$ separable of degree $d$
and integers $e,\mult$, with $\mult$ positive and not divisible by $p$
(and $e=0$ if $p=0$); the degree of $f$ is $n=d\mult p^e$.
\algoName{algo:ModularCompositionModuloPower}{} computes $\polp(a)
\rem f$, extending to $e\neq0$ the approach of van der Hoeven and
Lecerf~\cite{HoeLec17} outlined in 
\cref{subsubsec:bivred}.

We first compute $A(\theta,z)=a(z) \rem \genBy{h(\theta),
  (z^{p^e}-\theta)^\mult}$; this is done using the general untangling
operation of \cref{subsubsuc:untang}. The reduction of the degree of
$g$ is done by \algoName{algo:MainReduction}{}, giving $G$
in $\field[\theta,y]_{<(d,\mult p^e)}$, such that $G(\theta,A) \equiv
g(A) \bmod \genBy{h(\theta), (z^{p^e}-\theta)^\mult}$; the construction
of $A$ then implies $G(\theta,A) \equiv g(a(z)) \bmod \genBy{h(\theta),
  (z^{p^e}-\theta)^\mult}$. The quantity $B=G(\theta,A)\rem
\genBy{h(\theta), (z^{p^e}-\theta)^\mult}$ is obtained by
\algoName{algo:CompositionModuloInseparable-ProductOfFields}{} of \cref{subsec:algooverseparable}.
We finally apply the general tangling procedure of
\cref{subsubsuc:untang} to $B$; since tangling is a $\field$-algebra
isomorphism, the outcome is $b=g(a) \rem h(x^{p^e})^\mult$.  

\begin{algorithm}
  \algoCaptionLabel{ModularCompositionModuloPower}{h,e,\mult,a,\polp,r}
  \begin{algorithmic}[1]
  \Require $h$ separable, monic, of degree $d$ in $\xRing$, $e$ in
  $\NN$, $\mult$ in $\NN_{>0}$, such that $f =h(x^{p^e})^\mult$ has degree $n = d\mult p^e$, $a$ in
  $\xRing_{<n}$, $\polp$ in $\yRing$, $r$ in $\vecRing{\rho+\lceil
    \rho^{\eta} \rceil}$ where $\rho=\max(d,n/d)$ \Ensure
  $b=\polp(a)\rem f$ or \Fail
  \label{algo:vdHL}
  \State \CommentLine{Conversion of $a\in\xRing$ to a bivariate polynomial (\cref{prop:Phi})}
  \Statex $A \gets \textproc{Untangling-General}(h,e,\mult,a)$ \label{genuntanga} 
 \Comment{$A \in \field[\theta,z]_{<(d,\mult p^e)}$}
  \State \CommentLine{Reduction of $g$ modulo the characteristic polynomial of $\bar A$ 
  (\cref{lemma:algo-gen-bivar-red})}
  \Statex $G \gets\Call{algo:MainReduction}{h,e,\ell,A,g,(r_k)_{0\le k
  < d+\lceil
    d^{\eta} \rceil}}$\label{callgenbivred}
  \Comment{$G \in \field[\theta,y]_{<(d,\mult p^e)}$}
  \Statex \InlineIf{$G=\Fail$}{\Return \Fail}
  \State \CommentLine{Modular composition, $B= G(\theta,A) \rem \genBy{ h(\theta),
  (z^ {p^e}-\theta)^{\mult}}  \in \field[\theta,z]_{<(d,\mult p^e)}$ or $\Fail$}\label{algo:modpower:c}
  \Statex $B \gets 
  \Call{algo:CompositionModuloInseparable-ProductOfFields}{h, \theta, e, \mult,A, G,(r_k)_{0\le k < \frac{n}{d}+\lceil
    (\frac{n}{d})^{\eta} \rceil}}$
  \Statex \InlineIf{$B=\Fail$}{\Return \Fail}
  \State \CommentLine{Recovery of $b$ over $\field$ (\cref{prop:Phi})}
  \Statex $b \gets \textproc{Tangling-General}(h,e,\mult,B)$ \label{gentangb}
  \State \Return $b$
  \end{algorithmic}
\end{algorithm}

\begin{proposition} \label{prop:ModularCompositionModuloPower} 
  For a field~$\field$ of characteristic~$p$, given $h$ separable,
  monic and of degree $d$ in $\field[x]$, integers $e$ in $\NN$ and
  $\mult$ in $\NN_{>0}$, $a$ in $\field[x]_{<n}$, $g$ in $\field[y]$,
  $r$ in $\field^{\rho+\lceil \rho^\eta\rceil}$, with $n=d \mult p^e$,
  $\rho=\max(d,\mult p^e)$ and~$\eta$ from \cref{eq:def-beta},
  \algoName{algo:ModularCompositionModuloPower}{} uses $\softO{\deg(g)
    + n^\kappa}$ operations in $\field$, with $\kappa< 1.43$ as in
  \cref{eq:def-gamma}, and returns $\polp(a)\rem h(x^{p^e})^\mult$
  or~\Fail.

  If the entries of $r$ are chosen uniformly and independently from a
  finite subset $S$ of $\field$, then the algorithm returns $\polp(a)
  \rem h(x^{p^e})^\mult$ with probability at least $1-
  (2n^4+12n^2)/\card{S}$.
\end{proposition}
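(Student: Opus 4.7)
The plan is to verify correctness, cost, and probability of success by concatenating the guarantees of the four subroutines called by \algoName{algo:ModularCompositionModuloPower}{}, since the algorithm is essentially a pipeline.

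For correctness, I trace the values through the pipeline. By \cref{prop:Phi}, \cref{genuntanga} produces $A \in \field[\theta,z]_{<(d,\mult p^e)}$ which is the canonical lift of $\Psi_{h,\ell}(a \bmod f)$, so $A(\theta,z) \equiv a(z) \bmod \genBy{h(\theta),(z^{p^e}-\theta)^\mult}$. By \cref{lemma:algo-gen-bivar-red}, if \cref{callgenbivred} does not fail then $G\in\field[\theta,y]_{<(d,\mult p^e)}$ satisfies $G(\theta,A) \equiv g(A) \bmod \genBy{h(\theta),(z^{p^e}-\theta)^\mult}$, and in particular $G(\theta,A) \equiv g(a(z)) \bmod \genBy{h(\theta),(z^{p^e}-\theta)^\mult}$. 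By \cref{prop:insep-POF}, if \cref{algo:modpower:c} does not fail then $B = G(\theta,A) \rem \genBy{h(\theta),(z^{p^e}-\theta)^\mult} \in \field[\theta,z]_{<(d,\mult p^e)}$, hence $B$ is the canonical lift of $\Psi_{h,\ell}(g(a) \bmod f)$. Finally, since $h$ is separable, $\Psi_{h,\ell}$ is a $\field$-algebra isomorphism (\cref{prop:Phi}), so \cref{gentangb} returns $b = g(a) \rem f$ in quasi-linear time.

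For the complexity, the untangling and tangling steps each cost $\softO{n}$ by \cref{prop:Phi}; the call to \algoName{algo:MainReduction}{} costs $\softO{\deg(g) + n^\kappa}$ by \cref{lemma:algo-gen-bivar-red}; and the call to \algoName{algo:CompositionModuloInseparable-ProductOfFields}{} costs $\softO{d \, (\mult p^e)^\kappa}$ by \cref{prop:insep-POF}. Since $\mult p^e = n/d$ and $\kappa \ge 1$ with $d\ge 1$, we have $d(n/d)^\kappa = d^{1-\kappa} n^\kappa \le n^\kappa$, so this last cost is $\softO{n^\kappa}$. Summing gives the stated bound $\softO{\deg(g) + n^\kappa}$.

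For the failure probability, I use a union bound on the only two randomized subroutines, applied with the appropriate slice of $r$. By \cref{lemma:algo-gen-bivar-red} the call to \algoName{algo:MainReduction}{} fails with probability at most $6(\mult+1)d^2 p^e/\card{S}$, which using $d \le n$ and $d p^e = n/\mult \le n$ is at most $6(dn + d^2 p^e)/\card{S} \le 12 n^2/\card{S}$. By \cref{prop:insep-POF} the call to \algoName{algo:CompositionModuloInseparable-ProductOfFields}{} fails with probability at most $2d (\mult p^e)^4/\card{S} = 2 n^4/(d^3 \card{S}) \le 2n^4/\card{S}$. A union bound gives a total failure probability of at most $(2n^4 + 12 n^2)/\card{S}$, as claimed. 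The main (minor) obstacle is merely bookkeeping these size estimates for $d$, $\mult$, $p^e$ against $n = d\mult p^e$; everything else is a direct concatenation of existing guarantees.
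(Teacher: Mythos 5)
Your proposal is correct and follows essentially the same route as the paper: the correctness is traced through the pipeline (what the paper calls "the previous discussion"), the cost is the sum of the subroutine costs with the observation that $d(\mult p^e)^\kappa = n^\kappa/d^{\kappa-1} \le n^\kappa$, and the failure probability is a union bound over the two randomized subroutines with the same bookkeeping inequalities. The only (cosmetic) difference is that you inline the correctness trace into the proof whereas the paper places it in the prose just before the algorithm.
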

\begin{proof}
  That the output of the algorithm is $\polp(a)\rem h(x^{p^e})^\mult$
  or $\Fail$ follows from the previous discussion.
  By \cref{prop:Phi}, with $n=d\mult p^e$, the first and last step
  both take $\softO{n}$ operations in $\field$. 
  \Cref{lemma:algo-gen-bivar-red} shows that \cref{callgenbivred}
  takes $\softO{\deg(g) + n^\kappa}$ operations in $\field$.
  Finally, \cref{prop:insep-POF} shows that
  \cref{algo:modpower:c} takes $\softO{d (\mult p^e)^{\kappa}}=\softO{d(n/d)^{\kappa}}$
  operations in $\field$, so the runtime estimate is proved.

  The steps that may output $\Fail$ are the computation of~$G$ at
  \cref{callgenbivred} and that of $B$ at
  \cref{algo:modpower:c}. By \cref{lemma:algo-gen-bivar-red}, the
  former happens with probability at most 
   $6(\mult+1)d^2p^e/\card{S} \leq 12 n^2/\card{S}$; by
  \cref{prop:insep-POF}, the latter happens with probability
  at most $2d(\mult p^e)^4/\card{S} \leq 2n^4/\card{S}$.
\end{proof}

\subsection{Main algorithm and its analysis}\label{ssec:mainalgo}

We can now give \algoName{algo:ModularComposition} performing modular
composition with general polynomials, and prove~\cref{thm:intro}.

The separable decomposition $f_1\dotsm f_s$ of~$f$ allows us to reduce
the problem to compositions modulo the $f_i$'s, which are powers of
polynomials as in \cref{ssec:modulo_powers}. The polynomials $a$ and
$\polp$ are first reduced so that the compositions modulo the $f_i$'s
are called with inputs of appropriate degrees, then the result
$b=\polp(a) \rem f$ is recovered using Chinese remaindering.  The
number of random elements in $\field$ we use is an \emph{a priori}
bound that
can be refined if the separable decomposition of $f$ is known.

\begin{algorithm}
  \algoCaptionLabel{ModularComposition}{f,a,\polp,r}
  \begin{algorithmic}[1]
    \Require $f$ of degree $n$ in $\xRing$, $a$ in $\xRing_
    {<n}$, $\polp$ in $\yRing_{<n}$,
$r \in \vecRing{n+\lceil n^{\eta} \rceil}$
    \Ensure $b=\polp(a)\rem f$ or $\Fail$
    \State \CommentLine{Decomposition of $f$ \cite[Algo.\,3]{Lec2008}}
    \Statex $(h_1,e_1,\mult_1),\dots,(h_s,e_s,\mult_s) \gets \textproc{\sc SeparableDecomposition}(f)$
    \Comment {$h_i$ monic of degree $d_i$ in $\xRing$}
    \State $(f_1,\dots,f_s) \gets (h_1(x^{p^{e_1}})^{\mult_1},\dots,h_s(x^{p^{e_s}})^{\mult_s})$
    \Comment {$f_i$ of degree $n_i=d_i \mult_i p^{e_i}$ in $\xRing$}
    
    \State     \CommentLine {Degree reduction, $\deg(a_i) < n_i$}
    \Statex $(a_1,\dots,a_s) \gets (a \rem f_1,\dots,a \rem f_s)$
    \State \CommentLine{Annihilating polynomials of the $a_i$ modulo
    $f_i$} \label{step:modcomp:loop}
    \Statex \InlineFor{$i=1,\dots,s$} 
    \Statex \hspace*{0.4cm}Write $a_i = \sum_{0 \le k < n_i} a_{i,k} x^k$ 
         
  \Statex \hspace*{0.4cm}$\alpha _i \gets \sum_{0 \le k < n_i} {a_
  {i,k}}^{p^{e_i}} x^k$; 
      $\alpha _i \gets \alpha _i \rem h_i$
\Statex \hspace*{0.4cm}$\mu _i \gets \Call{algo:AnnihilatingPolynomial}{h_i, \alpha _i,(r_k)_{0\le k < d_i+\lceil
    d_i^{\eta} \rceil}
}$ 
    \Comment{$\mu_i(\alpha_i)\equiv 0\bmod h_i$, $\deg(\mu_i) \le 4d_i$}
    \Statex \hspace*{0.4cm}\InlineIf{$\mu_i=\Fail$}{\Return \Fail}
    \Statex \hspace*{0.4cm}$\chi_i \gets \mu_i(y^{p^{e_i}})^{\mult_i}$
    \Comment{$\chi_i(a_i) \equiv 0 \bmod f_i$, $\deg(\chi _i) \le 4 n_i$}
    
    \State\CommentLine{Degree reduction, $\deg(\polp_i) < 4 n_i$}
    \Statex $(\polp_1,\dots,\polp_s) \gets (\polp \rem \chi
    _1,\dots,\polp \rem \chi _s)$\label{step:modcomp:polpi}
    \State \CommentLine{Modular compositions, either  $b_i \equiv\polp(a) \bmod
    f_i$ or $\Fail$} \label{step:modcomp:loop2}
    \Statex \InlineFor{$i=1,\dots,s$}
    \Statex  
    \hspace*{0.4cm}$\rho_i\gets \max(d_i,{n_i}/{d_i})$
    \Statex  
    \hspace*{0.4cm}$b_i \gets  \Call{algo:ModularCompositionModuloPower} {h_i,e_i,\mult_i,a_i,\polp_i,(r_k)_{0\le k < \rho_i+\lceil
    \rho_i^{\eta} \rceil}}$
    \Statex \hspace*{0.4cm}\InlineIf{$b_i=\Fail$}{\Return \Fail}
     \State \Return $\textproc{ChineseRemaindering}((  b_1,\dots, b_s), (f_1,\dots,f_s))$  \label{finalCRT}
  \end{algorithmic}
\end{algorithm}

\begin{proof}[Proof of \cref{thm:intro}]
  First we prove correctness. Suppose that none of the subroutines
  returns~$\Fail$; we show that the output is $\polp(a)\bmod f$. 

  Using the same notation for $p^e$-th powering as in
  \cref{lem:charpolyextension}, at the $i$-th pass in the loop at
  \cref{step:modcomp:loop}, the polynomial $\mu_i$ satisfies
  $\mu_i(\alpha _i)\equiv 0 \bmod h_i$, with $\alpha _i = a_i^\sigma$
  (that is, the coefficients of $\alpha _i$ are the $p^{e_i}$-th
  powers of those of $a_i$). Raising this equality to the power
  $\mult_i$ gives $\mu_i^{\mult_i}(\alpha _i) \equiv 0\bmod
  h_i^{\mult_i}$. Evaluation at $y^{p^{e_i}}$ using the facts that
  $\alpha _i(y^{p^{e_i}})={a_i}^{p^{e_i}}$ and $\chi _i= \mu
  _i(y^{p^{e_i}})^{\mult_i}$ finally gives $\chi_i(a_i)\equiv0 \bmod
  f_i$.  The degree bound $\deg(\mu_i) \le 4 d_i$ follows from the
  specifications of \algoName{algo:AnnihilatingPolynomial}{}, and the
  degree bound for $\chi _i$ follows.

  In the second for loop at \cref{step:modcomp:loop2}, $b_i$ satisfies
  $b_i \equiv \polp_i(a_i) \bmod h_i(x^{p_{e_i}})^{\mult_i}\equiv
  \polp_i(a_i) \bmod f_i$. Since~$g_i = g \rem \chi_i$, and $\chi _i$
  cancels $a_i$ modulo $f_i$, $b_i$ is also equal to $\polp(a_i) \rem
  f_i$, and thus to $\polp(a) \rem f_i$. It follows that the return
  value, obtained by Chinese remaindering, is indeed $\polp(a) \rem
  f$.

  Next, we bound the overall cost. The call to $ \textproc{\sc
  SeparableDecomposition}(f)$
  takes $\softO{n}$ operations in $\field$~\cite[Prop.\,5]{Lec2008}.
  Using repeated squaring,
  the polynomials $f_1,\dots,f_s$ can be computed in quasi-linear time
  as well, and the same holds for the remainders $a_1,\dots,a_s$.

  Consider a fixed index $i$ in the loop at \cref{step:modcomp:loop},
  and denote $d_i \mult_i p^{e_i}$ by $n_i$. Working coefficient-wise,
  computing $\alpha_i=a_i^\sigma$ takes $\softO{ n_i}$ operations
  since $e_i=\bigO{\log(n_i)}$, and reducing it modulo $h_i$ has the
  same complexity bound. By \cref{prop:algo-mod_comp},
  \algoName{algo:AnnihilatingPolynomial}{} uses $\softO{d_i^\kappa}$
  operations in $\field$.  If it does not fail, $\chi_i$ is then
  deduced in $\softO{n_i}$ operations again, hence the cost of the
  loop is $\softO{n^\kappa}$.

  When~\cref{step:modcomp:polpi} is reached, since all $\chi_i$'s
  have respective degrees at most $4 n_i$, fast
  multiple remaindering gives the polynomials $\polp_i$ in $\softO{n}$
  operations, with $\deg(\polp_i) < 4n_i$.  Then, by
  \cref{prop:ModularCompositionModuloPower}, each call to
  \algoName{algo:ModularCompositionModuloPower}{} uses $\softO
  {n_i^\kappa}$ operations in $\field$, so their total
  cost is $\softO{n^\kappa}$ again. Finally, the cost of the last step
  (if reached) is $\softO{n}$. Altogether, the cost is
  $\softO{n^\kappa}$, as claimed.

  It remains to discuss the probability of failure. By
  \cref{prop:composition-separable}, the $i$th call to
  \algoName{algo:AnnihilatingPolynomial}{} fails with probability at
  most $6d_i^2/\card{S}$; hence, the probability that we successfully
  exit the first for loop is at least $1- 6n^2/\card{S}$.  Then, by
  \cref{prop:ModularCompositionModuloPower}, the $i$th call to
  \algoName{algo:ModularCompositionModuloPower}{} fails with
  probability at most $(2n_i^4+12n_i^2)/\card{S}$, so the probability
  that we successfully exit the second for loop is at least
  $1-(2n^4+12n^2)/\card{S}$. Altogether this gives a failure
  probability of at most $(2n^4+ 18n^2)/\card{S}$.
\end{proof}


\section{Applications}
\label{sec:applications}

We now list several variants of the modular composition problem and related
ones and sketch how the algorithms presented above can improve the best known
complexity.

\subsection{Annihilating polynomials}
\label{sec:minpoly}

\subsubsection{Annihilating polynomial}

A by-product of \algoName{algo:ModularComposition} is a Las Vegas
algorithm that takes~$\softO{n^\kappa}$ ($\kappa$ from \cref{thm:intro}) arithmetic operations for computing an
annihilating polynomial for $a$ of degree at most $4n$.  

Indeed, with the notation of the algorithm, for all $1 \leq i \leq s$,
since $\chi_i(a_i)\equiv0 \bmod f_i$ we have $\chi_i(a)= r_i f_i$ for
some $r_i\in \field[x]$.  Hence $\prod _{i=1}^s \chi _i$ is an
annihilating polynomial for $a$ modulo $f=\prod _{i=1}^s f_i$, whose
degree is at most $4\sum _{i=1}^s n_i = 4n$.

\subsubsection{Minimal and characteristic polynomial}
In general, our knowledge of the minimal and characteristic polynomial depends on whether we have a certified 
basis of relations. 

\begin{proposition}
  \label{cor:smithform}
  Let $R \in \ymatRing{m}{m}_{\le 2d}$ be the matrix produced by
  \algoName{algo:CandidateBasis}. If $R$ is a basis of $\rmodfa$, then
  the first $m$
  invariant factors of $\charmat$, hence in particular the minimal polynomial
  $\minpoly \in \yRing$ of \(a\) modulo \(f\), can be computed in
  $\softO{m^{\omega}d}$ operations in \(\field\). If furthermore \Cert{} is
  returned (implying that $R$ is a basis of \(\rmodfa\)), then the product of
  these invariant factors gives the characteristic polynomial $\charpoly \in
  \yRing$ of~\(a\) modulo \(f\). 
\end{proposition}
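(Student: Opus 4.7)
The plan is to apply \cref{prop:invariant-factors} after bringing $R$ to upper triangular form. Specifically, I would first compute the Hermite normal form $H \in \ymatRing{m}{m}$ of $R$ using \cite[Algo.\,1 and 3]{LVZ17}; since $R$ has size $m \times m$ and degree at most $2d$, this costs $\softO{m^{\omega}d}$ operations in $\field$, matching the HNF call already used at \cref{algo:ChangeOfBasis:hnf} of \algoName{algo:ChangeOfBasis}. By construction, $H$ is upper triangular with monic diagonal entries, and since it is unimodularly equivalent to the basis $R$ of $\rmodfa$, it is itself a basis of $\rmodfa$.

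By \cref{prop:invariant-factors}, the diagonal entries of any upper triangular basis of $\rmodfa$ are, up to nonzero scalars, the invariant factors of $\rmodfa$, which in turn coincide with the $m$ invariant factors of $\charmat$ of highest degree (some of them possibly trivial). Since both the HNF diagonal and the invariant factors are monic, the $m$ diagonal entries of $H$ are exactly these top $m$ invariant factors of $\charmat$. The minimal polynomial $\minpoly$ of $a$ modulo $f$ is the one of largest degree among them, which is identified by inspecting the $m$ diagonal entries of $H$ at negligible cost.

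For the second claim, assume that \Cert{} is returned. By the second item of \cref{prop:compute_Mmm} this forces $\ddfa = n$, i.e., the top $m$ invariant factors of $\charmat$ already sum in degree to $n = \deg(\charpoly)$. The remaining $n-m$ invariant factors of $\charmat$ must then all equal $1$, and so $\charpoly$, being the product of all $n$ invariant factors of $\charmat$, equals the product of the diagonal entries of $H$. This product is computed from $H$ in $\softO{mn}$ operations, which stays within the announced cost. The only real subtlety is that without the \Cert{} flag one only knows $\ddfa \leq n$, in which case the product of diagonal entries of $H$ yields merely a divisor of $\charpoly$; this is precisely why the characteristic polynomial is recovered only in the certified case.
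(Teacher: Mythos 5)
Your proof is correct and follows essentially the same route as the paper's: compute the Hermite normal form of $R$ via \cite{LVZ17} in $\softO{m^\omega d}$, read off the top $m$ invariant factors of $\charmat$ from the diagonal by \cref{prop:invariant-factors} (in particular $\minpoly = \sigma_1$, the $(1,1)$ entry), and in the certified case use $\ddfa = n$ from \cref{prop:compute_Mmm} to conclude the remaining invariant factors are trivial so the diagonal product equals $\charpoly$. The added remark on why the uncertified case yields only a divisor of $\charpoly$ is a useful clarification but not needed for the statement.
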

\begin{proof}
  If $R$ is a basis of $\rmod$, \Cref{prop:invariant-factors} shows that the
  Hermite normal form of~$R$ is a triangular basis of~$\rmod$ whose diagonal
  entries are the first invariant factors $\sigma _1, \ldots, \sigma _m$
  of~$\charmat$; in particular $\mu_a =\sigma _1$. If \Cert{} is returned, then
  $R$ is a basis of \(\rmod\) and $\dd =n$ (\cref{prop:compute_Mmm}). Hence
  $\degdet R=n$ and all the invariant factors are known; the characteristic
  polynomial is their product. The Hermite normal form of $R$ can be computed
  in $\softO{m^{\omega}d}$ operations~\cite[Thm.\,1.2]{LVZ17}.   
\end{proof}

One case of certification of the minimal polynomial is when \Cert{} is
returned by \algoName{algo:CandidateBasis}, which occurs in particular
for any $f$ in $\xRing$ with $f(0)\neq 0$ and a generic $a$ in
$\xRing_{< n}$ (see \cref{notes:genregularity}). Using
\cref{prop:compute_Mmm} and a shift as in \cref{rmk:fat0}, this
establishes the complexity bound $\softO{n^\kappa}$ for computing a
basis of relations and the minimal polynomial in the case of a
generic~$a \in \xRing_{<n}$.

Under the assumptions of \cref{prop:composition-separable} with the
additional hypothesis $\ddfa=n$ for $m=\lceil n^{\eta} \rceil$, a call
to \algoName{algo:CandidateBasis} instead of a call to
\algoName{algo:MatrixOfRelations} in
\algoName{algo:ModularCompositionBaseCase}, leads to a {\em certified} basis
of relations of $\rmodma[m]$ with good probability (use
\cref{prop:compute_Mmm} instead of \cref{prop:certificate} in the
proof of \cref{prop:composition-separable}).  From 
\cref{cor:smithform}, this also allows one to compute and certify the
minimal and characteristic polynomials in time $\softO{n^\kappa}$ when
$f$ is separable and $\ddfa[\lceil n^{\eta} \rceil]=n$.

The latter can be extended to the case $f$ irreducible and separable
since then the minimal polynomial~$\mu_a$ must be irreducible as well,
and therefore $\charmat$ has~$r$ nontrivial invariant factors all
equal to $\mu_a$.  If for $m=\lceil n^{\eta} \rceil$ the minimal
polynomial satisfies $\delta = \deg(\mu_a) \geq n/m$, then $r\leq m$
and $\ddfa=n$, hence the above certification when~$f$ is separable
leads to the minimal polynomial.  The low degree case $\delta < n/m$
can be treated directly using \cref{lemma:smallminpoly}, allowing to
compute $\mu_a$ in time $ \softO{n\delta^{(\omega_2/2) -1}}$, which is
$\softO{n^\kappa}$ since~$\delta < \lceil n^{1- \eta} \rceil$.

However, a matrix $R$ returned by \algoName{algo:CandidateBasis} might
not be a basis of \(\rmodma[m]\): without an efficient certification of
this property, \cref{cor:smithform} only gives a minimal polynomial
algorithm of the Monte Carlo kind.  Proceeding as done above, with a
call to \algoName{algo:CandidateBasis} instead of a call to
\algoName{algo:MatrixOfRelations} in
\algoName{algo:ModularCompositionBaseCase}, a Monte Carlo minimal
polynomial algorithm in $\softO{n^{\kappa}}$ can be derived under the
assumptions of \cref{prop:composition-separable}.

\subsection{Power series reversion and power series equations}
In this subsection, the characteristic of~$\field$ is~0.

For a given $a \in \xRing$ with $a(0)=0$ and $a'(0)\neq0$, power series
reversion (or functional inversion) asks for a power series~$g\in \field[[x]]$
such that 
\[
  a(\polp)=\polp(a)\equiv x\bmod x^n.
\]
By Newton's iteration, a composition algorithm in~$\softO{n^c}$
operations for
some $c>1$ induces a reversion algorithm in~$\softO{n^c}$ operations as
well~\cite{BK78}. Thus, we get a Las Vegas algorithm for power series reversion
in~$\softO{n^\kappa}$ operations in \(\field\).
Note that the
converse
reduction, from reversion to composition, also holds in this
situation~\cite{BK78}.

The approach for reversion extends partially to the resolution of a class of
power series equations. The aim is to solve an equation
\begin{equation}
  \label{eq:power-series-eq}
  \polp(x,y)=b\bmod x^n
\end{equation}
for $y\in\field[[x]]_{<n}$, when~$\polp\in\field[[x]][y]$ satisfies
$\polp(0,0)=b(0)$ and its partial derivative with respect to~$y$
is not~0 at~$(0,0)$.

By \cref{thm:CompositionModuloInseparable}, 
\algoName{algo:CompositionModuloInseparable}
computes
a
composition $\polp(x,a)$ in~$\softO{n^\kappa}$ operations for~$\polp$ in
$\xyRing_{<(m,n)}$ with $m=O(n^\eta)$ and $\eta$ from
\cref{eq:def-beta}. Together with Newton's iteration, this gives a Las
Vegas algorithm solving
\cref{eq:power-series-eq} in~$\softO{n^\kappa}$ operations
for $\polp\in\xyRing_{<(n^\eta,n)}$. Reversion is the special case with~$b=x$
and $\deg_x(\polp)=0$.

\subsubsection*{Note.}

It is known that the complexity of composition of power series (in
terms of nonscalar operations) is essentially that of
computing the coefficient of $x^{n-1}$ of $\polp(a)$~\cite{Ritzmann1986}.
By contrast, computing the coefficient of $x^{n-1}$ in the
reverse of~$a$ costs
only~$\softO{n}$ arithmetic operations~\cite{BK78}.

\subsection{Bivariate composition}
In this subsection, the characteristic of~$\field$ is~0.

Brent and Kung gave an algorithm that computes 
\[
  \polp(a,b)\rem x^n
\]
for $\polp\in\xyRing_{<(n,n)}$ and truncated power series $a,b \in \field[[x]]$
in only~$\softO{n^2}$ operations~\cite{BrentKung1977}. This is quasi-optimal,
since the number of coefficients of~$\polp$ is $\Theta(n^2)$ in general.  In
the simple situation where $a(0)=0$ and~$a'(0)=1$, the algorithm is as follows:
\begin{enumerate}
  \item by power series reversion, compute~$s(x)$ such that $a(s)=s(a)\equiv x\bmod x^n$;
  \item by univariate composition, compute $c=b(s) \rem x^n$;
  \item\label{biv:step3} by uni-bivariate composition, compute $d=\polp(x,c)\rem x^n$;
  \item by univariate composition, compute $d(a)\rem x^n$.
\end{enumerate}
The complexity is dominated by the uni-bivariate composition in Step~
(\ref{biv:step3}), which can be performed by Horner evaluation in~$\softO{n^2}$
operations. 

We obtain a Las Vegas algorithm with a complexity reduced
to~$\softO{n^\kappa}$ when $\polp\in\xyRing_{<(n^\eta,n)}$, where the uni-bivariate composition is done
in~$\softO {n^\kappa}$ as discussed in the case of power series equations, and
all the other steps are univariate compositions that are also performed
in~$\softO{n^\kappa}$ by our algorithm.

This method extends to the computation of 
\[
  \polp(a,b)\rem f
\]
with $f$ of degree \(n\) in \(\xRing\), and $a,b$ in $\xRing_{<n}$. The
algorithm becomes
\begin{enumerate}
  \item compute an annihilating polynomial~$\chi$ of $a$ modulo $f$;
  \item by inverse modular composition, compute $c$ such that $c(a)\equiv b\bmod f$;
  \item by uni-bivariate composition, compute $d=\polp(x,c)\rem\chi$;
  \item by univariate composition, compute $d(a)\rem f$.
\end{enumerate}
At least for generic~$a$, this is again a Las Vegas algorithm
in~$\softO{n^\kappa}$ operations when
$\polp\in\xyRing_{<(n^\eta,n)}$.

\section*{Acknowledgements}
We thank  an anonymous reviewer for an extremely detailed list of comments and very useful suggestions that helped us improve the quality of the manuscript.


\newcommand{\Hoeven}{\relax}\newcommand{\Gathen}{\relax}

\end{document}
\endinput